\newcommand{\R}{\mathbb{R}}
\newcommand{\C}{\mathbb{C}}
\newcommand{\Q}{\mathbb{Q}}
\newcommand{\N}{\mathbb{N}}
\newcommand{\Z}{\mathbb{Z}}
\newcommand{\F}{\mathcal{F}}
\newcommand{\E}{\mathcal{E}}
\newcommand{\Hilb}{\mathcal{H}}
\renewcommand{\L}{\mathcal{L}}
\newcommand{\norm}[1]{\lVert#1\rVert}
\newcommand{\ip}[1]{\left\langle#1\right\rangle}
\newcommand{\floor}[1]{\lfloor#1\rfloor}
\newcommand{\Hom}{\operatorname{Hom}}
\newcommand{\Tr}{\operatorname{Tr}}
\newcommand{\Pd}{\operatorname{Pd}}
\newcommand{\Cliff}{\operatorname{Cliff}}
\newcommand{\End}{\operatorname{End}}
\newcommand{\ch}{\operatorname{ch}}
\newcommand{\Ch}{\operatorname{Ch}}
\newcommand{\Todd}{\operatorname{Todd}}
\newcommand{\Tor}{\operatorname{Tor}}
\newcommand{\Map}{\operatorname{Map}}
\newcommand{\pfaff}{\operatorname{pfaff}}
\newcommand{\hol}{\operatorname{hol}}
\newcommand{\Dirac}{D\!\!\!\!/}
\newcommand{\im}{\operatorname{im}}
\newcommand{\id}{\operatorname{id}}
\newcommand{\Mod}{\operatorname{Mod}}
\newcommand{\rk}{\operatorname{rk}}
\newcommand{\ev}{\operatorname{ev}}
\newcommand{\ind}{\operatorname{index}}
\newcommand{\diag}{\operatorname{diag}}
\newcommand{\HP}{\operatorname{HP}}
\newcommand{\HL}{\operatorname{HL}}
\newcommand{\Ext}{\operatorname{Ext}}
\newcommand{\GL}{\operatorname{GL}}
\newcommand{\Aut}{\operatorname{Aut}}
\newcommand{\CP}{\C\!\operatorname{P}}
\newcommand{\pr}{\operatorname{pr}}
\newcommand{\spin}{\text{Spin}}
\newcommand{\spinc}{\text{Spin$^c$}}
\theoremstyle{plain}
\newtheorem{theorem}{Theorem}[section]
\newtheorem{proposition}[theorem]{Proposition}
\newtheorem{lemma}[theorem]{Lemma}
\newtheorem*{corollary}{Corollary}
\theoremstyle{definition}
\newtheorem{definition}[theorem]{Definition}
\newtheorem*{claim}{Claim}
\theoremstyle{remark}
\title{\textbf{Geometric and topological aspects of \linebreak Type IIB D-branes}}
\date{\today}
\author{Kim Laine \vspace{5ex} \\ Thesis Advisor: Jouko Mickelsson \vspace{10ex} \\ }
\begin{document}
\thispagestyle{empty}
\maketitle
\pagestyle{empty}
\tableofcontents
\thispagestyle{empty}
\newpage
\pagestyle{empty}

\chapter*{Introduction}
%\addcontentsline{toc}{chapter}{Introduction}
\pagestyle{plain}
\setcounter{page}{1}
The discovery of general relativity and quantum theory were remarkable events in the history of theoretical physics. They completely revolutionized our understanding of space, time, matter and energy at very large and very small scales. The simple and intuitive Newtonian pictures of space, time, gravity and pointlike particles were obviously fundamentally flawed and had to be replaced by constructions so astonishingly unintuitive that even the best modern populistic books on these subjects are almost certainly lacking at some fundamental level. The problem being, of course, that there is simply nothing in everyday life that would resemble these phenomena closely enough to be very useful as an explanatory analogue. Thus, if one wishes to truly understand such theories, the only possibility is through mathematics. An ideal plan of work would be to first perform experiments as far as possible, then try to construct a suitably general mathematical structure befitting the physical theory and proceed to investigate it. If the mathematical structure used is general and elaborate enough, one can hope to run into new substructures, not obvious from the original physical theory. Finally, one should attempt to perform an experiment to verify if the new substructure is indeed realized in nature. 

String theory is an ongoing attempt to find a consistent mathematical structure unifying both general relativity and quantum theory as substructures, which replaces pointlike elementary particles with small $1$-dimensional supersymmetric strings, oscillating in a $10$-dimensional spacetime. In addition to strings, superstring theory contains higher-dimensional dynamical extended objects, called D-branes, to which open string endpoints are restricted. An open string propagating in spacetime determines a $2$-dimensional superconformal field theory and the boundary conditions the D-brane determines must preserve the superconformal invariance of the theory. This is why D-branes are inherently more complicated objects than simple submanifolds of spacetime.

In the low-energy limit superstring theory is described by an effective field theory, which is $10$-dimensional supergravity. It contains several generalized higher-dimensional massless gauge fields, most importantly the RR-fields and the NS-NS $B$-field. D-branes are the extended objects coupling to the RR-fields. Dirac quantization for the RR-charge (of a D-brane) implies that the charge lifts to a class in some generalized integral cohomology theory. There does not seem to be any absolute way of determining what this generalized cohomology theory should be, but there are strong arguments suggesting that for RR-charges the correct cohomology theory is $K$-theory, at least when there is no background $B$-field present.

Turning on a background $B$-field makes the situation considerably more difficult. The $B$-field and D-branes are related to each other, because the $B$-field couples to the $2$-dimensional open string worldsheet by higher-dimensional parallel transport and the boundary of the worldsheet couples to the D-brane. There is an anomaly associated to these couplings: the Freed-Witten anomaly. This anomaly can be cancelled when a certain cohomology equation is satisfied. When the $B$-field is set to be nontrivial, the Freed-Witten anomaly cancellation condition implies that D-brane charges are classified by twisted $K$-theory, rather than by $K$-theory. Finally, twisted $K$-theory can be defined purely algebraically using operator $K$-theory, opening up a possibility for extending the definition of RR-charge to noncommutative D-brane worldvolumes in a noncommutative spacetime.

The problem with string theory is that the mathematical structure is incredibly complicated and so far there are no known experiments to verify its reality. One can only continue to investigate the mathematical structure in hopes of running into a possibility for experimental verification. Even if the the theory of D-branes would never be verified experimentally, it has already contributed an immense amount of motivation and research topics to pure mathematics, especially to modern geometry, topology and algebra.

\section*{The plan of the work}
Chapter 1 consists of a brief introduction to the underlying physical theory, followed by a long and comprehensive overview of the rest of the text. This is to make the thesis accessible to a wider audience.

Chapter 2 is the beginning of the main body of the text. It starts with a review of the theory of characteristic classes and $\spin$ structures, followed by the introduction of $K$-theory (of vector bundles). After that, we explain how the $K$-theoretic classification of D-brane charge arises. We also take a look at an alternative, and perhaps more intuitive, classification scheme of D-branes in terms of $K$-homology, the homological dual theory of $K$-theory. More precisely, we use the geometric model of Baum and Douglas for $K$-homology.

In chapter 3 we discuss the Freed-Witten anomaly. Most of the chapter deals with introducing the required mathematical machinery, including Deligne cohomology, bundle gerbes, bundle gerbe modules and their connections, higher-dimensional holonomy and higher-dimensional parallel transports. Having done that, we can give a proper mathematical description for all the physical objects contributing to the Freed-Witten anomaly. The chapter concludes with a derivation of the anomaly cancellation condition.

Chapter 4 starts with an introduction to operator $K$-theory and twisted $K$-theory, which we use to classify the RR-charge of twisted D-branes, that is, D-branes in a nontrivial background $B$-field. Our ultimate goal in this chapter is a noncommutative generalization of the D-brane charge formula. Several concepts from geometry and topology need to be translated to the of noncommutative geometry first. This requires the use of $KK$-theory of Kasparov, a unification of $K$-theory and $K$-homology, and the bivariant local cyclic homology theory of Puschnigg, a noncommutative generalization of ordinary homology and cohomology.

The range of mathematical topics treated in this thesis is undeniably quite wide. We provide introductions to the more advanced topics, but the reader is still expected to be familiar with basic topics in geometry, topology and algebra.

\chapter{Overview}\label{Chapter: Overview}
\section{Strings and D-branes}\label{Section: Overview, Strings and D-branes}
In \emph{Type IIB superstring theory}\footnote{We will only discuss Type IIB superstring theory, since it is, in some sense, the easiest.} pointlike elementary particles are replaced with supersymmetric $1$-dimensional strings, open or closed, of typical length $10^{-35}\,\text{m}$. Thus, they are way beyond what can be seen in particle accelerators. The spacetime, denoted by $X$ throughout the text, is assumed to be a \emph{smooth, $10$-dimensional\footnote{This is the so-called \emph{critical dimension}, which is required for the superconformal anomaly of the worldsheet superconformal field theory to vanish \cite{Pol1,Joh1}.} oriented Riemannian\footnote{A more physical assumption would be to take $X$ to be pseudo-Riemannian, but we assume that one can always perform Wick rotations in the path integrals when necessary.} $\spin$ manifold\footnote{A manifold $X$ is $\text{Spin}$ if and only if the \emph{second Stiefel-Whitney class} of its cotangent bundle, $w_2(X)\in H^2(X,\Z_2)$, vanishes, and $\text{Spin$^c$}$ if and only if the \emph{third integral Stiefel-Whitney class} of its cotangent bundle $W_3(X)\in H^ 3(X,\Z)$ vanishes. A reader unfamiliar with these concepts can, for now, regard them simply as technical conditions defined by the vanishing of the above-mentioned Stiefel-Whitney classes. The assumption that the spacetime should be $\text{Spin}$ is, of course, motivated by the physical fact that spinor particles exist.}}. All manifolds are locally compact and paracompact. Locally compact spaces can be compactified by adding a ''point at infinity''. It is a deep result that compact manifolds have the homotopy type of a finite CW-complex. Paracompactness is important for the existence of smooth partitions of unity and good open covers, which makes \v Cech cohomology particularly well-behaving. We denote by $M$ the worldsheets of both open and closed strings. Thus, $M$ is a $2$-dimensional compact oriented manifold with an embedding $\xi:M\rightarrow X$. A closed string worldsheet is a closed manifold\footnote{That is, compact and without boundary.}. An open string worldsheet has a boundary, corresponding to the endpoints of the open string.

The dynamics of the classical string are governed by the famous \emph{Polyakov action} \cite{Pol1}. Canonical quantization can be carried out, at least in flat spacetime, yielding an infinite tower of oscillatory modes, each corresponding to a massless or massive particle. Of course, open and closed strings yield very different sets particles.

The closed string spectrum contains a locally defined symmetric rank $2$ tensor field, which can be identified with the \emph{graviton field} (it satisfies the linearized Einstein equation), a locally defined $2$-form field known as the \emph{NS-NS (Neveu-Schwarz--Neveu-Schwarz) $B$-field}, and a scalar field known as the \emph{dilaton}, whose vacuum expectation value determines the string coupling constant. Of these, the $B$-field is the only one we are interested in. They are higher-dimensional gauge fields, with gauge transformations
\[ B\mapsto B + d\Lambda\,, \]
where $\Lambda$ is a $1$-form, leaving physical states invariant. Instead, the $3$-form field $H=dB$, called the \emph{$H$-flux}, is globally defined.

The closed string spectrum contains also other higher $p$-form fields $G_p$, the \emph{RR (Ramond-Ramond) fields}. They are globally defined fields of odd degree (in Type IIB superstring theory). Locally, one can find \emph{RR-potentials} $C_p$, where $p=0,2,4,6,8$, such that $G_{p+1}=dC_p$ (in the absence of sources)\footnote{The RR-fields are actually not all independent. They are related to each other by the duality condition $*G_p=G_{10-p}$. This causes problems in quantization \cite{MW,FH,Val}, but since we shall not discuss quantized RR-fields, the reader is advised to simply forget about the duality condition.}. Let $G$ denote the polyform of all RR-fields. In the absence of sources, the equations of motion for the RR-fields are
\[ d*G=0 \,. \]
Analogously to classical electromagnetism, we can introduce \emph{sources} for the RR-fields as violations of these equations by an \emph{RR-charge density} polyform $j_e$:
\[ d*G = j_e \,. \]
Such an equation of motion can be obtained from an action of the form
\[ \int_X G\wedge *G + \int_X C\wedge j_e \,. \]
Each homogeneous component of $j_e$ determines a nontrivial class $[j_e^p]$ in \emph{compactly supported de Rham cohomology} $H^p_{c,dR}(X)$ \cite{BT} of spacetime. The \emph{Poincar\'e duals} of these classes are homology cycles, representing certain even-dimensional extended objects.

Open strings are more complicated. First of all, boundary conditions have to be set for the string in such a way that the superconformal invariance of the worldsheet theory is preserved. The open string worldsheet $M$, embedded into $X$ by $\xi:M\rightarrow X$, has a boundary $\partial M$. Naively, the boundary conditions can be given as a subset of $X$ onto which $\xi(\partial M)$ should be restricted. This is achieved by specifying a $p+1$-dimensional closed oriented manifold $\Sigma$, which we call a \emph{D$p$-brane worldvolume} or simply a \emph{D-brane worldvolume}, together with a continuous or smooth map $\phi:\Sigma\rightarrow X$. It is said that the D-brane \emph{wraps} the worldvolume $\Sigma$. We then require that $\xi(\partial M)\subset \phi(\Sigma)$.

The open string spectrum yields a massless locally defined $\mathfrak{u}(n)$-valued $1$-form field $A$, living on D-brane worldvolumes. The usual physical interpretation is that $n$ denotes the number of coincident D$p$-branes wrapping the same worldvolume. Thus, a single D-brane supports a simple $1$-form field. Now, open strings are attached to D-branes, not merely to their worldvolumes. Thus, given a stack of $n$ coincident D-branes, all wrapping $\Sigma$, the associated open string states are enriched with new non-dynamical degrees of freedom, since the endpoints are free to attach to any of the $n$ D-branes. These new degrees of freedom, called \emph{Chan-Paton factors}, are antihermitian $n\times n$ matrices, elements of $\mathfrak{u}(n)$. There is an anomaly associated to the open string path integral, the \emph{Freed-Witten anomaly}, which must be cancelled for the theory to be consistent. Let $A_\alpha$ denote the $A$-field on an open set $U_\alpha$ of $\Sigma$. To cancel the anomaly, on intersections $U_\alpha \cap U_\beta$ the local representatives $A_\alpha$,$A_\beta$ must satisfy
\begin{equation}\label{Eq: A-field transformation, phys} A_\beta = g_{\alpha\beta}^{-1} A_\alpha g_{\alpha\beta} - i g_{\alpha\beta}^{-1}\,dg_{\alpha\beta} - \Lambda_{\alpha\beta}\,, \end{equation}
where $g_{\alpha\beta}$ are functions $U_{\alpha\beta}\rightarrow U(n)$ and $\Lambda_{\alpha\beta}$ are (real-valued) $1$-forms. This resembles the transformation rule of an ordinary $U(n)$ gauge field, which is a connection on a $U(n)$ principal bundle. On triple intersections the ''transition functions'' $\{g_{\alpha\beta}\}$ satisfy
\[ g_{\alpha\beta}g_{\beta\gamma}g_{\alpha\gamma}^{-1} = \zeta_{\alpha\beta\gamma} \,, \]
where $\{\zeta_{\alpha\beta\gamma}\}$ are continuous $U(1)$-valued functions on $U_\alpha\cap U_\beta \cap U_\gamma$. The right-hand side should obviously be interpreted as $\zeta_{\alpha\beta\gamma}\mathbbm{1}_{n\times n}$. This means that $\{g_{\alpha\beta}\}$ provide the transition data for a principal $PU(n):=U(n)/U(1)$ bundle, the \emph{(projective) Chan-Paton bundle}. The obstruction for lifting a $PU(n)$ bundle to a $U(n)$ bundle is measured by an $n$-torsion cohomology class in $H^3(\Sigma,\Z)$. For the Chan-Paton bundle, this class is denoted by $\beta(\underline{\zeta})$, due to the following simple construction. Notice first, that $\{\zeta_{\alpha\beta\gamma}\}$ determines a \v Cech $2$-cocycle, since
\[ \zeta_{\alpha\beta\gamma}\zeta_{\alpha\gamma\delta} = g_{\alpha\beta}g_{\beta\gamma}g^{-1}_{\alpha\gamma}g_{\alpha\gamma}g_{\gamma\delta}g_{\alpha\delta}^{-1} = g_{\alpha\beta}g_{\beta\gamma}g_{\gamma\delta}g_{\alpha\delta}^{-1} = \zeta_{\beta\gamma\delta}g_{\alpha\beta}g_{\beta\delta}g_{\alpha\delta}^{-1} = \zeta_{\beta\gamma\delta}\zeta_{\alpha\beta\delta} \,, \]
which is the cocycle condition
\[ \zeta_{\beta\gamma\delta}\zeta_{\alpha\gamma\delta}^{-1}\zeta_{\alpha\beta\delta}\zeta_{\alpha\beta\gamma}^{-1} = 1 \,. \]
Thus, it defines a \v Cech cohomology class $[\underline{\zeta}]\in H^2(\Sigma,\underline{U(1)})$. Here $\underline{U(1)}$ denotes the \emph{sheaf} \cite{Bry} of continuous functions taking values in $U(1)$. More generally, $\underline{G}$ denotes the sheaf of continuous functions taking values in the Abelian group $G$. We use the notation $\underline{\zeta}$ to denote the \v Cech cocycle associated to $\{\zeta_{\alpha\beta\gamma}\}$. In this notation the \v Cech cocycle condition is written as $\delta(\underline{\zeta})=1$. The exact sequence of sheaves
\[ \xymatrix{ 0 \ar[r] & \underline{\Z} \ar[r] & \underline{\R} \ar[r] & \underline{U(1)} \ar[r] & 0 } \]
induces a long exact sequence
\[ \xymatrix{ \ldots \ar[r] & \check{H}^2(\Sigma,\underline{\R}) \ar[r] & \check{H}^2(\Sigma,\underline{U(1)}) \ar[r]^-\beta & H^3(\Sigma,\underline{\Z}) \cong H^3(\Sigma,\Z) \ar[r] & \check{H}^3(\Sigma,\underline{\R}) \ar[r] & \ldots } \]
of \v Cech cohomology groups \cite{Bry}. It is easy to show, by a partition of unity argument, that $H^p(\Sigma,\underline{\R})$ vanish for $p>0$. Thus, the \emph{Bockstein homomorphism} $\beta$ is an isomorphism, mapping $[\underline{\zeta}]$ to $\beta([\underline{\zeta}])\in H^3(\Sigma,\underline{\Z})\cong H^3(\Sigma,\Z)$\footnote{A more concrete description is given as follows. Writing $\underline{\zeta} = \exp(i\underline{\rho})$, where $\underline{\rho}$ is a real-valued \v Cech $2$-cochain, and using the cocycle condition $\delta(\underline{\zeta})=1$, yields
\[ \exp(i\delta(\underline{\rho})) = \delta(\exp(i\underline{\rho})) = \delta(\underline{\zeta})=1 \,. \]
But this is possible if and only if
\[ \delta(\underline{\rho}) = 2\pi \underline{n} \,, \]
for some integer-valued \v Cech $3$-cochain $\underline{n}$. It is obviously a cocycle due to $\delta^2=0$. The image $\beta([\underline{\zeta}])\in H^3(X,\Z)$ is precisely the class $[\underline{n}]\in H^3(X,\Z)$.}. It is easy to see that the vanishing of this class is equivalent to the existence of $U(n)$ lifts $\{\widetilde{g}_{\alpha\beta}$ of $\{g_{\alpha\beta}\}$, satisfying
\[ \widetilde{g}_{\alpha\beta}\widetilde{g}_{\beta\gamma}\widetilde{g}_{\alpha\gamma}^{-1} = 1\,. \]
Thus, they define a $U(n)$ bundle, the \emph{lift} of the $PU(n)$ bundle. The class $\beta([\underline{\zeta}])$ plays a fundamental role in Freed-Witten anomaly cancellation. In conclusion: a D-brane (or rather, a stack of $n$ coincident D-branes) consists of a closed oriented worldvolume manifold $\Sigma$, a map $\phi:\Sigma\rightarrow X$ and a $PU(n)$ bundle over $\Sigma$.

The string spectrum described above can actually be fully obtained only in flat spacetime. We have to assume that the low-energy field content remains the same also in more general spacetimes. Nevertheless, a quantum gravity theory, such as string theory, should be able to produce perturbations to the background metric. Adding a linear perturbation to the metric in the Polyakov action corresponds to turning on an infinite background, or a \emph{coherent state}, of gravitons \cite{Pol1}. One can similarly consider backgrounds of the other massless string states, in particular, a \emph{background $B$-field}\footnote{A background $B$-field makes the theory of D-branes much more complicated. This is because the $B$-field couples to the open string worldsheet and the boundary of the worldsheet couples to the $A$-field, and thus to the D-brane. Therefore, Freed-Witten anomaly cancellation imposes certain geometric and topological requirements for the D-brane.}. Suppose that various massless background fields have been turned on. By studying string perturbation theory in the low-energy limit, one obtains equations of motion for these background fields. It is then possible to construct a \emph{low-energy effective (worldvolume) action}, corresponding to these equations of motion. The low-energy effective field theory is the \emph{$10$-dimensional Type IIB supergravity} \cite{Pol1}.

\section{The low-energy effective field theory}\label{Section: Overview, The low-energy effective field theory}
We now introduce couplings and sources to the effective field theory. For example, the $B$-field should couple to a $2$-dimensional object, which is the string worldsheet. The coupling should be something like
\[ \int_M \xi^* B \,, \]
where $\xi:M\rightarrow X$ is the embedding of the open or closed string worldsheet. For an open string worldsheet this integral should be taken with a grain of salt, because when $\partial M \neq \varnothing$, it is not gauge invariant:
\[ \int_M \xi^*(B+d\Lambda) = \int_M \xi^* B + \int_{\partial M} \Lambda \,. \]
For a closed string worldsheet this is obviously not a problem, since $\partial M = \varnothing$. This failure of gauge invariance is one of the main aspects of the Freed-Witten anomaly.

Similarly, if $C_{p+1}$ is a $p+1$-form RR-potential, it should couple to a $p+1$-dimensional extended object. It was a remarkable discovery by Polchinski \cite{Pol2} that these are precisely the D-branes we described earlier\footnote{As a side note, recall that the worldvolume $\Sigma$ was not generally assumed to be a submanifold of $X$. Indeed, it was demonstrated in \cite{ES} that also nonrepresentable homology cycles, that is, homology cycles which are not represented by any nonsingular submanifolds, may still carry RR-charge.}. Now, one might naively anticipate that a D$p$-brane should couple to the RR-potentials by
\[ \int_\Sigma \phi^* C_{p+1} \,, \]
but this turns out to be incorrect: a D$p$-brane carries also lower-dimensional RR-charge and thus should couple also to lower-dimensional RR-fields. We now assume that all projective Chan-Paton bundles can all be lifted to $U(n)$ bundles. We call these lifts \emph{$U(n)$ Chan-Paton bundles} and their associated vector bundles \emph{$U(n)$ Chan-Paton vector bundles}. This assumption is related to the Freed-Witten anomaly and will later be dropped. Anomaly cancellation arguments can be used to deduce the coupling \cite{CY,El-S}
\[ \int_\Sigma \phi^* C \wedge \ch(E) \wedge \Todd(\Sigma) \wedge \frac{1}{\phi^*\sqrt{\Todd(X)}} = \int_X C \wedge \phi_*(\ch(E) \wedge \Todd(\Sigma)) \wedge \frac{1}{\sqrt{\Todd(X)}} \,, \]
where $E$ is the $U(n)$ Chan-Paton vector bundle of the D-brane, $\phi_*$ the cohomological \emph{Gysin ''wrong way'' homomorphism}, defined by $\phi_* := \Pd^{-1}_X \circ \phi_* \circ \Pd_\Sigma$, where $\Pd$ denotes the Poincar\'e duality isomorphism and $\phi_*$ (on the right-hand side) the natural push-forward in homology. The double meaning of $\phi_*$ should not cause problems. Finally, $\ch(E)\in H^\text{even}_\text{dR}(\Sigma)$ is the \emph{Chern character} of $E$, and $\Todd(\Sigma)\in H^\text{even}_\text{dR}(\Sigma)$ and $\Todd(X)\in H^\text{even}_\text{dR}(X)$ the \emph{Todd classes} of the tangent bundles of $\Sigma$ and $X$, respectively. These are examples of so-called \emph{characteristic classes} of vector bundles. It is necessary to assume $\Sigma$ to be a $\spinc$ manifold for $\Todd(\Sigma)$ to be defined. Now, while this coupling may look complicated and unnatural, it turns out to be quite natural from the point of view of \emph{$K$-theory}. In any case, this coupling suggests that at a classical level the D-brane charge is represented by the compactly supported de Rham cohomology class
\[ Q_\R(\Sigma,E) := \phi_*(\ch(E) \wedge \Todd(\Sigma)) \wedge \frac{1}{\sqrt{\Todd(X)}} \in H^\text{even}_{c,dR}(X) \,. \]
In fact, $\Todd$ and $\ch$ are classes in $H^\text{even}(-,\Q)$. Thus we can as well work with rational cohomology instead of de Rham cohomology and define
\begin{equation}\label{Eq: Rational D-brane charge, 1} Q_\Q(\Sigma,E) := \phi_*(\ch(E)\smile \Todd(\Sigma))\smile \frac{1}{\sqrt{\Todd(X)}} \in H^\text{even}_c(X,\Q) \,, \end{equation}
where
\[ \smile:H^p(-,\Q)\otimes H^q(-,\Q)\rightarrow H^{p+q}(-,\Q) \]
is the \emph{cup product} and $H^\text{even}_c(X,\Q)$ the \emph{compactly supported cohomology} \cite{May}.

\section{From tachyon condensation to $K$-theory}\label{Section: Overview, From tachyon condensation to K-theory}
Consider a D$p$-brane wrapping a worldvolume $\phi:\Sigma\rightarrow X$. An \emph{anti-D$p$-brane}, or a \emph{D$\overline{p}$-brane} for short, is one with opposite RR-charge. If D$p$-brane and a D$\overline{p}$-brane wrap the same worldvolume, then obviously the configuration carries no total RR-charge and there should be no conservation laws preventing it from decaying to the closed string vacuum. Unfortunately, the dynamical process through which this happens can not be fully described without a complete quantum theory of D-branes, which we do not have.

It was proposed by Sen in \cite{Sen}, that D-brane annihilation can be described by a process called \emph{tachyon condensation}. The idea is that the coincident $p-\overline{p}$ D-brane pair supports a tachyonic field (with negative mass squared), which indicates an instability in the configuration: the pair is not in a stable potential minimum. The annihilation process can then be described as \emph{rolling} of the tachyon potential to a stable local minimum. Suppose first that both the D-brane and the anti-D-brane carry topologically equivalent $U(n)$ Chan-Paton bundles vector bundles $E$. In this case, the tachyon field is a section of the trivial bundle $E\otimes E^*$ (which obviously admits global nonvanishing sections). Thus, the tachyon field can assume a constant nonzero value everywhere, corresponding to the potential minimum. The resulting energy density precisely cancels that of the original D-brane pair.

Recall that a D$p$-brane can carry also lower-dimensional RR-charge. Let us see how this is presented in the tachyon condensation scenario. Consider a configuration of a D$p$-brane with $U(1)$ Chan-Paton bundle $E$ and a D$\overline{p}$-brane with a topologically \emph{different} $U(1)$ Chan-Paton bundle $F$, both wrapping the same worldvolume $\phi:\Sigma\rightarrow X$. In this case the tachyon is a section of $E\otimes F^*$, which is nontrivial and thus may \emph{not} admit any global nonvanishing sections. This causes a possible topological obstruction, measured by the nontriviality of the tensor product bundle, for the tachyon field to assume its vacuum expectation value everywhere. As a result, the energy density of the original D-brane pair is \emph{not} cancelled everywhere. Remarkably, the resulting nonvanishing energy density can be identified with that of a lower-dimensional D-brane.

The D-brane annihilation process is more naturally described in terms of RR-charges. Recall that for a D-brane with worldvolume $\phi:\Sigma\rightarrow X$ and $U(n)$ Chan-Paton vector bundle $E$, the (rational) RR-charge was given by the formula \eqref{Eq: Rational D-brane charge, 1}. The charge on an anti-D-brane wrapping the same worldvolume and carrying a $U(n)$ Chan-Paton bundle vector bundle $F$ should be
\[ \overline{Q}_\Q(\Sigma,F) := -\phi_*(\ch(F) \smile \Todd(\Sigma)) \smile \frac{1}{\sqrt{\Todd(X)}} \,. \]
Let us denote the pair by $(E,F)$ and define
\[ \ch((E,F)) := \ch(E) - \ch(F) \in H^\text{even}(\Sigma,\Q) \,, \]
so that the total RR-charge of the pair $(E,F)$ can be expressed as
\begin{equation}\label{Eq: Rational D-brane charge, 2} Q_\Q(\Sigma,(E,F)) := Q_\Q(\Sigma,E) - \overline{Q}_\Q(\Sigma,F) = \phi_*(\ch((E,F)) \smile \Todd(\Sigma)) \smile \frac{1}{\sqrt{\Todd(X)}} \,. \end{equation}
Complex line bundles are completely classified by their first Chern class $c_1$, which is a class in $H^2(X,\Z)$. For line bundles the Chern character reduces to $\ch(E) = 1 + c_1(E)$. Thus, a coincident $p-\overline{p}$ pair with $U(1)$ Chan-Paton bundles $E$ and $F$, the RR-charge is
\[ Q_\Q(\Sigma,(E,F)) = \phi_*((c_1(E)-c_1(F)) \smile \Todd(\Sigma)) \smile \frac{1}{\sqrt{\Todd(X)}} \,, \]
which shows exactly how the lower-dimensional RR-charge of the remaining lower-dimensional D-brane depends on the Chan-Paton bundles. Since the class $c_1(E)-c_1(F)$ is of degree $2$, the charge $Q_\Q(\Sigma,(E,F))$ couples to a D$(p-2)$-brane. In other words, while there is no D$p$-brane charge left after the annihilation process, there might be D$(p-2)$-brane charge, carried by a D$(p-2)$-brane whose worldvolume is associated to the solitonic tachyon field. One can similarly consider a coincident $p-\overline{p}$ pair with $U(n)$ and $U(m)$ Chan-Paton vector bundles $E$ and $F$. The Chern characters of higher rank bundles are polynomials of higher Chern classes, which are all even degree integral cohomology classes, but the lowest degree part is always the rank of the bundle, so that $\ch(E)-\ch(F) = n-m + \ldots$, where the ellipsis denotes classes in $H_c^{2k}(X,\Q)$, where $k=1,2,\ldots$. Inserting this into the charge formula, we notice that the resulting configuration carries D$p$-brane charge if and only if $n\neq m$. It may carry lower-dimensional charge depending on the Chan-Paton bundles.

The above discussion leads directly to the following simple observation \cite{Wit1}. Consider a $p-\overline{p}$ pair with $U(n)$ and $U(m)$ Chan-Paton vector bundles $E$ and $F$, wrapping the worldvolume $\phi:\Sigma\rightarrow X$. If we add to this configuration a pair $p-\overline{p}$ with $U(k)$ Chan-Paton vector bundles $(G,G)$, the Chan-Paton bundles of the original configuration are enhanced to $U(m+k)$ and $U(n+k)$ bundles $E\oplus G$ and $F\oplus G$, respectively. Since the added pair carries no total RR-charge, one would expect the RR-charge of the original configuration to be left invariant. Indeed, since the Chern character satisfies $\ch(E\oplus G) = \ch(E)+\ch(G)$, it follows that
\begin{align*} \ch((E\oplus G,F\oplus G)) &= \ch(E\oplus G)-\ch(F\oplus G) \\ &= \ch(E)+\ch(G)-(\ch(F)+\ch(G)) \\ &=\ch((E,F)) \,. \end{align*}
Thus, we consider the configurations $(E,F)$ and $(E\oplus G,F\oplus G)$ to be \emph{physically equivalent} and write
\begin{equation}\label{Eq: Physical equivalence relation} (E,F) \sim (E\oplus G,F\oplus G) \,. \end{equation}
The equivalence class of a pair $(E,F)$ is denoted by $[E]-[F]$.

Let $\mathrm{Vect}_\C(\Sigma)$ denote the Abelian monoid of isomorphism classes of complex vector bundles over the compact manifold $\Sigma$. An isomorphism class represented by a vector bundle $E$ is denoted by $[E]$. Addition in $\mathrm{Vect}_\C(\Sigma)$ is defined by the direct sum:
\[ [E]+[F] = [E\oplus F]\,. \]
The \emph{$0$-th $K$-theory group} $K^0(\Sigma)$, often denoted simply by $K(\Sigma)$, is the free Abelian group generated by formal differences of classes of $\mathrm{Vect}_\C(\Sigma)$. Thus, the elements of $K^0(\Sigma)$ are formal differences $[E]-[F]$. Of course, $E$ and $F$ may very well have different rank. Now, comparing
\[ [E\oplus G]-[F\oplus G] = [E]+[G]-[F]-[G] = [E]-[F] \]
to the physical equivalence relation \eqref{Eq: Physical equivalence relation} reveals immediately that physically equivalent D-brane configurations correspond precisely to $K$-theory classes of $K^0(\Sigma)$, the correspondence being given by
\[ (E,F)/\sim \quad \Longleftrightarrow \quad [E]-[F]\,. \]
If $\Sigma$ is wrapped only by D-branes with Chan-Paton bundle $E$ and no anti-D-branes, the corresponding $K$-theory class is $[E]-[0]$, where $0$ denotes the zero bundle. We denote this simply by $[E]$.

For noncompact manifolds\footnote{Recall that every manifold is locally compact and, thus, admit a one-point compactification by adding a ''point at infinity''.} the definition of $K^0$ is a little different. Suppose that $X$ is noncompact and
\[ X^\infty:= X\amalg \{\infty\} \]
its one-point compactification. The $K$-theory group $K^0(X)$ is defined as the subgroup $\widetilde{K}^0(X^\infty)\subset K^0(X^\infty)$, generated by pairs of vector bundles isomorphic at the point $\{\infty\}$. In other words, for a noncompact manifold $X$, $K^0(X)$ contains only differences of vector bundles vanishing asymptotically.

\section{D-branes and $K$-theory}\label{Section: Overview, RR-charge and K-theory}
As we explained above, there is good reason to believe that worldvolume $K$-theory classifies physically equivalent configurations of D-branes, at least when there is no background $B$-field. It turns out, that the seemingly obscure rational RR-charge formula \eqref{Eq: Rational D-brane charge, 2} is extremely natural from the $K$-theoretic perspective, as was pointed out by Minasian and Moore in the famous paper \cite{MM}.

Consider a $K$-theory class $x\in K^0(\Sigma)$ corresponding to a D-brane configuration wrapping $\phi:\Sigma\rightarrow X$. Earlier we extended the Chern character to a map from pairs $(E,F)$ to $H^\text{even}(\Sigma,\Q)$. It is easy to see that it descends to a well-defined group homomorphism
\[ \ch:K^0(\Sigma)\rightarrow H^\text{even}(\Sigma,\Q) \,, \]
mapping
\[ \ch([E]-[F]) := \ch(E)-\ch(F) \,. \]
The rational RR-charge of $x$ is thus given by
\begin{equation}\label{Eq: Rational D-brane charge, 3} Q_\Q(\Sigma,x) = \phi_*(\ch(x) \smile \Todd(\Sigma)) \smile \frac{1}{\sqrt{\Todd(X)}} \,. \end{equation}
By the \emph{Atiyah and Hirzebruch version of the Grothendieck-Riemann-Roch theorem} \cite{AH1,AH2}, if $\Sigma$ is $\spinc$, there exists a \emph{Gysin ''wrong way'' homomorphism}
\[ \phi_!:K^0(\Sigma)\rightarrow K^0(X)\,, \]
satisfying
\begin{equation}\label{Eq: A-H-G-R-R, 1} \ch(\phi_!(x))\smile \Todd(X) = \phi_*(\ch(x)\smile \Todd(\Sigma)) \,. \end{equation}
By applying \eqref{Eq: A-H-G-R-R, 1} to \eqref{Eq: Rational D-brane charge, 3}, we obtain the \emph{Minasian-Moore formula} \cite{MM}
\begin{equation}\label{Eq: Overview, Minasian-Moore formula}  Q_\Q(\Sigma,x) = \ch(\phi_!(x))\smile \sqrt{\Todd(X)} \in H^\text{even}_c(X,\Q) \,. \end{equation}
Dirac quantization requires the RR-charge to lift into some \emph{generalized integral cohomology theory} \cite{Whi,Bro1,Bro2}, which in this case seems to be $K$-theory \cite{Fre1,Fre2}. Thus, the quantized (integral) RR-charge should be a $K$-theory class of spacetime. From the point of view of \eqref{Eq: Overview, Minasian-Moore formula}, a natural choice would be
\[ Q_\Z(\Sigma,x) := \phi_!(x)\,. \]
The quantized charge differs fundamentally from the corresponding classical charge in that the quantized charge may contain torsion. Real and rational cohomology are obviously insensitive to torsion due to $\Q\otimes_\Z \Z_n = 0$.

We have seen how D-branes can be classified by their RR-charge, which is a class in the $K$-theory of spacetime. This is not a very concrete description, though, considering that D-branes are associated to worldvolume manifolds and $U(n)$ vector bundles\footnote{We are still assuming that the $PU(n)$ Chan-Paton bundles lift to $U(n)$ bundles.} living on them. It was argued by Periwal in \cite{Per}, that a more ''natural'' description may be given using \emph{$K$-homology}, the \emph{dual homology theory of $K$-theory}, of spacetime. It admits a purely geometric definition \cite{Jak1,Jak2,BD,BHS}, encoding precisely by the information we have used to define a D-brane: the $\spinc$ worldvolume manifold $\Sigma$, the continuous map $\phi:\Sigma\rightarrow X$ and the $U(n)$ Chan-Paton vector bundle $E\rightarrow \Sigma$. It turns out, that there is a \emph{$K$-theoretic Poincar\'e duality isomorphism} between $K$-theory and $K$-homology, which makes the two classification schemes equally valid. They merely emphasize different physical aspect of D-branes.

\section{Cancelling the Freed-Witten anomaly}
Recall that the $B$-field is a locally defined $2$-form field and that its derivative $H:=dB$, known as the $H$-flux, is a globally defined $3$-form field. By Dirac quantization, the de Rham cohomology class $[H]\in H^3_\text{dR}(X)$ lifts to a class $[H]\in H^3(X,\Z)$, which may also have a torsion part, unlike its image in $H^3_\text{dR}(X)$. The correct mathematical framework for Dirac quantized fields is \emph{differential cohomology} \cite{Fre1,Fre2,HS}. For example, RR-fields are described by something called \emph{differential $K$-theory} \cite{Val}. As was already implicitly stated above, the integral cohomology theory associated to the $B$-field seems to be the ordinary integral cohomology $H^3(X,\Z)$. So, to properly understand the $B$-field (and the $H$-flux), we need to know what the associated differential cohomology theory is. It turns out to be something called \emph{Deligne cohomology} \cite{Bry,HS}, whose cocycles are triples $(\underline{g}, -\underline{\Lambda},\underline{B})$, where $\underline{g}$ is a $\underline{U(1)}$-valued \v Cech $2$-cocycle, $\underline{\Lambda}$ a $1$-form-valued \v Cech $1$-cochain and $\underline{B}$ a $2$-form-valued $0$-cochain on $X$, satisfying the following relations:
\[ d\log \underline{g} = \delta(\underline{\Lambda})\,, \qquad d\underline{\Lambda} + \delta(\underline{B}) = 0 \,. \]
The second equation implies $\delta(d\underline{B})=d\delta(\underline{B})=0$, which means that $d\underline{B}$ lifts to a global $3$-form $H$.

The notation above was chosen so as to make the correspondence with D-brane physics obvious: $\underline{B}$ corresponds to the physical $B$-field, $H$ to the physical $H$-flux and $\underline{\Lambda}$ to that in \eqref{Eq: A-field transformation, phys}, with the minor modification that the quantities associated to the Deligne cocycle are the corresponding physical quantities multiplied by the complex number $i$, which is left from taking $d\log$ of $\underline{g}$. Thus, from now on, we shall use the notation
\[ B := iB_\text{phys}\,, \quad H := iH_\text{phys} \,, \quad \Lambda := i\Lambda_\text{phys} \,, \quad A := iA_\text{phys} \,. \]
This changes the physical $A$-field transformation equation \eqref{Eq: A-field transformation, phys} into
\begin{equation} \label{Eq: A-field transformation} A_\beta = g_{\alpha\beta}^{-1}A_\alpha g_{\alpha\beta} + g_{\alpha\beta}^{-1}\,dg_{\alpha\beta} - \Lambda_{\alpha\beta} \,. \end{equation}
It would be convenient to have a geometric interpretation of the $B$-field, just like ordinary gauge fields are geometrically represented by a connection and curvature on a principal bundle. Such a geometric model for the $B$-field, or rather the Deligne cocycle, is given by the theory of \emph{bundle gerbes with connection and curving} \cite{Mur,BCMMS,MS1}.

The relation between the $A$-field and the $B$-field can be found by studying the exponential of the open string worldsheet action. It contains the terms
\[ \exp(iS_\text{open})\supset \pfaff(\Dirac_\xi)\cdot \exp\left(\int_M \xi^*B \right)\cdot \Tr\exp\left(\oint_{\partial M} \xi^*A \right) \,, \]
where $\xi:M\rightarrow X$ is the embedding of the open string worldsheet and $\pfaff(\Dirac_\xi)$ the \emph{Pfaffian}, the square root of the determinant, of the worldsheet Dirac operator \cite{FW}. It turns out, that $\pfaff(\Dirac_\xi)$ is not a well-defined function on the space of embeddings $M\rightarrow X$, but rather a section of a certain line bundle. There is also the other problem we mentioned earlier, namely, that the integral of the $B$-field is not well-defined due to the boundary of $M$. 
%If $A$ was be a connection on a $U(n)$ bundle, the last term would be well-defined as the trace of holonomy of $A$ around the worldsheet boundary.
This ill-definedness of the open string path integral is known as the Freed-Witten anomaly \cite{FW}. To cancel this anomaly, the gauge transformation of the $A$-field must be such that the trace of its holonomy around $\partial M$ is ill-defined in just a particular way to cancel the ill-definedness of the other two terms. It was shown in \cite{FW,Kap,CJM} that this happens if the $A$-field transforms as in \eqref{Eq: A-field transformation} and if the cohomology equation
\begin{equation} \label{Eq: FW anomaly cancellation} [H]|_\Sigma = W_3(\Sigma) + \beta([\underline{\zeta}]) \end{equation}
holds. Geometrically the $A$-field can be interpreted, not as a connection on a bundle, but a connection on an object called a \emph{bundle gerbe module} \cite{BCMMS,CJM}. In the main body of the text we shall explain what bundle gerbe modules and their connections are and go carefully through the calculations which show how the anomaly is cancelled when the above conditions hold.

Let us now explain the important equation \eqref{Eq: FW anomaly cancellation}. Let $\phi:\Sigma\rightarrow X$ denote the D-brane worldvolume. First, $[H]|_\Sigma:= \phi^*[H]\in H^3(\Sigma,\Z)$ is the restriction of the $H$-flux $[H]\in H^3(X,\Z)$ onto the worldvolume. Second, the characteristic class $W_3(\Sigma)\in H^3(\Sigma,\Z)$ is the so-called \emph{third integral Stiefel-Whitney class}, which is the obstruction for the existence of a $\spinc$ structure on $\Sigma$\footnote{Again, a reader unfamiliar with $\spinc$ structures can regard this simply as a technical condition, slightly stronger than orientability.}. Finally, $\beta([\underline{\zeta}])\in H^3(\Sigma,\Z)$ is the obstruction for lifting the Chan-Paton bundle from a $PU(n)$ bundle to a $U(n)$ bundle. Recall now, that while developing the $K$-theoretic classification of D-branes, we assumed $[H]|_\Sigma=0$, $W_3(\Sigma)=0$\footnote{$\Sigma$ had to be $\spinc$ for the Riemann-Roch theorem.} and that the Chan-Paton bundle was a $U(n)$ vector bundle for its $K$-theory class to be defined. Obviously \eqref{Eq: FW anomaly cancellation} is satisfied under those conditions.

What happens if, say, $[H]|_\Sigma \neq 0$ and there is a single D-brane wrapping $\Sigma$? The Chan-Paton bundle is then automatically a $U(1)$ bundle and thus $\beta([\underline{\zeta}])=0$. But now the anomaly cancellation condition implies that such a configuration is not physically possible unless $\Sigma$ satisfies $W_3(\Sigma)=[H]|_\Sigma$\footnote{Since $W_3(\Sigma)$ is $2$-torsion, also $[H]|_\Sigma$ must be $2$-torsion. Hence, the image of $[H]$ in de Rham cohomology must necessarily be zero.}. In this case we can form a $K$-theory class $[E]\in K^0(\Sigma)$ from the Chan-Paton bundle vector bundle $E$, but it can not be pushed into $K^0(X)$.

Consider next $n$ D-branes wrapping $\Sigma$ with a $PU(n)$ Chan-Paton bundle. Suppose that the obstruction class for the $U(n)$ lift is $\beta([\underline{\zeta}])\neq 0$\footnote{It is easy to see that $\beta([\underline{\zeta}])$ is necessarily $n$-torsion, implying that $[H]|_\Sigma$ must also be $n$-torsion for the anomaly to be cancelled.}. In this case, not even a worldvolume $K$-theory class can be constructed from the (projective) Chan-Paton bundle. 

Finally, consider the case of nontorsion $[H]|_\Sigma$. In this case something quite radical must happen for the anomaly to cancel, namely, the torsion degree of $\beta([\underline{\zeta}])$ should somehow be taken to infinity. From the physical point of view, this corresponds to taking the number of D-branes wrapping $\Sigma$ to infinity in some sense \cite{Wit1,Wit2}. The precise mathematical interpretation is to allow the Chan-Paton bundle to be a principal $PU(\Hilb)$ bundle \cite{BM,AS1}, where $\Hilb$ is the standard infinite-dimensional separable Hilbert space and $PU(\Hilb)$ the quotient of $U(\Hilb)$ by its center:
\[ PU(\Hilb) := U(\Hilb)/U(1) \,. \]
Again, the $K$-theoretic classification fails miserably, since we can not even construct the worldvolume $K$-theory class.

\section{Classifying twisted D-branes}
We have seen how the $K$-theoretic classification fails when $\Sigma$ is not $\spinc$ or when the Chan-Paton bundle can not be lifted to a $U(n)$ bundle. The key to solving these problems is to use \emph{twisted $K$-theory} \cite{Ros,Bla,AS1,BM,CW,BCMMS}, a modification of $K$-theory, where $U(n)$ vector bundles are replaced with $PU(n)$ bundles, whose transition functions $\{g_{\alpha\beta}\}$ satisfy a \emph{$\zeta$-twisted cocycle condition}
\[ g_{\beta\gamma}g_{\alpha\gamma}^{-1}g_{\alpha\beta} = \zeta_{\alpha\beta\gamma} \,. \]
The direct sum of two such \emph{$\zeta$-twisted bundles} is clearly again $\zeta$-twisted. Thus, a $\zeta$-twisted $PU(n)$ vector bundle defines a class in the \emph{$\zeta$-twisted $K$-theory} $K^0(\Sigma,\zeta)$. The group $K^0(\Sigma,\zeta)$ depends only on the \v Cech cohomology class $[\zeta]\in H^2(\Sigma,\underline{U(1)})\cong H^3(\Sigma,\Z)$. Therefore, the natural object to twist $K$-theory with, is not a cocycle but a cohomology class. More generally, $K$-theory can be twisted by any class in $H^3(\Sigma,\Z)$, torsion or nontorsion. For $\sigma\in H^3(\Sigma,\Z)$, we denote $\sigma$-twisted $K$-theory by $K^0(\Sigma,\sigma)$. When the twisting class is trivial, twisted $K$-theory reduces to ordinary ''untwisted'' $K$-theory. The definition can be given in several different ways, each useful for certain purposes. The most convenient for us being that of Rosenberg \cite{Ros}. 

The Freed-Witten anomaly cancellation condition yields\footnote{$W_3(\Sigma)$ is $2$-torsion.}
\[ \beta([\underline{\zeta}]) = [H]|_\Sigma + W_3(\Sigma) \,. \]
This means that the $PU(n)$ (or possibly even $PU(\Hilb)$) Chan-Paton vector bundle defines a class in $K^0(\Sigma,[H]|_\Sigma+W_3(\Sigma))$. For a continuous map $\phi:\Sigma\rightarrow X$, there exists a \emph{twisted Gysin homomorphism} \cite{CW}
\[ \phi_!:K^0(\Sigma,[H]|_\Sigma+W_3(\Sigma)) \rightarrow K^0(X,[H]) \,. \]
This indicates that, in the presence of a nontrivial $H$-flux, quantized D-brane charge should take values in the $[H]$-twisted $K$-theory group of spacetime. In particular, if $\beta([\underline{\zeta}])=0$ but $[H]|_\Sigma = W_3(\Sigma)\neq 0$, the class of the Chan-Paton bundle is in (untwisted) worldvolume $K$-theory, but becomes $[H]$-twisted when pushed into spacetime.

A twisted version of $K$-homology and its relation to D-brane theory are developed in \cite{Wan,Sza1}. It correctly extends the $K$-homological approach to nontrivial $B$-field backgrounds and non-$\spinc$ worldvolumes.

D-branes satisfying a nontrivial Freed-Witten anomaly cancellation condition are referred to as \emph{twisted D-branes}. We already explained how (Dirac-quantized physical equivalence classes of) configurations of twisted D-branes can be classified by twisted $K$-theory or twisted $K$-homology, instead of ordinary untwisted $K$-theory or $K$-homology. In the untwisted case we had a Chern character from $K$-theory to cohomology, which was used to construct the classical approximation of the quantized RR-charge. In the twisted case there is a \emph{twisted Chern character} from twisted $K$-theory to \emph{twisted (de Rham) cohomology} \cite{MS2,MS3,BCMMS,AS2}. Let $\zeta \in H^3_\text{dR}(X)$ be a closed de Rham $3$-form. The \emph{$\zeta$-twisted de Rham differential} is
\[ d_\zeta := d - \zeta \,, \]
which operates on differential forms $\Omega^\text{even/odd}(X)$ by $d_\zeta(\omega) = d\omega + \zeta\wedge \omega$\footnote{Clearly, $d_\zeta$ maps $\Omega^\text{even}(X)$ to $\Omega^\text{odd}(X)$ and \emph{vice versa}.}. A direct calculation shows that $d_\zeta^2 = 0$. The (periodized) $\zeta$-twisted cohomology groups are defined by
\[ H^\bullet(X,\zeta) := \ker(\delta_\zeta)/\im(\delta_\zeta) \,, \]
where $\bullet=0,1$. Moreover, it turns out that 
\[ H^\bullet(X,\zeta)\cong H^\bullet(X,\zeta+d\eta) \]
for any $2$-form $\eta$, where, on the level of cocycles, the isomorphism is given by
\[ \omega \mapsto e^\eta\wedge \omega \,. \]
Thus, de Rham cohomology is most naturally twisted by the cohomology class $[\zeta]\in H^3_\text{dR}(X)$, not the cocycle itself. More generally, if $\sigma\in H^3(X,\Z)$ and $\sigma_\text{dR}\in H^3_\text{dR}(X)$ is image under the natural inclusion, \emph{$\sigma$-twisted (de Rham) cohomology} is defined by
\[ H^\bullet(X,\sigma) := H^\bullet(X,\sigma_\text{dR}) \,, \]
where $\sigma_\text{dR}$ is the image of $\sigma$ in $H^3_\text{dR}(X)$. Now, given a twisting class $\sigma\in H^3(X,\Z)$, there exists a $\sigma$-twisted Chern character homomorphism
\[ \ch_\sigma:K^0(X,\sigma)\rightarrow H^\text{even}(X,\sigma) \,. \]
In our physical example $\sigma=[H]$. The twisted Chern character maps the push-forward of the twisted $K$-theory class of the Chan-Paton vector bundle to (compactly supported) twisted cohomology $H^0_c(X,[H])$. If $[H]$ is pure torsion, then $[H]_\text{dR}=0$ and $[H]$-twisted cohomology $H^0(X,[H])$ becomes isomorphic to untwisted de Rham cohomology $H^\text{even}_\text{dR}(X)$. Thus, when $[H]$ is pure torsion, the charge is again a class in untwisted cohomology, given by a generalization of \eqref{Eq: Overview, Minasian-Moore formula}, where $\ch$ is simply replaced with $\ch_{[H]}$\footnote{Remark, that $\omega \wedge \mu$ is $d_{\sigma_\text{dR}}$-closed for any $d_{\sigma_\text{dR}}$-closed $\omega$ and $d$-closed $\mu\in \Omega^\bullet(X)$.}.

\section{Operator $K$-theory}
Twisted $K$-theory can be defined in several ways. The most convenient for us is the definitions through \emph{operator $K$-theory} (or \emph{$K$-theory of $C^*$-algebras}) \cite{Bla}, because it automatically opens up the possibility to generalize the theory of D-branes and their RR-charges to D-branes whose worldvolumes and the spacetime are \emph{noncommutative spaces} \cite{Con,GVF,Mad,Lan2}.

Let $A$ be a $C^*$-algebra \cite{GVF,Ped}, that is, a Banach algebra with involution, satisfying the $C^*$-norm property
\[ ||a^*a|| = ||a||^2 \,, \]
for all $a\in A$. $C^*$-algebras have extremely nice spectral properties and their $K$-theory is particularly well-behaving. Morphisms between $C^*$-algebras respect the involution and are called $*$-homomorphisms. $C^*$-algebras are also at the basis of noncommutative topology and geometry: any locally compact Hausdorff space $X$ is completely characterized by the commutative $C^*$-algebra of its  continuous complex-valued functions vanishing at infinity, $C_0(X)$\footnote{If the space $X$ is compact, the algebra $C_0(X)$ coincides with $C(X)$. Thus, for a compact space, the corresponding $C^*$-algebra is unital and for a noncompact it is nonunital.}. Many important topological and geometric concepts can be expressed purely algebraically, for example vector bundles, differential calculus, $\spinc$ structures, Dirac operators and de Rham (co)homology, after which the $C^*$-algebra can be replaced by any, possibly noncommutative, $C^*$-algebra. Thus, noncommutative spaces \emph{are} $C^*$-algebras. Noncommutative ($\spin$) geometry is a vast subject and treating it in any reasonable detail would take easily a thesis or two. Nevertheless, even readers unfamiliar with the intricate details can easily appreciate the generalization of D-brane theory to the noncommutative ($C^*$-algebraic) case.

\emph{Operator $K$-theory} is a noncommutative generalization of $K$-theory of vector bundles, defined for a broad class of (topological) algebras, including the so-called \emph{local $C^*$-algebras} \cite{Bla} and, naturally, $C^*$-algebras. The most important examples of local $C^*$-algebras are the smooth function algebras of manifolds. For any manifold $X$, there are isomorphisms
\[ K_0(C_0^\infty(X)) \cong K_0(C_0(X))\cong K^0(X) \,, \]
where $K_0$ denotes operator $K$-theory.

%Likewise, one can define a $K$-homology theory of $C^*$-algebras, often called \emph{analytic $K$-homology}. The analytic $K$-homology groups of a $C^*$-algebra $A$ are denoted by $K^\bullet(A)$, $\bullet=0,1$. It was proven in \cite{BHS} that $K^\bullet(C_0(X))\cong K_\bullet^\text{geom}(X)$ if $X$ is of the homotopy type of a finite CW-complex. Again, we shall not bother giving the explicit definition at this point. A comprehensive treatment of analytic $K$-homology is given in \cite{HR}. Operator $K$-theory and analytic $K$-homology are \emph{dual} theories in a sense which generalizes the duality between $K$-theory and geometric $K$-homology.

Twisted $K$-theory can now be defined using operator $K$-theory as follows \cite{Ros}. Let $\sigma \in H^3(X,\Z)$ be a twisting cohomology class. There is bijective correspondence between $H^3(X,\Z)$ and isomorphism classes of principal $PU(\Hilb)$ bundles over $X$\footnote{In other words, the classifying space $BPU(\Hilb)$ is a model for the Eilenberg-MacLane space $K(\Z,3)$ \cite{AS1}.}. The cohomology class corresponding to a principal $PU(\Hilb)$ bundle is called its \emph{Dixmier-Douady class}. Let $P$ be a principal $PU(\Hilb)$ bundle with Dixmier-Douady class $\sigma$. We denote the $C^*$-algebra of compact operators on the standard infinite-dimensional separable Hilbert space by $\mathcal{K}$. Then, since $PU(\Hilb)\cong \Aut(\mathcal{K})$, we can form the associated $PU(\Hilb)$ vector bundle over $X$ with fibres isomorphic to $\mathcal{K}$:
\[ \E_{\sigma}:= P_\sigma\times_{PU(\Hilb)} \mathcal{K} \rightarrow X \,. \]
Rosenberg defined $\sigma$-twisted $K$-theory as
\[ K^0(X,\sigma) := K_0(C_0(X,\E_\sigma)) \,. \]
A principal $PU(\Hilb)$ bundle with trivial Dixmier-Douady class is the projectivization of a principal $U(\Hilb)$ bundle \cite{AS1}. The famous theorem of Kuiper \cite{Kui} states that $U(\Hilb)$ is contractible (in strong operator topology). Thus, the fibres are all contractible, which means that the principal bundle must trivial. The associated bundle $\E_0\rightarrow X$ is simply the trivial bundle $X\times \mathcal{K}$. The twisted function algebra is then
\[ C_0(X,\E_0) = C_0(X,\mathcal{K}) \cong C_0(X)\otimes \mathcal{K}\,. \]
$K$-theory is \emph{stable}, or \emph{Morita invariant}, in the sense that
\[ K_0(A\otimes \mathcal{K}) \cong K_0(A) \,, \]
for any $C^*$-algebra $A$. This shows that twisted $K$-theory reduces to ordinary untwisted $K$-theory when the twisting class vanishes:
\[ K^0(X,0)=K_0(C_0(X,\E_0))=K_0(C_0(X)\otimes \mathcal{K})\cong K_0(C_0(X)) \cong K^0(X) \,, \]
as would be expected.

There exists also a \emph{noncommutative Chern character}, which coincides with the twisted Chern character mentioned earlier, when the $C^*$-algebra is a twisted function algebra. In noncommutative geometry (periodized) de Rham cohomology of a smooth manifold $X$ is given by \emph{periodic cyclic homology} $\HP_\bullet$ \cite{Con,GVF} of the smooth function algebra $C_0^\infty(X)$:
\[ H^\bullet_\text{dR}(X) \cong \HP_\bullet(C_0^\infty(X)) \,,\quad \bullet=\text{even/odd} \,. \]
Similarly, (periodized) de Rham homology is given \emph{periodic cyclic cohomology} $\HP^\bullet(C_0^\infty(X))$. Let now $\mathcal{A}$ be any local $C^*$-algebra. The noncommutative version of the Chern character is the \emph{Chern-Connes character} \cite{GVF}:
\[ \ch:K_0(\mathcal{A})\rightarrow \HP_0(\mathcal{A}) \,. \]
We denote by $A$ the $C^*$-algebra completition of $\mathcal{A}$. For example $A=C_0(X)$ for $\mathcal{A}=C_0^\infty(X)$ and $A=C_0(P\times_{PU(\Hilb)} \mathcal{K})$ for $\mathcal{A}=C_0^\infty(P\times_{PU(\Hilb)} \mathcal{L}^1)$, where $P$ is a principal $PU(\Hilb)$ bundle with Dixmier-Douady class $\sigma$ and $\mathcal{L}^1\subset \mathcal{K}$ is the ideal of trace class operators. Finally, $\sigma$-twisted cohomology is isomorphic to periodic cyclic homology of the twisted smooth function algebra \cite{MS3}. Thus, the diagram
\[ \xymatrix{ K^0(X,\sigma) \ar[r]^-{\text{def.}} \ar[d]_{\ch_{\sigma}} & K_0(A) \ar[r]^\cong & K_0(\mathcal{A}) \ar[d]^{\ch} \\ H^0(X,\sigma) \ar[rr]^\cong & & \HP_0(\mathcal{A}) } \]
commutes.

There exists also noncommutative generalization of $K$-homology, denoted by $K^0(A)$, for a $C^*$-algebra $A$ \cite{Kas3,HR}. Twisted $K$-homology of a topological space can be defined as $K$-homology of the twisted function algebra.

\section{Noncommutative generalizations}
The $C^*$-algebraic definitions given above open up hopes for generalizing the Minasian-Moore formula \eqref{Eq: Overview, Minasian-Moore formula} to more general \emph{noncommutative D-branes} in a noncommutative spacetime \cite{BMRS1,BMRS2,Sza2}. Let the $C^*$-algebra $B$ be a noncommutative D-brane worldvolume and $A$ a noncommutative spacetime. At the level of algebras the map $\phi:\Sigma\rightarrow X$ is replaced by a $*$-homomorphism $\phi:A\rightarrow B$. The quantized RR-charge of the noncommutative D-brane $B$ is an operator $K$-theory class of $B$, pulled back into the $K$-theory of $A$ by a noncommutative version of the Gysin homomorphism. The existence of the Gysin homomorphism can be thought of as a noncommutative analogue of the Freed-Witten anomaly cancellation condition. To be able to understand noncommutative Gysin homomorphisms and the relation between $K$-theory and $K$-homology (of $C^*$-algebras), we need to introduce \emph{Kasparov's $KK$-theory}, a bivariant unification of $K$-theory and $K$-homology \cite{Kas1,Kas2,Bla,Hig1,Hig2,KT}.

$KK$-theory can be defined both axiomatically and explicitly\footnote{In fact, several different explicit constructions for it are known.}. For now, it is enough for the reader to think of $KK_\bullet$, $\bullet=0,1$ as bivariant bifunctors from pairs of separable $C^*$-algebras to the category of Abelian groups, contravariant in the first variable and covariant in the second. $KK$-theory unifies $K$-theory and $K$-homology in the following sense:
\begin{align*} K_0(A) &\cong KK_0(\C,A) \qquad \text{($K$-theory)} \\ K^0(A) &\cong KK_0(A,\C) \qquad \text{($K$-homology)} \end{align*}
Possibly the most important property of $KK$-theory is the existence of a powerful product structure, the \emph{Kasparov product}: for separable $C^*$-algebras $A_1,A_2,B_1,B_2,D$, the Kasparov product is an associative, bilinear map
\[ \otimes_D:KK_i(A_1,B_1\otimes D)\times KK_j(D\otimes A_2,B_2) \rightarrow KK_{i+j}(A_1\otimes A_2,B_1\otimes B_2) \,. \]
It is convenient to restrict the discussion to \emph{nuclear} $C^*$-algebras, for which there is no ambiguity in the topology of the tensor product $C^*$-algebras \cite{Lan1}.

We are interested in pairs of separable $C^*$-algebras $(A,\widetilde{A})$, for which there exists a \emph{fundamental class} $\Delta\in KK_d(A\otimes \widetilde{A},\C)$, ''invertible'', in a certain sense, with respect to the Kasparov product. The pair $(A,\widetilde{A})$ is called a \emph{Poincar\'e duality (PD) pair}. If a $C^*$-algebra $B$ and its opposite algebra $B^\circ$ form a PD pair $(B,B^\circ)$, we say that $B$ is a \emph{Poincar\'e duality (PD) algebra}. Poincar\'e duality for a PD pair is defined by taking ''cap product'' with the fundamental class:
\[ KK_\bullet(\C,A) \xrightarrow{(-)\otimes_A \Delta} KK_{\bullet+d}(\widetilde{A},\C) \,. \]
The fundamental class is not unique. In fact, the space of fundamental classes is isomorphic to the group of invertible classes in the ring $KK_0(A,A)\cong KK_0(B,B)$. For example, $(C_0(X),C_0(X,\Cliff(T^*X))$ is a PD pair. Moreover, if $X$ is $\spinc$, then $C(X)$ is a PD algebra. 

Next we need to introduce a \emph{bivariant cyclic homology theory}, which is compatible with $KK$-theory. It is clear that periodic cyclic theory can not be used here, since it does not work well for $C^*$-algebras and (Kasparov's) $KK$-theory works \emph{only} for $C^*$-algebras. Instead, we need to use the \emph{bivariant local cyclic homology} of Puschnigg \cite{Pus1,Pus2}, denoted by $\HL_\bullet(-,-)$, $\bullet=1,2$, which works for both $C^*$-algebras and for local $C^*$-algebras. Moreover, it satisfies
\[ \HL_\bullet(\C,C_0(X,\E_\sigma) \cong \HL_\bullet(\C,C_0^\infty(X,\E_\sigma)) \cong \HP_\bullet(C_0^\infty(X,\E_\sigma)) \cong H^\bullet_c(X,\sigma) \]
and an analogous result in homology. There exists a \emph{bivariant Chern character}, a homomorphism
\[ \ch:KK_\bullet(A,B) \rightarrow \HL_\bullet(A,B) \,. \]
Restricting to $KK_0(\C,C_0(X,\E_\sigma)$ yields the usual twisted Chern character. Given a pair $(A,\widetilde{A})$ of separable $C^*$-algebras, for which there exists a certain \emph{cyclic fundamental class} $\Xi\in \HL_d(A\otimes \widetilde{A},\C)$, we have an analogous notion of Poincar\'e duality in $\HL$-theory as we had in $KK$-theory. The pair $(A,\widetilde{A})$ is called a \emph{cyclic Poincar\'e duality (C-PD) pair}. Again, the cyclic fundamental class of a C-PD pair is by no means unique. Any PD pair $(A,\widetilde{A})$, with fundamental class $\Delta$, is automatically also a C-PD pair, since $\ch(\Delta)$ is a cyclic fundamental class. However, in general there is no reason to assume that $\Xi = \ch(\Delta)$. The difference between $\Xi$ and $\ch(\Delta)$ is the \emph{noncommutative $\Todd$ class} $\Todd(A)\in \HL_0(A,A)$. A noncommutative Minasian-Moore formula would require us to take the square root of the $\Todd$ class, which is not always possible. We assume for now that $\sqrt{\Todd(A)}\in \HL_0(A,A)$ exists for the spacetime $C^*$-algebra $A$.

The final ingredient needed for a noncommutative Minasian-Moore formula is a generalization of the Gysin map. A $*$-homomorphism $f:A\rightarrow B$ is said to be \emph{$K$-oriented}, if there exists a class $f!\in KK_d(B,A)$, satisfying certain functorial properties. It can be used to define the \emph{noncommutative Gysin homomorphisms}
\[ f_! := (-)\otimes_B f!:KK_\bullet(\C,B)\rightarrow K_{\bullet+d}(\C,A) \]
and
\[ f^! := f!\otimes_A (-):K_\bullet(A,\C)\rightarrow K_{\bullet+d}(B,\C) \,. \]
One can construct similar Gysin maps in $\HL$-theory, if there exists a class $f*\in \HL_d(B,A)$ with the same functorial properties. Given a suitable noncommutative D-brane $B$, a suitable noncommutative spacetime $A$ and a $*$-homomorphism $\phi:A\rightarrow B$, the Minasian-Moore formula \eqref{Eq: Overview, Minasian-Moore formula} is generalized by
\[ Q_\C(B,\xi) := \ch(\phi_!(\xi))\otimes_A \sqrt{\Todd(A)} \in \HL_\bullet(\C,A) \,, \]
where $\xi\in K_\bullet(B)$ is a noncommutative Chan-Paton bundle.

\chapter{D-Branes And $K$-Theory}\label{Chapter: D-Branes And $K$-Theory}
In this chapter we shall explore the connection between $K$-theory and D-branes. The spacetime $X$ was assumed to be a smooth oriented Riemannian $\spin$ manifold and the D-brane worldvolume $\Sigma$ an even-dimensional oriented manifold, together with a continuous map $\phi:\Sigma\rightarrow X$. For now, we assume that $\Sigma$ is $\spinc$, that the quantized $H$-flux $[H]\in H^3(X,\Z)$ vanishes and that $\Sigma$ carries a $U(n)$ Chan-Paton bundle. These assumptions will be dropped (much) later, after we have discussed the Freed-Witten anomaly.

Before delving into $K$-theory, we shall take some time to introduce a few mathematical concepts which will be of fundamental importance throughout the text: characteristic classes and $\spin$/$\spinc$ structures.

\section{Universal bundles and characteristic classes}
Throughout the text we shall make extensive use of so-called \emph{characteristic classes}, for example the \emph{Chern character}, the \emph{Chern classes}, the \emph{Atiyah-Hirzebruch class $\widehat{A}$}, the \emph{Stiefel-Whitney classes} and the \emph{Todd class}. It is, therefore, worthwhile to devote a few pages to a discussion of the basic ideas and definitions. More extensive treatments can be found in numerous books, for example \cite{MS4,BT,May,LM,Hus}, which are our main references.

Characteristic classes of a vector bundle $E\rightarrow X$, real or complex, are cohomology classes of $X$, which measure the nontriviality of $E$. There are important differences between real and complex vector bundles, as we shall see. Let us start by introducing \emph{classifying spaces}.

\begin{definition}\label{Definition: Classifying space} A \emph{classifying space} for a Lie group $G$ is a topological space $BG$ equipped with a principal $G$ bundle $EG\rightarrow BG$, the \emph{universal bundle}, whose total space $EG$ is contractible. The space of isomorphism classes of principal $G$ bundles over any manifold $X$ is isomorphic to $[X,BG]$, the space of homotopy classes of continuous maps $X\rightarrow BG$. More precisely, every principal $G$ bundle over $X$ is isomorphic to a pullback of $EG\rightarrow BG$ by a \emph{classifying map} $f:X\rightarrow BG$. It is a standard result in bundle theory that the isomorphism class of the pullback bundle $f^{-1}EG\rightarrow X$ depends only on the homotopy class of $f$.
\end{definition}
\begin{theorem}\label{Theorem: Classifying spaces exist}
Classifying spaces, as defined above, always exist.
\end{theorem}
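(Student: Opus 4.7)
The plan is to prove existence via Milnor's infinite join construction. For a topological (in particular Lie) group $G$, define $EG$ as the infinite join $G * G * G * \cdots$, realized as the direct limit of the finite joins $G^{*n}$, whose points are formal combinations $t_1 g_1 \oplus \cdots \oplus t_n g_n$ with $t_i \geq 0$, $\sum_i t_i = 1$, subject to the identification that $g_i$ is discarded whenever $t_i = 0$. The diagonal right action of $G$ on $EG$ is free, because on any point at least one barycentric coordinate $t_i$ is positive, so $g_i \cdot g = g_i$ forces $g = e$. Set $BG := EG/G$ with the quotient topology.

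The first substantive step is contractibility of $EG$. Joining with a nonempty space raises connectivity by one, so each $G^{*n}$ is $(n{-}2)$-connected and the colimit is weakly contractible; as each $G^{*n}$ carries a natural CW structure, $EG$ is a CW complex and Whitehead's theorem upgrades this to genuine contractibility. A fully explicit argument uses the shift $s(g_1, g_2, \ldots) := (e, g_1, g_2, \ldots)$, which is homotopic to $\id_{EG}$ by the straight-line homotopy in barycentric coordinates, and simultaneously homotopic to the constant map at $e$ by linearly pushing all weight into the first slot.

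Next I would verify that $\pi: EG \to BG$ is a principal $G$-bundle. Over the open set $U_i \subset BG$ where the $i$-th barycentric coordinate is positive, one obtains a continuous section by normalizing $g_i = e$, giving local trivializations with continuous transition functions relative to the direct limit topology. For the universal property, given a principal $G$-bundle $P \to X$ over a paracompact base, choose a trivializing open cover $\{U_i\}$ with local sections $s_i$ and a subordinate partition of unity $\{\rho_i\}$; the formula
\[ p \ \longmapsto \ \rho_1(\pi(p))\, g_1(p) \oplus \rho_2(\pi(p))\, g_2(p) \oplus \cdots \,, \]
where $g_i(p) \in G$ is determined on $U_i$ by $p = s_i(\pi(p)) \cdot g_i(p)$, defines a $G$-equivariant map $P \to EG$ and hence a classifying map $f: X \to BG$ with $f^{-1}EG \cong P$. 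Two such choices are homotopic via straight-line interpolation in $EG$, so the induced map $[X,BG] \to \mathrm{Prin}_G(X)$ is a bijection.

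The main obstacle I expect is establishing local triviality of $\pi$ in a manner that respects the direct limit topology while keeping all transition data continuous, together with checking that the classifying map assembled from a locally finite partition of unity is continuous as a whole. Both points are standard but technically delicate; the paper's blanket paracompactness assumption on manifolds is exactly what makes the partition-of-unity step go through, and Dold's numerable-cover condition handles the general case should one wish to drop paracompactness.
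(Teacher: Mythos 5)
Your proposal follows exactly the same route as the paper: Milnor's infinite join construction for $EG$, contractibility via the connectivity of finite joins plus Whitehead's theorem, and the quotient $BG = EG/G$; the paper additionally frames the existence categorically via Brown representability and the Yoneda lemma before giving Milnor's construction, and cites Steenrod/Husemöller for the local trivializations and universality that you flesh out via partitions of unity. One small caution on your alternative contractibility argument: the ``straight-line homotopy'' from $\id_{EG}$ to the shift $s(g_1,g_2,\ldots)=(e,g_1,g_2,\ldots)$ is not immediately well-defined, since at intermediate times a single join coordinate would have to carry weight on two different group elements ($g_i$ and $g_{i-1}$) simultaneously; the standard fix is to first shift into only the odd (resp.\ even) slots so the interpolations never collide, and then compare the two shifted copies through a common constant.
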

\begin{proof}
Category theoretically the theorem is motivated as follows. The cofunctor
\[ \mathrm{Prin}_G(-):\mathbf{hCW}\rightarrow \mathbf{Set} \]
from the homotopy category of pointed CW-complexes to the category of sets, sending a CW-complex $X$ to the set $\mathrm{Prin}_G(X)$ of isomorphism classes of principal $G$ bundles over $X$, is representable by the \emph{Brown's representability theorem} \cite{Bro1,Bro2}. An application of the \emph{Yoneda lemma} \cite{Mac} implies the existence of a \emph{universal element} for the functor $\mathrm{Prin}_G(-)$, which establishes the representation. This universal element is precisely the universal bundle $EG\rightarrow BG$. The Yoneda lemma yields the bijective correspondence $\mathrm{Prin}_G(X)\cong [X,BG]$ and the pullback construction of definition \ref{Definition: Classifying space}.

The categorial approach is not very convenient for practical calculations. Milnor \cite{Mil} discovered the following explicit construction. The \emph{join} of two topological spaces $X$ and $Y$, denoted by $X\star Y$, is the quotient space $X\times Y\times [0,1]/\sim$, where $\sim$ collapses $\{x\}\times Y\times \{0\}$ and $X\times \{y\} \times \{1\}$, for each $x\in X$, $y\in Y$, to points. We denote the $(n+1)$-fold join $G\star \ldots \star G$ by $EG_n$. Its points are specified by $n$-tuples $(t_1,\ldots,t_n)\in [0,1]\times\ldots \times [0,1]$, with $\sum_i t_i = 1$, together with elements $g_i\in G$, $i=1\ldots n$, for which $t_i\neq 0$. Denoting the points by $(t_1 g_1,\ldots,t_n g_n)$ makes the construction more clear. The topology of $EG_n$ is chosen to be the strongest topology, such that the coordinate functions
\[ t_i:EG_n \rightarrow [0,1] \quad \text{and} \quad g_i:t_i^{-1}(0,1]\rightarrow G \]
are continuous. The \emph{infinite join} $G\star G\star\ldots$ is the direct limit
\[ EG := \varinjlim_n E_n \]
of topological spaces.

One can show that each $EG_n$ is $(n-1)$-connected, that is, $\pi_k(EG_n)=0$ for all $0\le k \le n-1$. It follows that $EG$ is \emph{weakly contractible} \cite{May}:
\[ \pi_k(EG) = \pi_k(\varinjlim_n EG_n) \cong \varinjlim_n \pi_k(EG_n) = 0 \,, \]
which implies, by \emph{Whitehead's theorem} \cite{May}, that $EG$ is contractible.

The universal bundle is constructed as follows. There is a continuous right action of $G$ on $EG$, given by
\[ (t_1 g_1,\ldots,t_n g_n)g := (t_1 g_1 g,\ldots t_n g_n g)\,. \]
The base space $BG$ is the space of orbits $EG/G$ and the bundle projection the canonical projection $EG\rightarrow EG/G$. Local trivializations and transition functions are given in \cite{Mil}, along with the required proofs for continuity. An explicit verification of the universal property is given in \cite{Hus}, but it turns out that contractibility is already enough to prove universality \cite{Ste1}\footnote{Compare this to the theory of Eilenberg-MacLane spaces, where it is enough to find a space with the correct homotopy properties to immediately conclude that it provides a representation of the cohomology functor \cite{May}. In particular, the $n$-th Eilenberg-MacLane space must be $(n-1)$-connected.}.
\end{proof}
Let us now focus on the cases $G=U(n)$ and $G=O(n)$. Let $K$ denote either $\C$ or $\R$ and $\mathrm{Gr}_n(K^N)$ the set of $n$-dimensional $K$-linear subspaces, that is, $n$-planes passing through the origin of $K^N$. A little thought reveals that
\begin{eqnarray*} \mathrm{Gr}_n(\C^N) \cong \frac{U(N)}{U(N-n)\times U(n)} \\ \mathrm{Gr}_n(\R^N) \cong \frac{O(N)}{O(N-n)\times O(n)} \,, \end{eqnarray*}
because $U(n)$ (resp. $O(n)$) acts transitively on $\mathrm{Gr}_n(\C^N)$ (resp. $\mathrm{Gr}_n(\R^N)$) and a particular $n$-plane is left invariant precisely under the subgroups of rotations $U(n)$ (resp. $O(n)$) of the plane itself and rotations $U(N-r)$ (resp. $O(N-r)$) of the complementary subspace. The natural inclusions
\[ \mathrm{Gr}_n(K^N)\subset \mathrm{Gr}_n(K^{N+N'}) \]
allow us to define the direct limit
\[ \mathrm{Gr}_n(K^\infty) := \varinjlim_N \mathrm{Gr}_n(K^N) \,. \]
These are the famous complex and real \emph{infinite Grassmann manifolds}. There exist certain \emph{tautological bundles} $\mathrm{E}_n(K^N)\rightarrow \mathrm{Gr}_n(K^N)$, whose fibres at a particular $n$-plane are the orthonormal frames of the plane itself, acted upon by $U(n)$ or $O(n)$ in the obvious manner. Taking the direct limit over $N$ yields corresponding bundles over the infinite Grassmannians:
\[ \vcenter{ \xymatrix{ \mathrm{E}_n(K^\infty) \ar[d] \\ \mathrm{Gr}_n(K^\infty) } } := \varinjlim_N \vcenter{ \xymatrix{ \mathrm{E}_n(K^N) \ar[d] \\ \mathrm{Gr}_n(K^N) } } \,. \]
\begin{proposition}
\[ \vcenter{ \xymatrix{ \mathrm{E}_n(\C^\infty) \ar[d] \\ \mathrm{Gr}_n(\C^\infty) } } \cong \vcenter{ \xymatrix{ EU(n) \ar[d] \\ BU(n) } } \quad \text{and} \quad  \vcenter{ \xymatrix{ \mathrm{E}_n(\R^\infty) \ar[d] \\ \mathrm{Gr}_n(\R^\infty) } } \cong \vcenter{ \xymatrix{ EO(n) \ar[d] \\ BO(n) } } \,. \]
\end{proposition}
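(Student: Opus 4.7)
The plan is to invoke the remark made in the proof of Theorem \ref{Theorem: Classifying spaces exist}, that contractibility of the total space of a principal $G$ bundle already forces universality. Granting this, the proposition reduces to two statements for both $K = \C$ and $K = \R$: first, that $\mathrm{E}_n(K^\infty) \to \mathrm{Gr}_n(K^\infty)$ is genuinely a principal $U(n)$ bundle (resp.\ principal $O(n)$ bundle); second, that its total space is contractible.

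For the principal bundle structure, I would identify $\mathrm{E}_n(K^N)$ with the Stiefel manifold of orthonormal $n$-frames in $K^N$. The group $U(n)$ (resp.\ $O(n)$) acts freely on the right by change of frame, and the orbit through a given frame is exactly the set of orthonormal frames spanning the same $n$-plane, so the orbit space is $\mathrm{Gr}_n(K^N)$ by definition. Local triviality around a plane $V_0$ follows by choosing a neighborhood small enough that orthogonal projection $V \to V_0$ is an isomorphism for every nearby plane $V$, and then producing a continuous section by Gram--Schmidt orthonormalization of the images of a fixed frame of $V_0$. All of these constructions are compatible with the inclusions $K^N \hookrightarrow K^{N+N'}$, so passing to the direct limit yields the analogous principal bundle statement for $\mathrm{E}_n(K^\infty) \to \mathrm{Gr}_n(K^\infty)$.

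The substantive step is contractibility of $\mathrm{E}_n(K^\infty)$. The strategy is to show that the finite-stage total spaces $\mathrm{E}_n(K^N)$ have connectivity going to infinity as $N \to \infty$, and then to conclude by the same direct-limit plus Whitehead argument already used in Theorem \ref{Theorem: Classifying spaces exist}. The connectivity bound is obtained inductively from the fiber bundles
\[ S^{d(N-n+1)-1} \longrightarrow \mathrm{E}_n(K^N) \longrightarrow \mathrm{E}_{n-1}(K^N) \,, \qquad d = \dim_\R K \,, \]
given by forgetting the last vector of the frame, together with the long exact sequence of homotopy groups; the base case $\mathrm{E}_1(K^N) = S^{dN-1}$ is $(dN-2)$-connected, and the inductive step shows that the connectivity of $\mathrm{E}_n(K^N)$ tends to infinity with $N$. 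Commuting $\pi_k$ with the direct limit then gives $\pi_k(\mathrm{E}_n(K^\infty)) = 0$ for every $k$, and contractibility follows from Whitehead's theorem as in Theorem \ref{Theorem: Classifying spaces exist}.

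The only nontrivial obstacle is the connectivity computation for the finite Stiefel manifolds; once it is in place, the limiting argument is identical to the one already carried out for $EG$, and the universality criterion via contractibility supplies the rest. Putting everything together, $\mathrm{E}_n(K^\infty) \to \mathrm{Gr}_n(K^\infty)$ is a principal $U(n)$ (resp.\ $O(n)$) bundle with contractible total space, hence a concrete model of $EU(n) \to BU(n)$ (resp.\ $EO(n) \to BO(n)$) as claimed.
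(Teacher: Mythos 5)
Your proposal is correct and arrives at the same conclusion as the paper, but the connectivity computation for the Stiefel manifolds $\mathrm{E}_n(K^N)$ is organized differently. The paper identifies $\mathrm{E}_n(\C^N)$ with $U(N)/U(N-n)$ (and analogously for $O$), then uses the sphere fibrations $U(N-1)\hookrightarrow U(N)\to S^{2N-1}$ to show that the inclusion $\pi_k(U(N-n))\to\pi_k(U(N))$ is an isomorphism in low degrees, and finally feeds this isomorphism into the long exact sequence of $U(N-n)\hookrightarrow U(N)\to \mathrm{E}_n(\C^N)$ to force the low-degree homotopy of the Stiefel manifold to vanish. You instead induct directly on the frame length $n$, using the ``forget the last vector'' fibrations $S^{d(N-n+1)-1}\to \mathrm{E}_n(K^N)\to\mathrm{E}_{n-1}(K^N)$ with base case $\mathrm{E}_1(K^N)=S^{dN-1}$; the long exact sequence then gives the connectivity bound immediately, without ever discussing $\pi_k(U(N))$. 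Both routes are standard and land on the same estimate of connectivity growing with $N$, after which the direct-limit, Whitehead, and ``contractibility implies universality'' steps are identical. Your approach has the advantage of being self-contained at the level of Stiefel manifolds, and you also spell out the principal bundle structure (free action, local triviality via projection and Gram--Schmidt) which the paper tacitly assumes rather than verifies; the paper's approach has the advantage of passing through the homotopy-stability of $U(N)$ and $O(N)$, a fact it reuses elsewhere (e.g.\ in the Bott periodicity discussion).
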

\begin{proof}
The total spaces $\mathrm{E}_n(\C^N)$ and $\mathrm{E}_n(\R^N)$, consisting of orthonormal frames of $n$-planes, are diffeomorphic to the \emph{Stiefel manifolds} $U(N)/U(N-n)$ and $O(N)/O(N-n)$. This is reasonably clear, since the subgroups $U(N-n)\subset U(N)$ and $O(N-n)\subset O(N)$ of rotations in the complementary subspace of a particular $n$-plane are precisely those that leave a frame of the $n$-plane invariant. The fibrations
\begin{eqnarray*} \xymatrix{ U(N-1) \ar@{^{(}->}[r] & U(N) \ar[r] & S^{2N+1} } \\ \xymatrix{ O(N-1) \ar@{^{(}->}[r] & O(N) \ar[r] & S^N } \end{eqnarray*}
yield long exact sequences
\begin{eqnarray*} \xymatrix{ \ldots \ar[r] & \pi_{k+1}(S^{2N+1}) \ar[r] & \pi_k(U(N-1)) \ar[r] & \pi_k(U(N)) \ar[r] & \pi_k(S^{2N+1}) \ar[r] & \ldots } \\ \xymatrix{ \ldots \ar[r] & \pi_{k+1}(S^N) \ar[r] & \pi_k(O(N-1)) \ar[r] & \pi_k(O(N)) \ar[r] & \pi_k(S^N) \ar[r] & \ldots } \end{eqnarray*}
in homotopy. Since $\pi_i(S^m)=0$ for $i\le m-1$, we obtain the isomorphisms
\begin{eqnarray*} \pi_k(U(N-1))\cong \pi_k(U(N)) \cong \ldots \cong \pi_k(U(N-n)) \,,\quad k\le 2(N-n)+1 \\ \pi_k(O(N-1))\cong \pi_k(O(N)) \cong \ldots \cong \pi_k(O(N-n)) \,,\quad k\le N-n-1 \,. \end{eqnarray*}
On the other hand, the fibrations
\begin{eqnarray*} \xymatrix{ U(N-n) \ar@{^{(}->}[r] & U(N) \ar[r] & U(N)/U(N-n) \cong \mathrm{E}_n(\C^N) } \\ \xymatrix{ O(N-n) \ar@{^{(}->}[r] & O(N) \ar[r] & O(N)/O(N-n) \cong \mathrm{E}_n(\R^N) } \end{eqnarray*}
induce long exact sequences
\begin{eqnarray*} \xymatrix{ \ldots \ar[r]^-{i_k} & \pi_k(U(N)) \ar[r] & \pi_k(\mathrm{E}_n(\C^N) \ar[r] & \pi_{k-1}(U(N-n)) \ar[r]^-{i_{k-1}} & \ldots } \\ \xymatrix{ \ldots \ar[r]^-{j_k} & \pi_k(O(N)) \ar[r] & \pi_k(\mathrm{E}_n(\R^N) \ar[r] & \pi_{k-1}(O(N-n)) \ar[r]^-{j_{k-1}} & \ldots } \end{eqnarray*}
But we already know, that the inclusion maps
\begin{eqnarray*} i_k:\pi_k(U(N-n))\rightarrow \pi_k(U(N)) \\ j_k:\pi_k(O(N-n))\rightarrow \pi_k(O(N)) \end{eqnarray*}
are isomorphisms for $k$ small enough. Due to exactness, this is possible only if
\begin{eqnarray*} \pi_k(\mathrm{E}_n(\C^N)) = 0\,,\quad k\le 2(N-n)+1 \\
\pi_k(\mathrm{E}_n(\R^N)) = 0 \,,\quad k\le N-n-1 \,. \end{eqnarray*}
The proof is concluded by
\begin{eqnarray*} \pi_k(\mathrm{E}_n(\C^\infty)) = \pi_k(\varinjlim_N \mathrm{E}_n(\C^N))\cong \varinjlim_N \pi_k(\mathrm{E}_n(\C^N)) = 0 \\ \pi_k(\mathrm{E}_n(\R^\infty)) = \pi_k(\varinjlim_N \mathrm{E}_n(\R^N))\cong \varinjlim_N \pi_k(\mathrm{E}_n(\R^N)) = 0  \,, \end{eqnarray*}
an application of Whitehead's theorem to show that $\mathrm{E}_n(\C^\infty)$ and $\mathrm{E}_n(\R^\infty)$ are contractible, and finally an application of the result in \cite{Ste1}, which states that contractibility is enough to imply universality.
\end{proof}

Using the ''associated vector bundle'' construction, the representation theorem for $U(n)$ principal bundles can be translated to the language of $U(n)$ vector bundles: every $U(n)$ vector bundle over $X$ is obtained, up to isomorphism, as the pullback of the associated vector bundle of $EU(n)$ by a map $f:X\rightarrow BU(n)$. Obviously, a similar result holds for $O(n)$ vector bundles.

The cohomology groups of classifying spaces are extremely interesting and are the starting point for constructing characteristic classes of bundles.
\begin{definition}
The cohomology classes in $H^\bullet(BG,\Lambda)$ are called \emph{universal characteristic classes}\footnote{Here $\Lambda$ denotes the coefficient ring.}. By the representation theorem, any principal $G$ bundle $P\rightarrow X$ is isomorphic to the pullback bundle $f^{-1}EG\rightarrow X$ by a continuous map $f:X\rightarrow BG$. The \emph{characteristic classes} of $P$ are the pullbacks of the universal characteristic classes by $f$ into $H^\bullet(X,\Lambda)$. They are well-defined, since $f$ is unique up to homotopy.
\end{definition}

Let us look at some explicit examples.
\begin{proposition}[\cite{MS4}]
The cohomology ring $H^\bullet(BO(n),\Z_2)$ is a freely generated polynomial ring $\Z_2[w_1,\ldots,w_n]$, whose generators 
\[ w_k \in H^k(BO(n),\Z_2) \]
are called the \emph{universal Stiefel-Whitney classes}.
\end{proposition}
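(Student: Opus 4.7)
The plan is to proceed by induction on $n$. For the base case $n=1$, we have $O(1) \cong \Z_2$, so $BO(1)$ is the Eilenberg-MacLane space $K(\Z_2,1)$, modelled concretely by $\RP^\infty$; its mod $2$ cohomology is the polynomial ring $\Z_2[w_1]$ on a degree-$1$ generator by a standard computation.

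For the inductive step, I would exploit the unit sphere bundle of the universal $n$-plane bundle $\gamma^n \rightarrow BO(n)$. Since specifying a unit vector in the fibre $\R^n$ reduces the structure group from $O(n)$ to $O(n-1)$, the total space $S(\gamma^n)$ is itself a classifying space for $O(n-1)$, producing a fibration
\[ S^{n-1} \hookrightarrow BO(n-1) \xrightarrow{\pi} BO(n)\,. \]
Its Gysin sequence with $\Z_2$ coefficients reads
\[ \cdots \rightarrow H^{k-n}(BO(n),\Z_2) \xrightarrow{\smile w_n} H^k(BO(n),\Z_2) \xrightarrow{\pi^*} H^k(BO(n-1),\Z_2) \rightarrow H^{k-n+1}(BO(n),\Z_2) \rightarrow \cdots\,, \]
where $w_n \in H^n(BO(n),\Z_2)$ is defined to be the mod $2$ Euler class of $\gamma^n$.

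The crux is to show $\pi^*$ is surjective. Under the identification $S(\gamma^n) \simeq BO(n-1)$, the map $\pi$ is homotopic to the inclusion $BO(n-1)\rightarrow BO(n)$ induced by $O(n-1)\hookrightarrow O(n)$, and the pullback $\pi^*\gamma^n$ canonically splits off a trivial line bundle (spanned by the tautological section on the sphere bundle), so $\pi^*\gamma^n \cong \gamma^{n-1}\oplus \epsilon^1$. The Whitney sum formula then gives $\pi^*(w_i(\gamma^n)) = w_i(\gamma^{n-1})$ for $i<n$, and these classes exhaust the generators of $H^*(BO(n-1),\Z_2)$ by the inductive hypothesis, so $\pi^*$ is surjective. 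Consequently the Gysin sequence collapses into short exact sequences
\[ 0 \rightarrow H^{k-n}(BO(n),\Z_2) \xrightarrow{\smile w_n} H^k(BO(n),\Z_2) \xrightarrow{\pi^*} H^k(BO(n-1),\Z_2)\rightarrow 0\,, \]
so multiplication by $w_n$ is injective, and a straightforward induction on total degree identifies $H^*(BO(n),\Z_2)$ with the polynomial extension $\Z_2[w_1,\ldots,w_{n-1}][w_n]$, completing the induction.

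The main obstacle is the surjectivity of $\pi^*$, which rests on two previously established ingredients: the identification $S(\gamma^n) \simeq BO(n-1)$ and the Whitney sum formula for mod $2$ Stiefel-Whitney classes. If the Whitney sum formula is not yet available at this point in the exposition, an alternative would be to invoke Borel's theorem for the maximal elementary abelian $2$-subgroup $(\Z_2)^n \hookrightarrow O(n)$, showing that $H^*(BO(n),\Z_2)$ embeds into $H^*(B(\Z_2)^n,\Z_2) = \Z_2[x_1,\ldots,x_n]$ as the $S_n$-invariants, which is precisely $\Z_2[e_1,\ldots,e_n]$ with $e_i$ the elementary symmetric polynomials; the generators are then identified with the $w_i$. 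The completely parallel argument in the complex case, using the fibration $S^{2n-1}\hookrightarrow BU(n-1)\rightarrow BU(n)$, proves the analogous integral statement $H^*(BU(n),\Z)=\Z[c_1,\ldots,c_n]$ and is slightly easier since orientation issues do not intervene.
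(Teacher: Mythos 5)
The paper does not prove this proposition; it is stated with a citation to Milnor--Stasheff \cite{MS4} and no argument is given, so there is nothing in the text to compare your proof against. Your proof, however, is correct and is one of the two standard arguments found in the literature.

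The Gysin-sequence approach you take is sound: the tautological splitting $\pi^*\gamma^n \cong \gamma^{n-1}\oplus \epsilon^1$ over $S(\gamma^n)\simeq BO(n-1)$, combined with the Whitney sum formula, gives surjectivity of $\pi^*$, which breaks the Gysin sequence into short exact sequences and lets the induction close. The only point worth flagging is a potential circularity in the order of development: the Whitney sum formula is often \emph{proved} either from an explicit cell decomposition of the Grassmannian or via Steenrod squares and the Thom isomorphism, and one must make sure that whatever derivation is used does not itself presuppose knowledge of $H^\bullet(BO(n),\Z_2)$. In Milnor--Stasheff the logical order is: the $w_i$ are first characterized axiomatically (Whitney sum included), then the ring $H^\bullet(BO(n),\Z_2)$ is computed by bounding the Poincar\'e series from above via Schubert cells of the Grassmannian and from below by restricting to the maximal elementary abelian $2$-subgroup $(\Z_2)^n\subset O(n)$, essentially the alternative you sketch at the end. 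That route avoids needing to know surjectivity of $\pi^*$ in advance, at the cost of requiring the Schubert cell structure; your Gysin argument is cleaner once the Whitney sum formula is in hand. Either way, the result and your proof are correct.
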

For a $O(n)$ vector bundle $E$ over $X$, represented by a continuous map $f:X\rightarrow BO(n)$, the \emph{$k$-th Stiefel-Whitney class} of $E$ is
\[ w_k(E) := f^*w_k \in H^k(X,\Z_2) \,. \]
We also define $w_0:=1$, so that $w_0(E)=1$ for any vector bundle $E$. The Stiefel-Whitney classes have some remarkable properties. For example, $w_1(E)=0$ if and only if the vector bundle is orientable\footnote{Expressed homotopically, a vector bundle is orientable if and only if the classifying map factors through the simply connected space $BSO(n)$.}. The second Stiefel-Whitney class $w_2(E)$ is an obstruction for the existence of a $\spin$ structure: the vector bundle $E$ admits a $\spin$ structure if and only if $w_1(E)=w_2(E)=0$. Finally, the short exact sequence
\[ \xymatrix@!=2.5pc{ 0 \ar[r] & \Z \ar[r]^{\times 2} & \Z \ar[r]^-{\mod 2} & \Z_2 \ar[r] & 0 } \]
induces a long exact sequence
\begin{equation}\label{Eq: S-W exact sequence} \xymatrix{ \ldots \ar[r] & H^k(X,\Z) \ar[r] & H^k(X,\Z_2) \ar[r]^\beta & H^{k+1}(X,\Z) \ar[r] & \ldots } \end{equation}
in cohomology, where $\beta$ is the Bockstein homomorphism. The \emph{integral Stiefel-Whitney classes} are defined by
\begin{equation}\label{Eq: Integral S-W class} W_k(E) := \beta(w_{k-1}(E)) \,. \end{equation}
Of particular importance is $W_3(E)$ for the reason that $E$ admits a $\spinc$ structure if and only if $w_1(E)=W_3(E)=0$. Readers unfamiliar with $\spinc$ and $\spin$ structures can think of them as slightly stronger ''orientability'' conditions, defined as the vanishing of $W_3(E)$ and $w_3(E)$, respectively. The \emph{total universal Stiefel-Whitney class} is the sum
\[ w := w_0+w_1 + w_2 + \ldots \in H^\bullet(BO(n),\Z_2) \]
and the \emph{total Stiefel-Whitney class} of the $O(n)$ bundle $E$ is
\[ w(E) := f^*w \in H^\bullet(X,\Z_2) \,. \]
Given two real vector bundles $E$ and $F$, the total Stiefel-Whitney class is additive in the sense that \cite{MS4}
\begin{equation}\label{Eq: S-W summation formula} w(E\oplus F) = w(E)\smile w(F) \,. \end{equation}

The second simplest case is that of integer coefficients.
\begin{proposition}[\cite{MS4}]\label{Proposition: Universal Chern-classes}
The cohomology ring $H^\bullet(BU(n),\Z)$ is a freely generated polynomial ring $\Z[c_1,\ldots,c_n]$, whose generators
\[ c_k\in H^{2k}(BU(n),\Z) \]
are called the \emph{universal Chern classes}.
\end{proposition}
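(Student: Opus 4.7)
The plan is to prove the statement by induction on $n$, using the identification $BU(n) \cong \mathrm{Gr}_n(\C^\infty)$ from the preceding proposition and exploiting a naturally occurring sphere fibration relating $BU(n-1)$ and $BU(n)$.

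For the base case $n=1$ I would observe that $BU(1) = \mathrm{Gr}_1(\C^\infty) = \C P^\infty$, whose integral cohomology ring is the well-known polynomial ring $\Z[c_1]$ with $c_1\in H^2(\C P^\infty,\Z)$ the generator dual to the fundamental class of $\C P^1$. This can be obtained either directly from the CW-structure of $\C P^\infty$ or from the Gysin sequence of the $S^1$-bundle $S^\infty\rightarrow \C P^\infty$, using that $S^\infty$ is contractible.

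For the inductive step, I would use the tower of classifying spaces coming from the inclusion $U(n-1)\hookrightarrow U(n)$. Since $EU(n)$ is contractible and $U(n-1)$ acts freely on it, the quotient $EU(n)/U(n-1)$ is a model for $BU(n-1)$, and the canonical projection $EU(n)/U(n-1)\rightarrow EU(n)/U(n) = BU(n)$ is a fibre bundle with fibre $U(n)/U(n-1)\cong S^{2n-1}$. This yields the fibration
\[ \xymatrix{ S^{2n-1} \ar@{^{(}->}[r] & BU(n-1) \ar[r] & BU(n) }\,. \]
To this I would apply the Gysin sequence with integer coefficients,
\[ \xymatrix{ \ldots \ar[r] & H^{k-2n}(BU(n),\Z) \ar[r]^-{\smile e} & H^k(BU(n),\Z) \ar[r] & H^k(BU(n-1),\Z) \ar[r] & H^{k-2n+1}(BU(n),\Z) \ar[r] & \ldots }\,, \]
where $e\in H^{2n}(BU(n),\Z)$ is the Euler class of the tautological complex $n$-plane bundle. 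I would declare this class to be $c_n$.

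The heart of the argument, and the step I expect to be the main obstacle, is to show that the restriction map $H^\bullet(BU(n),\Z)\rightarrow H^\bullet(BU(n-1),\Z)$ is surjective with kernel the principal ideal generated by $c_n$. Granting surjectivity, the inductive hypothesis gives lifts $c_1,\ldots,c_{n-1}\in H^\bullet(BU(n),\Z)$ of the lower Chern classes, and the Gysin sequence breaks into short exact sequences
\[ \xymatrix{ 0 \ar[r] & H^{k-2n}(BU(n),\Z) \ar[r]^-{\smile c_n} & H^k(BU(n),\Z) \ar[r] & H^k(BU(n-1),\Z) \ar[r] & 0 }\,, \]
from which a straightforward degree-by-degree induction, together with the fact that $H^\bullet(BU(n-1),\Z)$ is torsion-free and concentrated in even degrees, shows that $H^\bullet(BU(n),\Z)$ is torsion-free, concentrated in even degrees, and freely generated as a polynomial ring by $c_1,\ldots,c_n$. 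To establish the surjectivity I would argue that the classifying map of the tautological $(n-1)$-plane bundle on $BU(n-1)$ factors through $BU(n)$ via the stabilisation $U(n-1)\hookrightarrow U(n)$, so every lower universal Chern class lies in the image; algebraic independence of the $c_i$ then follows because the short exact sequences force the Poincaré series of $H^\bullet(BU(n),\Z)$ to equal $\prod_{k=1}^n (1-t^{2k})^{-1}$, which matches that of $\Z[c_1,\ldots,c_n]$ with $|c_k|=2k$.
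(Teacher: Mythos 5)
The paper does not prove this proposition itself; it simply cites Milnor--Stasheff \cite{MS4} and treats the result as a standard fact. Your argument via the sphere fibration $S^{2n-1}\hookrightarrow BU(n-1)\rightarrow BU(n)$ and the associated Gysin sequence is exactly the standard textbook proof, and your setup of the base case, the identification of $e$ with $c_n$, and the passage from the short exact sequences to a free polynomial ring structure (via the Poincar\'e series comparison) are all sound.

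However, the step you yourself flag as the crux --- surjectivity of the restriction $\iota^*\colon H^\bullet(BU(n),\Z)\rightarrow H^\bullet(BU(n-1),\Z)$ --- is not established by the argument you sketch, and as written it is circular. You argue that because the stabilisation $\iota\colon BU(n-1)\rightarrow BU(n)$ pulls the tautological $n$-plane bundle back to (the stabilisation of) the tautological $(n-1)$-plane bundle, the lower universal Chern classes lie in the image of $\iota^*$. But to conclude that $c_i$ on $BU(n-1)$ lifts you would need already-existing classes $c_i\in H^{2i}(BU(n),\Z)$ pulling back to them, and the existence and freeness of those classes is precisely what the proposition asserts. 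The clean way to close the gap is to first establish $H^{\mathrm{odd}}(BU(n),\Z)=0$ by a minimality argument: if $m$ were the smallest odd degree with $H^m(BU(n),\Z)\neq 0$, the Gysin sequence together with $H^m(BU(n-1),\Z)=0$ (inductive hypothesis) forces $\smile e\colon H^{m-2n}(BU(n),\Z)\rightarrow H^m(BU(n),\Z)$ to be surjective, yet $H^{m-2n}(BU(n),\Z)=0$, either because $m-2n<0$ or (if $m\geq 2n$) because $m-2n$ is odd and strictly smaller than $m$, contradicting minimality. Once $H^{\mathrm{odd}}(BU(n),\Z)=0$ is known, in every even degree $k$ the two flanking terms $H^{k-1}(BU(n-1),\Z)$ and $H^{k-2n+1}(BU(n),\Z)$ in the Gysin sequence vanish, so the short exact sequences you want --- with $\iota^*$ surjective and $\smile e$ injective --- come for free; you may then choose arbitrary lifts of $c_1,\ldots,c_{n-1}$, set $c_n:=e$, and conclude as you describe. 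With this reordering of the logic your proof is complete.
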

For a $U(n)$ vector bundle $E$ represented by a continuous map $f:X\rightarrow BU(n)$, the \emph{$k$-th Chern class} is
\[ c_k(E) := f^*c_k \in H^{2k}(X,\Z) \,. \]
We also define $c_0:=1$, so that $c_0(E)=1$ for any vector bundle $E$. The \emph{total universal Chern class} is the sum
\[ c := c_0+ c_1 + c_2 + \ldots \in H^\text{even}(BU(n),\Z) \]
and the \emph{total Chern class} of the bundle $E$ is
\[ c(E) := f^*c \in H^\text{even}(X,\Z) \,. \]
Given two complex vector bundles $E$ and $F$, the total Chern class is additive in the sense that \cite{MS4}
\begin{equation}\label{Eq: Chern class summation formula} c(E\oplus F) = c(E)\smile c(F) \,. \end{equation}
For a complex vector bundle the Chern classes and the Stiefel-Whitney classes of the underlying real vector bundle are related.
\begin{proposition}[\cite{LM}] \label{Proposition: Chern class, S-W class relation}
There is a ring isomorphism
\[ H^\bullet(BU(n),\Z_2) \cong \Z_2[\widetilde{c}_1,\ldots,\widetilde{c}_n] \,, \]
where $\widetilde{c}_k$ is the mod $2$ reduction of $c_k$. The natural inclusion $BU(n)\rightarrow BO(2n)$ induces a homomorphism
\[ H^\bullet(BO(2n),\Z_2)\rightarrow H^\bullet(BU(n),\Z_2) \,, \]
which maps
\[ w_{2k}\mapsto \widetilde{c}_k \quad \text{and} \quad w_{2k+1}\mapsto 0 \,. \]
Hence, the underlying $O(2n)$ vector bundle $E_\R$ of a $U(n)$ vector bundle $E$ satisfies $w_{2k}(E_\R) \equiv c_k(E)\,(\operatorname{mod}\,2)$ and $w_{2k+1}(E_\R)=0$. In particular, the underlying real vector bundle of any complex vector is orientable\footnote{In fact, it is canonically orientable. This is evident from the following alternative proof. Choose a complex basis $\{e_1,\ldots,e_n\}$ for each fibre $E_x$ of $E$. Then $\{e_1,ie_1,\ldots,e_n,ie_n\}$ is a basis for the $2n$-dimensional fibre of the underlying real vector space and defines an orientation for each fibre. The orientation is independent of the complex basis used, because $\GL_n(\C)$ is connected, allowing for continuous orientation-preserving transformations between any two complex bases.} $\spinc$ structure.
\end{proposition}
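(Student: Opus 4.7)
The plan is to handle the three assertions of the proposition in sequence: the ring isomorphism, the identification of the pullback map on mod $2$ cohomology, and the topological consequences for the underlying real bundle.

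For the first assertion, I would invoke the universal coefficient theorem. Since $H^\bullet(BU(n),\Z) \cong \Z[c_1,\ldots,c_n]$ by Proposition \ref{Proposition: Universal Chern-classes} is torsion-free in every degree, the $\Tor$ term in the UCT short exact sequence vanishes, producing an isomorphism $H^\bullet(BU(n),\Z_2) \cong H^\bullet(BU(n),\Z)\otimes_\Z \Z_2 \cong \Z_2[\widetilde{c}_1,\ldots,\widetilde{c}_n]$. Compatibility with the ring structure follows from naturality of mod $2$ reduction with respect to cup products.

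For the second assertion, I would reduce the identification of $H^\bullet(BO(2n),\Z_2)\to H^\bullet(BU(n),\Z_2)$ to the case of complex line bundles via a splitting principle argument: the maximal torus inclusion $T^n\hookrightarrow U(n)$ induces an injective pullback $f^*:H^\bullet(BU(n),\Z_2)\hookrightarrow H^\bullet(BT^n,\Z_2)$ with image the symmetric polynomials, and under $f$ the universal $U(n)$ bundle becomes a direct sum $L_1\oplus\cdots\oplus L_n$ of universal complex line bundles. The base case $n=1$ exploits the equivalences $BU(1)\simeq BSO(2)\simeq \CP^\infty$: the universal class $w_2\in H^2(BO(2),\Z_2)$ restricts to $\widetilde{c}_1$ (since the mod $2$ Euler class of an oriented rank-$2$ bundle is $w_2$, and the Euler class of the underlying real bundle of a complex line bundle is its first Chern class), while $w_1$ restricts to $0$ because the bundle factors through $BSO(2)$. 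Combining this base case with the Whitney sum formulas \eqref{Eq: S-W summation formula} and \eqref{Eq: Chern class summation formula} yields
\[
f^*w((E)_\R) = \prod_{i=1}^n \bigl(1+\widetilde{c}_1(L_i)\bigr) = f^*\widetilde{c}(E),
\]
with all odd-degree Stiefel-Whitney components vanishing. Injectivity of $f^*$ then forces $w_{2k}\mapsto \widetilde{c}_k$ and $w_{2k+1}\mapsto 0$ already at the level of $BU(n)$.

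The third assertion then follows formally: for any $U(n)$ bundle $E\to X$ classified by $\psi:X\to BU(n)$, naturality of pullback gives $w_{2k}(E_\R)=\widetilde{c}_k(E)$ and $w_{2k+1}(E_\R)=0$. In particular $w_1(E_\R)=0$ yields orientability, and by \eqref{Eq: Integral S-W class} we have $W_3(E_\R)=\beta(w_2(E_\R))=\beta(\widetilde{c}_1(E))$. Since $\widetilde{c}_1(E)$ is the mod $2$ reduction of the integral class $c_1(E)\in H^2(X,\Z)$, exactness of the Bockstein sequence \eqref{Eq: S-W exact sequence} forces $\beta(\widetilde{c}_1(E))=0$, so $W_3(E_\R)=0$ and $E_\R$ admits a $\spinc$ structure. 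The main obstacle is the base case $n=1$: everything else reduces to UCT, the splitting principle and symmetric function manipulations, but the identification $w_2\mapsto \widetilde{c}_1$ on $\CP^\infty$ requires an explicit comparison between the Euler class of the tautological line bundle $\mathrm{E}_1(\C^\infty)\to \CP^\infty$ and the first Chern class, together with the fact that for oriented rank-$2$ real bundles the mod $2$ Euler class equals $w_2$.
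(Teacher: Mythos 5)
The paper states this proposition with a citation to \cite{LM} and supplies no proof of its own, so there is no internal argument to compare against. Your proof is correct, complete, and uses the standard route: universal coefficients for the torsion-free ring $H^\bullet(BU(n),\Z)$ gives the first claim; the maximal-torus splitting principle (with the base case $BU(1)\simeq BSO(2)\subset BO(2)$, where $w_2$ is the mod $2$ Euler class and the Euler class of the underlying real plane bundle is $c_1$) together with the Whitney sum formulas yields $w(E_\R)=\widetilde c(E)$ degree by degree; and $W_3(E_\R)=\beta(\widetilde c_1(E))=0$ because $\widetilde c_1(E)$ lies in the image of the mod $2$ reduction, so the Bockstein kills it. One small point worth making explicit, since it is doing real work: the injectivity of the restriction $H^\bullet(BU(n),\Z_2)\to H^\bullet(BT^n,\Z_2)$ is not automatic from the integral statement — it holds because $\Z[t_1,\dots,t_n]$ is a free module over the symmetric subring $\Z[c_1,\dots,c_n]$, so the inclusion is split in each degree and remains injective after $\otimes\,\Z_2$. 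Apart from that, nothing is missing.
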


The definition of a characteristic class can be rephrased category theoretically as follows.
\begin{definition} A characteristic class $C$ is a natural transformation \cite{Mac}
\[ C:\mathrm{Prin}_G(-)\rightarrow H^\bullet(-,\Lambda)\,. \]
Again, the definition can equally well be given in terms of associated vector bundles. \end{definition}
This definition emphasizes the \emph{naturality} property of characteristic classes: if $Y\rightarrow X$ is a continuous map and $E\rightarrow X$ a bundle with characteristic class $C(E)$, then $C(f^{-1}E)=f^*C(E)$.

Let $f:X\rightarrow BU(n)$ be the classifying map for a vector bundle $E\rightarrow X$ and $C$ a characteristic class $\mathrm{Vect}_\C^n(-)\rightarrow H^\bullet(-,\Z)$. Naturality implies
\[ C(E) = C(f^{-1}EU(n)) = f^*C(EU(n)) \,. \]
Since $H^\bullet(BU(n),\Z)$ is generated by $\{c_1,\ldots,c_n\}$, $C(EU(n))$ is a polynomial $P_C(c_1,\ldots ,c_n)$. Thus, $C(E)$ is of the form
\begin{align*} C(E) &= f^*P_C(c_1,\ldots ,c_n) \\ &= P_C(f^*c_1,\ldots ,f^*c_n) \\ &= P_C(c_1(E),\ldots ,c_n(E)) \,, \end{align*}
a polynomial in the Chern classes of $E$. This proves the following important result.
\begin{proposition}\label{Proposition: Characteristic classes are polynomials of Chern classes} Every characteristic class $C:\mathrm{Vect}_\C^k(-)\rightarrow H^\bullet(-,\Z)$ is a polynomial in the Chern classes. Moreover, any characteristic class taking values in either $H^\bullet(-,\Q)$ or $H^\bullet(-,\R)\cong H^\bullet_\text{dR}(-)$ can be expressed as a polynomial in the images of the Chern classes mapped into the respective cohomology theory. \qed \end{proposition}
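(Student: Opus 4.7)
The plan is to exploit the representability of $\mathrm{Vect}_\C^k(-)$ by the classifying space $BU(k)$, together with the explicit computation of $H^\bullet(BU(k),\Z)$ supplied by Proposition~\ref{Proposition: Universal Chern-classes}. The whole statement is really the observation that a natural transformation out of a representable functor is determined by a single universal element, and that here that element lives in a ring we have already described.

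Concretely, first I would invoke the representation theorem: given $E\to X$, choose a classifying map $f:X\to BU(k)$ such that $f^{-1}EU(k)\cong E$ as $U(k)$ vector bundles. By the naturality clause in the definition of a characteristic class, $C(E)=f^{*}C(EU(k))$, so the entire transformation $C$ is encoded in the single universal cohomology class $C(EU(k))\in H^\bullet(BU(k),\Z)$. Next, I would use Proposition~\ref{Proposition: Universal Chern-classes} to write $H^\bullet(BU(k),\Z)=\Z[c_1,\ldots,c_k]$ and thereby expand $C(EU(k))=P_C(c_1,\ldots,c_k)$ for a unique polynomial $P_C$ with integer coefficients. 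Finally, because $f^{*}$ is a ring homomorphism and satisfies $f^{*}c_i = c_i(E)$ by the definition of the Chern classes of $E$, applying $f^{*}$ to the polynomial identity yields $C(E)=P_C(c_1(E),\ldots,c_k(E))$, which is the desired polynomial expression.

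For the second claim, about coefficients in $\Q$ or $\R$, the plan is to repeat the same three steps verbatim after replacing the coefficient ring. The point is that $H^\bullet(BU(k),\Q)\cong\Q[c_1,\ldots,c_k]$ and $H^\bullet(BU(k),\R)\cong\R[c_1,\ldots,c_k]$, where the $c_i$ now denote the images of the universal integral Chern classes under the ring maps induced by $\Z\hookrightarrow\Q\hookrightarrow\R$; this follows either from the universal coefficient theorem applied to the integral computation, or by rerunning the proof cited from \cite{MS4} with the new coefficients. Pulling back through the classifying map $f$ again turns the universal polynomial identity into a polynomial identity in the rational (respectively real/de Rham) Chern classes of $E$.

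I do not expect a genuine obstacle. The entire content of the proposition is the categorical principle that a natural transformation from a representable functor is the same data as its value on the universal object, combined with the known ring structure on $H^\bullet(BU(k),\Z)$ which we import as a black box. The only step one might want to check carefully is the assertion that $f^{*}c_i=c_i(E)$, but this is literally the definition of $c_i(E)$ given just after Proposition~\ref{Proposition: Universal Chern-classes}, so there is nothing further to prove.
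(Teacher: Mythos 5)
Your proof matches the paper's argument almost verbatim: pull back along a classifying map, use the naturality clause to reduce $C$ to its value $C(EU(k))\in H^\bullet(BU(k),\Z)\cong\Z[c_1,\ldots,c_k]$, expand as a polynomial $P_C$, and apply $f^*$. The extension to $\Q$ and $\R$ coefficients via the induced ring maps is also the same route the paper implicitly takes.
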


\begin{proposition}\label{Proposition: Chern class classifies line bundles}
The first Chern class $c_1(L)\in H^2(X,\Z)$ of a complex line bundle $L\rightarrow X$ uniquely classifies the bundle up to isomorphism. Moreover, $c_k(L)=0$ for $k\ge 2$.
\end{proposition}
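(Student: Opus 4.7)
The plan is to identify the classifying space $BU(1)$ explicitly and then invoke Eilenberg--MacLane representability of ordinary cohomology. First I would use the Milnor construction (or observe directly) that $U(1) = S^1$ acts freely on the unit sphere $S^\infty$ of $\C^\infty$, and that $S^\infty$ is contractible (by a shift argument, or as the direct limit of the standard nullhomotopies of $S^{2N+1}$). Hence $EU(1) \simeq S^\infty$ and $BU(1) \simeq S^\infty/S^1 \cong \CP^\infty$.

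Next, the long exact sequence of homotopy groups associated to the fibration $S^1 \hookrightarrow S^\infty \to \CP^\infty$, together with the vanishing of all $\pi_k(S^\infty)$, yields $\pi_k(\CP^\infty) \cong \Z$ for $k=2$ and $0$ otherwise. Thus $BU(1) \simeq \CP^\infty$ is an Eilenberg--MacLane space of type $K(\Z,2)$. By the representability of ordinary cohomology, for every CW-complex $X$ there is a natural bijection
\[ [X, K(\Z,2)] \;\cong\; H^2(X,\Z), \]
realized by pulling back the fundamental class. Composing this with the classifying bijection $\mathrm{Prin}_{U(1)}(X) \cong [X, BU(1)]$ from Definition \ref{Definition: Classifying space}, and using that $H^2(BU(1),\Z) \cong \Z$ is generated by $c_1$ (Proposition \ref{Proposition: Universal Chern-classes} with $n=1$), the fundamental class of $K(\Z,2)$ is identified with $c_1$ up to the sign convention fixing the first Chern class. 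Therefore $L \mapsto c_1(L)$ is a bijection between isomorphism classes of complex line bundles over $X$ and elements of $H^2(X,\Z)$, which is the first claim.

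The vanishing statement $c_k(L) = 0$ for $k \ge 2$ is then immediate from Proposition \ref{Proposition: Universal Chern-classes}: the ring $H^\bullet(BU(1),\Z) = \Z[c_1]$ has no polynomial generators in degrees $k \ge 2$, and by the standard convention the universal Chern classes $c_k$ for $k > n$ vanish on $U(n)$ bundles, so pulling back along the classifying map yields $c_k(L) = 0$ for $k \ge 2$.

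The main obstacle I anticipate is verifying that the canonical generator of $H^2(K(\Z,2),\Z) \cong \Z$ coincides, up to sign, with the first universal Chern class $c_1 \in H^2(BU(1),\Z)$. This is a normalization issue rather than a deep one: concretely, one checks it on the tautological line bundle restricted to $\CP^1 \subset \CP^\infty$, where both classes restrict to a generator of $H^2(\CP^1,\Z) \cong \Z$ fixed by the chosen orientation, and this single case pins down the sign by naturality.
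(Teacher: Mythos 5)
Your proposal is correct and follows essentially the same route as the paper: identify $BU(1)\simeq\CP^\infty$, compute $\pi_\bullet(\CP^\infty)$ from the fibration $S^1\to S^\infty\to\CP^\infty$, conclude $\CP^\infty\simeq K(\Z,2)$, and combine the classification of principal $U(1)$ bundles with Eilenberg--MacLane representability of $H^2(-,\Z)$, while $c_k(L)=0$ for $k\geq 2$ drops out of $H^\bullet(BU(1),\Z)\cong\Z[c_1]$. The paper is a bit terser on the normalization issue (it only notes that $f_1^*c_1=f_2^*c_1$ iff $f_1\simeq f_2$, implicitly using that $c_1$ generates $H^2(BU(1),\Z)\cong\Z$), whereas you make the identification of $c_1$ with the fundamental class of $K(\Z,2)$ explicit via the tautological bundle on $\CP^1$, which is a sensible addition.
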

To prove this, we need to know the homotopy groups of $S^1$ and $S^\infty:=\varinjlim S^n$.
\renewcommand{\labelenumi}{\textbf{\alph{enumi})}}
\begin{lemma}\label{Lemma: Homotopies of spheres}
\textbf{a)} $\pi_k(S^1)=0$ for $k\ge 2$. \quad \textbf{b)} $\pi_k(S^\infty)=0$ for all $k$.
\end{lemma}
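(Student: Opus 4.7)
For part \textbf{a)}, the plan is to exploit the covering map $\mathbb{R}\to S^1$, $t\mapsto e^{2\pi i t}$. This is a principal $\mathbb{Z}$-bundle (where $\mathbb{Z}$ is discrete), so it is in particular a fibration, yielding the long exact sequence
\[ \xymatrix{ \ldots \ar[r] & \pi_k(\mathbb{Z}) \ar[r] & \pi_k(\mathbb{R}) \ar[r] & \pi_k(S^1) \ar[r] & \pi_{k-1}(\mathbb{Z}) \ar[r] & \ldots } \]
Since $\mathbb{R}$ is contractible, $\pi_k(\mathbb{R})=0$ for all $k$. Since $\mathbb{Z}$ is discrete, $\pi_{k-1}(\mathbb{Z})=0$ whenever $k-1\ge 1$, i.e.\ $k\ge 2$. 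Exactness then forces $\pi_k(S^1)=0$ for $k\ge 2$.

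For part \textbf{b)}, the plan is to show that every continuous map $f:S^k\to S^\infty$ is null-homotopic. The first step is a compactness argument: since $S^\infty = \varinjlim_n S^n$ carries the weak (direct limit) topology, the image of the compact space $S^k$ is contained in some finite-dimensional $S^n$. Hence $f$ factors through an inclusion $S^n\hookrightarrow S^\infty$ for some $n$, and without loss of generality we may take $n>k$.

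Next, one uses the standard fact that $\pi_k(S^n)=0$ for $k<n$. This follows from cellular (or smooth) approximation: any continuous map $S^k\to S^n$ is homotopic to a map missing a point of $S^n$, hence factors through $S^n\setminus\{\text{pt}\}\simeq \mathbb{R}^n$, which is contractible. Consequently $f$ is null-homotopic already in $S^n$, and \emph{a fortiori} in $S^\infty$. As this holds for every $k$, we conclude $\pi_k(S^\infty)=0$ for all $k$. Alternatively, one can invoke the direct-limit argument already used earlier in the text: $\pi_k(S^\infty)\cong \varinjlim_n \pi_k(S^n)=0$, since the sequence is eventually zero.

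There is no real obstacle here; the only subtlety is the compactness/weak-topology step in \textbf{b)}, which must be invoked to justify that a map from the compact sphere $S^k$ lands in a finite stage of the direct limit. Everything else is a direct application of the long exact homotopy sequence and cellular approximation.
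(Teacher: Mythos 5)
Your proof is correct, and it travels essentially the same territory as the paper's. The one genuine presentational difference is in part a): the paper argues directly from covering space theory — it invokes the lifting criterion (a map $S^k\to S^1$ lifts to $\mathbb{R}$ when $k\ge 2$), deduces surjectivity of $p_*:\pi_k(\mathbb{R})\to\pi_k(S^1)$ from the existence of lifts, and injectivity from the uniqueness of lifts up to homotopy — while you package all of this at once into the long exact sequence of the fibration $\mathbb{Z}\to\mathbb{R}\to S^1$. The two arguments use the same covering map and the same contractibility of $\mathbb{R}$; yours is a cleaner high-level statement, the paper's is slightly more elementary in that it avoids appealing to the long exact homotopy sequence of a fibration. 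For part b) you give the paper's direct-limit argument $\pi_k(S^\infty)\cong\varinjlim_n\pi_k(S^n)=0$, and you additionally unpack the compactness/weak-topology fact that underlies that isomorphism (a compact $S^k$ must land in a finite stage of the colimit), which the paper takes as given. Nothing is missing and both routes are sound.
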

\begin{proof}
\begin{enumerate}
\item Recall from the theory of covering spaces \cite{May} that a map $f:S^k\rightarrow S^1$ 
\[ \xymatrix{ & & \R \ar[d]^p \\ S^k \ar[rr]_f \ar@{-->}[urr]^{\widetilde{f}} & & S^1 } \]
admits a lift $\widetilde{f}:S^k \rightarrow \R$, if and only if $f_*\pi_1(S^k)\subset p_*\pi_1(\R)$. This is obviously the case when $k\ge 2$, since $\pi_1(\R)=0$ and $\pi_1(S^k)=0$ for all $k\ge 2$. Therefore, $\pi_k(S^1)$ is the space of homotopy classes of continuous maps $S^k\rightarrow S^1$, all of which lift to maps $S^k\rightarrow \R$ (for $k\ge 2)$, yielding elements in $\pi_k(\R)$. The induced map $p_*:\pi_k(\R)\rightarrow \pi_k(S^1)$ is surjective (again, for $k\ge 2$). Injectivity follows from a basic result in the theory of covering spaces that the lifts of homotopic maps $S^k\rightarrow S^1$ are homotopic \cite{May}. Thus, $\pi_k(S^1)\cong \pi_k(\R) = 0$ for $k\ge 2$.
\item Using $\pi_k(S^n)=0$ for $k<n$, we get
\[ \pi_k(S^\infty) \cong \varinjlim_n \pi_k(S^n) = 0 \,. \]
\end{enumerate}
\end{proof}
\begin{proof}[Proof of \ref{Proposition: Chern class classifies line bundles}]
By proposition \ref{Proposition: Universal Chern-classes}, the cohomology ring is freely generated by $c_1\in H^2(BU(1),\Z)$. The classifying space is
\[ BU(1)=\mathrm{Gr}_1(\C^\infty)=\varinjlim_N \frac{U(N)}{U(N-1)\times U(1)} \,. \]
The \emph{$N$-dimensional complex projective space} is the quotient space $\CP^N := S^{2N+1}/S^1$. Using
\[ U(N)/U(N-1)\cong S^{2N+1} \]
gives
\[ \CP^N \cong \frac{S^{2N+1}}{S^1} \cong \frac{U(N)}{U(N-1)\times U(1)} \,. \]
Thus, the classifying space $BU(1)$ is simply
\[ BU(1) \cong \varinjlim_N \CP^N \,, \]
the \emph{infinite complex projective space}, which we denote by $\CP^\infty$. On the other hand, the direct limit of the fibrations
\[ \xymatrix{ S^1 \ar[r] & S^{2N+1} \ar[r] & \CP^N } \]
is the fibration
\[ \xymatrix{ S^1 \ar[r] & S^\infty \ar[r] & \CP^\infty } \,, \]
which induces a long exact sequence of homotopy groups
\[ \xymatrix{ \ldots \ar[r] & \pi_k(S^1) \ar[r] & \pi_k(S^\infty) \ar[r] & \pi_k(\CP^\infty) \ar[r] & \pi_{k-1}(S^1) \ar[r] & \ldots } \]
This, together with lemma \ref{Lemma: Homotopies of spheres}, yields
\begin{equation} \label{Eq: Homotopy of CP^infty} \pi_k(\CP^\infty) \cong  \begin{cases} \Z & \text{for $k=2$\,,} \\ 0 & \text{otherwise.} \end{cases} \end{equation}
In other words, $\CP^\infty$ is a model for the Eilenberg-MacLane space $K(\Z,2)$. This concludes the proof, since both isomorphism classes of complex line bundles over $X$ and classes of $H^2(X,\Z)$ are in bijective correspondence with continuous maps $X\rightarrow \CP^\infty= K(\Z,2)$. More precisely, for any two maps $f_1,f_2:X\rightarrow BU(1)$ the pullbacks of the first universal Chern class are equal, $f_1^*c_1 = f_2^*c_1$, if and only if $f_1$ and $f_2$ are homotopic, if and only if the bundles $f_1^{-1}EU(1)$ and $f_2^{-1}EU(1)$ are isomorphic.
\end{proof}

We now restrict specifically to smooth manifolds. All smooth manifolds admit \emph{good open covers} \cite{BT}, open covers whose elements and their finite intersections are all contractible. Given the transition data for any complex line bundle $L\rightarrow X$, it is easy to construct a cohomology class $e(L)\in H^2(X,\Z)$, which uniquely determines $L$ up to isomorphism. This is known as the \emph{Euler class} \cite{BT} and is constructed as follows. Let $\{g_{\alpha\beta}\}$ be the transition functions of $L$ with respect to a good open cover $\{U_\alpha\}$. They can be written as
\[ g_{\alpha\beta} = \exp\left(i\phi_{\alpha\beta}\right)\,, \]
where $\{\phi_{\alpha\beta}\}$ are real-valued functions. The cocycle condition for $\{g_{\alpha\beta}\}$ becomes
\[ g_{\beta\gamma}g^{-1}_{\alpha\gamma}g_{\alpha\beta} = \exp\left(i(\phi_{\beta\gamma}-\phi_{\alpha\gamma}+\phi_{\alpha\beta})\right) = 1\,, \]
which is satisfied if and only if
\[ \phi_{\beta\gamma}-\phi_{\alpha\gamma}+\phi_{\alpha\beta} \in 2\pi \Z \,. \]
We denote
\[ n_{\alpha\beta\gamma} := -\frac{1}{2\pi}\left(\phi_{\beta\gamma}-\phi_{\alpha\gamma}+\phi_{\alpha\beta}\right) \,, \]
which is obviously closed as a \v Cech cocycle and thus determines a \v Cech cohomology class $[\underline{n}]\in \check{H}^2(X,\Z)$\footnote{Since we are working on a manifold, both \v Cech cohomology and de Rham cohomology coincide with singular cohomology with the appropriate coefficients. Hence, the distinction between $H^\bullet(X,-)$ and $\check{H}^\bullet(X,-)$, and $H^\bullet(X,\R)$ and $H^\bullet_\text{dR}(X)$ is purely notational and is merely used to emphasize the method of calculation.}. Note that, even though $n_{\alpha\beta\gamma} = -\delta(\phi)_{\alpha\beta\gamma}/(2\pi)$, the class $[\underline{n}] \in \check{H}^2(X,\Z)$ is generally nontrivial, since $\phi_{\alpha\beta}/2\pi$ are generally only real numbers, not integers. The Euler class of $L$ is\footnote{More precisely, we mean the Euler class of the underlying real vector bundle of rank $2$.}
\[ e(L):=[\underline{n}] \,. \]
\begin{theorem}\label{Theorem: e(L)=c_1(L)} For a complex line bundle $L\rightarrow X$ the Euler class and the first Chern class coincide:
\[ e(L) = c_1(L) \in H^2(X,\Z)\,. \]
\end{theorem}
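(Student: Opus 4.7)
The strategy is to exploit universality: I would show that both $e$ and $c_1$ are natural group isomorphisms from the Picard group $\operatorname{Pic}(X)$ of isomorphism classes of complex line bundles (with tensor product) to $H^2(X,\Z)$. They must then differ by a universal sign, which can be fixed on a single explicit example.

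First, I would verify naturality and additivity of $e$. For a continuous map $f:Y\to X$, on a refinement of the atlas the bundle $f^{-1}L$ has transition functions $g_{\alpha\beta}\circ f$ and, using the same branch choices as for $L$, phases $\phi_{\alpha\beta}\circ f$; hence its integer cocycle is $n_{\alpha\beta\gamma}\circ f$, giving $e(f^{-1}L)=f^{*}e(L)$. Since transition functions of $L\otimes L'$ are pointwise products of those of $L$ and $L'$, phases add and so do the integer cocycles, yielding $e(L\otimes L')=e(L)+e(L')$.

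Next, I would identify $e$ with a Bockstein homomorphism. The prescription $[\underline{g}]\mapsto[\underline{n}]$ is precisely the connecting map $\beta:H^{1}(X,\underline{U(1)})\to H^{2}(X,\underline{\Z})\cong H^{2}(X,\Z)$ of the exponential sheaf sequence $0\to\underline{\Z}\to\underline{\R}\to\underline{U(1)}\to 0$. Since $\underline{\R}$ is acyclic on the paracompact $X$ by the partition-of-unity argument recalled in the Overview, $\beta$ is an isomorphism, and as line bundles are classified by $H^{1}(X,\underline{U(1)})$, this shows $e$ is an isomorphism. Proposition \ref{Proposition: Chern class classifies line bundles} provides the analogous statement for $c_1$.

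Consequently $e\circ c_{1}^{-1}$ is a natural automorphism of the functor $H^{2}(-,\Z)\cong[-,\CP^{\infty}]$ on paracompact spaces. By the Yoneda lemma, natural endomorphisms of this representable functor correspond to elements of $H^{2}(\CP^{\infty},\Z)\cong\Z$, and an automorphism must come from a unit, so $e=\pm c_{1}$. The only real obstacle is then pinning down the sign. I would do this by direct computation on the tautological line bundle $\gamma\to\CP^{1}\cong S^{2}$, using the two-chart atlas in affine coordinates: pick consistent branches of the argument of the transition function on a good refinement, compute the resulting integer $2$-cocycle, and pair it with the fundamental class $[S^{2}]$ to confirm that $e(\gamma)$ and $c_{1}(\gamma)$ agree as the chosen generator of $H^{2}(S^{2},\Z)\cong\Z$. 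Everything else is formal.
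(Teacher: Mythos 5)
Your proof is correct and is in the same spirit as the paper's, but it fills in details the paper leaves implicit. The paper's argument is shorter: it observes that $e$ is a natural characteristic class (it determines $L$ up to isomorphism and behaves naturally under pullback), invokes Proposition~\ref{Proposition: Characteristic classes are polynomials of Chern classes} to conclude $e$ is a polynomial in $c_1$, and then by degree reasons $e = a\,c_1$ for some constant $a$, citing \cite{BT} for $a=1$. Your route adds two things the paper does not make explicit: you identify $e$ with the Bockstein for the exponential sequence, which shows $e$ is not merely a natural transformation but a natural group \emph{isomorphism} $\operatorname{Pic}(X)\to H^2(X,\Z)$; and you then run the Yoneda argument on $[-,\CP^\infty]$ to pin down $a\in\{\pm 1\}$ rather than just ``$a$ is a constant.'' That extra rigidity is a genuine improvement, since ``constant multiplier'' alone does not rule out $a=2$ on spaces where $H^2(X,\Z)$ is torsion-free. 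The cost is that you still need a single explicit sign check on $\CP^1$, which is the same amount of hands-on work the paper outsources to \cite{BT}. One small point of care in the Yoneda step: natural endomorphisms of the set-valued functor $[-,\CP^\infty]$ give all of $H^2(\CP^\infty,\Z)\cong\Z$, and you should note that the ones respecting the group structure are precisely integer multiplications — this is true, but worth saying, since otherwise the ``units'' argument technically applies to a slightly different endomorphism monoid.
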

\begin{proof}
It is easy to see that $e(L)$ uniquely determines $L$ up to isomorphism and that it behaves naturally under pullbacks. It is thus a characteristic class and as such, by proposition \ref{Proposition: Characteristic classes are polynomials of Chern classes}, is a polynomial in the first Chern class. Both are classes in $H^2(X,\Z)$ and hence equal up to a constant multiplier, which turns out to be $1$ \cite{BT}.
\end{proof}

We have now established a simple method for computing Chern classes of complex line bundles from transition data. Remarkably, it turns out that many statements concerning Chern classes of higher rank bundles can be proven by merely proving the statements for line bundles, due to the following famous theorem, known as the \emph{splitting principle}.
\begin{theorem}[Splitting principle \cite{May,BT}]\label{Theorem: Splitting principle}
Let $E\rightarrow X$ be a $U(n)$ vector bundle. There exists a manifold $F(E)$, the \emph{flag manifold}, and a map $f:F(E)\rightarrow X$, such that the pullback bundle $f^{-1}E\rightarrow F(E)$ is isomorphic to a sum of complex line bundles
\[ f^{-1}E \cong L_1 \oplus \ldots \oplus L_n \longrightarrow F(E) \]
and the induced map $f^*:H^\bullet(X,\Z)\rightarrow H^\bullet(F(E),\Z)$ is an injection.
\end{theorem}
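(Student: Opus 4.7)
The plan is to prove this by induction on $n$, using the projectivization of $E$ as the basic building block and Leray-Hirsch as the key cohomological tool. For $n=1$ the statement is trivial with $F(E) = X$, so assume $n \ge 2$.

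First I would form the \emph{projectivization} $\pi: P(E) \to X$, whose fibre over $x \in X$ is the complex projective space $P(E_x) \cong \CP^{n-1}$. The total space carries a tautological line subbundle $L \subset \pi^{-1}E$, whose fibre over a point $\ell \in P(E_x)$ is the line $\ell \subset E_x$ itself. Choosing a Hermitian metric on $E$ and pulling it back, one obtains an orthogonal complement $L^\perp \subset \pi^{-1}E$, yielding a splitting
\[ \pi^{-1}E \;\cong\; L \oplus L^\perp \]
where $L^\perp$ has rank $n-1$. Applying the inductive hypothesis to $L^\perp \to P(E)$ produces a map $g: F(L^\perp) \to P(E)$ with $g^{-1}L^\perp$ splitting into line bundles and $g^*$ injective in cohomology. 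Setting $F(E) := F(L^\perp)$ and $f := \pi \circ g$, the pullback $f^{-1}E \cong g^{-1}L \oplus g^{-1}L^\perp$ is then a direct sum of $n$ line bundles.

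The remaining task is to check that $f^* = g^* \circ \pi^*$ is injective on integer cohomology. Since $g^*$ is injective by induction, it suffices to prove $\pi^*: H^\bullet(X,\Z) \to H^\bullet(P(E),\Z)$ is injective. This is exactly the setting of the \emph{Leray-Hirsch theorem}: the fibre $\CP^{n-1}$ has cohomology freely generated as an abelian group by $1, t, t^2, \ldots, t^{n-1}$, where $t \in H^2(\CP^{n-1},\Z)$ is the first Chern class of the tautological bundle. This generator extends globally, namely $c_1(L) \in H^2(P(E),\Z)$ restricts on each fibre to $t$. Leray-Hirsch then gives that $H^\bullet(P(E),\Z)$ is a free $H^\bullet(X,\Z)$-module with basis $1, c_1(L), \ldots, c_1(L)^{n-1}$, and in particular $\pi^*$ is injective.

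The main obstacle is verifying that the tautological class $c_1(L)$ restricts on each fibre $\CP^{n-1} \subset P(E)$ to a generator of $H^2(\CP^{n-1},\Z)$. This follows because the restriction of the tautological line bundle $L \to P(E)$ to the fibre over $x$ is precisely the tautological line bundle $\mathcal{O}(-1) \to P(E_x) \cong \CP^{n-1}$, whose first Chern class is, up to sign, the standard generator of $H^2(\CP^{n-1},\Z)$ (cf.\ Theorem~\ref{Theorem: e(L)=c_1(L)} and the identification $BU(1) \cong \CP^\infty$ in the proof of Proposition~\ref{Proposition: Chern class classifies line bundles}). Once this naturality point is settled, the hypotheses of Leray-Hirsch are verified and the proof goes through.
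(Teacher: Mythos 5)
The paper does not prove the splitting principle; it cites it to \cite{May,BT} and moves on. Your proof is the standard inductive argument via projectivization, essentially the one found in those references: take the projectivization $\pi\colon P(E)\to X$ with its tautological line subbundle $L$, split off an orthogonal complement using a Hermitian metric, induct on the rank-$(n-1)$ complement, and use Leray--Hirsch with the global class $c_1(L)$ restricting fibrewise to a generator of $H^\bullet(\CP^{n-1},\Z)$ to get injectivity of $\pi^*$. The argument is complete: the base case is handled, the inductive step cleanly reduces the rank, and the Leray--Hirsch hypothesis is verified via the identification of $L|_{P(E_x)}$ with $\mathcal{O}(-1)$ over $\CP^{n-1}$. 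One might note explicitly that the integral form of Leray--Hirsch applies because $H^\bullet(\CP^{n-1},\Z)$ is a finitely generated free abelian group, and that paracompactness of $X$ (assumed throughout the paper) is what guarantees the existence of a Hermitian metric on $E$, but these are minor points of hygiene rather than gaps.
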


The splitting principle can be used to compute the Chern classes of dual bundles. Denote the dual bundle of a complex line bundle $L$ by $L^*$. We then have
\[ c_1(L)+c_1(L^*)=c_1(L\otimes L^*) = 0 \,, \]
which can be easily seen from the explicit construction of $c_1$ for line bundles in terms of the transition functions. Let now $E$ be a $U(n)$ vector bundle. By the splitting principle there exists a flag bundle $F(E)$, such that
\begin{align*} f^{-1}E &\cong L_1\oplus \ldots \oplus L_n\rightarrow F(E) \\ f^{-1}E^* &\cong L_1^*\oplus \ldots \oplus L_n^*\rightarrow F(E) \,. \end{align*}
By naturality of the Chern classes and the summation formula \eqref{Eq: Chern class summation formula}, we get
\begin{align*} f^*c(E^*) &= c(f^{-1}E^*) \\ &= (1+c_1(L_1^*))\smile \ldots \smile (1+c_1(L_n^*)) \\ &= (1-c_1(L_1))\smile \ldots \smile (1-c_1(L_n)) \\ &= \sum_i (-1)^i c_i(f^{-1}E) \\ &= f^*\Big(\sum_i (-1)^i c_i(E) \Big) \,. \end{align*}
The map $f^*$ is injective according to the splitting principle. Thus,
\[ c_i(E^*) = (-1)^i c_i(E)\,. \]

The Chern classes can also be computed purely analytically as we shall next show. Let $E\rightarrow X$ be a $U(n)$ vector bundle over a smooth manifold $X$. Let $A$ be a $\mathfrak{u}(n)$-valued gauge connection on $E$ and $F=dA+A\wedge A$ the corresponding curvature.
\begin{theorem}\label{Theorem: Analytic Chern classes}
The de Rham polyform
\begin{equation}\label{Eq: Analytic Chern class} c(F) := \det\left(1 - \frac{F}{2\pi i}\right)\,. \end{equation}
is closed and its de Rham cohomology class is independent of the chosen connection. Moreover,
the image of the total Chern character $c(E)\in H^\text{even}(X,\Z)$ in $H^\text{even}_\text{dR}(X)$ coincides with $[c(F)]$.
\end{theorem}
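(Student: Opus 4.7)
The plan is to follow the classical Chern--Weil construction and reduce the identification with the topological Chern class to the line-bundle case via the splitting principle.

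For the first two statements, I would expand $c(F)$ as a sum of homogeneous pieces $P_k(F)$, where each $P_k$ arises from the $k$-th elementary symmetric function of the eigenvalues of $F/(2\pi i)$ and corresponds to an $\operatorname{Ad}$-invariant symmetric multilinear form on $\mathfrak{u}(n)$. Closedness then follows from the Bianchi identity $dF = [F,A]$ combined with invariance: the Leibniz rule applied to $P_k(F,\ldots,F)$ gives $\sum_i P_k(F,\ldots,dF,\ldots,F) = \sum_i P_k(F,\ldots,[F,A],\ldots,F)$, and the infinitesimal invariance identity $\sum_i P_k(F,\ldots,[A,F],\ldots,F)=0$ forces this to vanish. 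Although $A$ is only locally defined, the resulting scalar form $dP_k(F)$ is globally well-defined and equal to zero. For connection-independence I would invoke the standard transgression: given two connections $A_0, A_1$, their difference $\alpha := A_1 - A_0$ is a globally defined $\End(E)$-valued $1$-form, and the affine path $A_t := A_0 + t\alpha$ has curvature $F_t$ satisfying $\dot F_t = d_{A_t}\alpha$; a second application of invariance produces an explicit transgression form $T(A_0,A_1)$ with $c(F_1) - c(F_0) = dT(A_0,A_1)$.

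For the identification with the topological total Chern class $c(E) \in H^\text{even}(X,\Z)$, Proposition \ref{Proposition: Characteristic classes are polynomials of Chern classes} together with the naturality of $[c(F)]$ under pullbacks already forces $[c(F)]$ to be a polynomial in the de Rham images of the $c_i(E)$. By the splitting principle (Theorem \ref{Theorem: Splitting principle}), it suffices to verify the precise identity on a flag manifold $F(E)$ where $E$ splits as $L_1 \oplus \cdots \oplus L_n$. Choosing a direct-sum connection makes the curvature block-diagonal, so that $\det(1 - F/(2\pi i)) = \prod_i (1 - F_i/(2\pi i))$ factorizes; compared with the analogous splitting formula for topological Chern classes, this reduces the theorem to the single claim that $[F_L/(2\pi i)] = c_1(L)$ for any complex line bundle $L$ with connection curvature $F_L$.

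This last identification is the main obstacle, because it requires converting an analytic (de Rham) representative into the \v Cech representative $[\underline{n}]$ of the Euler class, which is $c_1(L)$ by Theorem \ref{Theorem: e(L)=c_1(L)}. The cleanest route is through the \v Cech--de Rham double complex over a good cover $\{U_\alpha\}$: choose local potentials $A_\alpha$ with $F_L|_{U_\alpha} = dA_\alpha$; on overlaps one has $A_\beta - A_\alpha = g_{\alpha\beta}^{-1}\,dg_{\alpha\beta} = i\,d\phi_{\alpha\beta}$; feeding this into the usual zigzag (Poincar\'e lemma on each $U_\alpha$, then \v Cech coboundary of the resulting $0$-cochain of primitives) one extracts on triple intersections exactly the integer cocycle $n_{\alpha\beta\gamma} = -(\phi_{\beta\gamma} - \phi_{\alpha\gamma} + \phi_{\alpha\beta})/(2\pi)$ defining the Euler class. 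Bookkeeping of the $2\pi i$ normalization is the only genuinely delicate step; once done, Theorem \ref{Theorem: e(L)=c_1(L)} closes the argument.
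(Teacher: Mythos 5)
Your proposal follows essentially the same route as the paper's proof: reduce the identification to line bundles via the splitting principle, and verify the line-bundle case by chasing $[-F/(2\pi i)]$ through the \v Cech--de Rham double complex to land on the integer cocycle $\{n_{\alpha\beta\gamma}\}$ defining the Euler class, then invoke Theorem \ref{Theorem: e(L)=c_1(L)}. Where you diverge is in the first part: you supply an explicit Chern--Weil argument for closedness (Bianchi identity combined with $\operatorname{Ad}$-invariance of the elementary symmetric polynomials) and for connection-independence (the affine-path transgression form), whereas the paper establishes these facts only implicitly in the line-bundle case (the tic-tac-toe diagram presupposes $dF=0$, which is trivial for $U(1)$) and then leans on the splitting-principle comparison to cover the general case without a standalone argument. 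Your version is the more self-contained, since the paper's appeal to ``diagonalize $F$ by a unitary matrix $A$'' is at best heuristic pointwise linear algebra and cannot on its own produce a global factorization of $\det(1-F/2\pi i)$; the flag-bundle pullback is what actually does the work, and you state this correctly. The tradeoff is that the Chern--Weil step requires the Bianchi identity and the invariance identity as inputs, which the paper, in keeping with its survey character, leaves to the references. Both arguments are sound.
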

\begin{proof}
Let us start by proving the theorem for complex line bundles. If $F$ is the curvature of a $U(1)$ vector bundle $L$, then
\[ c(F)=c_1(F)=-\frac{F}{2\pi i} \,. \]
Recall that the \v Cech--de Rham isomorphism $H^\bullet_\text{dR}(X)\cong \check{H}^\bullet(X,\R)$ between de Rham and \v Cech cohomologies follows from the \v Cech-de Rham double complex by a ''tic-tac-toe'' argument \cite{BT}. In particular, the computation of the representative of $[-F/(2\pi i)]\in H^2_\text{dR}(X)$ in \v Cech cohomology $\check{H}^2(X,\R)$ is illustrated by the diagram
\[ \xymatrix{ 0 \ar[r]^-{\text{res}} & 0 & & & & \\ -\frac{F}{2\pi i} \ar[u]^-d \ar[r]^-{\text{res}} & \{-\frac{F_\alpha}{2\pi i} \} \ar[r]^-{\delta} \ar[u]^-{d} & 0 & & \\ & \{ -\frac{A_\alpha}{2\pi i} \} \ar[u]^-d \ar[r]^-{\delta} & \{ -\frac{\delta(A)_{\alpha\beta}}{2\pi i} \} \ar[u]^-d \ar[r]^-{\delta} & 0 & \\ & & \{ -\frac{\phi_{\alpha\beta}}{2\pi} \} \ar[u]^-d \ar[r]^-{\delta} & \{ -\frac{\delta(\phi)_{\alpha\beta\gamma}}{2\pi} \} = \{ n_{\alpha\beta\gamma} \} \ar[u]^-d \ar[r]^-{\delta} & 0 } \]
Let us clarify this a bit. First, $F$ is globally defined and closed under $d$. Restriction to open sets of the good cover is denoted by $\text{res}$. Of course, $\delta\circ \text{res}=0$. Next, $dF_\alpha=0$, since the diagram commutes. Open sets of a good cover are contractible and thus their cohomology vanishes (this is the Poincar\'e lemma), which implies that $F_\alpha = dA_\alpha$ for $1$-forms $A_\alpha$. Again, commutativity of the diagram implies $d\delta(A)_{\alpha\beta}=0$. By the Poincar\'e lemma we have $\delta(A)_{\alpha\beta}=d(i\phi_{\alpha\beta})$, where $\phi_{\alpha\beta}$ are $0$-forms (functions). Finally, by commutativity  $d\delta(\phi)_{\alpha\beta\gamma}=0$, which means that $\delta(\phi)_{\alpha\beta\gamma}$ are all constants. We simplify the notation by writing $n_{\alpha\beta\gamma}=-\delta(\phi)_{\alpha\beta\gamma}/(2\pi)$. Clearly, $\{n_{\alpha\beta\gamma}\}$ determines a nontrivial \v Cech cohomology class with constant coefficients. When $F$ is the curvature of a connection on a line bundle, remarkable things happen. Namely, $\{A_{\alpha}\}$ are then the local connection $1$-forms, satisfying $\delta(A)_{\alpha\beta}=d\log g_{\alpha\beta}$, where $g_{\alpha\beta}$ are the transition functions. Then $\phi_{\alpha\beta}=-i\log g_{\alpha\beta}$ and $\delta(\phi)_{\alpha\beta\gamma} = -i\log \delta(g)_{\alpha\beta\gamma}$. But since $\{g_{\alpha\beta}\}$ is closed under $\delta$, we get $\delta(\phi)_{\alpha\beta\gamma} \in 2\pi \Z$. Therefore, $n_{\alpha\beta\gamma} = -\delta(\phi)_{\alpha\beta\gamma}/(2\pi)$ are integers and they determine a class $[\underline{n}]\in \check{H}^2(X,\Z)$\footnote{In the de Rham picture this property is expressed by fact that the integral of $-F/(2\pi i)$ over any integral homology cycle evaluates to an integer.}. The class $[\underline{n}]$ is precisely the Euler class of the line bundle which we discussed earlier and, by theorem \ref{Theorem: e(L)=c_1(L)}, it coincides with the first Chern class $c_1(L)$. This proves theorem for line bundles.

Consider then a general $U(n)$ vector bundle $E\rightarrow X$ with classification map $f:X\rightarrow BU(n)$ and curvature $F$. The determinant in \eqref{Eq: Analytic Chern class} is tricky to deal with, unless $F$ is diagonal. It can always be diagonalized by a unitary matrix $A$, since $\mathfrak{u}(n)$ consists of complex anti-hermitian matrices. Because
\[ \det\left(1-\frac{AFA^{-1}}{2\pi i}\right) = \det\left(1-\frac{F}{2\pi i}\right) \,, \]
we lose nothing by assuming $F$ to be diagonal. Let us continue under this assumption.

By the splitting principle \ref{Theorem: Splitting principle} and naturality of the universal bundle construction, there exists a flag bundle $g:F(E)\rightarrow X$, such that
\[ f^{-1}E\cong (g\circ f)^{-1}\cong L_1\oplus \ldots \oplus L_n \,, \]
where $L_k\rightarrow F(E)$ are line bundles. By naturality and the summation formula \eqref{Eq: Chern class summation formula}, we get
\[ f^*c(E) = c(f^{-1}(E)) = c(L_1\oplus \ldots \oplus L_n) = c(L_1)\smile \ldots \smile c(L_n) \,. \]
We already proved that
\[ c(L_k) = 1+c_1(L_k) = 1 + [c_1(F_k)] = [1+c_1(F_k)] = \left[1-\frac{F_k}{2\pi i}\right] \,, \]
where $F_k$ are curvatures of the line bundles $L_k$, induced by $F$. Plugging these into the previous formula yields
\begin{align*} f^*c(E) &= \left[1-\frac{F_1}{2\pi i}\right]\smile \ldots \smile \left[1-\frac{F_n}{2\pi i}\right] \\ &= \left[\prod_{k=1}^n \left(1-\frac{F_k}{2\pi i}\right)\right] = \left[\det\left(1-\frac{f^*F}{2\pi i}\right)\right] = f^*\left[\det\left(1-\frac{F}{2\pi i}\right)\right] \,, \end{align*}
By the splitting principle, the map $f^*$ is injective, which implies
\[ c(E) = \left[\det\left(1-\frac{F}{2\pi i}\right)\right]\,, \]
concluding the proof.
\end{proof}
The images of $c_k(E)$ in de Rham cohomology are represented by the homogeneous components $c_k(F)$ of $c(F)$,
\[ c(F) = c_0(F) + c_1(F) + \ldots + c_n(F) \,,\quad c_k(F)\in \Omega^{2k}(X) \,. \]
From now on, we use the same notation, $c(E)$ and $c_k(E)$, to denote the Chern classes both in integral and de Rham cohomology. It should always be clear from the context which one is used.

\begin{definition}
Recall from the proof of theorem \ref{Theorem: Analytic Chern classes} that $F$ can always be diagonalized by a unitary matrix $A$. Let
\[ -\frac{AFA^{-1}}{2\pi i} = \diag(x_1,\ldots,x_n) \,, \]
where $\{x_k\}$ are $2$-forms, known as the \emph{Chern roots}. Writing out the determinant in $c(F)$ in terms of $\{x_k\}$ yields
\begin{align*} c(F) &= \det\left(1-\frac{F}{2\pi i}\right) = \prod_{k=1}^n (1+x_k) \\ &= 1 + \underbrace{\left(\sum_{k=1}^n x_k\right)}_{c_1(F)} + \underbrace{\left(\sum_{1\le k < l\le n} x_k \wedge x_l\right)}_{c_2(F)} + \ldots + \underbrace{\left(\prod_{k=1}^n x_k\right)}_{c_n(F)} \,. \end{align*}
Clearly, $c_k(F)$ are simply the \emph{elementary symmetric polynomials} in the Chern roots. To simplify the notation, we shall make no distinction between $x_k$ and the cohomology class $[x_k]$, denoting both by $x_k$ and referring to both with the name \emph{Chern root}.
\end{definition}
It is a well-known result that all symmetric polynomials can be constructed as polynomials in the elementary symmetric polynomials. In this case, all symmetric polynomials in the Chern roots can be expressed in terms of the Chern classes. Therefore, to construct further characteristic classes, all we need to do is to construct symmetric polynomials in the Chern roots.

The \emph{Chern character} of a $U(n)$ vector bundle $E$ is the characteristic class
\begin{equation}\label{Eq: Chern character of a vector bundle} \ch(E) := \sum_{k=1}^n e^{x_k} = n + \sum_{k=1}^n x_k + \frac{1}{2!} \sum_{k=1}^n x_k^2 + \ldots \in H^\text{even}(X,\Q) \,. \end{equation}
The homogeneous components of $\ch(E)$ are denoted $\ch_k(E)\in H^{2k}(X,\Q)$. It is a straightforward task to write them out in terms of the Chern classes. The three first are
\begin{align*} \ch_0(E) &= \rk(E) \\ \ch_1(E) &= c_1(E) \\ \ch_2(E) &= \frac{1}{2}\left(c_1(E)^2 - 2 c_2(E)\right) \,. \end{align*}
In terms of a curvature $2$-form $F$ of $E$, the Chern character is
\begin{equation}\label{Eq: Analytic Chern character} \ch(E) = \sum_{k=1}^n e^{x_k} = \Tr\,\exp(\diag(x_1,\ldots,x_n)) = \left[\Tr\,\exp\left(-\frac{F}{2\pi i}\right)\right] \,. \end{equation}
For a complex line bundle $L$, the definition simplifies to
\[ \ch(L)=1+c_1(L) = \left[1-\frac{F}{2\pi i}\right] \in H^2(X,\Q) \,. \]
Using \eqref{Eq: Analytic Chern character}, it is a simple calculation to show that \cite{LM}
\begin{align} \label{Eq: Chern character, additivity} \ch(E\oplus F) &= \ch(E) + \ch(F) \\ \label{Eq: Chern character, multiplicativity} \ch(E\otimes F) &= \ch(E)\smile \ch(F) \,, \end{align}
for any complex vector bundles $E$ and $F$ over $X$. The importance of the Chern character becomes clear, when we learn that it is compatible with $K$-theory under direct sums and tensor products of vector bundles.

The \emph{complex Todd class} of a $U(n)$ vector bundle $E$, with Chern roots $\{x_k\}$, is the characteristic class
\[ \Todd_\C(E) := \prod_{k=1}^n \frac{x_k}{1-e^{-x_k}} \in H^\text{even}(X,\Q)\,. \]
The \emph{complex Atiyan-Hirzebruch class} is the characteristic class
\[ \widehat{A}_\C(E) := \prod_{k=1}^n \frac{x_k/2}{\sinh(x_k/2)} \in H^{4\bullet}(X,\Q) \,. \]
These two are related by
\begin{align*} \Todd_\C(E) &= \prod_{k=1}^n \frac{x_k}{1-e^{-x_k}} = \prod_{k=1}^n \frac{x_k}{1-e^{-x_k}} = \prod_{k=1}^n \left( \frac{x_k/2}{\sinh(x_k/2)}\right)\smile e^{\sum_k x_k/2} \\ &= \widehat{A}_\C(E) \smile e^{c_1(E)/2} \,. \end{align*} 

Since the Chern classes are only defined for complex vector bundles, so are $\ch$, $\hat{A}_\C$ and $\Todd_\C$. However, given a real vector bundle, we can always complexify the fibres to obtain a complex vector bundles. The Chern roots of the complexified bundle can then be used to define characteristic classes of the original real bundle. For example, let $E\rightarrow X$ be a $O(n)$ vector bundle. Its complexification is the product bundle $E\otimes \C$. The splitting pullback of $E\otimes \C$ by $f:F(E\otimes \C)\rightarrow X$ is of the form \cite{LM}
\[ f^{-1}(E\otimes \C) \cong L_1\otimes L_1^* \otimes \ldots \otimes L_{\floor{n}}\otimes L_{\floor{n}}^* \,, \]
where $L_k$ are complex line bundles and $L_k^*$ their dual bundles. Since the dual line bundles satisfy $c_1(L^*)=-c_1(L)$, the Chern roots of $E\otimes \C$ are $\{\pm x_1,\pm x_2, \ldots,\pm x_{\floor{x}}\}$, where $f^*x_k = c_1(L_k)$. The \emph{Atiyah-Hirzebruch class} of $E$ is the characteristic class
\begin{equation}\label{Eq: A-class expansion} \widehat{A}(E) := \prod_{k=1}^{\floor{n}} \frac{x_k/2}{\sinh(x_k/2)} \in H^{4\bullet}(X,\Q) \,. \end{equation}
It is the square root of the complex Todd class of $E\otimes \C$:
\begin{align*} \Todd_\C(E\otimes \C) &:= \prod_{k=1}^{\floor{n}} \frac{-x_k^2}{(1-e^{-x_k})(1-e^{x_k})} = \prod_{k=1}^{\floor{n}} \frac{-x_k^2}{(e^{x_k/2}-e^{-x_k/2})(e^{-x_k/2}-e^{x_k/2})} \\ &= \prod_{k=1}^{\floor{n}} \frac{x_k^2}{(e^{x_k/2}-e^{-x_k/2})(e^{x_k/2}-e^{-x_k/2})} = \prod_{k=1}^{\floor{n}} \left(\frac{x_k/2}{\sinh(x_k/2)}\right)^2 = \widehat{A}(E)^2 \,. \end{align*}
The real version of the \emph{Todd class} is more difficult to define and the reason behind the definition can not be explained at this point. In fact, it is only defined for $\spinc$ vector bundles. Let $E$ be an $O(n)$ vector bundle over $X$, satisfying $w_1(E)=W_3(E)=0$. Exactness of \eqref{Eq: S-W exact sequence} and the definition \eqref{Eq: Integral S-W class} imply, that $w_2(E)$ must be the mod $2$ reduction of a class $d(E)\in H^2(X,\Z)$. Recall from proposition \ref{Proposition: Chern class, S-W class relation} that if $E$ is a \emph{complex} vector bundle with underlying real vector bundle $E_\R$, then $w_2(E_\R)$ is the mod $2$ reduction of $c_1(E)$. Thus, in the complex case, the first Chern class plays the role of $d$. This motivates the definition
\begin{equation}\label{Eq: Real Todd class} \Todd(E) := \widehat{A}(E) \smile e^{d(E)/2} \,, \end{equation}
for a \emph{real} $\spinc$ vector bundle $E$. 

\section{$\spin$ and $\spinc$ structures}
We have already referred several times to $\spin$ and $\spinc$ vector bundles, which we defined as vector bundles whose Stiefel-Whitney classes satisfy the appropriate vanishing conditions. In this section the reader is acquainted with more intuitive geometric definitions.

Consider first a principal $O(n)$ bundle $P_{O(n)}$ over $X$. The group $O(n)$ has two connected components, one consisting of matrices with determinant $+1$ and the other of matrices with determinant $-1$. Let $\{g_{\alpha\beta}\}$ be the transition functions of $P_{O(n)}$ with respect to a good open cover. We would like to find reductions and lifts of the transition functions to more connected groups than $O(n)$. The first step would be to find a reduction of the structure group to the connected ($0$-connected) group $SO(n)$. For this, we need to find a set of $SO(n)$-valued transition functions $\{g'_{\alpha\beta}\}$, differing from $g_{\alpha\beta}$ by a coboundary. Let
\[ t_{\alpha\beta} := \det(g_{\alpha\beta})\in \{\pm 1\}\cong \Z_2 \,. \]
Then $\underline{t}$ is a $\Z_2$-valued \v Cech cocycle,
\[ \delta(\underline{t})=\det(\delta(\underline{g}))=\det(1)=1 \]
and thus defines a class $[\underline{t}]\in H^1(X,\Z_2)$. If this class is trivial, $t_{\alpha\beta} = s_{\beta}s_{\alpha}^{-1}$ for some $\Z_2$-valued functions $\{s_\alpha\}$, then we can find $O(n)$-valued matrices $\{h_\alpha\}$, such that $\det\,h_\alpha = s_\alpha$. These can be used to define $SO(n)$-valued cocycles (transition functions)
\[ g_{\alpha\beta}' := h_\alpha g_{\alpha\beta} h_\beta^{-1} \,. \]
The bundle defined by $\{g_{\alpha\beta}'\}$ is isomorphic to the bundle defined by $\{g_{\alpha\beta}\}$, since the transition functions differ only by a coboundary. Thus, the structure group can be reduced to $SO(n)$ if and only if $[\underline{t}]\in H^1(X,\Z_2)$ vanishes. The bundle $P_{O(n)}$ is then said to be \emph{orientable} and the bundle defined by $\{g_{\alpha\beta}'\}$ is denoted by $P_{SO(n)}$. The class $[\underline{t}]$ can be shown to coincide with $w_1(P_{O(n)})$ by observing that $[\underline{t}]$ respects naturality and that $[\underline{t}]\in H^1(BO(n),\Z)\cong \Z_2$ for the universal bundle $EO(n)\rightarrow BO(n)$ produces the nontrivial element (otherwise every $n$-plane bundle would be orientable). The orientations of $P_{SO(n)}$ over a (generally disconnected) space $X$ are in one-to-one correspondence with $H^0(X,\Z_2)$. Obviously, the correspondence is given by assigning to each connected component of $X$ one of the two possible orientations.
\begin{proposition}\label{Proposition: Orientability} A principal $O(n)$ bundle $P_{O(n)}\rightarrow X$ is orientable if and only if $w_1(P_{O(n)})=0$. The distinct orientations are in bijective correspondence with $H^0(X,\Z_2)$.\qed \end{proposition}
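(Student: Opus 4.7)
The proof strategy follows the detailed sketch already laid out in the paragraphs preceding the proposition, so the plan is mostly to tighten that outline into a rigorous argument in two parts: the orientability criterion and the counting of orientations.

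First, I would make precise the construction of the candidate obstruction class. Fix a good open cover of $X$ with $O(n)$-valued transition functions $\{g_{\alpha\beta}\}$ for $P_{O(n)}$, and define $t_{\alpha\beta} := \det(g_{\alpha\beta}) \in \{\pm 1\} \cong \Z_2$. The multiplicativity of the determinant together with the cocycle condition for $\{g_{\alpha\beta}\}$ immediately yields $\delta(\underline{t}) = 1$, so $\underline{t}$ is a \v Cech $1$-cocycle and determines a class $[\underline{t}] \in \check{H}^1(X,\underline{\Z_2}) \cong H^1(X,\Z_2)$. I would then observe that a different choice of local trivializations changes $\{g_{\alpha\beta}\}$ by a coboundary in $O(n)$, which changes $\{t_{\alpha\beta}\}$ by a $\Z_2$-coboundary, so $[\underline{t}]$ is intrinsic to $P_{O(n)}$.

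Next, I would show that orientability of $P_{O(n)}$ (in the sense of a reduction of the structure group to $SO(n)$) is equivalent to the vanishing of $[\underline{t}]$. One direction is immediate: an $SO(n)$-reduction means there exist $SO(n)$-valued transition functions cohomologous to $\{g_{\alpha\beta}\}$, forcing $[\underline{t}]=0$. Conversely, if $t_{\alpha\beta} = s_\beta s_\alpha^{-1}$ for $\Z_2$-valued functions $\{s_\alpha\}$, I would pick $O(n)$-valued lifts $\{h_\alpha\}$ with $\det(h_\alpha) = s_\alpha$ (for instance, $\id$ or a fixed reflection, chosen locally according to $s_\alpha$) and verify that $g_{\alpha\beta}' := h_\alpha g_{\alpha\beta} h_\beta^{-1}$ takes values in $SO(n)$ and defines an isomorphic bundle. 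To identify $[\underline{t}]$ with $w_1(P_{O(n)})$, I would invoke naturality: the assignment $P_{O(n)} \mapsto [\underline{t}]$ is a natural transformation $\mathrm{Prin}_{O(n)}(-) \to H^1(-,\Z_2)$, so by the representability discussed above it must be the pullback of a class in $H^1(BO(n),\Z_2) \cong \Z_2$; that class cannot be zero, since otherwise every $O(n)$ bundle would be orientable (which is false, e.g.\ the M\"obius bundle over $S^1$), so it must equal the generator $w_1 \in H^1(BO(n),\Z_2)$. Hence $[\underline{t}] = w_1(P_{O(n)})$.

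For the second statement, I would fix one orientation corresponding to a choice of $SO(n)$-valued transition data $\{g_{\alpha\beta}'\}$ and show that any other is obtained via a global $\Z_2$-valued function $s \in H^0(X,\Z_2)$ flipping orientations independently on each connected component of $X$. Concretely, two $SO(n)$-reductions $\{h_\alpha\}$ and $\{h_\alpha'\}$ of the same cocycle $\{g_{\alpha\beta}\}$ differ by $h_\alpha' h_\alpha^{-1} \in O(n)$, whose determinant $s_\alpha \in \Z_2$ satisfies $s_\alpha = s_\beta$ on overlaps (since both reductions already land in $SO(n)$), hence defines a locally constant function, i.e.\ an element of $H^0(X,\Z_2)$. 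This map from orientations to $H^0(X,\Z_2)$ is easily seen to be a bijection, completing the proof. The only nontrivial step is the identification $[\underline{t}] = w_1(P_{O(n)})$, which is where the universal example on $BO(n)$ and the need to exhibit a non-orientable bundle as a witness come in; everything else is a routine cocycle manipulation.
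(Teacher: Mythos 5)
Your argument coincides with the paper's own: both define the class $[\underline{t}]$ via $t_{\alpha\beta}=\det(g_{\alpha\beta})$, check the cocycle condition, reduce the structure group to $SO(n)$ when it vanishes, identify $[\underline{t}]$ with $w_1$ by naturality plus nontriviality of the universal class, and count orientations by $H^0(X,\Z_2)$. Your write-up merely fills in a few steps the paper leaves implicit (well-definedness of $[\underline{t}]$, the M\"obius example, and the overlap check $s_\alpha=s_\beta$), so there is nothing substantive to correct or contrast.
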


The next step in simplifying the structure group would be to find a lift of the $SO(n)$-valued transition functions to take values in a simply connected ($1$-connected) structure group. The group $SO(n)$ has a $2$-sheeted universal covering space $\spin(n)$ \cite{LM}, which fits into the short exact sequence
\[ \xymatrix{ 1 \ar[r] & \Z_2 \ar[r] & \spin(n) \ar[r]^\eta & SO(n) \ar[r] & 1 } \]
where $\eta:\spin(n)\rightarrow SO(n)$ is the fibre map. $\spin(n)$ is a Lie group with Lie algebra $\mathfrak{spin}(n)\cong \mathfrak{so}(n)$.

Under what conditions can the $SO(n)$-valued transition functions $\{g_{\alpha\beta}\}$ of $P_{SO(n)}$ be lifted to $\spin(n)$-valued functions $\{\widetilde{g}_{\alpha\beta}\}$? Of course, the lifting can always be performed locally by simply lifting $SO(n)$ into either one of the two sheets of the covering, but there might be a topological obstruction for patching the local lifts together on triple overlaps. To this end, suppose that $\{\widetilde{g}_{\alpha\beta}\}$ constitutes a \emph{local} $\spin$ lift. The cocycle condition for $\{g_{\alpha\beta}\}$ requires
\[ \eta(\widetilde{g}_{\alpha\beta}\widetilde{g}_{\beta\gamma}\widetilde{g}^{-1}_{\alpha\gamma}) = \eta(\widetilde{g}_{\alpha\beta})\eta(\widetilde{g}_{\beta\gamma})\eta(\widetilde{g}^{-1}_{\alpha\gamma}) = 1\,. \]
However, $\ker(\eta) \cong \{\pm 1\}\subset \text{Spin}(n)$, where $-1\in \text{Spin}(n)$ is a ''nontrivial'' element not in $SO(n)$. Thus, the elements
\[ u_{\alpha\beta\gamma} := \widetilde{g}_{\alpha\beta}\widetilde{g}_{\beta\gamma}\widetilde{g}^{-1}_{\alpha\gamma} \in \{\pm 1\} \cong \Z_2  \]
form a \v Cech cocycle $[\underline{u}]\in H^2(X,\Z_2)$. If this cocycle is a coboundary, that is, if
\[ u_{\alpha\beta\gamma} = \widetilde{h}_{\alpha\beta}\widetilde{h}_{\beta\gamma}\widetilde{h}^{-1}_{\alpha\gamma} \]
for some $\spin(n)\ni \{\pm 1\}$-valued functions $\{\widetilde{h}_{\alpha\beta}\}$, then $\{\widetilde{g}_{\alpha\beta}\}$ (not a cocycle) can be modified into the cocycle $\{\widetilde{g}_{\alpha\beta}\widetilde{h}^{-1}_{\alpha\beta}\}$, determining the transition data for a principal $\spin(n)$ bundle. Conversely, a global $\spin$ lift obviously yields a trivial class $[\underline{u}]\in H^2(X,\Z_2)$. If the $\spin$ lift does exist, the principal $\spin(n)$ bundle is denoted by $P_{\spin(n)}$. The proof that $w_2(E)$ coincides with $[\underline{u}]$ is again done in two steps: first noticing that the construction respects naturality and then arguing that $[\underline{u}]$ of the universal bundle $ESO(n)\rightarrow BSO(n)$ produces the nonzero element in $H^2(BSO(n),\Z_2)\cong \Z_2$ (otherwise every $SO(n)$ bundle would have a $\spin$ lift). Consider now another local $\spin$ lift $\{\widetilde{f}_{\alpha\beta}\}$ of $P_{SO(n)}$. The differences $\widetilde{f}_{\alpha\beta}\widetilde{g}_{\alpha\beta}^{-1}$ satisfy
\[ \eta(\widetilde{f}_{\alpha\beta}\widetilde{g}_{\alpha\beta}^{-1})= g_{\alpha\beta}g_{\alpha\beta}^{-1}=1 \,, \]
which means that
\[ \widetilde{f}_{\alpha\beta}\widetilde{g}_{\alpha\beta}^{-1} \in \{\pm 1\} \cong \Z_2 \,. \]
These differences define a class in $H^1(X,\Z_2)$, which is trivial precisely when the two $\spin$ lifts differ by a cocycle. We have established the following.
\begin{proposition} A principal $SO(n)$ bundle $P_{SO(n)}\rightarrow X$ lifts to a $\spin(n)$ bundle $P_{\spin(n)}$ precisely when $w_2(P_{SO(n)})=0$. If $w_2(P_{SO(n)})=0$, a choice of the $\spin$ lift is called a $\spin$ structure. Distinct $\spin$ structures are in bijective correspondence with $H^1(X,\Z_2)$.\qed \end{proposition}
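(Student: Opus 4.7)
The plan is to follow the pattern already used for orientability in Proposition \ref{Proposition: Orientability}, but one degree higher: construct an explicit \v Cech obstruction cocycle valued in the kernel $\ker(\eta)\cong \Z_2$ of the double cover $\eta:\text{Spin}(n)\to SO(n)$, identify its class with $w_2(P_{SO(n)})$, and then analyze the freedom in choosing lifts.

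First I would fix a good open cover $\{U_\alpha\}$ of $X$ (which exists by paracompactness) and transition functions $\{g_{\alpha\beta}\}$ for $P_{SO(n)}$. Since each $U_{\alpha\beta}$ is contractible and $\eta:\text{Spin}(n)\to SO(n)$ is a covering map, each $g_{\alpha\beta}$ admits local lifts $\widetilde g_{\alpha\beta}:U_{\alpha\beta}\to \text{Spin}(n)$. On triple overlaps set
\[ u_{\alpha\beta\gamma}:= \widetilde{g}_{\alpha\beta}\widetilde{g}_{\beta\gamma}\widetilde{g}^{-1}_{\alpha\gamma}\in \ker(\eta)\cong\Z_2. \]
A direct check (using that $\Z_2$ is central in $\text{Spin}(n)$) shows $\underline u$ is a \v Cech $2$-cocycle, giving a class $[\underline u]\in H^2(X,\Z_2)$ independent of the chosen local lifts, since any other choice $\widetilde g_{\alpha\beta}'=\widetilde g_{\alpha\beta}\epsilon_{\alpha\beta}$ with $\epsilon_{\alpha\beta}\in\Z_2$ changes $u$ by the coboundary $\delta\underline\epsilon$.

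Next I would show $[\underline u]=0$ iff a global $\text{Spin}$ lift exists. If $u_{\alpha\beta\gamma}=\epsilon_{\alpha\beta}\epsilon_{\beta\gamma}\epsilon^{-1}_{\alpha\gamma}$ for a $\Z_2$-valued $1$-cochain $\underline\epsilon$, then $\{\widetilde g_{\alpha\beta}\epsilon^{-1}_{\alpha\beta}\}$ satisfies the strict cocycle condition, defining a principal $\text{Spin}(n)$ bundle $P_{\text{Spin}(n)}$ covering $P_{SO(n)}$; conversely, any global lift produces a vanishing $u$-cocycle. To identify $[\underline u]$ with the second Stiefel--Whitney class, I would invoke naturality of the construction (which is manifest, as the construction commutes with pullbacks), and then observe that on the universal bundle $ESO(n)\to BSO(n)$ the class $[\underline u]$ must be the nonzero element of $H^2(BSO(n),\Z_2)\cong\Z_2$, since otherwise every $SO(n)$ bundle would admit a $\text{Spin}$ lift, contradicting the existence of non-$\text{Spin}$ manifolds such as $\mathbb{CP}^2$.

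For the final classification statement, assuming $w_2(P_{SO(n)})=0$, take two $\text{Spin}$ lifts $\{\widetilde g_{\alpha\beta}\}$ and $\{\widetilde f_{\alpha\beta}\}$. Since both project to $g_{\alpha\beta}$ under $\eta$, the ratios
\[ \epsilon_{\alpha\beta}:=\widetilde f_{\alpha\beta}\widetilde g^{-1}_{\alpha\beta}\in\ker(\eta)\cong\Z_2 \]
form a \v Cech $1$-cocycle, and the two lifts differ by a coboundary (i.e.\ define isomorphic $\text{Spin}(n)$ bundles over $P_{SO(n)}$) precisely when $[\underline\epsilon]\in H^1(X,\Z_2)$ vanishes. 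Conversely, given a class in $H^1(X,\Z_2)$ represented by a cocycle $\underline\epsilon$, twisting $\{\widetilde g_{\alpha\beta}\}$ by $\epsilon_{\alpha\beta}$ produces another lift. The main subtlety is verifying that the obstruction cocycle is well defined (independent of lift choices) and that its class genuinely equals $w_2$; the former is a straightforward coboundary computation using centrality of $\ker\eta$, while the latter rests on the universal-example argument which, strictly speaking, requires knowing $H^2(BSO(n),\Z_2)\cong\Z_2$ and exhibiting one non-$\text{Spin}$ $SO(n)$-bundle to pin down the nontrivial element.
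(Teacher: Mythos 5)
Your proposal takes essentially the same route as the paper: construct the $\Z_2$-valued \v Cech obstruction $2$-cocycle $u_{\alpha\beta\gamma} = \widetilde g_{\alpha\beta}\widetilde g_{\beta\gamma}\widetilde g^{-1}_{\alpha\gamma}$ from local $\spin(n)$ lifts, show its class vanishes iff a global lift exists, identify it with $w_2$ via naturality plus the universal-bundle argument, and classify distinct lifts by the $H^1(X,\Z_2)$ class of their pointwise ratio. Your version is slightly more careful — you explicitly note the centrality of $\ker\eta$ needed for the cocycle and coboundary computations, and you supply $\mathbb{CP}^2$ as the concrete non-$\spin$ example needed to conclude the universal class is nonzero, a point the paper leaves implicit — but the argument is the same.
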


The existence of a $\spinc$ structure is a slightly stronger condition than orientability, but weaker than the existence of a $\spin$ structure. Let us start again from the principal $SO(n)$ bundle $P_{SO(n)}$ with transition functions $\{g_{\alpha\beta}\}$ and a generally nontrivial Stiefel-Whitney class $w_2(P_{SO(n)})$. The problem was that the local $\spin$ lifts of the $SO(n)$-valued transition functions could not be made to agree on triple overlaps. However, it is possible to add a suitable extra structure to the problem, which can be tuned to precisely cancel the obstruction for a global lift into a nontrivial $2$-sheeted covering. More precisely, instead of trying to lift the $SO(n)$-valued transition functions of $P_{SO(n)}$ to $\spin(n)$, we augment the $SO(n)$ bundle with a $U(1)$ bundle and try to lift the $SO(n)\times U(1)$-valued transition functions into a respective nontrivial $2$-sheeted covering, which is the $\spinc$ group
\[ \spinc(n) := \spin(n)\times_{\Z_2} U(1) \,, \]
where $\times_{\Z_2}$ identifies $(-1,-1)$ and $(1,1)$. Elements of $\spinc(n)$ are thus equivalence classes $[(g,h)]$, where $g\in \spin(n)$ and $h\in U(1)$. The group $\spinc(n)$ fits into the exact sequence
\[ \xymatrix{ 1 \ar[r] & \Z_2 \ar[r] & \spinc(n) \ar[r]^\xi & SO(n)\times U(1) \ar[r] & 1 } \,, \]
where the nontrivial element of $\Z_2$ is mapped into $\spinc(n)$ as $[(-1,1)]=[(1,-1)]$. Let us try to construct a $\spinc$ lift for a set of transition functions $(\{g_{\alpha\beta}\},\{h_{\alpha\beta}\})$ of a $SO(n)\times U(1)$ bundle $P_{SO(n)}\times P_{U(1)}$. Locally, again, such a lift obviously exists, but globally there might be a topological obstruction for patching the local liftings together on triple overlaps. We denote a set of local lifts by $[(\{\widetilde{g}_{\alpha\beta}\},\{\widetilde{h}_{\alpha\beta}\})]$. The map $\xi$ should map these to the original transition functions:
\[ \xi([(\widetilde{g}_{\alpha\beta},\widetilde{h}_{\alpha\beta})]\cdot [(\widetilde{g}_{\beta\gamma},\widetilde{h}_{\beta\gamma})]\cdot[(\widetilde{g}^{-1}_{\alpha\gamma},\widetilde{h}^{-1}_{\alpha\gamma})]) = \xi([(\widetilde{g}_{\alpha\beta}\widetilde{g}_{\beta\gamma}\widetilde{g}^{-1}_{\alpha\gamma},\widetilde{h}_{\alpha\beta}\widetilde{h}_{\beta\gamma}\widetilde{h}^{-1}_{\alpha\gamma})])=1 \,. \]
Thus,
\[ [(\widetilde{g}_{\alpha\beta}\widetilde{g}_{\beta\gamma}\widetilde{g}^{-1}_{\alpha\gamma},\widetilde{h}_{\alpha\beta}\widetilde{h}_{\beta\gamma}\widetilde{h}^{-1}_{\alpha\gamma})] \in \{[(1,1)],[(-1,1)]\} \cong \Z_2 \,, \]
determining a class in $H^2(X,\Z_2)$. Clearly, the $\spinc$ lift exists if and only if this class vanishes (compare to the case of the $\spin$ lift). Remarkably, we now have extra freedom with regards to the $\spin$ lifts $\{\widetilde{g}_{\alpha\beta}\}$, because the equivalence class $[(1,1)]\in \spinc(n)$ consists of two points: $(1,1)$ and $(-1,-1)$. Thus, it is enough to have
\[ \widetilde{g}_{\alpha\beta}\widetilde{g}_{\beta\gamma}\widetilde{g}^{-1}_{\alpha\gamma} = u_{\alpha\beta\gamma} = \widetilde{h}_{\alpha\beta}\widetilde{h}_{\beta\gamma}\widetilde{h}^{-1}_{\alpha\gamma} \]
for the $\spinc$ lift to exist. It is not obvious, though, that such local lifts of the $P_{U(1)}$ bundle exists. The functions $\{\widetilde{g}_{\alpha\beta}\}$ differ from $\{g_{\alpha\beta}\}$ only by signs. This can be expresses in terms of integers $\{n_{\alpha\beta}\}$ as
\[ h_{\alpha\beta} = (-1)^{n_{\alpha\beta}}\widetilde{h}_{\alpha\beta} \,. \]
The cocycle condition yields
\[ h_{\alpha\beta}h_{\beta\gamma}h^{-1}_{\alpha\gamma} = (-1)^{n_{\alpha\beta}+n_{\beta\gamma}-n_{\alpha\gamma}}\widetilde{h}_{\alpha\beta}\widetilde{h}_{\beta\gamma}\widetilde{h}^{-1}_{\alpha\gamma} = (-1)^{\delta(n)_{\alpha\beta\gamma}}u_{\alpha\beta\gamma} = 1\,. \]
Now, $[\delta(\underline{n})]$ is a class in $H^2(X,\Z)$ and $[\underline{u}]$ a class in $H^2(X,\Z_2)$. The equation
\[ (-1)^{\delta(n)_{\alpha\beta\gamma}}u_{\alpha\beta\gamma} = 1 \]
is satisfied if and only if $[\underline{u}] = [\delta(\underline{n})] \text{ mod } 2$. Finally, observe that
\[ h_{\alpha\beta}h_{\beta\gamma}h^{-1}_{\alpha\gamma} = e^{2\pi i(\delta(n)_{\alpha\beta\gamma}/2)}u_{\alpha\beta\gamma} = e^{2\pi i((\delta(n)_{\alpha\beta\gamma}/2) + (\delta(n)_{\alpha\beta\gamma} \text{ mod } 2)/2)} \,. \]
The Chern class $c_1(P_{U(1)})$ is thus given by applying the \v Cech coboundary operator to the expression in the exponential. However, only the half-integer parts of the exponential contribute to the nontriviality of the Chern class. Hence, we can write
\[ c_1(P_{U(1)})=[(\delta(\underline{n})/2) + (\delta(\underline{n})/2)]=[\delta(\underline{n})] \,. \]
This concludes the proof that the $\spinc$ lift of $P_{SO(n)}\times P_{U(1)}$ exists if and only if
\begin{equation}\label{Eq: Spin^c existence 1} w_2(P_{SO(n)}) = [\underline{u}] = c_1(P_{U(1)}) \text{ mod } 2 \,. \end{equation}
By the exactness of \eqref{Eq: S-W exact sequence}, \eqref{Eq: Spin^c existence 1} is equivalent to
\[ W_3(P_{SO(n)}) = \beta(w_2(P_{SO(n)})) = 0 \,. \]
\begin{proposition}\label{Proposition: Spin^c structure} A principal $SO(n)$ bundle $P_{SO(n)}\rightarrow X$ admits a $\spinc(n)$ lift $P_{\spinc(n)}$ precisely when $W_3(P_{SO(n)})=0$. If $W_3(P_{SO(n)})=0$, a choice of the $\spinc$ lift is called a $\spinc$ structure. Distinct $\spinc$ structures of an $SO(n)$ principal bundle are in bijective correspondence with $2H^2(X,\Z)\oplus H^1(X,\Z_2)$. The first term comes from the freedom of tensoring the $P_{U(1)}$ bundle (satisfying $c_1(P_{U(1)})=w_2(P_{SO(n)}) \text{ mod } 2$) by $\widetilde{P}_{U(1)}\otimes \widetilde{P}_{U(1)}$, where $\widetilde{P}_{U(1)}$ is any principal $U(1)$ bundle\footnote{Namely, then $c_1(P_{U(1)}\otimes \widetilde{P}_{U(1)}\otimes \widetilde{P}_{U(1)}) \mod 2 = w_2(P_{SO(n)})$.}.
The second term comes from the freedom of choosing the lifts as in the $\spin$ case. \qed \end{proposition}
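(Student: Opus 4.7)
The plan is to split the statement into two parts: (i) the existence of a $\spinc$ lift is obstructed exactly by $W_3(P_{SO(n)})$, and (ii) the set of $\spinc$ structures, when nonempty, is a torsor over $2H^2(X,\Z)\oplus H^1(X,\Z_2)$. The existence part is essentially already carried out in the discussion preceding the statement, so I would just organize it as a formal proof.

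First I would fix a good open cover $\{U_\alpha\}$ with $SO(n)$ transition functions $\{g_{\alpha\beta}\}$ for $P_{SO(n)}$. Since the problem is to lift $(\{g_{\alpha\beta}\},\{h_{\alpha\beta}\})$ through $\xi:\spinc(n)\to SO(n)\times U(1)$ for some auxiliary $U(1)$ bundle $P_{U(1)}$ with transitions $\{h_{\alpha\beta}\}$, I would pick local $\spin$ lifts $\{\widetilde{g}_{\alpha\beta}\}$ and local $U(1)$ ''square root'' lifts $\{\widetilde{h}_{\alpha\beta}\}$ (always possible on contractible intersections). The failure of the cocycle condition defines a $\Z_2$-valued \v Cech $2$-cocycle $\underline{u}$ with $u_{\alpha\beta\gamma}=\widetilde{g}_{\alpha\beta}\widetilde{g}_{\beta\gamma}\widetilde{g}_{\alpha\gamma}^{-1}$, whose class is $w_2(P_{SO(n)})$ (this is the standard identification already used for the $\spin$ case, and naturality plus the universal example $BSO(n)$ pin it down). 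Because $[(1,1)]=[(-1,-1)]$ in $\spinc(n)$, the combined lift exists iff the $U(1)$-side obstruction cancels $[\underline{u}]$, i.e.\ iff $c_1(P_{U(1)})\bmod 2 = w_2(P_{SO(n)})$. By exactness of the Bockstein sequence \eqref{Eq: S-W exact sequence}, such a lift $c_1(P_{U(1)})\in H^2(X,\Z)$ exists iff $W_3(P_{SO(n)})=\beta(w_2(P_{SO(n)}))=0$, giving (i).

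For (ii), assume $W_3(P_{SO(n)})=0$ and fix one $\spinc$ structure $P_{\spinc(n)}^{0}$ determined by a choice $(P_{U(1)}^{0},\{\widetilde{g}_{\alpha\beta}^{0}\},\{\widetilde{h}_{\alpha\beta}^{0}\})$. Any other $\spinc$ structure is obtained from another auxiliary pair $(P_{U(1)},\{\widetilde{g}_{\alpha\beta}\},\{\widetilde{h}_{\alpha\beta}\})$ satisfying the same constraint. Two such auxiliary $U(1)$ bundles differ by a $U(1)$ bundle $L$ with $c_1(L)\bmod 2=0$, i.e.\ with $c_1(L)\in 2H^2(X,\Z)$; conversely any such $L$ can be absorbed by tensoring, so the $U(1)$-freedom contributes a free $2H^2(X,\Z)$-action. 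For a fixed $P_{U(1)}$, two different $\spin$-type lifts $\{\widetilde{g}_{\alpha\beta}\}$ and $\{\widetilde{g}'_{\alpha\beta}\}$ (with the same $U(1)$ data) differ by signs $\varepsilon_{\alpha\beta}=\widetilde{g}'_{\alpha\beta}\widetilde{g}_{\alpha\beta}^{-1}\in\{\pm 1\}$, which form a $\Z_2$-valued \v Cech $1$-cocycle whose class lives in $H^1(X,\Z_2)$; coboundaries correspond precisely to isomorphic $\spinc$ structures (rescaling each fiber by $\pm 1$), exactly as in the $\spin$ case. Combining the two independent sources of ambiguity gives the torsor $2H^2(X,\Z)\oplus H^1(X,\Z_2)$.

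The only mildly nontrivial point I would expect to have to be careful about is the identification of the $U(1)$-freedom with $2H^2(X,\Z)$ rather than with all of $H^2(X,\Z)$: I would verify that tensoring $P_{U(1)}$ with $L$ changes the $\spinc$ structure nontrivially only when $c_1(L)$ is not itself obtainable by changing the $\spin$-sign data, and that the constraint $c_1(L)\bmod 2=0$ is both necessary (to preserve the compatibility with $w_2$) and sufficient (after rechoosing the local lifts). Everything else is a direct unpacking of the short exact sequence $1\to\Z_2\to\spinc(n)\to SO(n)\times U(1)\to 1$ at the level of \v Cech cocycles, together with the Bockstein identification $W_3=\beta\circ w_2$ from \eqref{Eq: Integral S-W class}.
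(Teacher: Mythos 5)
Your proof reconstructs the paper's implicit argument faithfully and step for step: the existence part is the same Čech-cocycle analysis of the central extension $1\to\Z_2\to\spinc(n)\to SO(n)\times U(1)\to 1$ that the paper carries out in the paragraphs before the proposition, with the $\spinc(n)$-equivalence $[(1,1)]=[(-1,-1)]$ supplying the extra freedom that converts the lift condition into $c_1(P_{U(1)})\bmod 2 = w_2(P_{SO(n)})$ and hence, via exactness of \eqref{Eq: S-W exact sequence}, into $W_3(P_{SO(n)})=0$; and the classification part enumerates the same two sources of freedom (choice of $P_{U(1)}$ among bundles with the prescribed mod-$2$ Chern class, contributing $2H^2(X,\Z)$, then choice of lift for fixed $P_{U(1)}$, contributing $H^1(X,\Z_2)$).

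The concern you flag at the end — whether tensoring $P_{U(1)}$ and changing the spin-sign data are genuinely independent operations — is in fact the delicate point, and I would push you to actually resolve it rather than simply record it as a caveat, because it is exactly where the paper's count $2H^2(X,\Z)\oplus H^1(X,\Z_2)$ diverges from the classification one finds in standard references (Lawson--Michelsohn, Appendix D; Friedrich), namely a torsor over $H^2(X,\Z)$. In the usual formulation a $\spinc$ structure on $P_{SO(n)}$ consists only of a $\spinc(n)$-bundle $P$ with an equivariant map $P\to P_{SO(n)}$; the determinant $U(1)$-bundle is derived, not part of the data. That is the lifting problem for the central extension $1\to U(1)\to\spinc(n)\to SO(n)\to 1$, whose lifts (when they exist) form a torsor over $H^1(X,\underline{U(1)})\cong H^2(X,\Z)$. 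The two answers differ by a factor of $H^1(X,\Z)\otimes\Z_2$: once the determinant is only remembered up to isomorphism, automorphisms of $P_{U(1)}$ covering $\id_X$ (classified up to homotopy by $H^1(X,\Z)$) can absorb part of the $H^1(X,\Z_2)$-freedom in the local lifts. The paper's count, and yours, implicitly treats the trivialization data of $P_{U(1)}$ as part of the $\spinc$ structure — internally consistent, but coarser than the textbook equivalence. Making explicit which equivalence relation is being used would turn your ``I would verify'' into an actual argument and would flag to the reader why this group differs from the one they may expect from the literature.
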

Note that any bundle with a $\spin$ structure carries a canonical $\spinc$ structure, given by the trivial class in $H^2(X,\Z)$, corresponding to a trivial $U(1)$ bundle. Also, the underlying $SO(2n)$ bundle of any $SU(n)$ bundle is canonically $\spinc$ by proposition \ref{Proposition: Chern class, S-W class relation}.

We have discussed only principal bundles so far, but there are analogous concepts for vector bundles through the associated bundle construction. More precisely, we call an $O(n)$ or $SO(n)$ vector bundle orientable, $\spin$ or $\spinc$, if their frame bundles are orientable, $\spin$ or $\spinc$, respectively.

\begin{definition} A manifold is said to be orientable/$\spin$/$\spinc$, if its tangent bundle is orientable/$\spin$/$\spinc$. \end{definition}

\section{$K$-theory of vector bundles}
In this section we discuss the basic concepts of $K$-theory (of vector bundles). A reader interested in more comprehensive expositions is referred to \cite{Ati,Kar,LM,GVF}. Our treatment is quite minimalistic, containing only the basics needed to understand the rest of the thesis.

$K$-theory can be defined for a wide class of topological spaces. We have no need for very general kinds of spaces as we are only interested in $K$-theory of the spacetime and the D-brane worldvolume, which are both manifolds. Thus, it is more than enough to suppose that the spaces are locally compact Hausdorff topological spaces of the homotopy type of a CW-complex. However, we shall start by restricting to compact Hausdorff spaces with the homotopy type of a finite CW-complex (for example, compact manifolds).

Let $\mathrm{Vect}_\C(X)$ denote the set of isomorphism classes of complex\footnote{$K$-theory can as well be defined for real vector bundles. $K$-theory for real vector bundles is typically referred to as \emph{$KR$-theory}. Just as complex $K$-theory classifies D-branes in Type II superstring theory, $KR$-theory classifies D-branes in Type I superstring theory. $KR$-theory is also more complicated than $K$-theory and thus the $K$-theoretic aspects are more pronounced in Type I theory. We recommend \cite{Val,OS} for readers interested in the real case and its relation to D-brane physics.} vector bundles over the compact space $X$. For vector bundles $E$ and $F$, their classes in $\mathrm{Vect}_\C(X)$ are denoted by $[E]$ and $[F]$. Defining an operation
\[ [E]+[F] := [E\oplus F] \,, \]
turns $\mathrm{Vect}_\C(X)$ into an Abelian monoid, with identity given by the $0$-dimensional bundle.
\begin{definition} \label{Definition: K-theory} The $K$-theory group $K(X)$ is the \emph{Grothendieck group} of $\mathrm{Vect}_\C(X)$. Explicitly, it is the quotient of the free Abelian group generated by $\mathrm{Vect}_\C(X)$ by the subgroup of elements of the form $[E\oplus F]-[E]-[F]$. In other words, it is the free Abelian group of formal differences $[E]-[F]$ of isomorphism classes of vector bundles over $X$. $K$-theory classes are called \emph{virtual bundles} and $\rk:K(X)\rightarrow \Z$,
\[ \rk([E]-[F]):=\rk(E)-\rk(F) \,, \]
is the \emph{virtual rank} of $[E]-[F]$. The class $[E]-[0]\in K(X)$ is denoted by $[E]$.

The Grothendieck group $K(X)$ can also be axiomatically characterized by the following universal property. If $G$ is any other group and $f:\mathrm{Vect}_\C(X) \rightarrow G$ any monoid homomorphism, then there exists a unique group homomorphism $g:K(X)\rightarrow G$, such that the diagram
\begin{equation}\label{Eq: K-theory universality property} \xymatrix{ \mathrm{Vect}_\C(X) \ar[rr]^{E\mapsto [E]} \ar[dr]_f & & K(X) \ar@{-->}[ld]^g \\ & G & } \end{equation}
commutes. By universality, if such a group $K(X)$ exists (and it does by the above standard construction), then it is necessarily unique.
\end{definition}

\begin{proposition}
If $X$ is a finite disjoint union of compact spaces $X=\amalg_k X_k$, then the $K$-theory group decomposes into a direct sum
\[ K(X) \cong \bigoplus_k K(X_k) \,. \]
There is nothing to prove really, since vector bundles over $X$ are completely described by their restrictions onto $\{X_k\}$.
\qed \end{proposition}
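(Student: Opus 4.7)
The plan is to reduce the statement to a monoid-level decomposition of $\mathrm{Vect}_\C(X)$ and then transport it through the Grothendieck construction using the universal property displayed in \eqref{Eq: K-theory universality property}.

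First I would observe that for a finite disjoint union $X = \amalg_k X_k$ each piece $X_k$ is clopen in $X$, so a complex vector bundle $E \rightarrow X$ is determined, up to isomorphism, by the finite tuple of its restrictions $(E|_{X_k})_k$; conversely, any tuple of bundles $(E_k \rightarrow X_k)_k$ glues (trivially, since the pieces are disjoint) to a bundle on $X$. This yields a monoid isomorphism
\[ R:\mathrm{Vect}_\C(X) \longrightarrow \bigoplus_k \mathrm{Vect}_\C(X_k), \qquad [E]\mapsto ([E|_{X_1}],\ldots,[E|_{X_n}]), \]
where the direct sum on the right is just the finite product of Abelian monoids (equivalently their coproduct in the category of Abelian monoids).

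Next I would verify that the Grothendieck group functor sends finite coproducts of Abelian monoids to finite coproducts of Abelian groups. This is a direct consequence of the universal property recalled in definition~\ref{Definition: K-theory}: given any Abelian group $G$ and monoid homomorphism $f:\bigoplus_k \mathrm{Vect}_\C(X_k)\rightarrow G$, composing with the coordinate inclusions $\mathrm{Vect}_\C(X_k)\hookrightarrow \bigoplus_k \mathrm{Vect}_\C(X_k)$ yields monoid maps $f_k:\mathrm{Vect}_\C(X_k)\rightarrow G$, each of which extends uniquely to a group homomorphism $g_k:K(X_k)\rightarrow G$, and these assemble into a unique group homomorphism $\bigoplus_k K(X_k)\rightarrow G$. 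This shows that $\bigoplus_k K(X_k)$, together with the canonical map from $\bigoplus_k \mathrm{Vect}_\C(X_k)$, satisfies the universal property of the Grothendieck group of $\bigoplus_k \mathrm{Vect}_\C(X_k)$, so by uniqueness there is a canonical isomorphism
\[ K\!\left(\bigoplus_k \mathrm{Vect}_\C(X_k)\right) \;\cong\; \bigoplus_k K(X_k). \]
Composing with the isomorphism induced by $R$ gives the desired $K(X) \cong \bigoplus_k K(X_k)$, concretely described on classes by $[E]-[F]\mapsto \sum_k([E|_{X_k}]-[F|_{X_k}])$.

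There is essentially no obstacle here: the only mildly delicate point is being explicit that restriction is a bijection on isomorphism classes (rather than merely a surjection), but this is immediate from the clopenness of each $X_k$. Finiteness of the index set is used tacitly so that the coproduct and product of Abelian monoids coincide; without it one would have to be careful to distinguish the direct sum from the direct product, but compactness of $X$ already forces the number of components to be finite in the relevant examples.
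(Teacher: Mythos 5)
Your argument is correct and is precisely the rigorous version of the paper's one-line remark: the paper simply asserts that vector bundles over a disjoint union are determined by their restrictions, while you spell out the resulting monoid isomorphism $\mathrm{Vect}_\C(X)\cong\bigoplus_k\mathrm{Vect}_\C(X_k)$ and then verify, via the universal property of the Grothendieck construction, that $K(-)$ carries finite coproducts of commutative monoids to finite direct sums of Abelian groups. Same approach, just made explicit.
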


\begin{proposition}[\cite{LM}]\label{Proposition: Complementary bundles exist}
For any vector bundle $E$ over $X$, there exists a complementary bundle $E^\perp$, such that $E\oplus E^\perp \cong \mathbbm{1}_n$, where $\mathbbm{1}_n$ denotes the trivial bundle of rank $n$.
\end{proposition}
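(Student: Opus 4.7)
The plan is to embed $E$ as a subbundle of a trivial bundle $\mathbbm{1}_N$ and then take its orthogonal complement with respect to a Hermitian metric. Let $r = \rk(E)$.

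First, I would use compactness of $X$ to pick a \emph{finite} open cover $\{U_1,\ldots,U_k\}$ on which $E$ is locally trivial. For each $i$, fix a trivialization $\psi_i:E|_{U_i}\xrightarrow{\cong} U_i\times \C^r$ and let $p_i:E|_{U_i}\rightarrow \C^r$ denote composition with projection on the second factor. Since $X$ is compact Hausdorff, hence paracompact, there exists a partition of unity $\{\phi_i\}$ subordinate to $\{U_i\}$.

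Next I would assemble these data into a global bundle map
\[ f:E\longrightarrow X\times \C^{kr}\,,\qquad f(v):=\bigl(\pi(v),\,\phi_1(\pi(v))\,p_1(v),\,\ldots,\,\phi_k(\pi(v))\,p_k(v)\bigr)\,, \]
with the convention that the $i$-th block is zero whenever $\pi(v)\notin U_i$, where $\pi:E\to X$ is the projection. The factors $\phi_i\circ \pi$ make each block well-defined and continuous on all of $E$. At any point $x\in X$ at least one $\phi_i(x)>0$, so the restriction of $f$ to the fibre $E_x$ is injective (a nonzero multiple of an isomorphism in at least one block). Thus $f$ realizes $E$ as a subbundle of the trivial bundle $\mathbbm{1}_N$ with $N=kr$.

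Finally, I would equip $\mathbbm{1}_N$ with its standard Hermitian inner product and define $E^\perp$ fibrewise as the orthogonal complement of $f(E_x)$ inside $\C^N$. The constant rank of $E$ and continuity of the inner product make $E^\perp\to X$ into a genuine vector bundle, and the orthogonal decomposition $\mathbbm{1}_N=f(E)\oplus E^\perp$ gives the required isomorphism $E\oplus E^\perp\cong \mathbbm{1}_N$. The only substantive step is the embedding into a trivial bundle via the partition-of-unity construction; the orthogonal complement step is then routine. The finiteness of the cover (hence compactness of $X$) is essential, since only then does $N=kr$ come out finite.
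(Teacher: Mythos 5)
Your proof is correct and is exactly the standard argument in the cited reference [LM]: embed $E$ into a trivial bundle via partition of unity and local trivializations, then take the orthogonal complement with respect to a Hermitian metric. The paper itself only cites the result without proof, so there is nothing to compare beyond confirming that your argument is the one the reference uses; the only cosmetic point worth noting is that if $X$ is disconnected $E$ need not have globally constant rank, which you can handle by taking $r$ to be the maximal fibre dimension and padding the trivializations by zero.
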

By the above proposition, any class $[E]-[F]$ can be written as
\[ [E]-[F] = [E]-[F\oplus F^\perp]+[F^\perp] = [E\oplus F^\perp]-[\mathbbm{1}_n] \,. \]
In other words, every class of $K(X)$ is of the form $[E]-[\mathbbm{1}_n]$, for some vector bundle $E$ and non-negative integer $n$.

\begin{proposition}
$K$-theory defines a cofunctor from the homotopy category of ''suitable spaces'', for example compact Hausdorff spaces, to the category of Abelian groups. A morphism $f:X\rightarrow Y$ gets mapped into a morphism $f^*:K(Y)\rightarrow K(X)$, given explicitly by
\[ f^*([E]-[F]) := [f^{-1}E]-[f^{-1}F] \,. \]
Since the isomorphism classes of pullback bundles depend only on the homotopy class of the map inducing the pullback, the domain of the cofunctor is, indeed, the homotopy category. It follows immediately that homotopy equivalent spaces have isomorphic $K$-theory.
\qed \end{proposition}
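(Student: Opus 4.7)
The plan has four steps: establish that $f^*$ is well-defined and a group homomorphism, then verify contravariant functoriality, then prove homotopy invariance of pullbacks, and finally deduce the statement about homotopy equivalent spaces.

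First, I would define $f^*$ via the universal property \eqref{Eq: K-theory universality property} of $K(Y)$. The assignment $[E]\mapsto [f^{-1}E]$ is a well-defined map $\mathrm{Vect}_\C(Y)\to K(X)$ because pullback preserves isomorphism classes of vector bundles, and it is a monoid homomorphism because there is a natural isomorphism $f^{-1}(E\oplus F)\cong f^{-1}E\oplus f^{-1}F$ coming from the fibrewise construction of the pullback. Composing with $[-]:\mathrm{Vect}_\C(X)\to K(X)$ and applying the universal property yields a unique group homomorphism $f^*:K(Y)\to K(X)$, which on differences acts by the stated formula.

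Next, for functoriality, the natural isomorphisms $(g\circ f)^{-1}E\cong f^{-1}(g^{-1}E)$ and $\id^{-1}E\cong E$ imply that the two homomorphisms $(g\circ f)^*$ and $f^*\circ g^*$ agree on the image of $\mathrm{Vect}_\C(Z)$ in $K(Z)$; by the uniqueness clause of the universal property they must coincide on all of $K(Z)$. The same argument handles $\id^*=\id$.

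The main obstacle — and the genuinely nontrivial step — is homotopy invariance: if $f_0,f_1:X\to Y$ are homotopic continuous maps, then $f_0^{-1}E\cong f_1^{-1}E$ for every complex vector bundle $E\to Y$. For compact Hausdorff $X$ this is a standard consequence of the covering homotopy theorem for fibre bundles: a homotopy $F:X\times [0,1]\to Y$ gives a bundle $F^{-1}E\to X\times [0,1]$, and on paracompact base spaces any bundle over $X\times [0,1]$ is isomorphic to the pullback of its restriction $(F^{-1}E)|_{X\times\{0\}}$ along the projection $X\times [0,1]\to X$, so restricting at $t=0$ and $t=1$ gives isomorphic bundles. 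Alternatively, one may invoke the classifying space description already set up in Definition~\ref{Definition: Classifying space}: isomorphism classes of $U(n)$ bundles correspond to $[X,BU(n)]$, and pulling back is literally precomposition with a classifying map, so homotopic $f_i$ produce identical isomorphism classes. Either way, $f_0$ and $f_1$ induce the same map $\mathrm{Vect}_\C(Y)\to K(X)$, hence the same $f_0^*=f_1^*:K(Y)\to K(X)$ by universality.

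Finally, the corollary about homotopy equivalences is immediate: if $f:X\to Y$ and $g:Y\to X$ satisfy $g\circ f\simeq \id_X$ and $f\circ g\simeq \id_Y$, then by functoriality and homotopy invariance $f^*\circ g^* = (g\circ f)^* = \id_{K(X)}^* = \id_{K(X)}$ and symmetrically $g^*\circ f^*=\id_{K(Y)}$, so $f^*$ is an isomorphism $K(Y)\xrightarrow{\cong}K(X)$.
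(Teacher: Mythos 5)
Your proposal is correct and follows essentially the same route the paper intends: the paper states the key fact (pullback bundles depend only on the homotopy class of the inducing map) and leaves the verification of well-definedness, functoriality, and the final deduction implicit, which you have simply unpacked in the standard way via the universal property of the Grothendieck group and the covering homotopy theorem (or, equivalently, the classifying-space picture from Definition~\ref{Definition: Classifying space}). Nothing further is needed.
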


\begin{definition}
For vector bundles $E$ and $E'$ over $X$, the tensor product $E\otimes E'$ is again a vector bundle over $X$. This allows us to define a $\Z$-linear map, the \emph{cup product},
\begin{equation}\label{Eq: K-theory cup product} \smile:K(X)\otimes K(X) \rightarrow K(X) \end{equation}
by
\begin{align*} ([E]-[F])\smile([E']-[F']) :&= [E\otimes E']-[E\otimes F']-[F\otimes E']+[F\otimes F'] \\ &= [(E\otimes E')\oplus (F\otimes F')]-[(E\otimes F')\oplus (F\otimes E')] \,. \end{align*}
The \emph{exterior product} is a map
\begin{equation} \label{Eq: K-theory generalized cup product} \otimes:K(X)\otimes K(Y) \rightarrow K(X\times Y) \,, \end{equation}
defined as follows. Note first, that for vector bundles $E\rightarrow X$ and $F\rightarrow Y$ there is a tensor product bundle $E\otimes F\rightarrow X\times Y$. Let $E$ and $F$ be vector bundles over $X$, and $E'$ and $F'$ over $Y$. The projections  $\pi_X:X\times Y \rightarrow X$ and $\pi_Y: X \times Y \rightarrow Y$ induce maps $\pi^*_X$ and $\pi^*_Y$ in $K$-theory. To simplify the definition, we write
\[ E\boxtimes E' := \pi^{-1}_X E \otimes \pi^{-1}_Y E' \]
and
\[ [E]\boxtimes [E'] := \pi^*_X[E] \otimes \pi^*_Y[E'] = [E\boxtimes E'] \in K(X\times Y) \,. \]
Using this notation, the exterior product is
\begin{align*} ([E]-[F])\otimes([E']-[F']) :&= [E]\boxtimes [F']-[E]\boxtimes [F']-[F]\boxtimes [E']+[F]\boxtimes [F'] \\ &= [(E\boxtimes E')\oplus (F\boxtimes F')]-[(E\boxtimes F')\oplus (F\boxtimes E')] \,. \end{align*}
For $Y=X$, the combined map $K(X)\otimes K(X) \rightarrow K(X\times X) \rightarrow K(X)$, where the second arrow is induced by the diagonal inclusion $X\hookrightarrow X\times X$, coincides with \eqref{Eq: K-theory cup product}.
\end{definition}
The cup product turns $K(X)$ into a ring, with multiplicative identity given by the class $[\mathbbm{1}]$, the class of the trivial line bundle.

The $K$-group of a one-point space $\{\infty\}$ is $K(\{\infty\})\cong \Z$, where the integer corresponding to a virtual bundle is simply its virtual rank. Let $X^\infty$ be a compact pointed space, with basepoint $\infty$. The inclusion $i:\{\infty\} \hookrightarrow X$ induces a map
\[ i^*:K(X^\infty)\rightarrow K(\{\infty\})\cong \Z \,, \]
which maps a virtual bundle to the virtual rank of its restriction onto the connected component of the basepoint.
\begin{definition} The subset
\[ \widetilde{K}(X^\infty) := \ker(i) \subset K(X^\infty) \]
is an ideal without identity. It is called the \emph{reduced $K$-theory} group of $X^\infty$. It fits into the naturally split exact sequence
\[ \xymatrix{ 0 \ar[r] & \widetilde{K}(X^\infty) \ar[r] & K(X) \ar[r]^-{i^*} & K(\{\infty\}) \cong \Z \ar[r] & 0 } \]
The splitting yields the isomorphism
\[ K(X^\infty) \cong \widetilde{K}(X^\infty)\oplus \Z \,. \]
\end{definition}
For a connected pointed space\footnote{If $X$ has $n <\infty$ connected components, the virtual rank function is
\[ \rk:K(X)\rightarrow \bigoplus_1^n \Z \,, \]
which assigns an integer, the virtual rank of the restriction of the virtual bundle, to each component.} $X^\infty$, $\widetilde{K}(X^\infty)$ consists of virtual bundles $[E]-[F]$, with $\rk([E]-[F])=0$. If $X$ is a non-pointed compact space, we can always augment a disjoint basepoint $\infty$. The resulting pointed space is
\[ X^\infty := X\amalg \{\infty\} \,. \]
Clearly, if $X$ is non-pointed, we have
\[ \widetilde{K}(X^\infty) \cong K(X) \,. \]

Let $(X,Y)$ be a compact pair of spaces, $\varnothing\neq Y\subset X$. Collapsing $Y$ to a point $\infty$ yields the canonically pointed space $X/Y$ (with basepoint $\infty$). For $Y=\varnothing$, we set $X/\varnothing := X^\infty = X\amalg \{\infty\}$.
\begin{definition}
The \emph{relative $K$-theory} group of the pair $(X,Y)$ is
\[ K(X,Y) := \widetilde{K}(X/Y) \,. \]
\end{definition}
Intuitively, relative $K$-theory is the $K$-theory of $X$, with all contribution from $Y$ removed. It is built from virtual bundles over $X$ that vanish over $Y$. The need to use reduced $K$-theory can, thus, be understood as the requirement of having bundles of equal rank forming the virtual bundle, for otherwise it would be impossible for them to cancel each other over $Y$. For a pointed space $X^\infty$ with basepoint $\infty$, we recover reduced $K$-theory from relative $K$-theory as\footnote{Remark, that these are all standard constructions in (generalized) (co)homology.}
\[ \widetilde{K}(X^\infty) = K(X,\{\infty\}) \,. \]
Remark also, that
\[ K(X,\varnothing) = \widetilde{K}(X^\infty) \cong K(X) \,, \]

$K$-theory actually provides a multitude of functors, $K^{-n}$, where $n\in \N$\footnote{The purpose of the minus sign in $K^{-n}$ is to emphasize the cohomological nature of $K$-theory.}. Before defining the higher $K$-functors, let us quickly recall a few standard constructions from topology of pointed spaces \cite{Whi,May}. Let $X$ and $Y$ be pointed spaces with basepoints $x_0$ and $y_0$. Their \emph{wedge sum} $X\vee Y$ is obtained by forming the disjoint union and collapsing the basepoints together\footnote{This is the coproduct in the category of pointed spaces.}:
\[ X \vee Y := X\amalg Y/\{x_0,y_0\} \,. \]
There is a natural inclusion $X\vee Y\hookrightarrow X\times Y$ whose image is $X\times \{y_0\}\cup \{x_0\}\times Y$. If we collapse $X\vee Y\subset X\times Y$ into a point, we obtain the \emph{smash product},
\[ X\wedge Y := X\times Y/X\vee Y \,. \]
\begin{lemma}
\[ S^n \cong S^1\wedge \ldots \wedge S^1 \]
\end{lemma}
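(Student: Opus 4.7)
The strategy is to prove by induction on $n$. The base case $n=1$ is trivial, and the inductive step reduces to establishing the key formula $S^m \wedge S^n \cong S^{m+n}$ for all $m,n\ge 1$.

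To prove this key formula, I would use the identification of spheres with one-point compactifications, $S^n \cong (\R^n)^\infty$, where the point at infinity plays the role of the basepoint. The heart of the argument is then a general lemma: for locally compact Hausdorff spaces $X$ and $Y$,
\[ X^\infty \wedge Y^\infty \cong (X\times Y)^\infty \,. \]
Granting this, we conclude
\[ S^m \wedge S^n \cong (\R^m)^\infty \wedge (\R^n)^\infty \cong (\R^m\times \R^n)^\infty \cong (\R^{m+n})^\infty \cong S^{m+n}\,, \]
and the lemma follows immediately by induction:
\[ \underbrace{S^1 \wedge \ldots \wedge S^1}_{n \text{ copies}} \cong S^1 \wedge \underbrace{S^1 \wedge \ldots \wedge S^1}_{n-1 \text{ copies}} \cong S^1 \wedge S^{n-1} \cong S^n\,. \]

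To prove the auxiliary compactification lemma, I would define a natural map $f:X^\infty \times Y^\infty \rightarrow (X\times Y)^\infty$ by $f(x,y)=(x,y)$ whenever both coordinates lie in $X$ and $Y$ respectively, and $f(x,y)=\infty$ otherwise. Clearly $f$ sends the wedge $X^\infty \vee Y^\infty$ to the basepoint $\infty$, so it descends to a continuous map $\widetilde{f}:X^\infty \wedge Y^\infty \rightarrow (X\times Y)^\infty$. A direct set-theoretic check shows $\widetilde{f}$ is a bijection, and since the source is compact and the target is Hausdorff, $\widetilde{f}$ is automatically a homeomorphism.

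The main obstacle, and where care is needed, is verifying continuity of $f$ at points of the form $(\infty,y)$, $(x,\infty)$, and especially $(\infty,\infty)$. An open neighbourhood of $\infty$ in $(X\times Y)^\infty$ has the form $(X\times Y)\setminus K \cup \{\infty\}$ for $K\subset X\times Y$ compact; one must check that its preimage is open in the product topology on $X^\infty \times Y^\infty$. This reduces to noting that if $K_X\subset X$ and $K_Y\subset Y$ are compact sets with $K\subset K_X\times K_Y$ (which exist since the projections of $K$ are compact), then the neighbourhood $(X^\infty\setminus K_X)\times (Y^\infty \setminus K_Y)$ of $(\infty,\infty)$ maps into the given neighbourhood of $\infty$. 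Once this point-set topology is handled, everything else in the proof is formal.
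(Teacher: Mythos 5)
Your proof is correct, but it takes a genuinely different route from the paper's. The paper proves only the single inductive step $\Sigma S^n = S^n \wedge S^1 \cong S^{n+1}$ by writing the reduced suspension as the quotient $(S^n\times I)/(S^n\times\{0,1\}\cup\{x_0\}\times I)$, constructing an explicit surjection $\iota:S^n\times I\rightarrow S^{n+1}$ that collapses exactly that subspace, passing to the induced bijection $\widetilde{\iota}$ on the quotient, and then verifying by hand that $\widetilde{\iota}$ is both continuous (via the universal property of the quotient topology) and closed (via compactness of $S^n\times I$ and Hausdorffness of $S^{n+1}$). You instead identify each sphere with a one-point compactification $(\R^n)^\infty$ and reduce everything to the general lemma $X^\infty\wedge Y^\infty\cong (X\times Y)^\infty$ for locally compact Hausdorff $X,Y$, which immediately yields the stronger formula $S^m\wedge S^n\cong S^{m+n}$ for all $m,n$. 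Your approach buys generality and conceptual clarity: the key lemma is reusable (it is, in fact, exactly what makes one-point compactification a symmetric monoidal functor), and the bijection-plus-compact-Hausdorff argument at the end is identical in spirit to the paper's closed-map argument. The paper's approach buys concreteness: it stays entirely within the explicit $I$-quotient model of the suspension, so a reader unfamiliar with compactifications can follow it directly. One small point worth noting in your write-up: you flag continuity at $(\infty,\infty)$ as the delicate case, but the same care is needed at points $(\infty,y)$ and $(x,\infty)$; the fix is the same (cover $K$ by $K_X\times K_Y$ and use $(X^\infty\setminus K_X)\times Y^\infty$, respectively $X^\infty\times(Y^\infty\setminus K_Y)$), or more slickly, observe that the preimage of $\{\infty\}\cup(X\times Y)\setminus K$ under $f$ is the complement of $K$ viewed as a compact, hence closed, subset of the Hausdorff space $X^\infty\times Y^\infty$.
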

\begin{proof}
First, rewrite $X\wedge S^1$ using $S^1\cong I/\{0,1\}$ as
\begin{equation}\label{Eq: Reduced suspension, clear form} X\wedge S^1 = \frac{X\times I}{X\times \{0,1\} \cup \{x_0\}\times I} \,. \end{equation}
Next, consider the diagram
\begin{equation}\label{Eq: Suspension diagram, S^n wedge S^1} \xymatrix{ S^n\times I \ar[dd]^\pi \ar[r]^\iota & S^{n+1} \\ \\ (S^n\times I)/(S^n\times \{0,1\}\cup \{x_0\}\times I) \ar[uur]_{\widetilde{\iota}} } \end{equation}
where $\iota$ is the obvious (continuous) inclusion $S^n\times I \rightarrow S^{n+1}$ combined with a continuous deformation of all the basepoints into a single point. It is obviously surjective but not injective, since the subspace $S^n\times \{0,1\}\cup \{x_0\}\times I$ gets mapped to a single (base)point. Thus, passing to the quotient space \eqref{Eq: Reduced suspension, clear form} yields the natural bijection $\widetilde{\iota}$. The diagram \eqref{Eq: Suspension diagram, S^n wedge S^1} obviously commutes.

Let $V\subset S^{n+1}$ be open. Then $\iota^{-1}(V) = \pi^{-1}\circ \widetilde{\iota}^{-1}(V)\subset S^n\times I$ is open. A set $U\subset(S^n\times I)/(S^n\times \{0,1\}\cup \{x_0\}\times I)$ is open in the quotient topology precisely when $\pi^{-1}(U)$ is open in $S^n\times I$. Therefore, $\widetilde{\iota}^{-1}(V)$ is open, which proves the continuity of $\widetilde{\iota}$.

Let then $W\subset (S^n\times I)/(S^n\times \{0,1\}\cup \{x_0\}\times I)$ be closed. Its image under $\widetilde{\iota}$ is
\[ \widetilde{\iota}(W) = \widetilde{\iota}\circ \pi \circ \pi^{-1}(W) = \iota \circ \pi^{-1}(W) \,, \]
where $\pi^{-1}(W)$ is closed by the quotient topology. Since $S^n\times I$ is compact, $\pi^{-1}(W)$ is compact. Its image under the continuous map $\iota$ is again compact. Finally, since $S^{n+1}$ is Hausdorff, all compact subsets of $S^{n+1}$ are closed. Thus, $\widetilde{\iota}$ is closed, which concludes the proof.
\end{proof}
The smash product of any pointed space $X$ with $S^1$ is called the \emph{reduced suspension} of $X$ and is denoted by
\[ \Sigma X := S^1 \wedge X \,. \]
The $n$-fold reduced suspension is then
\[ \Sigma^n X := S^1 \wedge \ldots \wedge S^1 \wedge X \cong S^n \wedge X \,. \]

\begin{theorem}[\cite{Ati,Kar}]
The $K$-functor we have discussed above is actually the $0$-th $K$-functor $K^0$. The higher $K$-functors $K^{-n}$, $n\in \N$, are also contravariant functors from the homotopy category of spaces to the category of Abelian groups. For a compact pointed space $X^\infty$, we set
\[ \widetilde{K}^{-n}(X^\infty) := \widetilde{K}(\Sigma^n X^\infty) \,, \]
for a compact pair $(X,Y)$ the higher relative $K$-theory groups are
\[ K^{-n}(X,Y) := \widetilde{K}^{-n}(X/Y) = \widetilde{K}(\Sigma^n(X/Y)) \,. \]
For a generally non-pointed space $X$ we define
\begin{equation}\label{Eq: Higher K-theory} K^{-n}(X) := K^{-n}(X,\varnothing) = \widetilde{K}^{-n}(X^\infty) \,. \end{equation}
These are all standard constructions in (generalized) cohomology theory.

The exterior product \eqref{Eq: K-theory generalized cup product} restricts to a product \cite{Ati}
\begin{equation}\label{Eq: Exterior product in reduced K-theory} \otimes:\widetilde{K}(X)\otimes \widetilde{K}(Y) \rightarrow \widetilde{K}(X\wedge Y) \,, \end{equation}
which is enough to generalize the product structures to relative and higher $K$-theory. Furthermore, choosing $Y=X$ and taking the pullback by the diagonal inclusion $X\hookrightarrow X\times X$, yields corresponding generalizations of \eqref{Eq: K-theory cup product}. For example,
\[ \smile:K^{-n}(X)\otimes K^{-m}(X) \rightarrow K^{-n-m}(X) \]
is defined by first using \eqref{Eq: Higher K-theory}, then \eqref{Eq: Exterior product in reduced K-theory} to obtain a class in $\widetilde{K}(\Sigma^n(X^\infty)\wedge \Sigma^m(X^\infty))$, then using the fact that $\wedge$ is commutative to obtain a class in $\widetilde{K}(\Sigma^{n+m}(X^\infty \wedge X^\infty))$, then taking the pullback by the diagonal map to obtain a class in $\widetilde{K}(\Sigma^{n+m} X^\infty)$ and finally using \eqref{Eq: Higher K-theory} again to obtain a class in $K^{-n-m}(X)$.

It should not come as a surprise that $K$-theory satisfies all of the Eilenberg-Steenrod axioms except the dimension axiom, thus determining a \emph{generalized cohomology theory}.
\end{theorem}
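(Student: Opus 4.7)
The plan is to verify in turn that (a) each $K^{-n}$ is a contravariant functor on the homotopy category, (b) the advertised product structures are well-defined, and (c) the Eilenberg-Steenrod axioms hold except for the dimension axiom. For (a) and most of (b), the verification is nearly formal: a continuous map of pairs $(X,Y)\to (X',Y')$ descends to a pointed map $X/Y\to X'/Y'$, and the $n$-fold reduced suspension is a functor on pointed spaces respecting pointed homotopies, so composing with the already-established contravariant functor $\widetilde{K}$ produces the desired $f^*:K^{-n}(X',Y')\to K^{-n}(X,Y)$. Since the exterior product $\otimes:K(X)\otimes K(Y)\to K(X\times Y)$ sends a pair of reduced classes to a class that is trivial on the wedge $X\vee Y\subset X\times Y$, it factors through $\widetilde{K}(X\wedge Y)$, and this suffices to define and verify all of the higher products by iterated suspension and pullback along the diagonal.

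The substantive part is (c). Homotopy invariance and additivity, the latter being the splitting $\widetilde{K}(X\vee Y)\cong\widetilde{K}(X)\oplus\widetilde{K}(Y)$ obtained by restricting a bundle to the two summands (and using that bundles trivialized at the wedge point glue uniquely), are straightforward. Excision reduces to the observation that if $\overline{U}\subset\operatorname{int}(A)$ then $X/A\cong (X\setminus U)/(A\setminus U)$ as pointed spaces, so both sides of the desired isomorphism are literally equal after suspending and applying $\widetilde{K}$. For the long exact sequence of a pair $(X,Y)$, the strategy is to feed the Puppe cofibration sequence
\[ Y\hookrightarrow X\to X/Y\to \Sigma Y\to \Sigma X\to \Sigma(X/Y)\to \Sigma^2 Y\to\cdots \]
into the functor $\widetilde{K}$, using that $\widetilde{K}$ sends a cofibration $A\hookrightarrow B\to B/A$ to an exact sequence $\widetilde{K}(B/A)\to \widetilde{K}(B)\to \widetilde{K}(A)$; iterating yields the six-term-style long exact sequence in $K^{-n}$. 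Finally, the dimension axiom fails because $K^{-n}(\text{pt})\cong \widetilde{K}(S^n)$, which is $\Z$ for even $n$ by Bott periodicity.

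The main obstacle is the exactness of $\widetilde{K}$ along a cofibration, specifically the nontrivial inclusion $\ker(\widetilde{K}(B)\to \widetilde{K}(A))\subset \operatorname{im}(\widetilde{K}(B/A)\to \widetilde{K}(B))$. The argument invokes the homotopy extension property of the cofibration to trivialize representative bundles in a neighborhood of $A$, thereby producing clutching data that descends across the collapse $B\to B/A$; the details are carried out in \cite{Ati} and \cite{Kar}. Once this input is accepted, everything else assembles in the standard manner of generalized cohomology theory, and the Eilenberg-Steenrod axioms other than dimension follow without further difficulty.
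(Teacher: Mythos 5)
The paper does not actually prove this theorem: it is stated with citations to Atiyah and Karoubi, and the text following it merely illustrates the failure of the dimension axiom (computing $K^{-n}(\{\infty\})\cong\widetilde K(S^n)$). So there is no in-paper proof to compare against. Your sketch is a faithful outline of the argument those references contain: functoriality and homotopy invariance are formal once $\widetilde K$ is known to be a homotopy cofunctor, excision is automatic from the definition $K(X,Y):=\widetilde K(X/Y)$ because the quotient space is unchanged by excising, the products are handled by factoring the external product through $\widetilde K(X\wedge Y)$, and the long exact sequence is obtained by feeding the Puppe cofiber sequence into $\widetilde K$ once the three-term exactness $\widetilde K(B/A)\to\widetilde K(B)\to\widetilde K(A)$ is established by the clutching/homotopy-extension argument. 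You correctly identify that last step as the only nontrivial input and defer it to the cited textbooks, which is precisely what the paper itself does (wholesale, rather than after a sketch).

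One small point worth tightening: the factorization of the exterior product through $\widetilde K(X\wedge Y)$ is not merely "it vanishes on $X\vee Y$, so it factors." You need to invoke the split short exact sequence
\[
0\to\widetilde K(X\wedge Y)\to\widetilde K(X\times Y)\to\widetilde K(X\vee Y)\to 0\,,
\]
which comes from the cofibration $X\vee Y\hookrightarrow X\times Y\to X\wedge Y$ together with the retraction of $X\times Y$ onto $X\vee Y$ furnished by the two coordinate projections; vanishing on the wedge then forces a unique preimage in $\widetilde K(X\wedge Y)$. This is exactly the exactness lemma you defer to \cite{Ati}, so the logic is not circular, but the word "factors" should be understood as an application of that lemma rather than something independent of it.
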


The failure of the dimension axiom is seen as follows. If $\{\infty\}$ is considered as a non-pointed space, then $\{\infty\}^\infty = S^0$, the $0$-sphere, a set of two points. It is easy to see that $\Sigma S^0 \cong S^1$. Thus, we have
\[ K^{-1}(\{\infty\}) = \widetilde{K}(\Sigma S^0) = \widetilde{K}(S^1) \,, \]
and more generally,
\[ K^{-n}(\{\infty\}) = \widetilde{K}(S^n) \,. \]
It turns out, that $\widetilde{K}(S^\text{even}) \cong \Z$, which is contrary to the dimension axiom, that all higher cohomology groups of a point should vanish. However, all the other Eilenberg-Steenrod axioms do hold. For example, associated to a compact pair $(X,Y)$ there is a long exact sequence
\begin{equation}\label{Eq: K-theory long exact sequence} \xymatrix{ \ldots \ar[r] & K^{-2}(Y) \ar[r] & K^{-1}(X,Y) \ar[r] & K^{-1}(X) \ar `r_l[dll] `[dlll] `^r[dll] [dll] \\  & K^{-1}(Y) \ar[r] & K^0(X,Y) \ar[r] & K^0(X) \ar[r] & K^0(Y) } \end{equation}
induced by the \emph{Puppe sequence} \cite{May}. If $X^\infty$ has a basepoint $\{\infty\}$, \eqref{Eq: K-theory long exact sequence} yields the naturally split exact sequences
\[ \xymatrix{ 0 \ar[r] & K^{-n}(X^\infty,\{\infty\}) \cong \widetilde{K}^{-n}(X^\infty) \ar[r] & K^{-n}(X^\infty) \ar@<+0.5ex>[r] & \ar@<+0.5ex>[l] K^{-n}(\{\infty\}) \ar[r] & 0 } \]
which imply the isomorphisms
\begin{equation}\label{Eq: K-theory vs reduced K-theory} K^{-n}(X^\infty) \cong \widetilde{K}^{-n}(X^\infty) \oplus K^{-n}(\{\infty\}) \cong \widetilde{K}^{-n}(X^\infty) \oplus \widetilde{K}(S^n) \,. \end{equation}
It is a general result in cohomology theory, that a reduced cohomology theory for pointed spaces determines a relative cohomology theory for pairs of non-pointed spaces and \emph{vice versa} \cite{May}. In the case of $K$-theory, $\widetilde{K}^{-\bullet}$ is the reduced cohomology theory corresponding to $\widetilde{K}^{-\bullet}$. It is convenient to restrict the spaces to be CW-complexes. Otherwise, we would have to make sure that the inclusions of the basepoints are cofibrations \cite{May}. In particular, this is the case for manifolds.

Generalized cohomology and homology theories, such as $K$-theory, admit always a homotopic definition in terms of an \emph{$\Omega$-spectrum (a loop spectrum)}\footnote{Just like ordinary cohomology and homology can be defined using the Eilenberg-MacLane spectrum.} \cite{Bro1,Whi}. Unfortunately, we can not discuss the general principles here, as it would take us too far from the topic. Instead, we treat only the special case of $K$-theory. It should not be surprising, considering the homotopic classification of vector bundles, that (the $\Omega$-spectrum of) $K$-theory should be somehow closely related to $\{BU(n)\}$. However, $K$-theory involves considering vector bundles of \emph{all} ranks, which suggests that all the $\{BU(n)\}$ spaces should be taken into account. Thus, the limit
\[ BU := \varinjlim_n BU(n) \,, \]
taken over the natural inclusions $BU(n)\subset BU(n+1)$ is expected to play a role. We shall not prove the following, even though the proof is fairly short and easy.
\begin{theorem}[\cite{May}]
Give $\Z$ the discrete topology and choose the basepoint of $BU\times \Z$ to reside in the component $BU\times \{0\}$. All homotopies are understood to be in the category of pointed spaces (they preserve the basepoint). Then, for any compact pointed space $X^\infty$, there exists an isomorphism
\begin{equation}\label{Eq: Homotopic reduced K-theory} \widetilde{K}(X^\infty) \cong [X^\infty,BU\times \Z] \,. \end{equation}
If $X^\infty$ is connected, this is equivalent to
\begin{equation*}\label{Eq: Homotopic reduced K-theory, connected case} \widetilde{K}(X^\infty) \cong [X^\infty,BU] \,. \end{equation*}
For a non-pointed compact space we first augment a basepoint and then apply the pointed definition. If $X$ is non-pointed, compact and $X^\infty$ denotes the space with a disjoint basepoint added, then there is an isomorphism
\begin{equation}\label{Eq: Homotopic K-theory} K(X) \cong [X^\infty,BU\times \Z] \,. \end{equation}
The ring structure of $K$-theory can also be expressed in terms of the homotopic definition, but we shall not go into that. The reader is referred to \cite{Whi,May} for related results in a more general context.
\end{theorem}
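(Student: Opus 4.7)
The plan is to establish a natural bijection between the reduced $K$-group and homotopy classes of pointed maps into $BU \times \mathbb{Z}$, leveraging the homotopy-theoretic classification of $U(n)$ bundles by $BU(n)$, then take a direct limit over $n$. The main structural input is Proposition \ref{Proposition: Complementary bundles exist}, which lets me normalize every virtual bundle to a standard form.

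First, I would reduce every class to a convenient representative. Any element of $\widetilde{K}(X^\infty)$ can be written as $[E] - [\mathbbm{1}_n]$: given a class $[E] - [F]$ whose virtual rank at the basepoint component vanishes, choose by Proposition \ref{Proposition: Complementary bundles exist} a bundle $F^\perp$ with $F \oplus F^\perp \cong \mathbbm{1}_n$ for some $n$, and rewrite as $[E \oplus F^\perp] - [\mathbbm{1}_n]$. A rank-$m$ bundle $E$ is classified by a homotopy class of pointed maps $f_E : X^\infty \to BU(m)$, unique up to homotopy. Composing with the natural inclusion $BU(m) \hookrightarrow BU$ and recording the integer $m - n$ (locally constant, vanishing at the basepoint) defines a pointed map $\Phi([E]-[\mathbbm{1}_n]) : X^\infty \to BU \times \mathbb{Z}$. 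I must check this is independent of the presentation: replacing $E$ by $E \oplus \mathbbm{1}_k$ and $\mathbbm{1}_n$ by $\mathbbm{1}_{n+k}$ (which gives the same $K$-class) corresponds to composing $f_E$ with $BU(m) \hookrightarrow BU(m+k)$, and by definition of the direct limit $BU = \varinjlim_m BU(m)$ the two composites into $BU$ coincide. Similarly, direct-sum relations $[E \oplus G] - [\mathbbm{1}_n \oplus G]$ collapse after stabilization, using again that $G$ itself embeds stably as a summand of a trivial bundle.

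For the inverse, a pointed map $f : X^\infty \to BU \times \mathbb{Z}$ decomposes into a map $f_0 : X^\infty \to BU$ and a locally constant integer-valued function $r : X^\infty \to \mathbb{Z}$ with $r(\infty) = 0$. Since $X^\infty$ is compact and $BU$ carries the direct-limit topology, $f_0(X^\infty)$ lies in some $BU(m)$, so $f_0$ factors (up to pointed homotopy) through a map $\widetilde{f}_0 : X^\infty \to BU(m)$ classifying a rank-$m$ bundle $E$. Set $\Psi(f) := [E] - [\mathbbm{1}_m] + r$, where the last term is interpreted as the locally constant rank contribution (a virtual line bundle combination respecting the $\widetilde{K}^0 \oplus \mathbb{Z}$ decomposition). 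Well-definedness follows from the fact that different $m$ witnessing the factorization differ by stabilization $BU(m) \hookrightarrow BU(m+k)$, which corresponds to $E \rightsquigarrow E \oplus \mathbbm{1}_k$ on the bundle side. A direct check shows $\Phi$ and $\Psi$ are mutually inverse: stable isomorphism $E \oplus \mathbbm{1}_k \cong E' \oplus \mathbbm{1}_k$ is exactly the condition that the classifying maps agree in $[X^\infty, BU]$, which gives injectivity, while surjectivity is immediate from the construction of $\Psi$. The connected case collapses the $\mathbb{Z}$ factor because any pointed locally constant function to $\mathbb{Z}$ is trivial, yielding $\widetilde{K}(X^\infty) \cong [X^\infty, BU]$; the non-pointed case follows by applying the reduced version to $X^\infty = X \amalg \{\infty\}$ together with the splitting $K(X) \cong \widetilde{K}(X^\infty)$.

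The main obstacle, and the step that needs to be handled with care, is the compatibility between direct-limit topology on $BU$ and compactness of $X^\infty$: this is what allows every homotopy class $X^\infty \to BU$ to be realized at some finite stage $BU(m)$ and what ensures that stabilizations corresponding to adding trivial summands become trivial in the limit. Once this compactness argument is in place, the remainder is essentially a bookkeeping exercise confirming that the monoidal operations $\oplus$ on bundles and the group operation on $[X^\infty, BU \times \mathbb{Z}]$ (inherited from the $H$-space structure on $BU$ induced by Whitney sum) match under $\Phi$ and $\Psi$.
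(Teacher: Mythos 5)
The paper explicitly omits the proof of this theorem, citing May and remarking that ``the proof is fairly short and easy,'' so there is no internal argument to compare against; your approach is the standard one (classifying maps into $BU(n)$, stabilization along the inclusions $BU(n) \hookrightarrow BU$, compactness of $X^\infty$ together with the CW structure on $BU$ to detect a finite stage $BU(m)$). The key technical step you isolate --- that a compact subset of a CW direct limit is contained in a finite subcomplex --- is indeed the crux, and the paper itself notes elsewhere that $BU$ is a CW-complex for exactly this reason.

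Two points in your write-up are glossed over and should be made precise before the argument is complete. First, when you normalize an element of $\widetilde{K}(X^\infty)$ to the form $[E] - [\mathbbm{1}_n]$ and then pass to a classifying map $X^\infty \to BU(m)$, you are implicitly assuming $E$ has a single rank $m$. For disconnected $X^\infty$ this is not automatic: reduced $K$-theory only forces the virtual rank to vanish on the basepoint component, so $E$ may have different ranks on different components, and there is no single $BU(m)$ through which the classifying map factors. One must either decompose by connected components, or first split off the locally constant rank function into the $\Z$ factor and only then work with a constant-rank representative of the rank-zero part; in either case the integer $m-n$ you record should be replaced by the locally constant function $\operatorname{rk}(E) - n : X^\infty \to \Z$. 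Second, and relatedly, the expression $\Psi(f) := [E] - [\mathbbm{1}_m] + r$ is ill-formed as written: since $r : X^\infty \to \Z$ is a locally constant function that vanishes at the basepoint but may take several distinct values, $+r$ does not name a single virtual trivial bundle. What is meant is the virtual bundle determined component-by-component by $r$, and this needs to be spelled out. Neither issue is fatal --- both are exactly the bookkeeping problems the $BU \times \Z$ construction is designed to handle --- but a complete proof has to perform this bookkeeping rather than gesture at it.
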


Remarkably, it turns out that only two of all the $K$-theory groups are actually distinct. This is the famous \emph{Bott periodicity theorem}. Detailed proofs can be found in \cite{Ati,Kar} or in any standard treatise on $K$-theory. There are several more or less equivalent formulation of this important and classical result. We shall present some of them, in an attempt to clarify how the $K$-theoretic version is related to the others. We start from the classical result of Bott \cite{Bot}, concerning homotopy groups of $U(n)$, and work our way towards the $K$-theoretic version. 

Using the natural inclusions $U(n)\hookrightarrow U(n+1)$, we can define the limit
\[ U := \varinjlim_n U(n) \,, \]
the ''infinite union'' of unitary groups. The result of Bott concerned the higher homotopy groups of $U$\footnote{Recall that for big enough $n$, $\pi_k(U(n))$ depends only on $k$.}.
\begin{theorem}[Bott periodicity \cite{Bot}] \label{Theorem: Bott periodicity, version 1}
\[ \pi_k(U) \cong \pi_{k+2}(U) \,. \]
\end{theorem}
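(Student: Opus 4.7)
The plan is to deduce this homotopy-theoretic statement from the $K$-theoretic Bott periodicity, which asserts that for every compact pointed space $X^\infty$ there is a natural isomorphism $\beta : \widetilde{K}(X^\infty) \xrightarrow{\cong} \widetilde{K}(S^2 \wedge X^\infty)$. This is the natural route here since the homotopic representation \eqref{Eq: Homotopic reduced K-theory} has already been established, so that periodicity of $\pi_\bullet(U)$ becomes a formal consequence of periodicity in $K$-theory.

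\textbf{Step 1 (reduction).} Specializing to $X^\infty = S^k$ and using $S^2 \wedge S^k \cong S^{k+2}$, the $K$-theoretic periodicity yields $\widetilde{K}(S^k) \cong \widetilde{K}(S^{k+2})$. By \eqref{Eq: Homotopic reduced K-theory} this translates into $\pi_k(BU \times \Z) \cong \pi_{k+2}(BU \times \Z)$, and since $\Z$ is discrete we have $\pi_k(\Z) = 0$ for $k \geq 1$, leaving $\pi_k(BU) \cong \pi_{k+2}(BU)$ for $k \geq 1$. Combining this with the standard loop space identification $\Omega BU \simeq U$, which gives $\pi_j(BU) \cong \pi_{j-1}(U)$ for $j \geq 1$, one obtains $\pi_{k-1}(U) \cong \pi_{k+1}(U)$ for $k \geq 1$, which is precisely the claimed periodicity $\pi_j(U) \cong \pi_{j+2}(U)$ after reindexing $j = k-1$.

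\textbf{Step 2 (proof of the $K$-theoretic periodicity).} The classical proof of Atiyah and Bott proceeds via clutching functions. First one identifies the generator of $\widetilde{K}(S^2) \cong \Z$ as the Bott class $b = [H] - [\mathbbm{1}]$, where $H$ is the tautological line bundle over $S^2 \cong \CP^1$, and verifies that the periodicity map is the exterior product $\xi \mapsto b \otimes \xi$ of \eqref{Eq: Exterior product in reduced K-theory}. To construct an inverse, every complex vector bundle over $S^2 \wedge X^\infty$ is represented by a clutching function $f : S^1 \times X^\infty \to \GL_n(\C)$ on the equatorial $S^1$. A Fourier-type density argument approximates $f$ by Laurent polynomial clutching functions, an algebraic device involving tensoring with powers of $H$ reduces the problem to \emph{linear} clutching functions $z \mapsto z A(x) + B(x)$, and a canonical spectral splitting of linear clutching functions into ``positive'' and ``negative'' parts assembles an inverse to $\beta$.

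\textbf{Main obstacle.} The hard part is this canonical spectral splitting of linear clutching functions, which requires functional-analytic control over the spectrum of $A(x),B(x)$ uniform in the parameter $x \in X^\infty$, together with a careful argument that no eigenvalues cross the unit circle. Once this ``index-theoretic'' input is in place, everything else is either formal categorical reasoning, standard manipulation of the $K$-theoretic exact sequence \eqref{Eq: K-theory long exact sequence}, or routine approximation estimates.
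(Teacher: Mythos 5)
Your route is logically inverted relative to the paper's. The paper does not prove this theorem: it cites Bott's original paper \cite{Bot}, whose argument is Morse-theoretic (one analyzes the space of minimal geodesics in the symmetric space $U(2n)/(U(n)\times U(n))$), and then takes the homotopy statement as the \emph{input} from which the $K$-theoretic periodicity $\widetilde{K}(X^\infty)\cong\widetilde{K}(S^2\wedge X^\infty)$ is subsequently derived, via the universal-bundle fibration, the adjunction $[\Sigma X,Y]\cong[X,\Omega Y]$ and Whitehead's theorem. You instead take the $K$-theoretic periodicity as primary, propose to prove it by the Atiyah--Bott clutching-function argument, and then specialize to $X^\infty=S^k$ together with the representation \eqref{Eq: Homotopic reduced K-theory} and $\Omega BU\simeq U$ to extract $\pi_k(U)\cong\pi_{k+2}(U)$. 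Step 1 is correct, and this is a perfectly standard alternative to Bott's proof; what it buys is independence from Morse theory and a tighter fit with the $K$-theoretic machinery of this chapter, while Bott's original argument has the advantage of simultaneously yielding real periodicity and the homogeneous-space models of the iterated loop spaces.

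Two caveats. First, if this argument were spliced into the paper as written it would create a circle: the paragraphs immediately following this theorem derive the homotopy equivalence $BU\times\Z\simeq\Omega^2(BU\times\Z)$ and the $K$-theoretic periodicity \emph{from} the present statement, so the section would need to be restructured with the $K$-theoretic version as the primitive input and the homotopy forms as corollaries. Second, Step 2 is only a sketch: identifying the Bott class, the Fourier/Laurent approximation of clutching functions, the reduction to the linear case, and above all the spectral splitting of a linear clutching function $z\mapsto zA(x)+B(x)$ into subbundles whose eigenvalues stay uniformly inside or outside the unit circle are the technical heart of the Atiyah--Bott proof, and as you yourself note they remain to be carried out. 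As it stands the proposal reduces the theorem to a harder result which it names but does not establish, so it is a correct plan rather than a completed proof.
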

The long exact sequence of homotopy groups, induced by the fibration
\[ \xymatrix{ U(n) \ar[r] & EU(n) \ar[r] & BU(n) } \]
is
\[ \xymatrix{ \ldots \ar[r] & \pi_k(EU(n)) \ar[r] & \pi_k(BU(n)) \ar[r] & \pi_{k-1}(U(n)) \ar[r] & \pi_{k-1}(EU(n)) \ar[r] & \ldots } \,, \]
By exactness and $\pi_k(EU(n))=\pi_{k-1}(EU(n))=0$, there are isomorphisms\footnote{This can also be understood more straightforwardly as follows: $\pi_k(BU(n)) \cong \mathrm{Vect}^n_\C(S^k)$, but on the other hand, such vector bundles are uniquely classified by the homotopy class of the \emph{clutching function} at the equator, which is homotopy equivalent to $S^{k-1}$. Thus, we have
\[ \pi_k(BU(n)) \cong \mathrm{Vect}^n_\C(S^k) \cong [S^{k-1},U(n)] = \pi_{k-1}(U(n)) \,. \]}
\[ \pi_k(BU(n)) \cong \pi_{k-1}(U(n)) \,. \]
Passing to the limit $n\rightarrow \infty$ and applying Theorem \ref{Theorem: Bott periodicity, version 1} yields
\[ \pi_k(BU) \cong \pi_{k-1}(U) \cong \pi_{k+1}(U) \cong \pi_{k+2}(BU) \,. \]
For based CW-complexes $X,Y$ the suspension and loop space functors are adjoint in the sense that
\begin{equation}\label{Eq: Suspension and loop are adjoint} [\Sigma X,Y]\cong [X,\Omega Y] \,. \end{equation}
Thus,
\[ \pi_k(BU) \cong [S^{k+2},BU] = [\Sigma S^{k+1},BU]\cong [S^k,\Omega^2 BU] = \pi_k(\Omega^2 BU) \,, \]
for all $k\ge 1$. Still, $\pi_0(BU)=0$ and $\pi_0(\Omega^2 BU) \cong \pi_2(BU) \cong \Z$, but replacing $\pi_k(BU)$ with $\pi_k(BU\times \Z)$ does not affect any of the higher homotopy groups, since we are working with basepoint preserving homotopies and we have
\[ \pi_k(BU\times \Z) \cong \pi_k(\Omega^2 BU) \,. \]
Furthermore, an application of \eqref{Eq: Suspension and loop are adjoint} yields
\[ \pi_k(\Omega^2 BU) = [S^k,\Omega^2 BU] \cong [S^{k+2},BU] \cong [S^{k+2},BU\times \Z] \cong \pi_k(\Omega^2(BU\times \Z)) \,. \]
We have obtained a weak homotopy equivalence $BU\times \Z \cong \Omega^2(BU\times \Z)$, which implies a homotopy equivalence by Whitehead's theorem\footnote{$BU$ is a CW-complex, since the direct limit of CW-complexes is again a CW-complex \cite{May}.}.
\begin{theorem}[Bott periodicity]\label{Theorem: Bott periodicity, version 2}
There are homotopy equivalences
\[ BU\times \Z \cong \Omega^2 BU \cong \Omega^2 (BU\times \Z) \,. \]
\end{theorem}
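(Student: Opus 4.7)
The plan is to derive the stated homotopy equivalence from the classical form of Bott periodicity (Theorem \ref{Theorem: Bott periodicity, version 1}) by combining it with the long exact sequence of the universal fibration, then producing a genuine map whose induced action on all homotopy groups is the isomorphism that comes out, and finally invoking Whitehead's theorem.

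First I would feed the fibration $U(n)\to EU(n)\to BU(n)$ into its long exact sequence of homotopy groups: contractibility of $EU(n)$ forces $\pi_k(BU(n))\cong\pi_{k-1}(U(n))$ for all $k\ge 1$. Passing to the direct limit $n\to\infty$ commutes with $\pi_k$ since each $S^k$ is compact and the inclusions $BU(n)\hookrightarrow BU(n+1)$ are CW, so combined with $\pi_{k-1}(U)\cong\pi_{k+1}(U)$ from Theorem \ref{Theorem: Bott periodicity, version 1} this gives $\pi_k(BU)\cong\pi_{k+2}(BU)$ for $k\ge 1$. The suspension--loop adjunction $[S^{k+2},BU]\cong[S^k,\Omega^{2}BU]$ rewrites this as $\pi_k(BU)\cong\pi_k(\Omega^{2}BU)$ for $k\ge 1$. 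The case $k=0$ is handled by augmenting with the discrete factor $\Z$: since $\pi_0(BU)=0$ but $\pi_0(\Omega^{2}BU)\cong\pi_2(BU)\cong\Z$, tacking on $\Z$ matches $\pi_0$ while leaving the higher groups unchanged, so $\pi_k(BU\times\Z)\cong\pi_k(\Omega^{2}(BU\times\Z))$ for every $k\ge 0$.

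The hard part will be producing an actual map realizing these abstract isomorphisms, since Whitehead's theorem requires a map inducing them rather than merely an isomorphism of groups. The natural candidate is the Bott map $\beta:BU\times\Z\to\Omega^{2}(BU\times\Z)$ constructed from exterior product with the generator $b\in\widetilde{K}(S^2)\cong\Z$ via the representability statement \eqref{Eq: Homotopic K-theory}: a class in $\widetilde{K}(X^\infty)$ pairs with $b$ to produce a class in $\widetilde{K}(S^2\wedge X^\infty)\cong\widetilde{K}^{-2}(X^\infty)$, and naturality converts this natural transformation into a map of representing spaces. Once $\beta$ is in hand, checking that it induces precisely the isomorphism extracted above reduces the theorem to Whitehead's theorem, which applies because $BU\times\Z$ is a CW complex (the discrete factor is trivially CW, and $BU=\varinjlim BU(n)$ is CW as a direct limit of CW complexes). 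This simultaneously gives the middle equivalence $\Omega^{2}BU\simeq\Omega^{2}(BU\times\Z)$, since the two spaces differ only in their component groups $\pi_0$ and we have already matched all positive homotopy groups.
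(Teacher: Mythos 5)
Your homotopy-group calculation mirrors the paper's exactly; where you diverge is the crucial observation that Whitehead's theorem requires an actual \emph{map} inducing the isomorphisms, not merely abstract group isomorphisms. The paper stops after establishing $\pi_k(BU\times\Z)\cong\pi_k(\Omega^{2}(BU\times\Z))$ and declares that this is ``a weak homotopy equivalence'' --- but a weak homotopy equivalence is a map, and no map has been produced (two CW-complexes with abstractly isomorphic homotopy groups need not be homotopy equivalent). Your proposal to construct the Bott map $\beta:BU\times\Z\to\Omega^{2}(BU\times\Z)$ from exterior product with the Bott generator $b\in\widetilde{K}(S^{2})$ and then feed $\beta$ into Whitehead is the standard way to close that gap, and it is logically required for a complete proof; the paper's own argument elides it.

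Two caveats on your sketch. First, the Brown/Yoneda argument producing $\beta$ from the natural transformation $\widetilde{K}(-)\to\widetilde{K}^{-2}(-)$ applies on compact spaces, so extracting a genuine map out of the noncompact space $BU\times\Z$ needs a little care; it is cleaner to construct $\beta$ directly as the adjoint of the composite $S^{2}\wedge(BU\times\Z)\to(BU\times\Z)\wedge(BU\times\Z)\to BU\times\Z$, using the classifying map of $b$ and the $H$-space multiplication on $BU\times\Z$. Second, and more substantively, the step ``checking that $\beta$ induces precisely the isomorphism extracted above'' is not a formality. Applied to $S^{k}$, the claim that $\beta_{*}:\pi_k(BU\times\Z)\to\pi_k(\Omega^{2}(BU\times\Z))$ is an isomorphism is exactly the statement that exterior product with $b$ gives an isomorphism $\widetilde{K}(S^k)\to\widetilde{K}(S^{k+2})$, which is the hard content of Bott periodicity (Atiyah's clutching proof, for instance). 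If Theorem~\ref{Theorem: Bott periodicity, version 1} is taken as a black-box isomorphism of groups with no distinguished map attached, it does not by itself guarantee that the Bott map realizes it, so this verification is a genuine remaining step rather than a consequence of what you have already written.
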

The $K$-theoretic version follows now easily. If $X^\infty$ is a compact pointed space,
\begin{align*} \widetilde{K}^{-n}(X^\infty) &= \widetilde{K}(\Sigma^n X^\infty) \cong [\Sigma^n X^\infty,BU\times \Z] \\ &\cong [\Sigma^{n-2} X^\infty, \Omega^2(BU\times \Z)] \cong [\Sigma^{n-2} X^\infty,BU\times \Z] \\ &\cong \widetilde{K}^{-n+2}(X^\infty) \,. \end{align*}
By \eqref{Eq: K-theory vs reduced K-theory}, we immediately obtain a similar result in (unreduced) $K$-theory.
\begin{theorem}[Bott periodicity]
There are canonical isomorphisms
\[ K^{-n}(X) \cong K^{-n+2}(X)  \,. \]
\end{theorem}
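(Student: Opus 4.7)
The plan is to deduce the unreduced statement directly from the reduced Bott periodicity $\widetilde{K}^{-n}(X^\infty) \cong \widetilde{K}^{-n+2}(X^\infty)$ already established for pointed spaces. Since the hard homotopy-theoretic input (Bott's theorem on $\pi_k(U)$, passing through $\pi_k(BU)$, and using Whitehead's theorem to upgrade a weak equivalence $BU\times \Z \cong \Omega^2(BU\times \Z)$ to an honest homotopy equivalence) has already been carried out in the proof of the reduced version, only a short bookkeeping argument should remain.

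First I would invoke the natural splitting established earlier in the section, namely $K^{-n}(X^\infty) \cong \widetilde{K}^{-n}(X^\infty) \oplus \widetilde{K}(S^n)$, valid for any compact pointed space. The first summand is $2$-periodic in $n$ by reduced Bott periodicity applied to $X^\infty$. For the second summand, the specialization $X^\infty = S^0$ of the reduced result (equivalently, reduced Bott periodicity applied at a point) gives $\widetilde{K}(S^n) \cong \widetilde{K}(S^{n+2})$. Taking the direct sum of these two isomorphisms yields $K^{-n}(X^\infty) \cong K^{-n+2}(X^\infty)$ in the pointed case.

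For a generally non-pointed compact space $X$, the standing definition $K^{-n}(X) := \widetilde{K}^{-n}(X^\infty)$ with $X^\infty = X \amalg \{\infty\}$ reduces the claim immediately to the pointed case, delivering $K^{-n}(X) = \widetilde{K}^{-n}(X^\infty) \cong \widetilde{K}^{-n+2}(X^\infty) = K^{-n+2}(X)$.

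The only point that needs a little care is the canonicity of the isomorphism asserted in the theorem. I would address this by working throughout inside the homotopic model $\widetilde{K}(-) \cong [-, BU\times \Z]$, in which the reduced Bott isomorphism is induced by the fixed homotopy equivalence $BU\times \Z \simeq \Omega^2(BU\times \Z)$ from Theorem \ref{Theorem: Bott periodicity, version 2}. Both the reduced/unreduced splitting and the periodicity isomorphism are then natural in $X$, so the resulting identification $K^{-n}(X) \cong K^{-n+2}(X)$ is functorial and canonical. I do not foresee a genuine obstacle: once reduced Bott periodicity is in hand, the unreduced statement is essentially a corollary.
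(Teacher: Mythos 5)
Your argument is correct and follows the same route the paper takes: the paper also dispatches the unreduced statement as an immediate corollary of reduced Bott periodicity via the splitting $K^{-n}(X^\infty) \cong \widetilde{K}^{-n}(X^\infty) \oplus \widetilde{K}(S^n)$ (and, for a non-pointed $X$, via the definition $K^{-n}(X) := \widetilde{K}^{-n}(X^\infty)$). Your remarks on canonicity via the fixed homotopy equivalence $BU\times\Z \simeq \Omega^2(BU\times\Z)$ merely spell out a point the paper leaves implicit.
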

Using Bott periodicity, the sequence \eqref{Eq: K-theory long exact sequence} can be extended infinitely to the right, by defining (for $n\in \Z$)
\[ K^n := \begin{cases} K^0 & \text{for $n$ even,} \\ K^{-1} & \text{for $n$ odd.} \end{cases} \]
From now on, we shall talk only about $K^0$ and $K^1$, and define
\begin{definition}
\[ K^\bullet(X) := K^0(X)\oplus K^1(X)\,, \]
and similarly for relative and reduced $K$-theory.
\end{definition}

An immediate consequence of the Bott periodicity theorem is that the long exact sequence \eqref{Eq: K-theory long exact sequence} truncates to a $6$-term cyclic exact sequence
\[ \xymatrix{ K^0(X,Y) \ar[r] & K^0(X) \ar[r] & K^0(Y) \ar[d] \\ K^1(Y) \ar[u] & K^1(X) \ar[l] & K^1(X,Y) \ar[l] } \]

Let us take another look at \eqref{Eq: K-theory vs reduced K-theory}. It would be very nice to know what the groups $\widetilde{K}(S^n)$ are.
\begin{proposition}
\begin{equation}\label{Eq: K-theory of a point} K^n(\{\infty\}) \cong \widetilde{K}(S^n) \cong \begin{cases} \Z & \text{for $n$ even,} \\ 0 & \text{for $n$ odd.} \end{cases} \end{equation}
\end{proposition}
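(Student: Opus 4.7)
The plan is to reduce everything to computing $\widetilde{K}(S^n)$, since the first isomorphism $K^n(\{\infty\}) \cong \widetilde{K}(S^n)$ is essentially built into the definitions. Indeed, treating $\{\infty\}$ as a non-pointed space we have $\{\infty\}^\infty = S^0$ and $\Sigma^n S^0 \cong S^n$, so
\[ K^{-n}(\{\infty\}) = \widetilde{K}^{-n}(S^0) = \widetilde{K}(\Sigma^n S^0) \cong \widetilde{K}(S^n), \]
and Bott periodicity identifies $K^n$ with $K^{-n}$ on the nose (modulo $2$).

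Next, I would invoke the homotopic description \eqref{Eq: Homotopic reduced K-theory} to convert the problem into homotopy theory:
\[ \widetilde{K}(S^n) \cong [S^n, BU\times \Z] = \pi_n(BU\times \Z). \]
For $n=0$ this gives $\pi_0(BU)\times \Z = 0\times \Z = \Z$ since $BU$ is connected (being a direct limit of connected spaces). For $n\ge 1$ the discrete factor drops out and $\pi_n(BU\times \Z) \cong \pi_n(BU)$. The long exact sequence of the fibration $U(k)\to EU(k)\to BU(k)$, together with contractibility of $EU(k)$, yields $\pi_n(BU(k)) \cong \pi_{n-1}(U(k))$ in the stable range, and passing to the direct limit $\pi_n(BU)\cong \pi_{n-1}(U)$.

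From here I would appeal to Theorem \ref{Theorem: Bott periodicity, version 1} to reduce the computation of $\pi_\bullet(U)$ to two base cases: $\pi_0(U) = 0$, which is immediate since each $U(k)$ is connected, and $\pi_1(U) \cong \Z$. The latter is the main obstacle — it is the only place where genuinely new input is required, and I would obtain it from the fibration $SU(k)\hookrightarrow U(k) \xrightarrow{\det} S^1$, using $\pi_1(SU(k)) = 0$ and $\pi_2(S^1) = 0$ (lemma \ref{Lemma: Homotopies of spheres}) to conclude $\pi_1(U(k)) \cong \pi_1(S^1) \cong \Z$; stability then gives $\pi_1(U) \cong \Z$. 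Applying Bott periodicity, $\pi_{2j}(U) = 0$ and $\pi_{2j+1}(U) \cong \Z$. Combining these gives $\pi_n(BU) \cong \Z$ for even $n \ge 2$ and $0$ for odd $n$, which together with the $n=0$ case yields the stated description of $\widetilde{K}(S^n)$.
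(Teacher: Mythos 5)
Your proof is correct, but it takes a genuinely different route from the one in the text, and the comparison is instructive. The paper uses \emph{$K$-theoretic} Bott periodicity to reduce the problem to the two base cases $\widetilde{K}(S^1)$ and $\widetilde{K}(S^2)$; the even case then comes for free from $\widetilde{K}(S^2)\cong K^{-2}(\{\infty\})\cong K(\{\infty\})\cong\Z$, and the only computation that remains is $\pi_1(BU)=0$, which the paper extracts from the fibration $S^{2n-1}\to BU(n-1)\to BU(n)$ together with $\pi_1(BU(1))=\pi_1(\CP^\infty)=0$. You instead pass to $\pi_\bullet(U)$ via the universal fibration and invoke Bott's original homotopy-theoretic periodicity $\pi_k(U)\cong\pi_{k+2}(U)$ (theorem \ref{Theorem: Bott periodicity, version 1}) to reduce to $\pi_0(U)$ and $\pi_1(U)$. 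Your computation of $\pi_0(U)=0$ is more immediate than the paper's $\pi_1(BU)$ argument (connectedness suffices), but you then owe an additional base case $\pi_1(U)\cong\Z$, which you supply via the determinant fibration $SU(k)\to U(k)\to S^1$ and simple connectivity of $SU(k)$ --- a standard fact, though one the paper does not record. Net effect: the paper economizes by letting the already-proved $K$-theoretic Bott periodicity absorb the even case, while your argument is more self-contained at the level of homotopy groups and closer to Bott's original result; both are valid, and neither computation is circular, since the paper's $K$-theoretic periodicity theorem was itself derived from theorem \ref{Theorem: Bott periodicity, version 1} and Whitehead's theorem rather than from the groups $\widetilde{K}(S^n)$.
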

\begin{proof}
By Bott periodicity, it is clear that $\widetilde{K}(S^n)$ depends only on whether $n$ is even or $odd$. Also, since $\widetilde{K}(S^2)\cong K^{-2}(\{\infty\}) \cong K(\{\infty\})\cong \Z$, the only thing we need to do is to find $\widetilde{K}(S^1)$.

We use the various results that came up when discussing Bott periodicity:
\[ \widetilde{K}(S^1) = [S^1,BU\times \Z] \cong [S^1,BU] = \pi_1(BU) \,. \]
There is a fibration
\[ \xymatrix{ S^{2n-1} \ar[r] & BU(n-1) \ar[r] & BU(n) } \,, \]
with associated long exact sequence of homotopy groups
\[ \xymatrix{ \ldots \ar[r] & \pi_1(S^{2n-1}) \ar[r] & \pi_1(BU(n-1)) \ar[r] & \pi_1(BU(n)) \ar[r] & \pi_0(S^{2n-1}) \ar[r] & \ldots } \,. \]
But if $n\ge 2$, then $\pi_1(S^{2n-1})=\pi_0(S^{2n-1})=0$ and there are isomorphisms
\[ \pi_1(BU(n)) \cong \pi_1(BU(n-1)) \cong \ldots \cong \pi_1(BU(1)) \,. \]
Since $BU(1)=\CP^\infty$, we see from \eqref{Eq: Homotopy of CP^infty} that
\[ \pi_1(BU(n)) = 0 \,. \]
Therefore,
\[ \pi_1(BU) \cong \varinjlim_n \pi_1(BU(n)) = 0 \,, \]
concluding the proof.

A very different kind of proof can be given by a clever application of the theory of Clifford modules \cite{ABS}, where the periodicity can be understood to follows from the periodicity of complex Clifford algebras. We shall not go into that, however.
\end{proof}
By \eqref{Eq: K-theory vs reduced K-theory}
\begin{equation}\label{Eq: K-theory vs reduced K-theory, 2} K^n(X^\infty) \cong \begin{cases} \widetilde{K}^n(X^\infty)\oplus \Z &\text{for $n=0$\,,} \\ \widetilde{K}^n(X^\infty) &\text{for $n=1$\,,} \end{cases} \end{equation}
for any compact pointed space $X^\infty$.

Recall the definition of the Chern character of a vector bundle \eqref{Eq: Chern character of a vector bundle} and its simple behavior under direct sums and tensor products of vector bundles, equations \eqref{Eq: Chern character, additivity}, \eqref{Eq: Chern character, multiplicativity}. In particular, \eqref{Eq: Chern character, additivity} implies that
\[ \ch:\mathrm{Vect}_\C \rightarrow H^\text{even}(X,\Q) \]
is a monoid homomorphism. Thus, by the universal property of $K$-theory mentioned in definition \ref{Definition: K-theory}, it has a unique extension into a group homomorphism
\begin{equation} \label{Eq: Chern character, K-theory} \ch:K(X) \rightarrow H^\text{even}(X,\Q) \,, \end{equation}
which maps a virtual bundle $[E]-[F]$ to $\ch(E)-\ch(F)$. It is easy to see from \eqref{Eq: Chern character, multiplicativity} that \eqref{Eq: Chern character, K-theory} is actually a ring homomorphism\footnote{This is the fundamental reason why the Chern character will be so important to us: it is compatible with the algebraic structure of $K$-theory. The appearance of the Chern character in the D-brane charge formula is already, by itself, strong evidence that $K$-theory should be expected to play a role in classifying D-branes.}. For a pointed space $X^\infty$, restricting $\ch$ to reduced $K$-theory classes yields classes in \emph{reduced cohomology} $\widetilde{H}(X^\infty,\Q)$:
\[ \ch(\widetilde{K}(X^\infty)) \subset \widetilde{H}^\text{even}(X^\infty,\Q) := \ker(H^\text{even}(X^\infty,\Q)\rightarrow H^\text{even}(\{\infty\},\Q) \,. \]
For classes in $K^1(X)$ the Chern character maps
\[ \ch:K^1(X) = \widetilde{K}(\Sigma X^\infty) \rightarrow \widetilde{H}^\text{even}(\Sigma X^\infty,\Q) \cong \widetilde{H}^\text{odd}(X^\infty,\Q) \cong H^\text{odd}(X,\Q) \,. \]
It is a famous result, that the Chern character becomes an isomorphism between rationalized $K$-theory and cohomology:
\[ \ch:K^\bullet(X)\otimes \Q \xrightarrow{\cong} H^\bullet(X,\Q) \,. \]
This means that the difference between $K$-theory and cohomology with rational coefficients is contained in the torsion subgroup.

Let us now get rid of the requirement for compactness and generalize to locally compact spaces. Of course, we are primarily interested in manifolds, which are always locally compact spaces of the homotopy type of a CW-complex. Locally compact spaces always admit one-point compactifications. In particular, if the original space is a manifold, then the compactified space is of the homotopy type of a finite CW-complex. For locally compact spaces $X$ and $Y$ with one-point compactifications $X^\infty$ and $Y^\infty$, morphisms $X\rightarrow Y$ are continuous maps that are restrictions of continuous basepoint preserving maps $X^\infty \rightarrow Y^\infty$. For example, \emph{proper maps}\footnote{A map is \emph{proper}, if inverse images of compact sets are compact.} are morphisms between locally compact spaces, but every morphism between locally compact spaces is not a proper map. However, considering proper maps is enough for our applications, because any continuous map from a compact space to a Hausdorff space, such as the inclusion of the D-brane $\phi:\Sigma\rightarrow X$, is a proper.

Suppose that $X$ is a locally compact space with one-point compactification $X^\infty$. We define $K$-theory for $X$ as in \eqref{Eq: Higher K-theory}:
\begin{equation} \label{Eq: K-theory, noncompact}  K^\bullet(X) := \widetilde{K}^\bullet(X^\infty) \,. \end{equation}
This definition extends $K$-theory into the category of locally compact spaces as a \emph{compactly supported} (generalized) cohomology theory, in the following sense.
\begin{proposition}[\cite{Kar}]\label{Proposition: K-theory compactly supported}
Let $X$ be a locally compact and
\[ \ldots \subset X_k \subset X_{k+1} \subset \ldots \]
an inductive family of open subspaces of $X$, such that every compact subset of $X$ is contained in at least one $X_k$. Then
\[ K^\bullet(X) \cong \varinjlim_k K^\bullet(X_k) \,. \]
\end{proposition}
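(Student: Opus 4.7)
The plan is to construct an explicit comparison map from the functoriality of $K$-theory applied to collapse maps of one-point compactifications, and then verify bijectivity via a ``spreading at infinity'' argument. Since each $X_k$ is open in $X$, the set $(X\setminus X_k)\cup\{\infty\}$ is closed in $X^\infty$, and collapsing it to a point yields a continuous pointed surjection
\[ q_k : X^\infty \longrightarrow X_k^\infty \]
which is the identity on $X_k$. For $k\le l$, the analogous collapse $q_{k,l}:X_l^\infty\to X_k^\infty$ satisfies $q_k = q_{k,l}\circ q_l$, so the pullbacks
\[ q_k^* : K^\bullet(X_k)=\widetilde K(\Sigma^\bullet X_k^\infty) \longrightarrow \widetilde K(\Sigma^\bullet X^\infty) = K^\bullet(X) \]
are compatible with the direct system and assemble into a canonical map $\Phi:\varinjlim_k K^\bullet(X_k)\to K^\bullet(X)$.

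For surjectivity, I would use the homotopic description \eqref{Eq: Homotopic reduced K-theory} to represent a given class by a pointed map $f:\Sigma^\bullet X^\infty\to BU\times\Z$. Since $BU\times\Z$ is a CW-complex, it is locally contractible at its basepoint, so by continuity of $f$ at the suspension of $\infty$ combined with the tube lemma there exists a compact $K\subset X$ such that $f$ sends the closed subset $\Sigma^\bullet\bigl((X\setminus K)\cup\{\infty\}\bigr)$ into a contractible open neighborhood of the basepoint. Using the deformation retract of that neighborhood onto the basepoint, I homotope $f$ through pointed maps to an $f'$ that is strictly equal to the basepoint on $\Sigma^\bullet\bigl((X\setminus K)\cup\{\infty\}\bigr)$. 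By hypothesis $K\subset X_k$ for some $k$, so $f'$ factors as $\tilde f\circ\Sigma^\bullet q_k$, and the class of $\tilde f$ in $K^\bullet(X_k)$ is the desired preimage.

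For injectivity, suppose $\eta\in K^\bullet(X_k)$ satisfies $q_k^*\eta=0$ and represent $\eta$ by a pointed map $g:\Sigma^\bullet X_k^\infty\to BU\times\Z$. A pointed null-homotopy $H:\Sigma^\bullet X^\infty\times I\to BU\times\Z$ of $g\circ\Sigma^\bullet q_k$ is already basepoint-valued on $\Sigma^\bullet(\{\infty\})\times I\cup \Sigma^\bullet X^\infty\times\{1\}$. The same spreading argument, now applied relative to $t\in\{0,1\}$ so as to preserve the boundary data, produces a homotopic null-homotopy $H'$ equal to the basepoint on $\Sigma^\bullet\bigl((X\setminus X_l)\cup\{\infty\}\bigr)\times I$ for some $l\ge k$. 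Then $H'$ factors as $\widetilde H\circ(\Sigma^\bullet q_l\times\id_I)$, so $\widetilde H$ is a pointed null-homotopy on $X_l^\infty$ of $q_{k,l}^*\eta$, whence $\eta$ already vanishes in $K^\bullet(X_l)$ and therefore in the direct limit.

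The main obstacle is this common ``spreading at infinity'' step. One has to combine local contractibility of $BU\times\Z$ at its basepoint, the explicit form of neighborhoods of $\infty$ in a one-point compactification, and the tube lemma to convert a map (or null-homotopy) that is merely close to the basepoint near $\infty$ into one that is strictly basepoint-valued on a set of the shape $(X\setminus X_l)\cup\{\infty\}$. Maintaining the basepoint condition throughout, and in the injectivity step also the constraints at $t=0,1$, requires a careful cofibration-style argument; it is precisely at this stage that the cofinality hypothesis on $\{X_k\}$ enters, so as to absorb the resulting compact set into some $X_l$.
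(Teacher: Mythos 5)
Your overall strategy---classes in compactly supported $K$-theory ``compress'' into some $K^\bullet(X_k)$ because the classifying map can be made constant near $\infty$---is sound, and the comparison map via the collapse maps $q_k:X^\infty\to X_k^\infty$ and the tube-lemma spreading are correct. The paper itself gives no proof, only the citation to \cite{Kar}; the argument there is, to my recollection, the same compression-at-infinity idea carried out at the level of vector-bundle (or cocycle) representatives with a trivialization near $\infty$, rather than at the level of classifying maps, so the two routes are parallel in spirit but differ in the technology used.

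The step you flag is the genuine gap, and it needs more than local contractibility. You want a neighborhood $U$ of the basepoint $*$ that \emph{strongly deformation retracts} onto $*$, and $BU\times\Z$ is not an ANR (it is not locally compact), so this is not automatic. The missing observation is that $f(\Sigma^\bullet X^\infty)$ is compact and hence contained in a finite subcomplex $Y\subset BU\times\Z$; a finite CW-complex is a compact ANR, and there the basepoint has a neighborhood basis of open sets with strong deformation retractions onto it. After corestricting $f$ to $Y$, a Urysohn bump function on the normal space $\Sigma^\bullet X^\infty$, combined with such a retraction, produces the required pointed homotopy \emph{without} assuming that $(\Sigma^\bullet X^\infty,\Sigma^\bullet((X\setminus K)\cup\{\infty\}))$ is a cofibration, which you cannot guarantee for a general locally compact $X$. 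The same finite-subcomplex reduction is needed in the injectivity step, and there ``spreading rel $t\in\{0,1\}$'' is not quite the right frame: the bump-function modification \emph{does} change $H$ on a collar near $\infty$ for all $t$, including $t=0$. What saves the argument is that $H(\cdot,0)=g\circ\Sigma^\bullet q_k$ is already basepoint-valued on $\Sigma^\bullet((X\setminus X_k)\cup\{\infty\})$ and the retraction fixes the basepoint, so the modification homotopy from $H|_{t=0}$ to $H'|_{t=0}$ itself factors through $\Sigma^\bullet q_l$ (once $X_l\supset X_k$); this is what lets you identify $[\widetilde H|_{t=0}]$ with $q_{k,l}^*\eta$ directly, rather than circularly invoking injectivity of $q_l^*$.
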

%Compactly supported (co)homology theories allow us to extend \emph{Poincar\'e duality isomorphisms} from compact spaces to locally compact spaces.
From the physical point of view, the compactly supported definitions can be understood as ''finite energy assumptions'' for physical objects. More precisely, if the energy or charge density of some physical object is described by a (generalized) cohomology theory, then the restriction to the compactly supported subgroup is equivalent to excluding classes which are ''infinitely nontrivial'', corresponding to infinitely extended nonzero energy or charge densities.

The Chern character generalizes without problems to the noncompact case. Its image lies in \emph{compactly supported cohomology} (compare with \eqref{Eq: K-theory, noncompact} and proposition \ref{Proposition: K-theory compactly supported}) \cite{May}:
\[ H^\bullet_c(X,G) := \ker(H^\bullet(X^\infty)\rightarrow H^\bullet(\{\infty\}) \cong G) \,, \]
where $G$ is some Abelian coefficient group. Thus, the Chern character is a homomorphism
\[ \ch:K^\bullet(X) \rightarrow H^\bullet_c(X,\Q) \,, \]
where, on the right-hand side, $H^\bullet_c(X,\Q)$ denotes $\Z_2$-periodized compactly supported cohomology.

We conclude this section by presenting the \emph{Atiyah-Hirzebruch version of the Grothendieck-Riemann-Roch theorem} \cite{AH1,AH2}. Let $\Sigma$ and $X$ be smooth oriented manifolds, $\Sigma$ compact and $X$ generally noncompact. By the contravariant functoriality of cohomology theories, a smooth map $\phi:\Sigma\rightarrow X$, which is necessarily also proper, induces natural maps
\[ \phi^*:H^\bullet_c(X,\Z)\rightarrow H^\bullet(\Sigma,\Z)\,, \qquad \phi^*:K^\bullet(X)\rightarrow K^\bullet(\Sigma) \,. \]
However, in D-brane theory we need to be able to send $K$-theory and cohomology classes from $\Sigma$ to $X$. There is a general recipe for obtaining such ''wrong way'' functoriality, at least when the spaces are of a particular kind. Let us recall some basic constructions of algebraic topology.

First of all, we restrict to manifolds with an additional structure known as an \emph{orientation}. For a closed\footnote{That is, compact without boundary.} $n$-dimensional manifold $\Sigma$, an orientation (more precisely, we are referring to a \emph{$\Z$-orientation}) is a homology class $[\Sigma]\in H_n(\Sigma,\Z)$, whose restriction to $H_n(\Sigma,\Sigma\\\{x\})\cong \Z$ for any $x\in \Sigma$, is a generator. Taking the cap product with the fundamental class determines an isomorphism between homology and cohomology groups:
\[ \Pd_{\Sigma}(-) := (-)\frown [\Sigma]:H^k(\Sigma,\Z) \xrightarrow{\cong} H_{n-k}(\Sigma,\Z) \,, \]
the \emph{Poincar\'e duality isomorphism}. For a noncompact manifold $X$, fundamental classes can be defined only locally. A choice of local fundamental classes, consistent at overlaps\footnote{This is nontrivial, since $H_n(\Sigma,\Sigma\setminus\{x\})\cong \Z$ has two generators}, determines an \emph{orientation} for the noncompact manifold. There is a Poincar\'e duality isomorphism in the noncompact case also, but it is between homology and compactly supported cohomology:
\[ \Pd_X(-): H^k_c(X,\Z) \xrightarrow{\cong} H_{n-k}(X,\Z) \,, \]
where
\[ H^k_c(X,\Z) := \ker\left(H^k(X^\infty,\Z) \rightarrow H^k(\{\infty\},\Z)\right) \,. \]
Compactly supported cohomology can be defined also for more general (than locally compact) spaces, but the above definition suffices for our purposes. These definitions of orientability coincide with orientability of the tangent bundle, which, again, was equivalent to vanishing of the Stiefel-Whitney class $w_2(\Sigma)$ \cite{Hus}. The definitions also generalize straightforwardly to (co)homology with coefficients in an Abelian ring $R$, or an $R$-module\footnote{The (local) fundamental classes are always classes in homology with coefficients in $R$, not in an $R$-module.} \cite{May}. If a manifold is $\Z$-orientable, then it is also orientable for other coefficients, most importantly $\R$ and $\Q$. If a manifold is $R$-orientable and \emph{not} $\Z$-orientable (defined completely analogously to $\Z$-orientability), then $R=\Z_2$. Thus, every manifold is $\Z_2$-orientable. If a manifold is $R$-orientable, then there is a Poincar\'e duality isomorphism between cohomology and homology with coefficients in an $R$-module $\pi$:
\[ \Pd_X(-): H^k_c(X,\pi) \xrightarrow{\cong} H_{n-k}(X,\pi) \,. \]
\begin{definition}\label{Definition: Homology Gysin map}
Let $X$ and $Y$ be orientable locally compact manifolds and $f:X\rightarrow Y$ a morphism. The \emph{Gysin ''wrong way'' homomorphism},
\begin{equation}\label{Eq: Homology Gysin map} f_*:H_c^\bullet(X,\Z)\rightarrow H_c^{\bullet+\dim(Y)-\dim(X)}(Y,\Z)\,, \end{equation}
is defined by
\begin{equation}\label{Eq: Homology Gysin map, explicit} f_*:H_c^\bullet(X,\Z)\xrightarrow{\Pd_X} H_{\dim(X)-\bullet}(X) \xrightarrow{f_*} H_{\dim(X)-\bullet}(Y) \xrightarrow{\Pd^{-1}_Y} H_c^{\bullet+\dim(Y)-\dim(X)}(X,\Z) \,. \end{equation}
The map $f_*$ is the natural push-forward in homology. Functoriality follows immediately from the definition.
\end{definition}

One immediately starts to wonder if some sort of generalized orientations could be defined for generalized cohomology and homology theories, which would imply the existence of corresponding Poincar\'e duality type isomorphisms. Turns out that the answer is ''yes'' \cite{Whi,HW}. We shall focus on the special case of $K$-theory and \emph{$K$-orientability}.
\begin{theorem}[\cite{ABS}]
A manifold $X$ is \emph{$K$-orientable}, if and only if it is $\spinc$.
\end{theorem}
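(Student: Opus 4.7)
The plan is to reduce $K$-orientability of $X$ to the existence of a $K$-theoretic Thom class for the tangent bundle $TX$, and then identify the obstruction to such a Thom class with the third integral Stiefel-Whitney class $W_3(TX)$, whose vanishing is equivalent to $\spinc$ by Proposition \ref{Proposition: Spin^c structure}.

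First I would set up the Thom-isomorphism formulation. For an oriented rank-$n$ real vector bundle $E\to X$, let $\text{Th}(E)$ be its Thom space. In ordinary cohomology, a $\Z$-orientation is equivalent to a Thom class $u_E\in H^n(\text{Th}(E),\Z)$ restricting to a generator on each fibre. In $K$-theory, $K$-orientability should analogously mean the existence of a class $u^K_E\in \widetilde{K}^n(\text{Th}(E))$ whose restriction to each fibre generates $\widetilde{K}^n(S^n)$; for a manifold, $K$-orientability then means $TX$ (equivalently the stable normal bundle) admits such a class. The first step is therefore to verify that this ''Thom-class'' definition of $K$-orientability is equivalent to the existence of the Gysin maps $\phi_!$ used throughout the chapter, which is a standard diagram-chase using the Pontryagin–Thom collapse.

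For the direction $\spinc \Rightarrow K$-orientable, I would invoke the Atiyah--Bott--Shapiro construction. Given a $\spinc$ structure on $E$, one has the associated complex spinor bundle $S=S^+\oplus S^-$ on $X$ and a Clifford multiplication $c:E\to\End(S)$ which is fibrewise an odd, self-adjoint isomorphism away from the zero section. Pulling $S^\pm$ back to the disc bundle $D(E)$ and using $c$ to identify $\pi^*S^+$ with $\pi^*S^-$ over the sphere bundle $S(E)$ produces, via the difference-bundle construction, a class
\[ u^K_E:=[\pi^*S^+,\pi^*S^-;c]\in K^0(D(E),S(E))\cong \widetilde{K}^0(\text{Th}(E))\,. \]
Fibrewise this is exactly the Bott generator built from the spinor representation of $\Cliff(\R^n)\otimes\C$, so it is a Thom class. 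Applied to $E=TX$, this yields $K$-orientability. (The shift from $\widetilde{K}^n$ to $\widetilde{K}^0$ is absorbed by Bott periodicity and the fact that complex spinors exist in all dimensions.)

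The converse direction, $K$-orientable $\Rightarrow \spinc$, is the harder step and will be the main obstacle. I would argue by obstruction theory on the Atiyah--Hirzebruch spectral sequence for $K^\bullet(\text{Th}(E))$. A Thom class $u^K_E$ must lift the ordinary Thom class $u_E\in H^n(\text{Th}(E),\Z)$ through the edge homomorphism; equivalently, $u_E$ must be a permanent cycle. The first potentially nontrivial differential hitting $u_E$ is $d_3=\beta\circ Sq^2\circ r_2$, which on the Thom class computes to $W_3(E)\cdot u_E$ under the Thom isomorphism (this is the classical identification of the first nontrivial AHSS differential with the integral Steenrod operation $Sq^3_\Z$, combined with the Wu-formula computation $Sq^2(u_E)=w_2(E)\cdot u_E$). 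Hence $W_3(E)=0$, and by Proposition \ref{Proposition: Spin^c structure} this is exactly the condition that $E$ admits a $\spinc$ structure. Specialising to $E=TX$ completes the proof.
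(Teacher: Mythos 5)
The paper states this theorem without proof, simply citing \cite{ABS}, so there is no in-text argument to compare against. Your two-part strategy is correct and is in fact the canonical route, essentially the one carried out in the cited reference together with the standard spectral-sequence obstruction argument.

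The forward direction ($\spinc \Rightarrow K$-orientable) via the Atiyah--Bott--Shapiro difference-bundle class $[\pi^*S^+,\pi^*S^-;c]$ is exactly right; it is the content of \cite{ABS} and is also how the Thom isomorphism in twisted $K$-theory (Theorem \ref{Theorem: Gysin map in twisted K-theory}) is built. For the converse, your appeal to the Atiyah--Hirzebruch spectral sequence with $d_3 = Sq^3_{\Z} = \beta\circ Sq^2\circ r_2$, combined with the Wu--Thom formula $Sq^2(u_E)=w_2(E)\smile u_E$, correctly isolates $W_3(E)$ as the obstruction to $u_E$ being a permanent cycle; one should phrase it as $d_3$ \emph{emanating from} $u_E$ (i.e.\ $d_3(u_E)=W_3(E)\smile u_E$ under the Thom isomorphism), not ``hitting'' it, and one should note that the $K$-theoretic Thom class, restricting to the Bott generator on each fibre, is forced by the multiplicative structure of the AHSS to have leading term $u_E$ in the associated graded, so $u_E$ must indeed survive. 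There is also a small point worth making explicit: since $W_3$ is $2$-torsion, the sign ambiguities in the Bockstein--Cartan formula are harmless, and since the paper's convention already has $X$ oriented, $w_1=0$ is in place so Proposition \ref{Proposition: Spin^c structure} applies directly. With these clarifications the argument is complete and correct.
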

We would next like to present a Poincar\'e duality isomorphism between $K$-theory and a ''dual'' generalized homology theory. It is not at all obvious that such a homology theory should even exist. However, turns out that all generalized cohomology theories do come with a natural dual homology theory, which admits a simple homotopic definition in terms of the $\Omega$-spectrum of the cohomology theory \cite{Whi}. For $K$-theory, the dual homology theory is called \emph{(spectral) $K$-homology} and is denoted by $K_\bullet(-)$, $\bullet=0,1$. We shall soon discuss $K$-homology in great detail and present a geometric definition, equivalent to the homotopic one. For now, it is enough to know that it exists, it has natural covariant functoriality, and is dual to $K$-theory in the same sense as ordinary cohomology is dual to ordinary homology.
\begin{theorem}[\cite{Whi}]
For a $K$-orientable manifold $X$ of dimension $n$, there exists a $K$-theoretic Poincar\'e duality isomorphism
\begin{equation}\label{Eq: K-theory Poincare duality} \Pd^K_X:K^\bullet(X) \xrightarrow{\cong} K_{n-\bullet}(X) \,. \end{equation}
\end{theorem}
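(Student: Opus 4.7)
The plan is to mimic the classical Poincar\'e duality proof, with the K-theoretic fundamental class built from the $\spinc$ structure replacing the integral fundamental class. By the cited theorem of Atiyah-Bott-Shapiro, $K$-orientability is equivalent to the existence of a $\spinc$ structure on the tangent bundle, so the hypothesis provides us with a preferred Dirac operator $\Dirac$ on $X$. The construction has three main stages: (i) build a fundamental class $[X]_K \in K_n(X)$, (ii) define a cap product pairing $\frown : K^\bullet(X) \otimes K_n(X) \to K_{n-\bullet}(X)$, and (iii) show that $\Pd^K_X(x) := x \frown [X]_K$ is an isomorphism.

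For step (i), the simplest route is through the Thom isomorphism in $K$-theory: a $\spinc$ structure on a real vector bundle $E$ of rank $r$ produces a $K$-theory Thom class $\tau_E \in K^r(E)$ (constructed via the fibrewise complex spinor bundle and Clifford multiplication), inducing an isomorphism $K^\bullet(X)\cong \widetilde{K}^{\bullet+r}(\text{Thom space})$. Applied to the normal bundle of a $\spinc$ embedding $X\hookrightarrow \R^N$ (possible by Whitney), this yields the Gysin map $i_! : K^\bullet(X)\to K^{\bullet+N-n}(\R^N)$ and, dually, a fundamental $K$-homology class; equivalently one defines $[X]_K$ via the elliptic symbol class of $\Dirac$ in $K_n(X)$ once $K$-homology has been set up. For step (ii), the cap product is induced by the external product $K^\bullet(X)\otimes K_\bullet(X)\to K_\bullet(X\times X)$ composed with restriction along the diagonal $\Delta:X\to X\times X$ (using that $X$ is a manifold, so $\Delta$ is a proper map admitting a tubular neighbourhood).

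Step (iii), the isomorphism property, follows by a standard Mayer-Vietoris induction on a good open cover $\{U_\alpha\}$ of $X$, which exists because $X$ is a smooth manifold. For a single contractible chart $U\cong \R^n$, Poincar\'e duality reduces to the Thom isomorphism $K^\bullet(\R^n)\cong \widetilde{K}^\bullet(S^n)\cong K_0(\{\text{pt}\})$ in the right degree, which is exactly Bott periodicity as stated earlier in the chapter. Given the statement on $U,V$ and $U\cap V$, the Mayer-Vietoris sequences in $K$-cohomology and $K$-homology are intertwined by the duality map (this requires checking naturality of $\frown$ under inclusions and the compatibility of connecting homomorphisms with $\frown [X]_K$), and the five-lemma promotes it to $U\cup V$. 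A colimit argument, using proposition~\ref{Proposition: K-theory compactly supported} and the analogous statement in $K$-homology, handles the case of a general (possibly non-finite-type) K-orientable manifold.

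The main technical obstacle is the compatibility of the cap product with the Mayer-Vietoris connecting maps, equivalently the verification that the fundamental class $[X]_K$ behaves correctly under restriction to open subsets and gluing. At the level of the Dirac operator this is essentially a statement about the locality of the symbol, but phrasing it purely within the still-unspecified $K$-homology theory of this chapter is awkward; in practice one fixes a model for $K$-homology (spectral, geometric Baum-Douglas, or Kasparov) in which such locality is transparent. The cleanest implementation, anticipating the geometric $K$-homology of chapter~2, is to take $[X]_K$ to be the class $[(X,\id,\mathbbm{1})]$ represented by $X$ itself equipped with its canonical $\spinc$ structure and the trivial line bundle; then naturality under open inclusions is tautological, and the above Mayer-Vietoris argument goes through without further fuss.
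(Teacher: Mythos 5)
The paper gives no proof of this theorem. It is stated purely as a citation to Whitehead's \emph{Generalized homology theories}, where it is an instance of a general spectrum-level result: every generalized cohomology theory has an $\Omega$-spectrum-dual homology theory, and an $E$-oriented manifold satisfies Poincar\'e duality for $E$-(co)homology by capping with its $E$-fundamental class. The paper's own contribution comes downstream — it records the explicit form of the cap product and fundamental class in the geometric Baum--Douglas model (definition~\ref{Definition: Geometric K-homology cap product}, equation~\eqref{Eq: K-theory K-homology Poincare duality, explicit}) — but it never proves that $(-)\frown [X]$ is an isomorphism.

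Your proposal therefore supplies a proof where the paper has only a citation, and it takes the standard $K$-theory-specific route: manufacture the fundamental class from the $\spinc$/Dirac data (equivalently, a Thom class on the normal bundle of an embedding into Euclidean space), define the cap product by external product followed by pullback along the diagonal, verify the base case on a chart where duality reduces to Bott periodicity, and bootstrap by Mayer--Vietoris over a good cover, followed by a colimit argument. This is correct in outline and buys concreteness that the Whitehead-style argument does not; in exchange it is tied to $K$-theory rather than being cohomology-theory-agnostic.

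Two places deserve a harder look. First, your closing claim that once one passes to the geometric model the Mayer--Vietoris compatibility "goes through without further fuss" oversells it. Naturality of the cap product under open inclusions is indeed easy in the Baum--Douglas picture, but what the five-lemma needs is compatibility of $(-)\frown [X]$ with the Mayer--Vietoris \emph{connecting homomorphisms}, and in geometric $K$-homology those are not tautological: they are implemented via the bordism relation by cutting a $K$-cycle along a separating hypersurface, so this step still carries real content and should be an explicit lemma. Second, your colimit step for non-finite-type $X$ invokes "the analogous statement in $K$-homology." The paper defines $K_\bullet(X) := \widetilde{K}_\bullet(X^\infty)$ for non-compact $X$, which is a Borel--Moore-type theory, and such theories do not commute with directed colimits of open subsets in the same covariant way that compactly supported $K^\bullet$ does (proposition~\ref{Proposition: K-theory compactly supported}). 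You either need ordinary spectral $K$-homology, which does commute with colimits, or a more careful argument adapted to the Borel--Moore flavour the paper actually uses.
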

$K$-theoretic Poincar\'e duality immediately yields a $K$-theoretic Gysin homomorphism, completely analogously to definition \ref{Definition: Homology Gysin map}.
\begin{definition}\label{Definition: Gysin map in K-theory}
Let $X$ and $Y$ be $K$-orientable locally compact manifolds and $f:X\rightarrow Y$ a morphism. The \emph{$K$-theoretic Gysin ''wrong way'' homomorphism},
\begin{equation}\label{Eq: K-theory Gysin map} f_!:K^\bullet(X)\rightarrow K^{\bullet+\dim(Y)-\dim(X)}(Y)\,, \end{equation}
is defined by
\begin{equation}\label{Eq: K-theory Gysin map, explicit} f_!:K^\bullet(X)\xrightarrow{\Pd^K_X} K_{\dim(X)-\bullet}(X) \xrightarrow{f_*} K_{\dim(X)-\bullet}(Y) \xrightarrow{(\Pd_Y^K)^{-1}} K^{\bullet+\dim(Y)-\dim(X)}(X) \,. \end{equation}
The map  $f_*$ is the natural push-forward in $K$-homology. Again, functoriality follows immediately from the definition.
\end{definition}

Orientations are usually not unique. For example, a connected manifold always admits exactly two distinct $\Z$-orientations. Let $X$ be a manifold with fixed $\Z$- and $K$-orientations. The $K$-orientation induces a $\Z$-orientation through a \emph{homological Chern character} homomorphism \cite{Jak1}
\[ \ch:K_\bullet(X) \rightarrow H_\bullet(X,\Q) \,, \]
where the right-hand side denotes $\Z_2$-periodized homology. This map is analogous to the cohomological Chern character, for example, it is an isomorphism modulo the torsion subgroup. The $K$-orientation is, analogously to the $\Z$-orientation, given by local $K$-homology classes. Taking the homological Chern character yields local classes in rational cohomology, which constitute a $\Q$-orientation. Since they are in the image of $\ch$, they lift to a $\Z$-orientation. However, the orientation induced by the $K$-orientation might be different from the original orientation. This causes a \emph{commutativity defect} between the Gysin maps and Chern characters. If $X$ and $Y$ are $K$-oriented locally compact manifolds, $d=\dim(Y)-\dim(X)$ and $f:X\rightarrow Y$ a morphism, the diagram
\begin{equation}\label{Eq: Commutativity defect} \xymatrix{K^\bullet(X) \ar[rr]^{f_!} \ar[d]_{\ch} & & K^{\bullet+d}(Y) \ar[d]^{\ch} \\ H_c^\bullet(X,\Q) \ar[rr]_{f_*} & & H_c^{\bullet+d}(Y,\Q) } \end{equation}
does \emph{not} commute. The $\Todd$-class measures the difference between the original $\Z$-orientation and the $\Z$-orientation induced by the $K$-orientation\footnote{This statement will be made precise later.}. The Grothendieck-Riemann-Roch theorem tells us precisely how \eqref{Eq: Commutativity defect} needs to be modified to restore commutativity.
\begin{theorem}[(Atiyah-Hirzebruch-)Grothendieck-Riemann-Roch \cite{AH1,AH2}]\label{Theorem: A-H-G-R-R}
When $X$ and $Y$ are as above, the diagram
\[ \xymatrix{K^\bullet(X) \ar[rr]^{f_!} \ar[d]_{\Todd(X)\smile \ch} & & K^{\bullet+d}(Y) \ar[d]^{\Todd(Y)\smile \ch} \\ H_c^\bullet(X,\Q) \ar[rr]_{f_*} & & H_c^{\bullet+d}(Y,\Q) } \]
commutes. More generally, the theorem holds if $W_3(X)\neq 0$ and $W_3(Y)\neq 0$ as long as
\[ W_3(f) := \beta(w_2(X)-f^*w_2(Y)) = 0 \,, \]
which is equivalent to the vector bundle $TX\oplus f^{-1}TY$ being $\spinc$\footnote{This is easy to show. From \eqref{Eq: S-W summation formula}, we have
\[ 1+w_1(TX\oplus f^{-1}TY)+w_2(TX\oplus f^{-1}TY)+\ldots = (1+w_2(TX)+\ldots)\smile (1+w_2(f^{-1}TY)+\ldots) \,, \]
where we have used the assumption that $X$ and $Y$ are orientable, together with proposition \ref{Proposition: Orientability}. Comparing classes of equal degree on both sides reveals
\[ w_2(TX\oplus f^{-1}TY) = w_2(TX) + w_2(f^{-1}TY) = w_2(TX) + f^*w_2(TY)=w_2(X)+f^*w_2(Y)=w_2(X)-f^*w_2(Y) \,. \]}. The map $f$ is said to be \emph{$K$-oriented}. Of course, if $f$ is $K$-oriented, but $X$ and $Y$ are not, the Gysin maps require more careful thought. However, in our applications such a situation does not arise.
\end{theorem}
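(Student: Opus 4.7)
The standard strategy is to reduce to two elementary cases in which the Gysin maps can be described very explicitly, and then to use the multiplicative structure of the Chern character to compare the $K$-theoretic and cohomological pushforwards up to a correction by a Todd class. Concretely, I would factor $f\colon X\to Y$ as the composition of a closed embedding $i\colon X\hookrightarrow Y\times\R^{2N}$ (for $N$ sufficiently large, using that $X$ is a manifold) and the projection $p\colon Y\times\R^{2N}\to Y$. Since both $f_!$ (in $K$-theory and in ordinary cohomology) and $\ch(-)\smile\Todd(-)$ are functorial with respect to composition, it suffices to verify the commuting square separately when $f$ is a closed embedding and when $f$ is an (oriented real) vector bundle projection.

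\textbf{Projection case.} For the trivial vector bundle projection $p\colon Y\times\R^{2N}\to Y$, the Gysin maps in both theories are (up to sign) the inverses of the suspension/Thom isomorphisms. Since $\R^{2N}$ is parallelizable with trivial tangent bundle, $\Todd(Y\times\R^{2N}) = p^*\Todd(Y)$, so the Todd classes on the two sides agree under pullback. A direct check on the Bott element then shows that the Chern character intertwines the $K$-theoretic and cohomological Thom isomorphisms in this case, which gives the square for $p$.

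\textbf{Embedding case.} For a closed $K$-oriented embedding $i\colon X\hookrightarrow W$ with normal bundle $\nu$, the Gysin map factors through a Thom isomorphism for $\nu$: in $K$-theory one has $i_!(x)=\phi^K_\nu(x)$, where $\phi^K_\nu\colon K^\bullet(X)\xrightarrow{\cong} K^{\bullet+\mathrm{rk}\,\nu}(W,W\setminus X)$ is multiplication by the $K$-theoretic Thom class of $\nu$, composed with excision and extension by zero; similarly for $\phi^H_\nu$ in rational cohomology. The key classical computation, due to Atiyah--Hirzebruch, is the identity
\[
\ch(\phi^K_\nu(x)) \;=\; \phi^H_\nu\bigl(\ch(x)\smile \Todd(\nu)^{-1}\bigr),
\]
i.e.\ the Chern character of the $K$-theoretic Thom class differs from the cohomological Thom class by $\Todd(\nu)^{-1}$. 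Multiplying by $\Todd(W)$ and using the splitting $i^*TW\cong TX\oplus \nu$ together with multiplicativity of $\Todd$ (Proposition~\ref{Proposition: Characteristic classes are polynomials of Chern classes} and the definition of $\Todd$ for $\spinc$ bundles), one rewrites this as
\[
\Todd(W)\smile \ch(i_!x) \;=\; i_*\bigl(\Todd(X)\smile \ch(x)\bigr),
\]
which is the desired commutativity for $i$.

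\textbf{Hard part.} The genuine difficulty is the Thom-class identity $\ch(\phi^K_\nu(x))=\phi^H_\nu(\ch(x)\smile\Todd(\nu)^{-1})$; everything else is a formal consequence of functoriality of $f_!$ and multiplicativity of $\ch$ and $\Todd$. To prove this identity I would invoke the splitting principle to reduce to the case of a complex line bundle $L$, where the $K$-theoretic Thom class is $[L]-[\mathbbm{1}]$ and its Chern character $1-e^{-c_1(L)}=c_1(L)\cdot\bigl(\frac{1-e^{-c_1(L)}}{c_1(L)}\bigr)=\phi^H_L(1)\smile \Todd(L)^{-1}$ by the very definition \eqref{Eq: Chern character of a vector bundle} and the complex Todd class. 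Extending the statement about $f$ only $K$-oriented (rather than $X$, $Y$ individually $\spinc$) just requires noting that the relative tangent bundle $TX\oplus f^{-1}TY$ is precisely what carries the Todd class that appears in the relative formulation, which is exactly the content of the footnote in the statement.
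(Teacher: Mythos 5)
The paper does not prove this theorem; it states it with a citation to Atiyah--Hirzebruch [AH1,~AH2], so there is no in-paper argument to compare against. Your sketch is, however, the standard proof from those references: factor $f$ through a closed embedding into $Y\times\R^{2N}$ followed by the projection, handle the projection via the (trivial-bundle) Thom isomorphism where the Todd class pulls back, and handle the embedding via the Thom-class identity plus the projection formula and $i^*TW\cong TX\oplus\nu$. That reduction of the hard part to line bundles via the splitting principle is also exactly how the original references proceed. So: correct approach, and the same approach as the cited source.

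One small but real inconsistency to fix: you write the $K$-theoretic Thom class of a line bundle $L$ as $[L]-[\mathbbm{1}]$ but then assert its Chern character is $1-e^{-c_1(L)}$. In fact $\ch([L]-[\mathbbm{1}])=e^{c_1(L)}-1$; it is the class $[\mathbbm{1}]-[L^*]=\lambda_{-1}(L^*)$ whose Chern character is $1-e^{-c_1(L)}$, and that is the one matching the complex Todd class $\Todd_\C(L)=\frac{c_1(L)}{1-e^{-c_1(L)}}$ as in the paper's formula. Either convention works, but you must keep the Thom class and its Chern character consistent; as written the units of the Todd correction come out wrong. With that adjustment the devissage goes through: the projection formula gives $\Todd(W)\smile i_*(\ch(x)\smile\Todd(\nu)^{-1}) = i_*(\ch(x)\smile\Todd(\nu)^{-1}\smile i^*\Todd(W))$, and $i^*\Todd(W)=\Todd(TX)\smile\Todd(\nu)$ by multiplicativity, so the $\Todd(\nu)^{-1}$ cancels and the square commutes, exactly as you indicate.
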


\section{D-branes and $K$-theory}\label{Section: RR-charge and $K$-theory}
We have now enough mathematical machinery at our disposal to start exploring the D-brane charge. The necessary physical background was explained in chapter \ref{Chapter: Overview} along with two strong arguments in favor of the $K$-theoretic classification scheme. In this section we go through these arguments again, putting our new understanding of characteristic classes and $K$-theory to use. The role of $K$-theory in D-brane theory was first realized by Minasian and Moore \cite{MM} and the basics were worked out in \cite{Wit1}. A nice overview is also given in \cite{Wit2}.

It was explained in section \ref{Section: Overview, Strings and D-branes} that quasi-classically\footnote{The complete quantum theory of D-branes is, as of yet, unknown. However, a quasi-classical model (of the D-brane charge) can be obtained through Dirac quantization.} a D-brane in Type IIB superstring theory is described by an even-dimensional closed worldvolume manifold $\Sigma$ carrying a principal $PU(n)$ bundle\footnote{Locally, such a bundle arises as the projectivization of a principal $U(n)$ bundle, but globally there may be an obstruction to gluing the $U(n)$ lifts together. Thus, not every $PU(n)$ bundle arises as the projectivization of a $U(n)$ bundle \cite{AS1}. We shall return to this later in more detail.}, the Chan-Paton bundle, together with a continuous map $\phi:\Sigma\rightarrow X$, where $X$ is the $10$-dimensional Riemannian spacetime manifold. Both $\Sigma$ and $X$ are oriented in Type II theories, unoriented in Type I theory, and $X$ is assumed to be $\spin$, since spinor particles obviously exist in reality\footnote{To be absolutely precise, the existence of a $\spin$ structure is inferred from the existence of \emph{uncharged} spinor particles, since only a $\spinc$ structure is required for \emph{$U(1)$-charged} spinors.}. 
%Finally, a stack of $n$ coincident D-branes supported a locally defined $\mathfrak{u}(n)$-valued $1$-form field, the $A$-field, whose representatives $A_\alpha$ and $A_\beta$, defined over open sets $U_\alpha$ and $U_\beta$, had to satisfy
%\[ A_\beta = g_{\alpha\beta}^{-1}A_\alpha g_{\alpha\beta} + g_{\alpha\beta}^{-1}dg_{\alpha\beta} - \Lambda_{\alpha\beta} \,, \]
%where $\{g_{\alpha\beta}\}$ were the transition functions\footnote{More precisely, the action of $PU(n)$ on the fibre of the Chan-Paton bundle is given by the adjoint action of $\{g_{\alpha\beta}\}$.} of the Chan-Paton bundle and $\Lambda_{\alpha\beta}$ a ''correction term'', which we shall discuss later in great detail: its appearance is necessary for the \emph{Freed-Witten anomaly} to cancel.

A few assumptions must be made at this point. First, we assume that the D-brane worldvolume $\Sigma$ is $K$-oriented (a $\spinc$ manifold). Second, we assume that $PU(n)$ Chan-Paton bundles are always projectivizations of $U(n)$ bundles. Their associated $U(n)$ vector bundles are typically also referred to as Chan-Paton bundles. Recall from section \ref{Section: Overview, The low-energy effective field theory} that the coupling of a D-brane with worldvolume $\Sigma$ and $U(n)$ Chan-Paton (vector) bundle $E\rightarrow \Sigma$ to the RR-fields is of the form\footnote{Square roots and inverse square roots of $\Todd(X)$ (which coincides with $\widehat{A}(X)$ by \eqref{Eq: Real Todd class} since $X$ is $\spin$) are defined by a straightforward expansion:
\[ \frac{1}{\sqrt{\widehat{A}(X)}} \stackrel{\eqref{Eq: A-class expansion}}{:=} \prod_k \sqrt{\frac{\sinh(x_k/2)}{x_k/2}} = \prod_k \sqrt{1+\frac{(x_k/2)^2}{3!}+\ldots} = \prod_k \left( 1 + \frac{1}{2}\Big(\frac{(x_k/2)^2}{3!} + \ldots\Big) + \ldots\right) \,. \] } 
\begin{equation}\label{Eq: D-brane-RR-field coupling} \int_X C\wedge \phi_*(\ch(E)\wedge \Todd(\Sigma))\wedge \frac{1}{\sqrt{\Todd(X)}} \,, \end{equation}
where $C=C_0+C_2+\ldots$ is the total RR-potential polyform. The integral is understood to evaluate the $10$-form part of the even-degree polyform and discard everything else. Such a coupling implies that the total RR-charge (density) of the D-brane is the cohomology class
\begin{equation}\label{Eq: Real RR-charge} Q_\R(\Sigma,E) := \phi_*(\ch(E)\wedge \Todd(\Sigma))\wedge \frac{1}{\sqrt{\Todd(X)}} \in H_{c,dR}^\text{even}(X) \,, \end{equation}
where $\phi_*$ is the Gysin map in cohomology. Since both $\Sigma$ and $X$ are even-dimensional, $\phi_*$ does not change the degree of the class $\ch(E)\wedge \Todd(\Sigma)$ in periodized cohomology. All the quantities appearing in \eqref{Eq: Real RR-charge} take naturally values in rational cohomology, which allows us to work as well with
\begin{equation}\label{Eq: Minasian-Moore formula, intermediate 1} Q_\Q(\Sigma,E) := \phi_*(\ch(E)\smile \Todd(\Sigma))\smile \frac{1}{\sqrt{\Todd(X)}} \in H_c^\text{even}(X,\Q) \,. \end{equation}

Remark, that the Gysin map shifts the degree of $\ch(E)$ by $\dim(X)-\dim(\Sigma)=10-(p+1)=9-p$. Since $\sqrt{\Todd(X)}= 1 +\ldots$ and $\ch(E) = \rk(E) + \ldots$, where the ellipsis denotes classes of higher (even) degree, we can write \eqref{Eq: Minasian-Moore formula, intermediate 1} out as
\[ Q_\Q(\Sigma,E) := \underbrace{\rk(E)\phi_*(1)}_{\in H^{9-p}_c(X,\Q)} + \ldots \,. \]
The first term $\rk(E)\phi_*(1)$ pairs up with the RR-potential $C_{p+1}$ in \eqref{Eq: D-brane-RR-field coupling} to yield a class in $H^{10}_\text{c,dR}(X)$, which can then be integrated. The result is the \emph{D$p$-brane charge} of the D-brane, which is the higher degree charge the D-branes carries. Hence, the RR-charge actually dictates the dimensionality of the D-brane: one can think of a D$p$-brane as a D-brane with highest nonzero RR-charge of degree $p$ (coupling to $C_{p+1}$). The class $\rk(E)\phi_*(1)$ represents the image in spacetime of the Poincar\'e dual the fundamental class of $\Sigma$. The rest of the cohomology classes in $Q_\Q(\Sigma,E)$ are of lower degree and thus they pair up with RR-potentials of higher degree to yield integrable $10$-forms.

Minasian and Moore remarked in \cite{MM}, that by considering $E$ rather as a $K$-theory class $[E]\in K^0(\Sigma)$ than simply a vector bundle allows us to apply the Grothendieck-Riemann-Roch theorem \ref{Theorem: A-H-G-R-R} to \eqref{Eq: Minasian-Moore formula, intermediate 1}, to get
\begin{equation}\label{Eq: Minasian-Moore formula, intermediate 2} Q_\Q(\Sigma,[E]) := \ch(\phi_!([E]))\smile \sqrt{\Todd(X)} \in H_c^\text{even}(X,\Q) \,. \end{equation}
Recall from section \ref{Section: Overview, From tachyon condensation to K-theory}, that an anti-D-brane for a D-brane with RR-charge $Q_\Q(\Sigma,[F])$ is a D-brane carrying the opposite RR-charge
\[ \overline{Q}_\Q(\Sigma,[F]):=-Q_\Q(\Sigma,[F]) \,. \]
Using $\ch(-[F])=-\ch(F)$ in the Minasian-Moore formula \eqref{Eq: Minasian-Moore formula, intermediate 2} yields
\[ \overline{Q}_\Q(\Sigma,[F])=Q_\Q(\Sigma,-[F]) \,. \]
Thus, a configuration of $n$ D-branes with $U(n)$ Chan-Paton bundle $E$ and $m$ anti-D-branes with $U(m)$ Chan-Paton bundle $F$, wrapping the same worldvolume $\Sigma$, carries total RR-charge
\[ Q_\Q(\Sigma,[E])+\overline{Q}_\Q(\Sigma,[F]) = Q_\Q(\Sigma,[E])-Q_\Q(\Sigma,[F]) = \ch(\phi_!([E]-[F]))\smile \sqrt{\Todd(X)} \,. \]
Hence, the $K$-theoretic definition for RR-charge easily generalizes to configurations of coincident D-branes and anti-D-branes. It makes sense to denote the total RR-charge of such a configuration by $Q_\Q(\Sigma,[E]-[F])$. Concluding, the total RR-charge of a stack of D-branes with Chan-Paton bundle $E$ and a stack of anti-D-branes with Chan-Paton bundle $F$, all wrapping the same worldvolume $\Sigma$, is given by the \emph{Minasian-Moore formula}
\begin{definition}[Minasian-Moore \cite{MM}]
\begin{equation}\label{Eq: Minasian-Moore formula} Q_\Q(\Sigma,x) := \ch(\phi_!(x))\smile \sqrt{\Todd(X)} \in H_c^\text{even}(X,\Q) \,, \end{equation}
where $x=[E]-[F]\in K^0(\Sigma)$.
\end{definition}

The above discussion was purely classical. In quantum theory we need to impose a \emph{Dirac quantization condition} on the charges. The general principles are explained in \cite{Fre1,Fre2}. The idea is that Dirac quantization corresponds to lifting the charges from rational (or real) cohomology to some generalized cohomology theory with integral coefficients. Unfortunately, there seems to be no way of determining absolutely what the generalized cohomology theory should be, but one can try to guess it by investigating the structure of the classical theory. Suppose $h^\bullet(-)$ is a generalized cohomology theory with integral coefficients. The Dirac quantization condition is the requirement for the existence of a natural transformation
\begin{equation}\label{Eq: Chern character natural transformation} \ch:h^\bullet(-)\rightarrow H^\bullet(-,\Q)\,, \end{equation}
the ''Chern character'', such that the classical charge lies in the image lattice of $\ch$. Finally, we postulate that the quantized charge actually lives in the integral cohomology theory, not merely in the image in $H^\bullet(-,\Q)$. The distinction being that the integral cohomology theory is sensitive to torsion effects, whereas rational cohomology is obviously not. Thus, the Dirac quantized theory might contain charge configurations which are unstable in the classical theory, but are nevertheless stable in the quantum theory due to topological torsion effects. In classical electromagnetism the integral cohomology theory is simply $H^\bullet(-,\Z)$ with the ''Chern character'' being the natural inclusion. The hypothesis is that for RR-charges the integral cohomology theory is $K$-theory and the ''Chern character'' is the \emph{modified Chern character}
\[ \Ch(-) := \ch(-)\smile \sqrt{\Todd(X)} \,. \]

The classical fields should also be Dirac quantized by lifting them to some kind of integral cohomology theory. Recall, again, the case of electromagnetism, where classically the electromagnetic field is given by a field strength $2$-form $F$. Dirac quantization then implies that $F$ is really the curvature of a connection on a principal $U(1)$ bundle. However, this geometric description somewhat obscures the general principle, namely, principal $U(1)$ bundles with connection provide a geometric model for a certain cohomology class in \emph{Deligne cohomology} \cite{Bry,Joh2}. Deligne cohomology, on the other hand, turns out to be an example of a \emph{differential cohomology theory}, which seems to be the correct mathematical framework for Dirac quantized fields \cite{Fre1,Fre2,HS,Val}. The idea with differential cohomology is that it captures both global topological data and local differential geometric data of the field and combines them into a single package. Both are needed, since it is the differential geometric representative of the field that enters the path integral, but on the other hand, it ignores global topological torsion effects. Another example of differential cohomology is given by the $B$-field, which, purely classically, is described by the $3$-form field strength $H$. Dirac quantization lifts $H$ to a class in a differential cohomology theory, which turns out to be a certain class in Deligne cohomology. However, the Deligne class of the $B$-field is of higher rank than that of the electromagnetic field and is slightly more complicated. In particular, it can not be geometrically represented by a principal bundle with connection, but by a much more complicated object: a \emph{bundle gerbe with connection and curving} \cite{Mur,BCMMS,Joh2,Ste2,Gaj,MS1}. For RR-fields the correct differential cohomology theory seems to be \emph{differential $K$-theory}. A nice exposition of differential $K$-theory, along with applications to D-brane theory, is given in \cite{Val}.

From the point of view of the Minasian-Moore formula \eqref{Eq: Minasian-Moore formula} it seems natural that the generalized cohomology theory for RR-charges should be $K$-theory. More precisely, that the \emph{quantized RR-charge} of a D-brane configuration corresponding to a $K$-theory class $x\in K^0(\Sigma)$\footnote{The class $x$ being constructed as in \eqref{Eq: Minasian-Moore formula}.} should be
\begin{equation}\label{Eq: Quantized RR-charge} Q_\Z(\Sigma,x) := \phi_!(x) \in K^0(X) \,. \end{equation}
The natural transformation \eqref{Eq: Chern character natural transformation} from $K$-theory to rational cohomology, which physically associates to a quantized RR-charge its classical approximation, is clearly given by the \emph{modified Chern character}
\[ \Ch(-) := \ch(-)\smile \sqrt{\Todd(X)} \,. \]

Recall that the role of the class $\Todd(X)$ was to cure the commutativity defect in \eqref{Eq: Commutativity defect}, arising from the difference between the fixed orientation of $X$ and the orientation induced by the $K$-orientation. This is further emphasized by the following \emph{isometric pairing formula}. Suppose $X$ is compact. There is a natural bilinear pairing $(-,-)_K:K^0(X)\otimes_\Z K^0(X) \rightarrow \Z$, given by
\[ ([E],[F])_K := \ind(\Dirac_{E\otimes F}) \,, \]
with the obvious bilinear extension to classes of a more general kind. Similarly, there is a natural bilinear pairing $(-,-)_H:H_c^\bullet(X,\Q)\otimes_\Q H_c^{10-\bullet}(X,\Q)\rightarrow \Q$, defined as the obvious evaluation pairing of the cup product of the cup product of the two classes with the fundamental class:
\[ (\eta,\theta)_H := \ip{\eta\smile \theta , [X]} \,. \]
\begin{proposition}[Isometric pairing formula]\label{Proposition: Isometric pairing formula}
The modified Chern character is an isometry with respect to the pairings in $K$-theory and cohomology.
\end{proposition}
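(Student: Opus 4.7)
The plan is to reduce this statement to the Atiyah-Singer index theorem for the twisted Dirac operator on the $\spin$ manifold $X$. Three ingredients drive the reduction: (i) the multiplicativity of the Chern character, $\ch(E\otimes F) = \ch(E)\smile \ch(F)$, already recorded in \eqref{Eq: Chern character, multiplicativity}; (ii) the formal identity $\sqrt{\Todd(X)}\smile \sqrt{\Todd(X)} = \Todd(X)$, obtained directly from the power series definition of the square root; and (iii) the coincidence $\Todd(X) = \widehat{A}(X)$, valid because $X$ is $\spin$ and its canonical $\spinc$ structure gives $d(X) = 0$ in the definition \eqref{Eq: Real Todd class}, whence the exponential factor in that formula is trivial.

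First I would verify the identity on pairs of honest vector bundles $E$, $F$ and then extend to arbitrary virtual classes by bilinearity. Expanding the right-hand side, and using that every class in sight has even total degree and therefore commutes under the cup product, gives
\begin{align*}
(\Ch([E]), \Ch([F]))_H &= \ip{\ch(E)\smile \sqrt{\Todd(X)}\smile \ch(F)\smile \sqrt{\Todd(X)}, [X]} \\
&= \ip{\ch(E)\smile \ch(F)\smile \Todd(X), [X]} \\
&= \ip{\ch(E\otimes F)\smile \widehat{A}(X), [X]}.
\end{align*}
The Atiyah-Singer index theorem, applied to the Dirac operator on the $\spin$ manifold $X$ twisted by the auxiliary bundle $E\otimes F$, identifies this last expression with $\ind(\Dirac_{E\otimes F})$, which is by definition $([E], [F])_K$.

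The extension to virtual classes is essentially formal: both pairings are bilinear in each slot, the modified Chern character is a group homomorphism on $K^0(X)$, and $[E]\smile [F] = [E\otimes F]$ in the $K$-theoretic cup product, so linearity propagates the identity from pairs of honest bundles to all of $K^0(X)\otimes_\Z K^0(X)$. The hard part will be the Atiyah-Singer index theorem, which is invoked here as a black box; the bulk of any fully self-contained proof lives there. Modulo that invocation, the remaining steps are purely algebraic manipulations with characteristic classes of the type already established in the preceding sections, so the present proposition should be viewed less as an independent theorem and more as a convenient repackaging of the index theorem as a compatibility statement between the $K$-theoretic and cohomological pairings.
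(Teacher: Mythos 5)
Your proof is essentially identical to the paper's, merely run in the opposite direction: the paper starts from $([E],[F])_K = \ind(\Dirac_{E\otimes F})$, applies Atiyah--Singer, and then factors $\Todd(X)$ into square roots using multiplicativity of $\ch$, whereas you start from $(\Ch([E]),\Ch([F]))_H$, collect the square roots into $\Todd(X)=\widehat{A}(X)$, and then invoke Atiyah--Singer at the end. The only addition you make --- the explicit observation that $\Todd(X)=\widehat{A}(X)$ because $X$ is $\spin$, and that even-degree classes commute under the cup product --- is correct and left implicit in the paper.
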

\begin{proof}
The proof is a straightforward application of the Atiyah-Singer index theorem \cite{LM}:
\begin{align*} ([E],[F])_K &= \ind(\Dirac_{E\otimes F}) \stackrel{\text{A-S}}{=} \ip{\ch(E\otimes F)\smile \Todd(X), [X]} \\ &\stackrel{\eqref{Eq: Chern character, multiplicativity}}{=} \ip{\Big(\ch(E)\smile \sqrt{\Todd(X)}\Big)\smile \Big(\ch(F)\smile\sqrt{\Todd(X)}\Big),[X]} \\ &= \Big(\Ch([E]),\Ch([F])\Big)_H \,. \end{align*}
The bilinear extension to more general $K$-theory classes is trivial.
\end{proof}

In conclusion, merely the presence of the Chern character in the coupling of D-branes to RR-fields suggests that $K$-theory might somehow be related to RR-charge. This hypothesis is strongly supported by the Minasian-Moore formula, which expresses the coupling in terms of a spacetime $K$-theory class constructed from the Chan-Paton bundle. These arguments were developed further in \cite{Wit1}, along with an explanation of the relation to an earlier observation by Sen \cite{Sen} that a bound state of $n$ coincident D$p$-branes and $n$ D$\overline{p}$-branes carrying $U(n)$ Chan-Paton bundles $E$ and $F$, respectively, contains a \emph{tachyonic mode} in its open string spectrum. The appearance of a tachyon indicates that the configuration has not yet reached a stable minimum of potential energy and is thus unstable. Unfortunately, the dynamical process of the tachyon field \emph{rolling} to a stable minimum can not be described rigorously without complete quantum theory, which we do not have. However, given a particular tachyon field it is possible to figure out what the final state of the tachyon field is after the rolling process has finished (recall section \ref{Section: Overview, From tachyon condensation to K-theory}). The tachyon field is a section of the bundle $E\otimes F^*$. When $E$ and $F$ are not isomorphic, this bundle is nontrivial, causing the tachyon field to run into a topological obstruction when trying to reach its vacuum expectation value globally. Instead, it collapses into a topologically stable solitonic ''vortex'' configuration, with nonzero energy density and RR-charge smeared over some subspace of $\Sigma$. Remarkably, the remaining energy density and RR-charge correspond precisely to that of a lower dimensional D-brane. This phenomenon is reflected by the Minasian-Moore formula as
\begin{align*} Q_\Q(\Sigma,[E]-[F]) &= \phi_*((\ch(E)-\ch(F))\smile \Todd(\Sigma))\smile\frac{1}{\sqrt{\Todd(X)}} \\ &= \phi_*((\underbrace{\rk(E)-\rk(F)}_{=0} + c_1(E)-c_1(F) + \ldots)\smile \Todd(\Sigma))\smile\frac{1}{\sqrt{\Todd(X)}} \\ &= \underbrace{\phi_*(c_1(E)-c_1(F))}_{\in H_c^{11-p}(X)} + \ldots \,, \end{align*}
where the ellipsis denotes cohomology classes of higher degree. Thus, the highest RR-potential such a configuration can couple to is $C_{p-1}$. The class $\phi_*(c_1(E)-c_1(F))$ obviously represents the image in spacetime homology of the Poincar\'e dual of the fundamental class of the residual D$(p-2)$-brane in $\Sigma$.

The $K$-theoretic classification extends also to Type IIA and Type I D-branes: Type IIA D-branes are classified by $K^1(X)$ and Type I D-branes by \emph{real $K$-theory}. We shall not discuss these, but refer the interested reader to \cite{Wit1,Hor,Val,OS}.

There is still one big problem left, namely, everything above relied heavily on the two assumptions that the Chan-Paton bundle is the projectivization of a $U(n)$ bundle and that the $K$-theoretic Gysin map exists, which required $\phi:\Sigma\rightarrow X$ to be $K$-oriented. Since $X$ is $\spin$ (and hence also $\spinc$), $K$-orientability of $\phi$ is equivalent to $\Sigma$ being $\spinc$. But on the other hand, there does not seem to be any obvious physical reason why either one of these assumptions should hold. Before trying to solve these problems, we take a look at an alternative, perhaps more natural, description of D-branes in terms of \emph{$K$-homology}.

\section{Geometric $K$-homology}
Recall that reduced $K$-theory could be homotopically defined for a compact pointed manifold $X^\infty$, as
\[ \widetilde{K}^n(X^\infty) := [X^\infty,\mathbb{K}_n] \,, \]
where the spectrum $\mathbb{K}_n$ is
\[ \mathbb{K}_n \cong \begin{cases} BU\times \Z & \text{for $n$ even,} \\ U := \varinjlim_k U(k) & \text{for $n$ odd.} \end{cases} \]
The associated reduced homology theory, call it \emph{(reduced) spectral $K$-homology}, is given by \cite{Whi}
\[ \widetilde{K}_n(X^\infty) := \varinjlim_k \pi_{n+k}(X^\infty\wedge \mathbb{K}_k) \,. \]
More generally, for any (locally compact) manifold $X$, we set
\[ K_n(X) := \widetilde{K}_n(X^\infty) \,. \]
There is then a Poincar\'e duality isomorphism between $K$-theory and spectral $K$-homology \cite{Whi}, whenever the manifold is $K$-oriented. A compact $K$-oriented manifold has a (global) fundamental $K$-homology class, which establishes the Poincar\'e duality by a homotopically defined cap product with the $K$-theory classes. The homotopic definitions are not very intuitive, however, and it would be convenient to have a geometric definition for $K$-homology\footnote{The geometric definition (in terms of vector bundles) for $K$-theory is easy to deduce from the spectrum. For $K$-homology the task seems to be much more difficult.}. It would then be interesting to see how the Poincar\'e dual of the quantized RR-charge of a D-brane configuration, which is a spacetime $K$-theory class, is represented geometrically in $K$-homology. The geometric definition for $K$-homology was initially introduced by Baum and Douglas in \cite{BD}. A more general approach to geometric representations for homology theories was developed in \cite{Jak1,Jak2}.

\emph{Geometric $K$-homology cycles}, or \emph{$K$-cycles} for short, on a compact manifold $X$ are represented by triples $(\Sigma,x,\phi)$, where $\Sigma$ is a closed $\spinc$ manifold, $x\in K^0(X)$ and $\phi:\Sigma\rightarrow X$ is a continuous map. Two such triples $(\Sigma_1,x_1,\phi_1)$ and $(\Sigma_1,x_2,\phi_2)$ are isomorphic if there exists a diffeomorphism $f:\Sigma_1\rightarrow \Sigma_2$ such that the $\spinc$ structures are compatible under $f^*$, $x_1 = f^*x_2$, and the diagram
\[ \xymatrix{ \Sigma_1 \ar[r]^f \ar[rd]_{\phi_1} & \Sigma_2 \ar[d]^{\phi_2} \\ & X } \]
commutes. Consider then the free Abelian group generated by the isomorphism classes, quotiented by the subgroup of classes of the form
\begin{equation}\label{Eq: Addition of geometric K-cycles} (\Sigma_1\amalg \Sigma_1,x,\phi) - (\Sigma_1,x|_{\Sigma_1},\phi|_{\Sigma_1}) - (\Sigma_2,x|_{\Sigma_2},\phi|_{\Sigma_2}) \,. \end{equation}
The resulting Abelian group is the group of $K$-cycles, with addition given by the disjoint union. Henceforth, we shall make no distinction between a triple $(\Sigma,x,\phi)$ and the $K$-cycle it represents. The group of $K$-cycles is naturally $\Z_2$-graded, given by the parity of the dimension of $\Sigma$. The sets of $K$-cycles with $\Sigma$ of even and odd dimension are denoted by $\Gamma^0(X)$ and $\Gamma^1(X)$, respectively. 

To obtain $K$-homology, the group of $K$-cycles needs to be quotiented by three equivalence relations.
\begin{description}
\item{\textbf{$\spinc$ bordism:}} Let $(\Sigma_1,x_1,\phi_1)$ and $(\Sigma_2,x_2,\phi_2)$ be two $K$-cycles on $X$. They are said to be \emph{bordant},
\[ (\Sigma_1,x_1,\phi_1) \sim_b (\Sigma_2,x_2,\phi_2)\,, \]
if there exists a ''$K$-cycle'' $(\Sigma,x,\phi)$, where $W$ is now a compact $\spinc$ manifold \emph{with boundary} $\partial \Sigma$, interpolating between the two $K$-cycles in the sense that
\[ (\partial \Sigma,x|_{\partial \Sigma},\phi|_{\partial \Sigma}) \cong (\Sigma_1,x_1,\phi_1)+(-\Sigma_2,x_2,\phi_2)\,. \]
Here $-\Sigma_2$ denotes $\Sigma_2$ with inverse $\spinc$ structure\footnote{Recall that a $\spinc$ structure on $\Sigma$ is described by a class $c_1(P_{U(1)})\in H^2(\Sigma,\Z)$, whose mod $2$ reduction is $w_2(\Sigma)$. The inverse $\spinc$ structure is the one corresponding to the class $c_1(P_{U(1)}^*) = -c_1(P_{U(1)})$.}.
\item{\textbf{Direct sum:}} For $K$-cycles $(\Sigma,x_k,\phi)$, $k=1,2$, we set
\[ (\Sigma,x_1+x_2,\phi) \sim_s (\Sigma,x_1,\phi)+ (\Sigma,x_2,\phi)\,. \]
\item{\textbf{Vector bundle modification:}} Let $(\Sigma,x,\phi)$ be a $K$-cycle and $F \rightarrow \Sigma$ a real $\spinc$ vector bundle of even rank\footnote{Remark, that $\Sigma$ is not necessarily connected and $F$ can very well have different (but still even) rank on different components.}. If $\mathbbm{1}\rightarrow \Sigma$ is the trivial real line bundle, the ($\spinc$) vector bundle $\pi:F\oplus \mathbbm{1}\rightarrow \Sigma$ has odd rank fibres. Choosing a Riemannian metric on $F\oplus \mathbbm{1}$ yields a split exact sequence
\[ \xymatrix{ 0 \ar[r] & \pi^{-1}(F\oplus \mathbbm{1}) \ar[r] & T(F\oplus \mathbbm{1}) \ar@<+0.5ex>[r] & \ar@<+0.5ex>[l] \pi^{-1}(T\Sigma) \ar[r] & 0 } \]
of vector bundles over $F\oplus \mathbbm{1}$ and the direct sum decomposition
\[ T(F\oplus \mathbbm{1}) \cong \pi^{-1}(F\oplus \mathbbm{1})\oplus \pi^{-1}(T\Sigma) \,. \]
Finally, \eqref{Eq: S-W summation formula} implies (using $w_1(F\oplus \mathbbm{1})=w_1(T\Sigma)=0$) that
\[ w_2(T(F\oplus \mathbbm{1})) = w_2(\pi^{-1}(E\oplus \mathbbm{1})) + w_2(\pi^{-1}(T\Sigma)) \,. \]
Since both $w_2(\pi^{-1}(E\oplus \mathbbm{1}))$ and $w_2(\pi^{-1}(T\Sigma))$ are mod $2$ reductions of integral cohomology classes, so is their sum. Thus, $T(F\oplus \mathbbm{1})$ is a $\spinc$ vector bundle with a canonical choice for the $\spinc$ structure. Choosing a smooth metric on $F\oplus \mathbbm{1}$ enables us to define the associated \emph{unit sphere bundle},
\[ \mathbb{S}(F\oplus \mathbbm{1}) \rightarrow \Sigma \,, \]
whose fibres are the (even-dimensional) unit spheres of the fibres of $F\oplus \mathbbm{1}$. The $\spinc$ structure of $T(F\oplus \mathbbm{1})$ restricts to a $\spinc$ structure for $T\mathbb{S}(F\oplus \mathbbm{1})$, turning $\mathbb{S}(F\oplus \mathbbm{1})$ into a $\spinc$ manifold. The bundle $\mathbb{S}(F\oplus \mathbbm{1})\rightarrow \Sigma$ admits a nowhere vanishing constant section $\sigma$, sending each point of $\Sigma$ to $0\oplus 1$. We set
\[ (\Sigma,x,\phi) \sim_v (\mathbb{S}(F\oplus,\mathbbm{1}),\sigma_!(x),\phi\circ \pi) \,. \]
\end{description}
This concludes the construction of the necessary equivalence relations. All three are obviously compatible with the grading $\Gamma^\bullet(X)$. \emph{Geometric $K$-homology} is now defined as follows\footnote{Our definition is the one given in \cite{Jak1}, but it is isomorphic to that of \cite{BD,BHS,RS}. The proof is not difficult \cite{Val}.}.
\begin{definition}[\cite{Jak1,BD,BHS,RS}]
The geometric $K$-homology groups of $X$ are the Abelian groups
\[ K^g_\bullet(X)=\Gamma^\bullet(X)/\sim \,, \]
where $\sim$ is the equivalence relation generated by $\sim_b$, $\sim_s$ and $\sim_v$. The equivalence class of a $K$-cycle $(\Sigma,x,\phi)$ is denoted $[\Sigma,x,\phi]$.
\end{definition}
The inverse of a class $[\Sigma,E,\phi]$ is $[-\Sigma,E,\phi]$, where $-\Sigma$ denotes $\Sigma$ with inverse $\spinc$ structure. The manifold $\Sigma\amalg -\Sigma$ is the entire boundary of a higher-dimensional manifold and, thus, bordant to the empty manifold $\varnothing$. The zero element is the class $[\varnothing,0,0]$.

A continuous map $f:X\rightarrow Y$ induces a homomorphism of Abelian groups
\[ f_*:K^g_\bullet(X)\rightarrow K^g_\bullet(Y) \]
by
\[ f_*[\Sigma,x,\phi] := [\Sigma,x,f\circ \phi]\,. \]
Homotopic maps induce the same map in $K^g_\bullet$. This follows easily from bordism, since vector bundles, and hence $K$-theory classes of $\Sigma$ extend to $K$-theory classes of $\Sigma\times [0,1]$. Hence, $K^g_\bullet$ are covariant functors from the homotopy category of finite CW-complexes, or some other suitable category containing compact manifolds, to the category of Abelian groups. 

It is not at all easy to verify that $K^g_\bullet$ actually satisfy the axioms of a $2$-periodic (generalized) homology theory. It is, however, an immediate consequence of the following result.
\begin{theorem}[\cite{Jak1,Val}, see also \cite{BHS,Kas3}]
For any smooth manifold $X$ of the homotopy type of a finite CW-complex, spectral $K$-homology is isomorphic to geometric $K$-homology:
\[ K^g_\bullet(X) \cong K_\bullet(X) \,. \]
In particular, this holds for compact manifolds.
\end{theorem}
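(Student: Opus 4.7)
The plan is to construct a natural transformation $\mu : K^g_\bullet(X) \to K_\bullet(X)$ of $\Z_2$-graded generalized homology theories and then invoke the comparison theorem for homology theories on the homotopy category of finite CW-complexes. On cycles $\mu$ is defined by
\[ \mu[\Sigma,x,\phi] := \phi_*\bigl(\Pd^K_\Sigma(x)\bigr) \in K_\bullet(X), \]
where $\Pd^K_\Sigma$ is the $K$-theoretic Poincaré duality isomorphism for the closed $\spinc$ manifold $\Sigma$, and $\phi_*$ is the covariant push-forward in spectral $K$-homology. Equivalently, one may describe $\mu$ as sending $(\Sigma,x,\phi)$ to the image of $x$ under the $K$-homological Gysin map associated with $\phi$, applied to the fundamental $K$-homology class $[\Sigma]_K$ coming from the $\spinc$ structure.

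The first real task is to verify that $\mu$ is well-defined, i.e.\ respects the three generating relations. Isomorphism invariance is immediate from naturality of the push-forward. For the direct-sum relation $\sim_s$ I would use $\Z$-linearity of cap product with $[\Sigma]_K$. For $\spinc$ bordism $\sim_b$, I would use the standard fact that if $\Sigma$ bounds a compact $\spinc$ manifold with interpolating data $(\Sigma,x,\phi)$, then $\phi_*\Pd^K_{\partial \Sigma}(x|_{\partial \Sigma})$ vanishes in $K_\bullet(X)$; this is the $K$-theoretic analogue of the fact that the fundamental class of a boundary is null-bordant, and it follows by applying the long exact sequence of the pair $(\Sigma,\partial \Sigma)$ in spectral $K$-homology. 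The vector bundle modification $\sim_v$ is the subtlest relation: here one must show that for a rank-$2k$ real $\spinc$ bundle $F\to \Sigma$ the composition
\[ K^0(\Sigma) \xrightarrow{\sigma_!} K^0(\mathbb{S}(F\oplus\mathbbm{1})) \xrightarrow{\Pd^K_{\mathbb{S}(F\oplus\mathbbm{1})}} K_\bullet(\mathbb{S}(F\oplus\mathbbm{1})) \xrightarrow{\pi_*} K_\bullet(\Sigma) \]
coincides with $\Pd^K_\Sigma$. This reduces, via the Thom isomorphism in $K$-theory and the splitting $\mathbb{S}(F\oplus\mathbbm{1}) = D(F)_+\cup_{\mathbb{S}(F)} D(F)_-$ into two disk bundles glued along $\mathbb{S}(F)$, to $K$-theoretic Bott periodicity; this is the step I expect to be the main technical obstacle, and where the $\spinc$ hypothesis on $F$ is essential.

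Once $\mu$ is well-defined, naturality in $X$ is formal: a continuous map $f:X\to Y$ intertwines composition with $f\circ\phi$ and the spectral push-forward $f_*$. The next step is to upgrade $\mu$ to a morphism of $\Z_2$-graded homology theories by producing Mayer--Vietoris and long-exact-pair sequences on the geometric side compatible with their spectral counterparts. Mayer--Vietoris for open covers $X = U\cup V$ is handled by restricting the image $\phi(\Sigma)$ to a neighborhood that can be split using bordism (after possibly modifying $(\Sigma,x,\phi)$ by a bordism to achieve transversality, a construction carried out in \cite{BD,Jak1}); the connecting homomorphism is realized geometrically by the boundary operation $[\Sigma,x,\phi] \mapsto [\partial\Sigma,x|_{\partial\Sigma},\phi|_{\partial\Sigma}]$. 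Exactness at each term of the resulting sequence reduces to bordism arguments together with the vector bundle modification.

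Having established that both $K^g_\bullet$ and $K_\bullet$ are $\Z_2$-graded homology theories on the homotopy category of finite CW-complexes and that $\mu$ is a natural transformation of such theories, it remains to check the statement on a single point. There $K^g_0(\mathrm{pt})$ is generated by classes $[\Sigma,1,\mathrm{const}]$ with $\Sigma$ a closed even-dimensional $\spinc$ manifold, $K^g_1(\mathrm{pt})$ analogously for odd dimension, and $\mu$ sends these to the image of $[\Sigma]_K$ under the collapse map. By the Atiyah--Singer index theorem the even part maps isomorphically onto $K_0(\mathrm{pt})\cong\Z$ via the $\spinc$-index, and the odd part is killed (every closed odd-dimensional $\spinc$ manifold is null-bordant in $\Omega_\bullet^{\spinc}$ after applying vector bundle modification, so $K^g_1(\mathrm{pt})=0=K_1(\mathrm{pt})$). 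A standard five-lemma induction over the cells of $X$, using Mayer--Vietoris for the attaching of each cell, then promotes the isomorphism from a point to all finite CW-complexes, completing the proof.
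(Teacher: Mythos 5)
The paper itself does not supply a proof of this theorem but defers to the cited references (Jakob, Valentino, Baum--Higson--Schick), and your overall strategy---build a natural transformation $\mu$ via cap product with the $\spinc$ fundamental class, establish that $K^g_\bullet$ satisfies the axioms of a $\Z_2$-graded homology theory, and compare coefficients at a point---is exactly the route taken in Jakob's bordism-theoretic framework as developed in Valentino's thesis. Your verification of well-definedness is on the right track; note that the vector bundle modification step is actually formal once Gysin maps are defined as Poincar\'e-duality conjugations of push-forwards, since $\pi\circ\sigma=\id_\Sigma$ gives $\pi_*\circ\Pd^K_{\mathbb{S}(F\oplus\mathbbm{1})}\circ\sigma_!=(\pi\circ\sigma)_*\circ\Pd^K_\Sigma=\Pd^K_\Sigma$ immediately, with the $\spinc$ hypothesis on $F$ only needed so that $\sigma_!$ exists at all.

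There is, however, a genuine gap in your computation of $K^g_1(\mathrm{pt})$. You claim that ``every closed odd-dimensional $\spinc$ manifold is null-bordant in $\Omega^{\spinc}_\bullet$ after applying vector bundle modification,'' but this reasoning fails on two counts. First, vector bundle modification replaces $\Sigma$ by $\mathbb{S}(F\oplus\mathbbm{1})$, which has dimension $\dim\Sigma+\rk F$ with $\rk F$ even, so VBM preserves the parity of the dimension and cannot convert an odd-dimensional cycle into one that trivially bounds. Second, even granting $\Omega^{\spinc}_{\mathrm{odd}}(\mathrm{pt})=0$, a bordism of the underlying manifold $\Sigma$ is not enough: the K-theory class $x\in K^0(\Sigma)$ must extend over the bounding $\spinc$ manifold $W$, and there is no a priori reason it should. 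The fact that $K^g_1(\mathrm{pt})=0$ (and, symmetrically, that the index map $K^g_0(\mathrm{pt})\to\Z$ is \emph{injective}) is precisely where the VBM relation earns its keep: it implements Bott periodicity and cuts the a priori much larger $\spinc$-bordism-with-bundles coefficient ring down to $KU_\bullet(\mathrm{pt})$. This requires a concrete argument---roughly, reducing any cycle by VBM to one living on a sphere, then appealing to the $K$-theory of spheres---and without it the comparison theorem cannot be invoked. You should also be aware that the exactness axiom for $K^g_\bullet$ (the transversality-plus-cutting argument you sketch and correctly attribute to \cite{BD,Jak1}) is the other substantive technical pillar; pointing to the references is fine in a proposal, but it is not a corollary of naturality.
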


For a compact pointed manifold $X^\infty$ the collapsing map $X^\infty \rightarrow \{\infty\}$ induces a map $K_\bullet^g(X^\infty)\rightarrow K^\bullet(\{\infty\})$. 
\begin{definition}
The \emph{reduced geometric $K$-homology groups} are
\[ \widetilde{K}_\bullet^g(X^\infty) := \ker(K_\bullet^g(X^\infty)\rightarrow K^\bullet(\{\infty\})) \,. \]
\end{definition}
Like the $K$-theory groups, the $K$-homology groups of $X^\infty$ decompose as
\[ K^g_\bullet(X^\infty) \cong \widetilde{K}^g_\bullet(X^\infty) \oplus K^g_\bullet(\{\infty\}) \,. \]
For the one-point space $\{\infty\}$ we have (for example, by Poincar\'e duality and \eqref{Eq: K-theory of a point})
\[ K_n^g(\{\infty\}) \cong \begin{cases} \Z & \text{for $n=0$\,,} \\ 0 & \text{for $n=1$\,.} \end{cases} \]
Thus (compare to \eqref{Eq: K-theory vs reduced K-theory, 2}), we have
\begin{equation*}\label{Eq: K-homology vs reduced K-homology} K^g_n(X^\infty) \cong \begin{cases} \widetilde{K}^g_n(X^\infty)\oplus \Z &\text{for $n=0$\,,} \\ \widetilde{K}^g_n(X^\infty) &\text{for $n=1$\,.} \end{cases} \end{equation*}
\begin{definition}
For any (locally compact) manifold $X$, define
\[ K_\bullet^g(X) := \widetilde{K}_\bullet^g(X^\infty) \,. \]
\end{definition}

The cap product, defined homotopically in spectral $K$-homology \cite{Whi}, translates in the geometric picture to a map
\[ \frown:K^n(X)\otimes K^g_\bullet(X) \rightarrow K^g_{\bullet-n}(X) \]
with a surprisingly simple form \cite{Jak1,RS}.
\begin{definition}\label{Definition: Geometric K-homology cap product}
For $y\in K^0(X)$ and $[\Sigma,x,\phi]\in K^g_\bullet(X)$, set
\[ y\frown [\Sigma,x,\phi] := [\Sigma,\phi^*y \smile x,\phi] \in K^g_\bullet(X) \,, \]
where $\smile$ is the cup product \eqref{Eq: K-theory cup product} in $K$-theory.

For $z\in K^1(X)$, the cap product $z\frown [\Sigma,x,\phi]$ is given as follows. First, take the cup product $\phi^*z\smile x \in K^1(\Sigma)$. The product manifold $\Sigma\times S^1$ is $\spinc$, which allows us to push $\phi^*z \smile x$ into $K^0(\Sigma\times S^1)$ using the Gysin map $\sigma_!$ induced by the map $\sigma(p) = (p,1)$. We set
\[ z\frown [\Sigma,x,\phi] := [\Sigma\times S^1,\sigma_!(\phi^*z\smile x),\phi\circ \pr_\Sigma] \in K^g_{\bullet+1}(X) \,. \]
\end{definition}

For a closed $K$-oriented $n$-dimensional manifold $X$ the Poincar\'e duality isomorphism \eqref{Eq: K-theory Poincare duality} is given by taking the cap product with the $K$-homology fundamental class. In the geometric picture this class is particularly easy to describe \cite{Jak2}. It is
\[ [X] := [X,[\mathbbm{1}],\id_X] \,. \]
Poincar\'e duality is then given explicitly by
\begin{equation}\label{Eq: K-theory K-homology Poincare duality, explicit} K^\bullet(X) \ni y \stackrel{\Pd^K_X}{\longmapsto} y\frown [X] = \begin{cases} [X,y,\id_X] \in K^g_n(X) & \text{for $\bullet=0$\,,} \\ [X\times S^1,\sigma_!(y),\pr_X] \in K^g_{n-1}(X) & \text{for $\bullet=1$\,.} \end{cases} \end{equation}
For noncompact boundaryless manifolds a $K$-orientation determines only local fundamental classes, yielding local Poincar\'e duality isomorphisms. These can then be patched together into a global isomorphism\footnote{Compare to the similar result in ordinary cohomology \cite{May}.}.

We already mentioned earlier that there exists a homological Chern character, a natural transformation from spectral $K$-homology to periodized rational homology. It has a simple description in the geometric picture. For a class $[\Sigma,x,\phi]\in K^g_\bullet(X)$ it is given by \cite{Jak1,RS}
\begin{equation}\label{Eq: Homological Chern character} \ch([\Sigma,x,\phi]) := \Pd_X(\ch(\phi_!(x))\smile \Todd(X)) \in H_\bullet(X,\Q) \,. \end{equation}
Chern characters in $K$-theory and $K$-homology are compatible with cap products in such a way that the diagram
\[ \xymatrix{K^\bullet(X)\otimes K^g_n(X) \ar[rr]^-{\frown} \ar[d]_{\ch\otimes \ch} && K^g_{n-\bullet}(X) \ar[d]^{\ch} \\ H^\bullet_c(X,\Q)\otimes H_n(X,\Q) \ar[rr]^-{\frown} && H_{n-\bullet}(X,\Q) } \]
commutes. The ordinary cohomology and homology rings are considered to be $\Z_2$-periodized.

\section{D-branes and $K$-homology}
It was first suggested by Periwal in \cite{Per} that D-branes may be more ''naturally'' described by $K$-homology than by $K$-theory. In this section we explain why this is the case. It turns out that geometric $K$-homology provides an intuitive ''worldvolume'' description of D-branes, as opposed to the somewhat abstruse $K$-theoretic ''RR-charge'' description.

We claim that a $p-\overline{p}$ D-brane configuration\footnote{We are still working under the assumptions that the Chan-Paton bundles are projectivizations of $U(n)$ bundles, that D-brane worldvolumes are $\spinc$ and that there is no background $B$-field.} corresponding to a $K$-theory class $x\in K^0(\Sigma)$ (wrapping $\phi:\Sigma\rightarrow X$) is described by the geometric $K$-homology class $[\Sigma,x,\phi]\in K^g_0(X)$\footnote{Recall that since we are dealing with Type IIB theory, $\Sigma$ is even-dimensional.}. Or more precisely, the $K$-homology class corresponds to an equivalence class of physically equivalent D-brane configurations, which can be obtained from each other by a smooth physical processes. For example, nucleating a $p-\overline{p}$ pair with isomorphic Chan-Paton bundles should not change the physical equivalence class. For a noncompact spacetime manifold we defined $K$-homology using relative $K$-homology of the one-point compactification. This is, again, the requirement that physical D-brane configurations should not be ''infinitely nontrivial'', which ensures that no infinite charges or energies are taken into account\footnote{We are measuring the D-brane charge with respect to that of the vacuum.}.

Consider a D-brane wrapping $\phi:\Sigma\rightarrow X$ with $U(n)$ Chan-Paton bundle $E$. The corresponding $K$-homology class is then $[\Sigma,[E],\phi]$. To motivate this, we need to find physical interpretations for the various equivalence relations in the definition of $K$-homology \cite{RS,Sza1}. Addition of $K$-homology classes (as in \eqref{Eq: Addition of geometric K-cycles}) means simply that several disconnected D-branes can be treated by a single $K$-homology class.
\begin{description}
\item{\textbf{Direct sum:}} If two D-branes with $U(n)$ and $U(m)$ Chan-Paton bundles are set to be coincident, we would expect to see the Chan-Paton bundles merge into a $U(n+m)$ bundle. This is precisely what the direct sum equivalence relation does. For D-branes $[\Sigma,[E],\phi]$ and $[\Sigma,[F],\phi]$ it implies
\[ [\Sigma,[E],\phi]+[\Sigma,[F],\phi] = [\Sigma,[E\oplus F],\phi] \,. \]
\item{\textbf{$\spinc$ bordism:}} The $\spinc$ bordism relation is also very easy to interpret physically. It states that a smooth deformation of a D-brane and its Chan-Paton bundle does not change the $K$-homology class. For example, if a D-brane can be smoothly deformed to the vacuum, its $K$-homology class, and thus also its RR-charge, vanishes.
\item{\textbf{Vector bundle modification:}} This one is more complicated to interpret. Physically, it represents the \emph{Myers' dielectric effect} \cite{Mye,Sza1,RS}. The following simple example is given in \cite{RS}. Consider a D$(-1)$-brane\footnote{A D$(-1)$-brane, also known as a \emph{D-instanton}, is simply a $0$-dimensional pointlike D-brane localized in spacetime.} $[\{\infty\},[\mathbbm{1}],\iota]\in K^g_0(X)$. Choose $F$ in the definition of vector bundle modification to be the trivial bundle $\{\infty\}\times \R^{2n}$, with $n\ge 1$. The unit sphere bundle $\mathbb{S}(F\oplus \mathbbm{1})$ is simply $\{\infty\}\times S^{2n}\cong S^{2n}$. Therefore, by vector bundle modification,
\[ [\{\infty\},[\mathbbm{1}],\iota] = [S^{2n},\sigma_!([\mathbbm{1}]),\epsilon] \,, \]
where $\epsilon$ collapses $S^{2n}$ into the point in $X$ and $\sigma$ is the constant section given in the definition of vector bundle modification. This is the dielectric effect: the D$(-1)$-brane ''blows up'' into a collection of spherical D$(2n-1)$-branes.
\end{description}

The $K$-theoretic and $K$-homological descriptions of D-branes should obviously be related to each other by Poincar\'e duality. Let us consider, for simplicity the case of a closed spacetime manifold $X$. Recall the explicit form of the Poincar\'e duality isomorphism from \eqref{Eq: K-theory K-homology Poincare duality, explicit}. If $\phi_!(x)\in K^0(X)$ is the $K$-theory class describing the D-brane configuration, then the corresponding $K$-homology class should be $\Pd^K_X(\phi_!(x)) = [X,\phi_!(x),\id_X]$. However, our physical interpretation associated to the D-brane the class $[\Sigma,x,\phi]$. Luckily, these two are in fact equivalent \cite{Jak2}. This is easy to verify in the special case when $\phi:\Sigma\rightarrow X$ is an embedding and $X$ is connected. The normal bundle $N\Sigma$ of $\Sigma$ in $X$ has rank $n=10-\dim(\Sigma)$. Since $n$ is even\footnote{Remember that $\dim(\Sigma)$ is even in Type IIB theory.} and $\spinc$\footnote{This follows immediately from $TX\cong T\Sigma\oplus N\Sigma$, \eqref{Eq: S-W summation formula} and that $X$ and $\Sigma$ are both $\spinc$.}, we can choose $F=N\Sigma$ in the definition of vector bundle modification. The total space of the unit sphere bundle $\mathbb{S}(F\oplus \mathbbm{1})$ is $10$-dimensional and embeds into $X$. Such an embedding is actually a diffeomorphism \cite{RS}. Using the notation in the definition of vector bundle modification, we immediately get (by vector bundle modification and bordism)
\[ [\Sigma,x,\phi] = [\mathbb{S}(F\oplus \mathbbm{1}),\sigma_!(x),\phi\circ \pi] = [X,\phi_!(x),\id_X] \,. \]

The Minasian-Moore formula \eqref{Eq: Minasian-Moore formula} can also be expressed purely in terms of homological quantities by employing the homological Chern character \eqref{Eq: Homological Chern character}:
\begin{align*} Q_\Q(\Sigma,x) &= \ch(\phi_!(x))\smile \sqrt{\Todd(X)} = \ch(\phi_!(x))\smile \Todd(X) \smile \frac{1}{\sqrt{\Todd(X)}} \\ &= \Pd_X^{-1}\circ \ch([\Sigma,x,\phi])\smile \frac{1}{\sqrt{\Todd(X)}} \,. \end{align*}
\begin{definition}[Minasian-Moore]\label{Definition: Homological Minasian-Moore formula}
The homological classical RR-charge of a D-brane configuration $[\Sigma,x,\phi]\in K^g_0(X)$ is
\begin{align*} \widehat{Q}_\Q(\Sigma,x) :&= \Pd(Q_\Q(\Sigma,x)) \\ &= \Pd_X\bigg[\Pd_X^{-1}\circ \ch([\Sigma,x,\phi])\smile \frac{1}{\sqrt{\Todd(X)}}\bigg] \in H_\text{even}(X,\Q) \,. \end{align*}
\end{definition}

\chapter{The Freed-Witten Anomaly}\label{Chapter: Geometric Aspects Of The B-field}
The goal of this chapter is to drop the assumptions used in the previous chapter, that
\begin{enumerate}
\item Chan-Paton bundles lift to $U(n)$ bundles,
\item D-brane worldvolumes are $\spinc$,
\item there is no background $B$-field.
\end{enumerate}
Obviously, the first two assumptions are crucial for the $K$-theoretic and $K$-homological classification schemes discussed above. If a) is not satisfied, we can not form $K$-theory classes from the Chan-Paton bundles. If a) is satisfied but b) is not, we would not be able to describe the D-branes in terms of spacetime quantities, since Gysin maps would fail to exist. The last condition is more mysterious and is required by \emph{Freed-Witten anomaly cancellation}.

Suppose that we have a nonzero locally defined background $B$-field. Let us first take a look at closed strings. The worldsheet of a closed string is a closed surface $M$, embedded into spacetime by $\xi:M\rightarrow X$. The $B$-field couples to the worldsheet as follows \cite{Pol1}:
\[ \exp(iS_\text{closed}) \supset \exp\left(\int_M \xi^*B\right) \,. \]
The exponential of the integral is called the \emph{holonomy} of $B$ over $M$. We shall employ the shorthand notation
\begin{equation}\label{Eq: B-field holonomy, 1} \hol(M,B) := \exp\left(\int_M \xi^*B\right) \end{equation}
for holonomy. Since any two local representatives of the $B$-field differ by a de Rham coboundary $d\Lambda$, the integrals of $\xi^*B$ over $M$ are invariant under gauge transformations $B\mapsto B + d\Lambda$:
\[ \exp\left(\int_M \xi^*(B+d\Lambda)\right) = \exp\left(\int_M \xi^*B\right) \cdot \exp(\left(\int_M d\xi^*\Lambda\right) \stackrel{\text{Stokes}}{=} \exp\left(\int_M \xi^*B\right) \,. \]
Still, $\hol(M,B)$ is not automatically a well-defined function on the space of maps, $\Map(M,X)$. We demand that
\[ \int_M \xi^*B \]
is well-defined on $\Map(M,X)$ up to an integral multiple of $2\pi i$, which leads to the usual Dirac quantization condition that the de Rham class $[H]_\text{dR}$, evaluated on any closed homology cycle, must be integral \cite{Alv}. As was briefly explained in section \ref{Section: RR-charge and $K$-theory}, we postulate that the Dirac quantized $H$-flux should, more correctly, take values in $H^3(X,\Z)$ \cite{Fre1,Fre2}. The distinction is that a class $[H]\in H^3(X,\Z)$ may contain torsion parts, whereas its image in $H^3_\text{dR}(X)$ is necessary nontorsion (for the obvious reason that $\Z_n \otimes \R = 0$)\footnote{Recall also from section \ref{Section: RR-charge and $K$-theory} that the correct mathematical formalism to describe the Dirac quantized $B$-field should be a differential cohomology theory, in this case one associated to $H^\bullet(-,\Z)$ \cite{Fre1,Fre2}. It happens to be isomorphic to something called the (diagonal) \emph{Deligne cohomology} \cite{HS}, which we introduce soon.}.

We are actually more interested in open strings than closed strings. The worldsheet of an open string is a compact oriented surface, again denoted by $M$, but this time with boundary $\partial M$. Again, we denote its embedding into spacetime by $\xi:M\rightarrow X$. The $B$-field couples again to $M$, just like in \eqref{Eq: B-field holonomy, 1}, but now the integral is no longer gauge independent:
\[ \int_M \xi^*(B+d\Lambda)=\int_M \xi^*B + \oint_{\partial M} \xi^*\Lambda \neq \int_M \xi^*B \,. \]
This is a major problem, because it means that the path integral is ill-defined.

Recall from \emph{Introduction And Overview}, that D-branes could be thought of as boundary conditions for open strings. This manifests in the open string path integral as a coupling of the boundary, $\partial M$ (corresponding to the open string endpoints), to the D-brane. More precisely, recall that the D-brane carries a Lie algebra valued locally defined $1$-form ''gauge'' field $A$. The boundary $\partial M$ couples to $A$ by \cite{Pol1}
\[ \Tr\hol(\partial M,A) := \Tr \exp\left(\oint_{\partial M} \xi^*(A)\right) \,. \]
Now, if $A$ is a gauge connection on a principal $U(n)$ bundle, $\Tr\hol(\partial M,A)$ is the well-defined parallel transport of $A$ around $\partial M$ \cite{Nak}. However, these are the kind of assumptions we want to get rid of, but it is unclear if the path integral is well-defined otherwise.

There is a third difficulty associated to the fermionic part of the worldsheet path integral, namely the \emph{Pfaffian} of the worldsheet Dirac operator \cite{FW},
\[ \pfaff({\Dirac}_\xi)=\exp\left(\int d\psi\,\bar{\psi}i\Dirac_\xi \psi\right)\,, \]
is ill-defined as a function on the space $\Map(M,X)$.

Concluding, the open string path integral contains the terms
\begin{equation}\label{Eq: Open string path integral, problematic} \exp(iS_\text{open}) \supset \pfaff(\Dirac_\xi) \cdot \hol(M,B) \cdot \Tr\hol(\partial M,A) \,, \end{equation}
each of which is possibly ill-defined. This is called the \emph{Freed-Witten anomaly} \cite{FW}. In the next sections, we shall explain how each of these terms can be given a rigorous definition and how the ill-defined parts precisely cancel each other, when a certain cohomological condition,
\[ [H]|_\Sigma = W_3(\Sigma) + \beta([\underline{\zeta}]) \,, \]
is satisfied \cite{FW,Kap,CJM}. Here $[H]|_\Sigma$ is the pullback of $[H]$ onto $\Sigma$ and $\beta([\underline{\zeta}])\in H^3(X,\Z)$ is the obstruction for lifting the $PU(n)$ Chan-Paton bundle into a $U(n)$ bundle. For example, if $[H]|_\Sigma=0$ and the Chan-Paton bundle is a $U(n)$ bundle, then the worldvolume $\Sigma$ must necessarily be $\spinc$, just like we assumed above. Because $W_3(\Sigma)$ is $2$-torsion and $\beta([\underline{\zeta}])$ $n$-torsion, $[H]|_\Sigma$ must also be pure torsion. This means that the classical $H$-flux $3$-form must vanish, when restricted onto $\Sigma$. The $K$-theoretic classification of D-branes in the presence of nontorsion $[H]|_\Sigma$ was presented as a puzzle in \cite{Wit1}. To cancel the anomaly, we would somehow have to take the torsion degree of $\beta([\underline{\zeta}])$ to infinity. This would correspond to taking the number of D-branes to infinity. It was finally explained in \cite{BM} that the correct way to take this limit is to allow the Chan-Paton bundles to be an infinite-dimensional projective bundle, a $PU(\Hilb):=U(\Hilb)/U(1)$ bundle, where $\Hilb$ denotes the usual infinite-dimensional separable Hilbert space and $U(1)$ the center of $U(\Hilb)$.

\section{Deligne cohomology}
We shall start with a brief review of the basics of \emph{sheaf theory}, \emph{sheaf cohomology} and \emph{sheaf hypercohomology}, which are used to define Deligne cohomology. We then explain, how the obscure sheaf hypercohomology groups can be computed relatively easily using \emph{\v Cech hypercohomology}. Our treatment is based on \cite{Bry}, but we tend to make more use of the language of category theory than \cite{Bry} does. We omit the proofs, since they would take too many pages to present and also, because the material in this section is not used much in the rest of the text.

Let $\mathbf{Set^o}(X)$ denote the category of open sets of $X$, with morphisms the inclusions, and $\mathbf{Ab}$ the category of Abelian groups.
\begin{definition}[\cite{Bry}]
A \emph{presheaf} $A$ of Abelian groups on $X$ is a contravariant functor\footnote{We are only interested in presheaves and sheaves of Abelian groups. However, in most of the following definitions, the category $\mathbf{Ab}$ can easily be replaced by any other category.}
\[ A:\mathbf{Set^o}(X) \rightarrow \mathbf{Ab} \,. \]
For an inclusion $V\subset U$ of open sets of $X$, we denote the homomorphism $A(V\subset U):A(U)\rightarrow A(V)$ by
\[ A(U)\ni \alpha_U \mapsto \alpha_U|_V \in A(V) \]
and call it the \emph{restriction map}. We also assume that $A(\emptyset)=0$.
\end{definition}
Let $\mathcal{V}_U = \{V_i\}$ be an open covering of $U\in \mathbf{Set^o}(X)$. A family of elements $\{\alpha_i\in A(V_i)\}$ is said to be \emph{compatible}, if $\alpha_i|_{V_i\cap V_j} = \alpha_j|_{V_i\cap V_j}$ for all $i,j$. A compatible family is said to satisfy the \emph{glueing condition}, if there exists a unique $\alpha_U \in U$, such that $\alpha_i = \alpha_U|_{V_i}$ \mbox{for all $i$}.
\begin{definition}
The presheaf $A$ is a \emph{sheaf}, if every compatible family of elements satisfies the glueing condition.
\end{definition}
If $A$ and $B$ are (pre)sheaves $\mathbf{Set^o}(X)\rightarrow \mathbf{Ab}$, a \emph{morphism of (pre)sheaves} is a natural transformation $\phi:A\rightarrow B$. In other words, it associates to every $U\in \mathbf{Set^o}(X)$ a homomorphism $\phi(U):A(B)\rightarrow B(U)$, such that the diagram
\[ \xymatrix{ A(U) \ar[d]_{|_V} \ar[r]^{\phi(U)} & B(U) \ar[d]^{|_V} \\ A(V) \ar[r]^{\phi(V)} & B(V) } \]
commutes. (Pre)sheaves (of Abelian groups) on $X$, together with their morphisms, form a category $\mathbf{(Pre)Sheaf}(X)$. For $Y\in \mathbf{Set^o}(X)$, there is a cofunctor
\[ |_Y:\mathbf{(Pre)Sheaf}(X) \rightarrow \mathbf{(Pre)Sheaf}(Y) \,, \]
the \emph{restriction functor}, which assigns to a (pre)sheaf $A\in \mathbf{(Pre)Sheaf}(X)$ the (pre)sheaf $A|_Y$, defined by $A|_Y(V) = A(V)$, for $V\in \mathbf{Set^o}(Y)$.

The \emph{stalk} of a (pre)sheaf $A$ at $x\in X$ is the direct limit
\[ A_x := \varinjlim_{U\ni x} A(U) \,. \]
For any open set $V \ni x$ of $X$, the direct limit provides a natural restriction map
\[ |_x : A(V) \rightarrow A_x \,. \]

Elements of $A(U)$ are called \emph{(local) sections} of $A$ over $U$.  Elements of $A(X)$ are the \emph{global sections} of $A$. The \emph{local section functor} is the bifunctor
\[ \Gamma(-,-):\mathbf{Set^o}(X) \times \mathbf{(Pre)Sheaf}(X) \rightarrow \mathbf{Ab} \,, \]
defined by
\[ \Gamma(U,A) := A(U) \,. \]
Restricting to $U=X$ yields the \emph{global section functor}, $\Gamma(X,-)$. A section $\alpha_U \in A(U)$ is \emph{continuous at $x\in U$}, if there exists an open set $V\subset U$, such that $x\in V$ and $\alpha_U|_V = \alpha_U|_x$. A section of $A(U)$ is \emph{continuous}, if it is continuous at every $x\in U$. The \emph{continuous section functor},
\[ \widetilde{\Gamma}(-,-):\mathbf{Set^o}(X) \times \mathbf{(Pre)Sheaf}(X) \rightarrow \mathbf{Ab} \,, \]
assigns to the pair $(U,A)\in \mathbf{Set^o}(X) \times \mathbf{(Pre)Sheaf}(X)$ the subgroup of continuous sections of $A(U)$.
\begin{definition}[\cite{Bry}]
Let $A$ be a presheaf. Its \emph{associated sheaf}, or \emph{sheafification}, is a sheaf $\widetilde{A}$, given by
\[ \widetilde{A}(U) := \widetilde{\Gamma}(U,A) \,. \]
A sheaf $A$ is also a presheaf and its sheafification $\widetilde{A}$ coincides $A$ \cite{Bry}.
\end{definition}

Let $\phi:A\rightarrow B$ be a morphism of sheaves (of Abelian groups) on $X$. Its \emph{kernel}, $\ker(\phi)$, is a sheaf on $X$, given by
\[ \ker(\phi)(U) := \ker\big(\phi(U):A(U)\rightarrow B(U)\big) \,, \]
for $U\in \mathbf{Set^o}(X)$. Let $P$ be the assignment
\[ P(U) := \im\big(\phi(U):A(U)\rightarrow B(U)\big) \,. \]
Then $P$ defines a presheaf on $X$. The \emph{image} of $\phi$ is
\[ \im(\phi) := \widetilde{P} \,. \]
These definitions enable us to talk about \emph{exact sequences of sheaves}, \emph{complexes of sheaves} and \emph{cohomology sheaves}. These are defined as one would expect. A sequence
\[ \xymatrix{ \ldots \ar[r] & A^{n-1} \ar[r]^{\phi^{n-1}} & A^n \ar[r]^{\phi^n} & A^{n+1} \ar[r] & \ldots } \]
of sheaves (and morphisms of sheaves) is exact, if $\ker(\phi^{k+1}) = \im(\phi^k)$ at each step. A complex of sheaves, $(K^\bullet,d^\bullet)$, is a sequence of sheaves
\[ \xymatrix{ \ldots \ar[r] & K^{n-1} \ar[r]^{d^{n-1}} & K^n \ar[r]^{d^n} & K^{n+1} \ar[r] & \ldots } \]
where $d^{j+1} \circ d^j = 0$, for all $j$. 

A sheaf $K$ on $X$ is \emph{injective}, if for any morphism of sheaves $\phi:A\rightarrow K$ and an injective morphism\footnote{That is, a morphism with $\ker=0$.} $\iota:A\rightarrow B$, there exists a morphism $\psi:B\rightarrow K$, such that
\[ \xymatrix{ A \ar[r]^{\iota}\ar[d]_\phi & B \ar@{-->}[ld]^\psi \\ K & } \]
commutes. An \emph{injective resolution} of a sheaf $A$ is a complex $(I^\bullet,d^\bullet)$ of injective sheaves, together with a monomorphism $i_A:A\rightarrow I^0$, such that the sequence
\[ \xymatrix{ 0 \ar[r] & A \ar[r]^i & I^0 \ar[r]^{d^0} & I^1 \ar[r]^{d^1} & I^2 \ar[r] & \ldots } \]
is exact. The category of sheaves on $X$ has \emph{enough injectives} in the sense that every sheaf $A$ on $X$ admits an injective resolution \cite{Bry}.

The global section functor $\Gamma(X,-)$ is left-exact: for an exact sequence
\[ \xymatrix{ 0 \ar[r] & A \ar[r] & B \ar[r] & C \ar[r] & 0 } \]
of sheaves, the sequence
\begin{equation}\label{Eq: Global section sequence} \xymatrix{ 0 \ar[r] & \Gamma(X,A) \ar[r] & \Gamma(X,B) \ar[r] & \Gamma(X,C) } \end{equation}
is exact. Since $\mathbf{Sheaf}(X)$ has enough injectives, \eqref{Eq: Global section sequence} can be extended canonically to the right into a long exact sequence
\[ \xymatrix{ 0 \ar[r] & \Gamma(X,A) \ar[r] & \Gamma(X,B) \ar[r] & \Gamma(X,C) \ar `r_l[dll] `[dlll] `^r[dll] [dll] \\ & R^1\Gamma(X,A) \ar[r] & R^1\Gamma(X,B) \ar[r] & R^1\Gamma(X,C) \ar `r_l[dll] `[dlll] `^r[dll] [dll] \\ & R^2\Gamma(X,A) \ar[r] & R^2\Gamma(X,B) \ar[r] & R^2\Gamma(X,C) \ar[r] & \ldots } \]
where
\[ R^j\Gamma(X,-):\mathbf{Sheaf}(X)\rightarrow \mathbf{Ab} \]
are the \emph{right derived functors} of $\Gamma(X,-)$, defined as follows. For a sheaf $F$ on $X$, take any injective resolution $F\xrightarrow{i_F} I^\bullet$. Applying $\Gamma(X,-)$, yields a complex $\Gamma(X,I^\bullet)$ of Abelian groups. The right derived functors $R^\bullet\Gamma(X,-)$ send the sheaf $F$ to the cohomology groups of this complex:
\[ R^\bullet\Gamma(X,F) := \frac{\ker\big(\Gamma(X,I^\bullet) \rightarrow \Gamma(X,I^{\bullet+1})\big)}{\im\big(\Gamma(X,I^{\bullet-1})\rightarrow \Gamma(X,I^\bullet)\big)} \,. \]
\begin{definition}[\cite{Bry}]
The \emph{sheaf cohomology} groups of a sheaf $A$ on $X$, are the Abelian groups
\[ H^\bullet(X,A) := R^\bullet\Gamma(X,A) \,. \]
\end{definition}
It is far from clear that the sheaf cohomology groups are well-defined, for example, that they are independent of the injective resolutions used. The reader is referred to \cite{Bry} for a proof.

Next we want to define \emph{sheaf hypercohomology}, which is a generalization of sheaf cohomology for sequences of sheaves. We need first the concept of an \emph{injective resolution of a complex of sheaves} on $X$. We assume, that all complexes of sheaves are bounded below, meaning that they are of the form
\[ \xymatrix{ K^0 \ar[r]^{d_K} & K^1 \ar[r]^{d_K} & K^2 \ar[r]^{d_K} & \ldots } \]
We refer the reader to \cite{Bry} for a proof of the following. Given a bounded below complex of sheaves $(K^\bullet,d_K)$\footnote{We drop the upper indices from the maps $d_K^\bullet$ to simplify notation.}, there exists a double complex
\[ \xymatrix{ \vdots & \vdots & \vdots &  \\
I^{0,2} \ar[u]_\delta \ar[r]^d & I^{1,2} \ar[u]_\delta \ar[r]^d & I^{2,2} \ar[u]_\delta \ar[r]^d & \ldots \\
I^{0,1} \ar[u]_\delta \ar[r]^d & I^{1,1} \ar[u]_\delta \ar[r]^d & I^{2,1} \ar[u]_\delta \ar[r]^d & \ldots \\
I^{0,0} \ar[u]_\delta \ar[r]^d & I^{1,0} \ar[u]_\delta \ar[r]^d & I^{2,0} \ar[u]_\delta \ar[r]^d & \ldots \\
K^0 \ar[u]_{i_0} \ar[r]^{d_K} & K^1 \ar[u]_{i_1} \ar[r]^{d_K} & K^2 \ar[u]_{i_2} \ar[r]^{d_K} & \ldots } \]
called an injective resolution of $(K^\bullet,d_K)$, where
\begin{enumerate}
\item the complex $(I^{p,\bullet},\delta)$ is an injective resolution of $K^p$,
\item the complex $\im(d) \cap I^{p,\bullet}$ is an injective resolution of $\im(d_K) \cap K^p$,
\item the complex $\ker(d) \cap I^{p,\bullet}$ is an injective resolution of $\ker(d_K) \cap K^p$, and finally
\item the complex $\underline{H}^{p,\bullet}$ of horizontal cohomology sheaves is an injective resolution of $\underline{H}^p(K^\bullet)$.
\end{enumerate}
\begin{definition}[\cite{Bry}]
Let $(K^\bullet,d_K)$ be as above with an injective resolution $I^{\bullet\bullet}$. The sheaf hypercohomology group $H^j(X,K^\bullet)$ is the $j$-th cohomology group of the total complex
\[ \bigg(\bigoplus_{p+q=j} \Gamma(X,I^{p,q}),D:=\delta+(-1)^q\,d\bigg) \,. \]
\end{definition}
Again, it is far from clear that this definition is unambiguous. The proof that it indeed is, is given in \cite{Bry}.

Using sheaf hypercohomology, we can finally define \emph{(smooth) Deligne cohomology}. Let $X$ be a smooth manifold. We denote by $\underline{U(1)}$ the sheaf of $U(1)$-valued functions on $X$ and by $\underline{\Omega}^p$ the sheaf of differential $p$-forms on $X$.
\begin{definition}\label{Definition: Deligne cohomology (sheaf cohomology)}
Let $\mathcal{D}^q$ be the complex of sheaves
\[ \mathcal{D}^q := \xymatrix{ \underline{U(1)}\ar[r]^{d\log} & \underline{\Omega}^1 \ar[r]^d & \ldots \ar[r]^d & \underline{\Omega}^q } \]
Deligne cohomology groups are the sheaf hypercohomology groups $H^\bullet(X,\mathcal{D}^q)$.
\end{definition}
In fact, the original definition of Deligne cohomology is given by a different complex of sheaves \cite{Bry}. However, the definition given above yields the same groups and is much more tractable. The original definition would be required, for example, to define the cup product, but we have no need for such advanced cohomological properties of Deligne cohomology. It is enough for us to have a model for the cocycles and be able to compute the groups, as easily as possible. Unfortunately, computing Deligne cohomology groups still seems to be an extremely difficult task. Even normal sheaf cohomology groups are very difficult to compute, since one has to deal with abstruse injective resolutions. Luckily, for such well-behaving spaces as smooth manifolds, there is a beautiful way around these difficulties.

Let $A$ be a sheaf (of Abelian groups) on a smooth manifold $X$, with a good open cover $\mathcal{U}=\{U_i\}$. To simplify the notation, we write
\[ U_{i_0\ldots i_p} := U_{i_0} \cap \ldots \cap U_{i_p} \,. \]
As usual, the \v Cech complex $(C^\bullet(\mathcal{U},A),\delta)$ is the complex of Abelian groups
\[ C^p(\mathcal{U},A)=\prod_{i_0 \ldots i_p} A(U_{i_0 \ldots i_p})\,, \]
together with the \v Cech coboundary operator $\delta:C^p(\mathcal{U},A)\rightarrow C^{p+1}(\mathcal{U},A)$ given by
\[ \delta(\underline{\alpha})_{i_0\ldots i_{p+1}} = \sum_{j=0}^{p+1} (-1)^j (\alpha_{i_0\ldots i_{j-1} i_{j+1} \ldots i_{p+1}})|_{U_{i_0\ldots i_{p+1}}} \,, \]
for $\underline{\alpha}\in C^p(\mathcal{U},A)$. 
\begin{definition}
The \v Cech cohomology groups of $\mathcal{U}$ with coefficients in the sheaf $A$, are the cohomology groups $\check{H}^\bullet(\mathcal{U},A)$ of the \v Cech complex.
They certainly seem to depend on the open cover $\mathcal{U}$. However, it is a classical result that on such a well-behaving space as a smooth manifold, \v Cech cohomology is independent of the good open cover used \cite{Bry}.
\end{definition}

Let $\mathcal{C}^p(\mathcal{U},A)$ be the sheaf that associates to an open set $V\in \mathbf{Set^o}(X)$ the group $C^p(\mathcal{U}|_V,A|_V)$ These form, in an obvious way, a complex of sheaves $(\mathcal{C}^\bullet(\mathcal{U},A),\delta)$. Moreover, the sequence
\[ \xymatrix{ 0 \ar[r] & A \ar[r]^i & \mathcal{C}^0(\mathcal{U},A) \ar[r]^\delta & \mathcal{C}^1(\mathcal{U},A) \ar[r]^\delta & \mathcal{C}^2(\mathcal{U},A) \ar[r] & \ldots } \]
of sheaves, where $i:A\rightarrow \mathcal{C}^0(\mathcal{U},A)$ is the natural morphism, is exact. It is called the \emph{\v Cech resolution} of $A$. If the sheaves $\mathcal{C}^\bullet(\mathcal{U},A)$ were injective, the \v Cech resolution would be an injective resolution and would provide an extremely convenient way of computing sheaf cohomology. Unfortunately, they are \emph{not} injective. In any case, we can form the complex $\Gamma(X,\mathcal{C}^\bullet(\mathcal{U},A))$ of groups. Its cohomology group are simply the \v Cech cohomology groups $\check{H}^\bullet(\mathcal{U},A)$. There is no \emph{a priori} reason to suspect that \v Cech cohomology would have anything to do with sheaf cohomology. However, the best possible thing happens anyway.
\begin{proposition}[\cite{Bry}]\label{Proposition: Cech cohomology = sheaf cohomology}
\v Cech cohomology groups are isomorphic to sheaf cohomology groups:
\[ \check{H}^\bullet(\mathcal{U},A) \cong H^\bullet(X,A) \,. \]
\end{proposition}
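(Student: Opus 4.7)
The plan is to compare \v Cech cohomology and sheaf cohomology through an auxiliary double complex obtained by resolving $A$ injectively and simultaneously by \v Cech cochains. Concretely, fix an injective resolution $A\xrightarrow{i_A} I^\bullet$ in $\mathbf{Sheaf}(X)$ and form the first-quadrant double complex
\[ K^{p,q} := C^p(\mathcal{U},I^q) = \prod_{i_0<\ldots<i_p} I^q(U_{i_0\ldots i_p}) \,, \]
with horizontal differential the \v Cech coboundary $\delta$ and vertical differential induced by $d_I$. I would then run the two standard spectral sequences associated with the filtrations by rows and columns and show that both abut to the cohomology of the associated total complex, while their $E_2$-pages compute $\check{H}^\bullet(\mathcal{U},A)$ and $H^\bullet(X,A)$ respectively.

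First I would take vertical cohomology. Because $I^\bullet$ is an injective (hence flasque) resolution of $A$, it restricts to an acyclic resolution on every open set $U_{i_0\ldots i_p}$, so $H^q_{\mathrm{vert}}\bigl(K^{p,\bullet}\bigr) = H^q(U_{i_0\ldots i_p},A)$. For $q=0$ this is $A(U_{i_0\ldots i_p})$, yielding $C^p(\mathcal{U},A)$. The higher $q$-terms vanish whenever every finite intersection $U_{i_0\ldots i_p}$ has $H^q(U_{i_0\ldots i_p},A)=0$; this acyclicity is precisely what the ``good cover'' hypothesis secures for the standard sheaves appearing in Deligne cohomology (sheaves of locally constant functions and sheaves of smooth forms, the latter being fine). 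Taking \v Cech cohomology next, the $E_2$-page collapses onto the $q=0$ row to give $\check{H}^p(\mathcal{U},A)$, so the total cohomology equals $\check{H}^\bullet(\mathcal{U},A)$.

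Now I would run the other spectral sequence, taking horizontal (\v Cech) cohomology first. The key input is that for an injective (or merely flasque) sheaf $\mathcal{F}$, the \v Cech complex $C^\bullet(\mathcal{U},\mathcal{F})$ is exact in positive degrees and computes $\Gamma(X,\mathcal{F})$ in degree zero, i.e. $\check{H}^p(\mathcal{U},\mathcal{F})=0$ for $p>0$ and $\check{H}^0(\mathcal{U},\mathcal{F})=\Gamma(X,\mathcal{F})$. Applying this to each $I^q$, the $E_1$-page is concentrated in the column $p=0$ with entries $\Gamma(X,I^q)$. Taking vertical cohomology of this column returns exactly $H^q(\Gamma(X,I^\bullet))=R^q\Gamma(X,A)=H^q(X,A)$. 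Since both spectral sequences converge to the cohomology of the same total complex $\bigl(\bigoplus_{p+q=n} K^{p,q},\,\delta+(-1)^p d_I\bigr)$, we obtain a natural isomorphism $\check{H}^\bullet(\mathcal{U},A)\cong H^\bullet(X,A)$.

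The main obstacle is the acyclicity step in the vertical cohomology computation: proving $H^q(U_{i_0\ldots i_p},A)=0$ for $q>0$ on the intersections of a good open cover. For the sheaves of interest here ($\underline{U(1)}$, $\underline{\R}$, $\underline{\Z}$, and $\underline{\Omega}^k$) this follows either from the Poincar\'e lemma (for forms), the contractibility of the $U_{i_0\ldots i_p}$ combined with the exponential sequence (for $\underline{U(1)}$), or partitions of unity (showing fineness of $\underline{\Omega}^k$); and more generally one appeals to Leray's theorem that any acyclic cover computes sheaf cohomology. Once that verification is done on the specific sheaves we use in defining Deligne cohomology, the spectral sequence argument above delivers the isomorphism with no further subtlety, and in fact yields a canonical identification independent of the chosen good cover by the standard cofinality argument on refinements.
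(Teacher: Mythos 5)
The paper does not actually prove this proposition --- it cites \cite{Bry} and the section's preamble explicitly announces that proofs are omitted --- so there is no in-paper argument to compare yours against. Your double-complex argument is the standard proof of Leray's theorem and is correct: the two filtration spectral sequences both abut to the cohomology of the total complex, one degenerating at $E_2$ onto the row $q=0$ (using that the cover is $A$-acyclic) to yield $\check{H}^\bullet(\mathcal{U},A)$, the other collapsing onto the column $p=0$ (using that the \v Cech complex of a flasque sheaf with respect to any cover is a resolution of its global sections) to yield $R^\bullet\Gamma(X,A)=H^\bullet(X,A)$. You are right to hedge with ``injective hence flasque'': injectivity is not preserved under restriction to open sets, but flasqueness is, which is what the $q$-direction needs. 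It is also good that you flag the acyclicity step explicitly, because the paper's statement suppresses the hypothesis that $\mathcal{U}$ be a Leray cover for $A$, and this is exactly where the good-cover assumption plus the nature of the sheaves actually used in the Deligne complex ($\underline{\Z}$, $\underline{\R}$, $\underline{U(1)}$, $\underline{\Omega}^k$) must enter: contractibility for the constant sheaves, fineness for $\underline{\R}$ and the form sheaves, and the exponential sequence (or softness) for $\underline{U(1)}$. One further refinement worth noting: Brylinski's actual theorem is the cover-independent statement $\varinjlim_{\mathcal{U}}\check{H}^\bullet(\mathcal{U},A)\cong H^\bullet(X,A)$, valid for any sheaf on a paracompact space with no acyclicity hypothesis on individual covers; what you have proved is the fixed-cover Leray version, and the two are reconciled by the cofinality-of-good-refinements argument you gesture at in your final sentence.
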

This changes the very difficult problem of computing sheaf cohomology straight from the definition into the relatively simple combinatorial problem of computing \v Cech cohomology. To be able to compute Deligne cohomology, we need an analogous result for sheaf hypercohomology.

Simply put, \emph{\v Cech hypercohomology} is the total cohomology of the natural extension of the \v Cech resolution to a complex of sheaves, just as sheaf hypercohomology was an extension of sheaf cohomology to a complex of sheaves. Let $(K^\bullet,d_K)$ be a sequence of sheaves on $X$. Each $K^p$ admits a \v Cech resolution $\mathcal{C}^\bullet(\mathcal{U},K^p)$, which can be patched together to form a double complex of sheaves. Applying the global sector functor yields a double complex of groups
\begin{equation}\label{Eq: Cech double resolution complex} \xymatrix{ \vdots & \vdots & \vdots & \\
C^2(\mathcal{U},K^0) \ar[u]_\delta \ar[r]^{d_K} & C^2(\mathcal{U},K^1) \ar[u]_\delta \ar[r]^{d_K} & C^2(\mathcal{U},K^2) \ar[u]_\delta \ar[r]^-{d_K} & \ldots \\
C^1(\mathcal{U},K^0) \ar[u]_\delta \ar[r]^{d_K} & C^1(\mathcal{U},K^1) \ar[u]_\delta \ar[r]^{d_K} & C^1(\mathcal{U},K^2) \ar[u]_\delta \ar[r]^-{d_K} & \ldots \\
C^0(\mathcal{U},K^0) \ar[u]_\delta \ar[r]^{d_K} & C^0(\mathcal{U},K^1) \ar[u]_\delta \ar[r]^{d_K} & C^0(\mathcal{U},K^2) \ar[u]_\delta \ar[r]^-{d_K} & \ldots } \end{equation}
\begin{definition}[\cite{Bry}]
The \v Cech hypercohomology groups $\breve{H}^\bullet(\mathcal{U},K^\bullet)$ of $(K^\bullet,d_K)$, are the total cohomology groups of \eqref{Eq: Cech double resolution complex}, with respect to the coboundary operator
\begin{equation}\label{Eq: Double complex coboundary} D:=\delta + (-1)^q\,d_K \,, \end{equation}
where $q$ denotes the \v Cech index. Again, a result in \cite{Bry} states that, at least for manifolds, \v Cech hypercohomology groups are independent of the good open cover used.
\end{definition}
Finally, there is the generalization of proposition \ref{Proposition: Cech cohomology = sheaf cohomology} to hypercohomology.
\begin{proposition}[\cite{Bry}]\label{Proposition: Cech hypercohomology = sheaf hypercohomology}
\v Cech hypercohomology groups are isomorphic to sheaf hypercohomology groups:
\[ \check{H}^\bullet(\mathcal{U},K^\bullet) \cong H^\bullet(X,K^\bullet) \,. \]
\end{proposition}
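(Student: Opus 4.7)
The plan is to compare the two cohomology theories by building a common triple complex from a Cartan--Eilenberg injective resolution of $K^\bullet$ together with the \v Cech resolution, and then running the two natural spectral sequences of its total complex.

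First, choose a Cartan--Eilenberg injective resolution $I^{\bullet,\bullet}$ of $K^\bullet$, as described in the paragraph preceding the definition of sheaf hypercohomology: each row $I^{p,\bullet}$ is an injective resolution of $K^p$, and the horizontal cohomology rows resolve the cohomology sheaves $\underline{H}^p(K^\bullet)$. Now apply the \v Cech construction to obtain the triple complex of Abelian groups
\[ T^{p,q,r} := C^{r}(\mathcal{U},\, I^{p,q}) = \prod_{i_{0}\ldots i_{r}} I^{p,q}(U_{i_{0}\ldots i_{r}})\,, \]
equipped with the complex differential $d_K$ (index $p$), the injective-resolution differential $\delta_I$ (index $q$), and the \v Cech coboundary $\delta$ (index $r$). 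Form the total complex with the usual sign conventions extending \eqref{Eq: Double complex coboundary}.

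The total cohomology of $T^{\bullet,\bullet,\bullet}$ is computed by two filtrations. First filter by collapsing the \v Cech direction $r$: since each $I^{p,q}$ is injective and hence flabby, for any open $V \subset X$ the augmented \v Cech complex $\Gamma(V, I^{p,q}) \to C^\bullet(\mathcal{U}|_V,\, I^{p,q})$ is exact, so the $r$-direction cohomology collapses to $\Gamma(X, I^{p,q})$ in degree $0$. The surviving double complex is exactly the one appearing in the definition of sheaf hypercohomology, and its total cohomology is $H^\bullet(X, K^\bullet)$. Next filter by collapsing the injective-resolution direction $q$: because $\mathcal{U}$ is a good open cover, every finite intersection $U_{i_0\ldots i_r}$ is contractible, and the sheaves entering the Deligne complex ($\underline{\Omega}^k$ by partitions of unity, $\underline{U(1)}$ by the Poincar\'e lemma applied after $d\log$) are acyclic on contractible opens; hence on each $U_{i_0\ldots i_r}$ the resolution $K^p|_{U_{i_0\ldots i_r}} \to I^{p,\bullet}|_{U_{i_0\ldots i_r}}$ still computes vanishing higher cohomology, so the $q$-direction cohomology collapses in degree $0$ to $C^{r}(\mathcal{U},\, K^p)$. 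The surviving double complex is the \v Cech double complex \eqref{Eq: Cech double resolution complex}, whose total cohomology is by definition $\check{H}^\bullet(\mathcal{U}, K^\bullet)$. Equating the two yields the desired isomorphism.

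The main obstacle is the acyclicity input on both sides: one must know that injective sheaves are flabby (hence have vanishing higher \v Cech cohomology on any open), and conversely that the sheaves appearing in the complex $K^\bullet$ are $\Gamma$-acyclic on each $U_{i_0\ldots i_r}$ of a good cover. For the Deligne complex $\mathcal{D}^q$ this latter fact is precisely the reason Bryliński's proof of the \v Cech-to-sheaf comparison (Proposition~\ref{Proposition: Cech cohomology = sheaf cohomology}) extends: each term of $\mathcal{D}^q$ is individually acyclic on contractible opens, so the hypercohomology comparison reduces, termwise and then via the spectral sequence above, to the single-sheaf case already established. The independence of both sides from the choice of resolution, respectively the choice of good cover, is handled by a standard refinement/homotopy argument, which I would cite rather than reprove.
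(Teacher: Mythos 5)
The paper does not supply a proof of this proposition; the author states explicitly at the start of the section that proofs are omitted and cites Bryli\'nski. So there is no in-text argument to compare against, and your proposal must be judged on its own terms. Your triple-complex / double-filtration spectral sequence argument is the standard Cartan--Eilenberg approach to this comparison and is sound in outline: injectivity implies flabbiness implies vanishing higher \v Cech cohomology of each $I^{p,q}$ for any cover, which collapses the $r$-direction to $\Gamma(X,I^{p,q})$; and the Leray property of the good cover for the terms of $K^\bullet$ collapses the $q$-direction to $C^r(\mathcal{U},K^p)$. The two residual double complexes give the two hypercohomologies, so the total complex interpolates between them.

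One small repair is worth making. The parenthetical justification of $\Gamma$-acyclicity for $\underline{U(1)}$ on contractible opens --- ``the Poincar\'e lemma applied after $d\log$'' --- conflates two different statements. The Poincar\'e lemma gives exactness of the Deligne complex \emph{as a complex of sheaves} at the $\underline{\Omega}^k$ terms; it does not by itself give what your second spectral sequence actually needs, namely $H^p(U,\underline{U(1)})=0$ for $p\ge 1$ and $U$ contractible. The clean argument is via the exponential sequence $0\to\underline{\Z}\to\underline{\R}\to\underline{U(1)}\to 0$: the fine sheaf $\underline{\R}$ is acyclic, $H^{p+1}(U,\Z)=0$ for $U$ contractible, and the long exact sequence then forces $H^p(U,\underline{U(1)})=0$ for $p\ge 1$. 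With this fixed, and keeping in mind (as you do) that the whole argument implicitly assumes the terms of $K^\bullet$ are acyclic on finite intersections of $\mathcal{U}$ --- true for the Deligne complex but a hypothesis to verify for other $K^\bullet$ --- the proof is complete modulo the cover-independence you appropriately defer to citation.
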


Applying proposition \ref{Proposition: Cech hypercohomology = sheaf hypercohomology} to the definition \ref{Definition: Deligne cohomology (sheaf cohomology)} immediately yields the following alternative definition.
\begin{definition}
The Deligne cohomology groups $H^\bullet(X,\mathcal{D}^q)$ are isomorphic to the total cohomology groups of the double complex
\[ \xymatrix{ \vdots & & \vdots & & \vdots \\
C^3(\mathcal{U},\underline{U(1)}) \ar[u]_\delta \ar[rr]^{-d\log} & & C^3(\mathcal{U},\underline{\Omega}^1) \ar[u]_\delta \ar[r]^-{-d} & \ldots \ar[r]^-{-d} & C^3(\mathcal{U},\underline{\Omega}^q) \ar[u]_\delta \\
C^2(\mathcal{U},\underline{U(1)}) \ar[u]_\delta \ar[rr]^{d\log} & & C^2(\mathcal{U},\underline{\Omega}^1) \ar[u]_\delta \ar[r]^-d & \ldots \ar[r]^-d & C^2(\mathcal{U},\underline{\Omega}^q) \ar[u]_\delta \\
C^1(\mathcal{U},\underline{U(1)}) \ar[u]_\delta \ar[rr]^{-d\log} & & C^1(\mathcal{U},\underline{\Omega}^1) \ar[u]_\delta \ar[r]^-{-d} & \ldots \ar[r]^-{-d} & C^1(\mathcal{U},\underline{\Omega}^q) \ar[u]_\delta \\
C^0(\mathcal{U},\underline{U(1)}) \ar[u]_\delta \ar[rr]^{d\log} & & C^0(\mathcal{U},\underline{\Omega}^1) \ar[u]_\delta \ar[r]^-d & \ldots \ar[r]^-d & C^0(\mathcal{U},\underline{\Omega}^q) \ar[u]_\delta } \]
with respect to the differentials written therein. The alternating signs arise from \eqref{Eq: Double complex coboundary}. For example, a part of a cocycle might read
\[ \xymatrix{ \ar[r]^-d & 0 \\ & \underline{\alpha} \ar[u]^-\delta \ar[rr]^-{-d} & & \delta(\underline{\beta})-d\underline{\alpha}=0 \\ & & & \underline{\beta} \ar[r]^-d \ar[u]^-\delta & d\underline{\beta} + \delta(\underline{\gamma}) = 0 \\ & & & & \underline{\gamma} \ar[u]^-\delta \ar[r]^-{-d} & 0 \\ & & & & & \ar[u]^-\delta } \]
\end{definition}

We are particularly interested in \emph{diagonal Deligne cohomology} groups $H^p(X,\mathcal{D}^p)$. For example, consider the complex
\[ \xymatrix{ \vdots & \vdots \\
C^2(\mathcal{U},\underline{U(1)}) \ar[u]_\delta \ar[r]^{d\log} & C^2(\mathcal{U},\underline{\Omega}^1) \ar[u]_\delta \\
C^1(\mathcal{U},\underline{U(1)}) \ar[u]_\delta \ar[r]^{d\log} & C^1(\mathcal{U},\underline{\Omega}^1) \ar[u]_\delta \\
C^0(\mathcal{U},\underline{U(1)}) \ar[u]_\delta \ar[r]^{d\log} & C^0(\mathcal{U},\underline{\Omega}^1) \ar[u]_\delta } \]
We denote the set of $p$-fold intersections of open sets of the cover $\mathcal{U}$ by $X_\mathcal{U}^{[p]}$. Note, that $X_\mathcal{U}^{[1]}$ is simply the set $\mathcal{U}$. We shall use the notation $X_\mathcal{U} := X_\mathcal{U}^{[1]}$. Now, a cohomology class in $H^1(X,\mathcal{D}^1)$ is represented by a pair
\[ (\underline{g},\underline{A}) \in \underline{U(1)}(X_\mathcal{U}^{[2]})\oplus \underline{\Omega}^1(X_\mathcal{U}) \,, \]
satisfying
\[ \delta(\underline{g})_{\alpha\beta\gamma} = g_{\alpha\beta}g_{\alpha\gamma}^{-1}g_{\beta\gamma} = 1 \quad \text{and} \quad \delta(\underline{A})_{\alpha\beta} = A_\beta-A_\alpha=d\log g_{\alpha\beta}\,. \]
In other words, a cocycle $(\underline{g},\underline{A})$ can be geometrically interpreted as a principal $U(1)$ bundle $P\rightarrow X$, with transition functions $\{g_{\alpha\beta}\}$ and connection $\{A_\alpha\}$, or equivalently as the associated complex line bundle with connection. The coboundaries are pairs
\[ (\delta(\underline{f})_{\alpha\beta},d\log f_\alpha)=(f_\beta f_\alpha^{-1},d\log f_\alpha)\,. \]
Adding such terms to $(\underline{g},\underline{A})$ corresponds precisely to performing a bundle isomorphism and a corresponding gauge transformation. Hence, elements of $H^1(X,\mathcal{D}^1)$ correspond to isomorphism classes of complex hermitian lines bundles with connection. Recall that the first Chern class of the line bundle is the image of $\underline{g}$ under the Bockstein homomorphism
\[ \beta:H^1(X,\underline{U(1)})\rightarrow H^2(X,\Z) \]
and that its image under the inclusion $H^2(X,\Z)\rightarrow H^2_\text{dR}(X)$ coincides with the cohomology class of $d\underline{A}$. The first Chern class of this line bundle is called the \emph{Dixmier-Douady class} of the Deligne class.

\section{Bundle gerbes}
It will become clear during this chapter that a Dirac quantized $B$-field is described by a class in $H^2(X,\mathcal{D}^2)$\footnote{More correctly, it is a class in a differential cohomology theory, which happens to coincide with $H^2(X,\mathcal{D}^2)$, but we have no need for such a general point of view.}. Dirac quantized electromagnetic fields, which are cohomologically classes in $^1(X,\mathcal{D}^1)$, correspond geometrically to principal $U(1)$ bundles with connection. Likewise, having a geometric model for classes of $H^2(X,\mathcal{D}^2)$ provides new insight to the case of the $B$-field.

There exist several useful realizations of $H^2(X,\mathcal{D}^2)$. One of them is the theory of \emph{gerbes} developed by Giraud \cite{Bry,Moe,Hit}. However, Giraud's gerbes are not very well suited for our purpose and we shall instead employ a bundle theoretic realization by Murray \cite{Mur}, called \emph{bundle gerbes}\footnote{More precisely, \emph{$U(1)$ bundle gerbes}. However, we call them simply bundle gerbes, since all bundle gerbes we are interested in are of this type.} with \emph{bundle gerbe connection and curving}. Our references on bundle gerbes are \cite{Mur,BCMMS,CJM,Joh2,Ste2}.

Let $X$ and $Y$ be smooth manifolds, $\mathcal{U}=\{U_\alpha\}$ be a good open cover of $X$, $\pi:Y\rightarrow X$ a smooth, locally split map, with local sections $s_\alpha:U_\alpha \rightarrow X$. The $p$-fold fibre product $Y^{[p]}$ is defined as
\[ Y^{[p]} = \{(y_1,\ldots,y_p)\in Y\times \ldots \times Y : \pi(y_1)=\ldots=\pi(y_p)\} \,. \]
We also need the maps $\pi_i:Y^{[p]}\rightarrow Y^{[p-1]}$ which simply omit the $i$-th element:
\[ \pi_i(y_1,\ldots,y_p)=(y_1,\ldots,y_{i-1},y_{i+1},\ldots,y_p)\,. \]
\begin{definition}[\cite{Mur}] Let $\pi:Y\rightarrow X$ be a locally split map and $L\rightarrow Y^{[2]}$ a complex line bundle, with a bundle isomorphism
\[ L_{(y_1,y_2)}\otimes L_{(y_2,y_3)}\xrightarrow{\cong} L_{(y_1,y_3)}\,, \]
called the \emph{bundle gerbe product}. Note, that the isomorphism is between $L$ and the restriction of $L\otimes L\rightarrow Y^{[2]}\times Y^{[2]}$ over
\[ Y^{[2]}\circ Y^{[2]}=\{(y_1,y_2,y_3,y_4)\in Y^{[2]}\times Y^{[2]}|y_2=y_3\}\,. \]
The product is required to be associative, whenever triple product are defined. It is shown in \cite{Mur} that the existence of the product necessarily implies $L_{(y,y)}\cong \C$ and $L_{(y_1,y_2)}\cong L^*_{(y_2,y_1)}$, where $L^*$ is denotes dual bundle. The triple $(L,Y,X)$ is called a $U(1)$ bundle gerbe and is represented diagrammatically by
\[ \xymatrix{ L \ar[d] & \\
Y^{[2]} \ar@<+0.5ex>[r]^{\pi_1} \ar@<-0.5ex>[r]_{\pi_2} & Y \ar[d]_\pi \\
& X } \]
To keep the notations simple, we denote $(L,Y,X)$ by $(L,Y)$ or $L$, when the omitted objects are clear from the context.
\end{definition}

Let $(L,Y,X)$ be a bundle gerbe, $Y'\rightarrow X'$ locally split and $\phi:X'\rightarrow X$ a map with a lift
\[ \xymatrix{ Y' \ar[r]^{\widehat{\phi}} \ar[d] & Y \ar[d] \\
X' \ar[r]^{\phi} & X } \,. \]
Now, $\widehat{\phi}$ induces a map $\widehat{\phi}^{[2]}:{Y'}^{[2]}\rightarrow Y^{[2]}$, which can be used to pull back $L\rightarrow Y^{[2]}$ into $(\widehat{\phi}^{[2]})^{-1}L\rightarrow {Y'}^{[2]}$. The bundle gerbe $\phi^{-1}L := ((\widehat{\phi}^{[2]})^{-1}(L),Y',X')$ is called the \emph{pullback of the bundle gerbe $L$}. There are two important special cases of this construction. The first is that we are given a map $\phi:X'\rightarrow X$ and take as $Y'$ the pullback $\phi^{-1}(Y)$. The second is the case where $\phi:X\rightarrow X$ is the identity map. Let $(L,Y)$ be a bundle gerbe. We can then define another bundle gerbe $(L^*,Y)$, called the \emph{dual of $(L,Y)$}, where $L^*$ is the dual of $L$. If $(L,Y,X)$ and $(J,Z,X)$ are two bundle gerbes, there is a \emph{product bundle gerbe} $(L\otimes J,Y\times_{\pi} Z)$, where $Y\times_{\pi} Z := \{(y,z)\in Y\times Z : \pi_Y(y)=\pi_Z(z)\}$.

Let $J\rightarrow Y$ be a complex line bundle. We denote by $(\delta(J),Y)$ the bundle gerbe $(\pi_1^{-1}(J)\otimes\pi_2^{-1}(J)^*,Y)$, where $\delta(J)_{(y_1,y_2)} = J_{y_2}\otimes J_{y_1}^*$, with the natural pairing
\[ \delta(J)_{(y_1,y_2)}\otimes \delta(J)_{(y_2,y_3)}=J_{y_2}\otimes J_{y_1}^*\otimes J_{y_3}\otimes J_{y_2}^* \rightarrow J_{y_3}\otimes J_{y_1}^*=\delta(J)_{(y_1,y_3)}\,, \]
as the bundle gerbe product.

Let $f:Y\rightarrow Z$ be a smooth map, $f^{[2]}:Y^{[2]}\rightarrow Z^{[2]}$ the induced map on the fibre products and $g:L\rightarrow J$ a bundle morphism, covering $f^{[2]}$. The pair $(g,f)$ defines a \emph{bundle gerbe morphism} $(L,Y)\rightarrow (J,Z)$, if it commutes with the bundle gerbe products\footnote{More generally, a bundle gerbe morphism can be defined between bundle gerbes over different base spaces, in which case we would need a third map between the base spaces, with fibre preserving lift $f$. However, we have no need for such general morphisms.}. If $f$ and $g$ are isomorphisms, $(g,f)$ is called a \emph{bundle gerbe isomorphism}. A \emph{trivial bundle gerbe} is a bundle gerbe $(L,Y)$, which is isomorphic to some $(\delta(J),Y)$. The line bundle $J$ together with the isomorphism $(L,Y)\cong (\delta(J),Y)$ is called a \emph{trivialization} of $(L,Y)$. 

Bundle gerbes are completely classified, up to a certain equivalence relation, by $H^3(X,\Z)$, just like isomorphism classes of complex line bundles are classified by $H^2(X,\Z)$. Unfortunately, bundle gerbe isomorphism is not the correct equivalence relation. Instead, we require the concept of \emph{stable isomorphism}. Two bundle gerbes, say $L$ and $J$, are said to be stably isomorphic, if there exists a complex line bundle $K$, such that $L\cong J\otimes \delta(K)$. The bundle $K$ is referred to as the stable isomorphism. It is easy to see that any two stable isomorphisms, $K$ and $H$, between bundle gerbes $L$ and $J$, must be related by $K = H\otimes \pi^{-1}E$, where $E\rightarrow X$ is a complex line bundle.
\begin{proposition}[\cite{Mur}] The group of stable isomorphism classes of bundle gerbes over $X$, with group operator the product of bundle gerbes, is isomorphic to $H^3(X,\Z)$.
The cohomology class corresponding to a particular bundle gerbe, or its stable equivalence class, is referred to as the \emph{Dixmier-Douady class}.
\end{proposition}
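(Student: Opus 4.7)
The plan is to construct the Dixmier-Douady class explicitly from \v Cech data and then verify the four properties (well-definedness, multiplicativity, injectivity with respect to stable isomorphism, and surjectivity).

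First I would unpack the cocycle construction. Choose a good open cover $\mathcal{U}=\{U_\alpha\}$ of $X$ together with local sections $s_\alpha:U_\alpha\rightarrow Y$ of the locally split map $\pi:Y\rightarrow X$. These sections assemble into maps $(s_\alpha,s_\beta):U_{\alpha\beta}\rightarrow Y^{[2]}$, and pulling back $L\rightarrow Y^{[2]}$ yields line bundles $L_{\alpha\beta}:=(s_\alpha,s_\beta)^{-1}L$ over the contractible sets $U_{\alpha\beta}$. Choosing nowhere-zero sections $\sigma_{\alpha\beta}\in\Gamma(U_{\alpha\beta},L_{\alpha\beta})$, the bundle gerbe product produces $U(1)$-valued functions $g_{\alpha\beta\gamma}\in\underline{U(1)}(U_{\alpha\beta\gamma})$ characterized by $\sigma_{\alpha\beta}\sigma_{\beta\gamma}=g_{\alpha\beta\gamma}\,\sigma_{\alpha\gamma}$. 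Associativity of the bundle gerbe product on quadruple overlaps forces $\delta(\underline{g})=1$, so $[\underline{g}]\in\check{H}^2(X,\underline{U(1)})$; applying the Bockstein $\beta:\check{H}^2(X,\underline{U(1)})\xrightarrow{\cong}H^3(X,\Z)$ from the earlier exponential short exact sequence defines the candidate Dixmier-Douady class.

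Next I would verify well-definedness and functoriality. Changing the local trivializations $\sigma_{\alpha\beta}\mapsto h_{\alpha\beta}\sigma_{\alpha\beta}$ modifies $\underline{g}$ by the \v Cech coboundary $\delta(\underline{h})$; changing the sections $s_\alpha$ (or the good cover) can be absorbed into such a coboundary using the local splittings of $\pi$ and a common refinement argument. Multiplicativity is immediate: for a product gerbe $(L\otimes J,Y\times_\pi Z)$ one can choose trivializations that are tensor products of those for $L$ and $J$, so the cocycle is the pointwise product and the class is the sum. For a trivial gerbe $(\delta(K),Y)$ associated to a line bundle $K\rightarrow Y$, the sections $\sigma_{\alpha\beta}=s_\alpha^{-1}K\otimes (s_\beta^{-1}K)^\ast$ satisfy $\sigma_{\alpha\beta}\sigma_{\beta\gamma}=\sigma_{\alpha\gamma}$ with $g_{\alpha\beta\gamma}=1$, so trivial gerbes map to $0$. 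Combined with multiplicativity, this shows that stably isomorphic gerbes give the same Dixmier-Douady class, so the map descends to stable isomorphism classes and is a group homomorphism.

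The main obstacle, and the step deserving the most care, is injectivity (the converse of the previous remark). Suppose $(L,Y)$ has trivial Dixmier-Douady class, i.e.\ $g_{\alpha\beta\gamma}=h_{\beta\gamma}h_{\alpha\gamma}^{-1}h_{\alpha\beta}$ after passing to a refinement. Then $\{h_{\alpha\beta}^{-1}\sigma_{\alpha\beta}\}$ is a collection of sections of $\{L_{\alpha\beta}\}$ compatible with the bundle gerbe product, and one must patch these up into a global line bundle $K\rightarrow Y$ providing a trivialization $L\cong\delta(K)$. This requires using the locally split structure of $\pi$ to transport the trivializations from the local charts $s_\alpha(U_\alpha)\subset Y$ to all of $Y$, and showing the ambiguity in this extension is exactly pullback of a line bundle from $X$, which is the precise indeterminacy of a stable isomorphism. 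Surjectivity is then comparatively easy: given any class represented by a \v Cech cocycle $\underline{g}\in\check{Z}^2(\mathcal{U},\underline{U(1)})$, take $Y=\amalg_\alpha U_\alpha\rightarrow X$ (the nerve construction), so that $Y^{[2]}=\amalg_{\alpha\beta}U_{\alpha\beta}$, and define $L$ as the trivial line bundle on each piece with the bundle gerbe product given by multiplication by $g_{\alpha\beta\gamma}$; the cocycle condition is exactly associativity, producing a bundle gerbe whose Dixmier-Douady class is $\beta([\underline{g}])$.
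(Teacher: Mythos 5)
Your construction of the Dixmier--Douady class is exactly the one the paper presents (pull back $L$ along $(s_\alpha,s_\beta)$ to get $L_{\alpha\beta}$, trivialize over the contractible double overlaps, read off $g_{\alpha\beta\gamma}$ from the bundle gerbe product, check $\delta(\underline g)=1$ via associativity, then apply the Bockstein). Where you go beyond the paper is that the paper stops there and simply cites Murray for the remaining steps (well-definedness modulo stable isomorphism, additivity, injectivity, surjectivity), whereas you sketch all four. Your additivity, vanishing on trivial gerbes, and surjectivity-via-the-\v Cech-nerve gerbe $Y=\coprod_\alpha U_\alpha$ are standard and correct. On injectivity you rightly flag the only genuinely delicate point: turning a coboundary decomposition $g_{\alpha\beta\gamma}=\delta(\underline h)_{\alpha\beta\gamma}$ into a global trivializing line bundle $K\rightarrow Y$. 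One remark worth making explicit there: the sections $s_\alpha$ only hit the thin slices $s_\alpha(U_\alpha)\subset Y$, so to build $K$ over the full preimages $\pi^{-1}(U_\alpha)$ one should use the line bundles $J_\alpha\rightarrow\pi^{-1}(U_\alpha)$ with fibre $(J_\alpha)_y = L_{(y,\,s_\alpha\circ\pi(y))}$; the isomorphisms $J_\alpha\cong J_\beta\otimes\pi^{-1}L_{\alpha\beta}$ on overlaps, together with the modified trivializations $h_{\alpha\beta}^{-1}\sigma_{\alpha\beta}$ (which you correctly verify satisfy an untwisted cocycle condition), glue these into $K$, and the residual ambiguity is tensoring by $\pi^{-1}$ of a line bundle on $X$ --- precisely the indeterminacy of a stable isomorphism, as you say. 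So: same construction as the paper, with the omitted verification filled in essentially the way Murray does it.
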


The Dixmier-Douady class of a bundle gerbe $(L,Y,X)$ is constructed as follows. Let $\mathcal{U}=\{U_\alpha\}$ be a good open cover of $X$, with local sections $s_\alpha:U_\alpha\rightarrow Y$ of $Y\rightarrow X$. On double intersections $U_{\alpha\beta}$, these can be combined into sections $(s_\alpha,s_\beta):U_{\alpha\beta}\rightarrow Y^{[2]}$, which can then be used to pull $L\rightarrow Y^{[2]}$ back onto each $U_{\alpha\beta}$, yielding a set of $U(1)$ vector bundles
\[ L_{\alpha\beta} := (s_\alpha,s_\beta)^{-1}L\rightarrow U_{\alpha\beta}\,. \]
Because $\mathcal{U}$ is a good cover, $U_{\alpha\beta}$ are all contractible and hence $L_{\alpha\beta}$ are trivial bundles. They admit global sections $\sigma_{\alpha\beta}:U_{\alpha\beta}\rightarrow L_{\alpha\beta}$. The bundle gerbe product $L_{(\alpha,\beta})\otimes L_{(\beta,\gamma)}\rightarrow L_{(\alpha,\gamma)}$ induces a bundle isomorphism $L_{\alpha\beta}\otimes L_{\beta\gamma}\rightarrow L_{\alpha\gamma}$, which can be used to patch $\sigma_{\alpha\beta}$ together on triple overlaps:
\[ \sigma_{\alpha\beta}\otimes \sigma_{\beta\gamma} = \sigma_{\alpha\gamma}g_{\alpha\beta\gamma}\,, \]
where $g_{\alpha\beta\gamma}$ are functions $U_{\alpha\beta\gamma}\rightarrow U(1)$. On quadruple overlaps
\[ \sigma_{\alpha\beta}\otimes\sigma_{\beta\gamma}\otimes\sigma_{\gamma\delta}=\sigma_{\alpha\gamma}\otimes\sigma_{\gamma\delta}g_{\alpha\beta\gamma} =\sigma_{\alpha\gamma}g_{\alpha\beta\gamma}g_{\alpha\gamma\delta} \,, \]
but on the other hand
\[ \sigma_{\alpha\beta}\otimes\sigma_{\beta\gamma}\otimes\sigma_{\gamma\delta}=\sigma_{\alpha\beta}\otimes\sigma_{\beta\delta}g_{\beta\gamma\delta} =\sigma_{\alpha\gamma}g_{\alpha\beta\delta}g_{\beta\gamma\delta} \,. \]
Thus, we conclude that $\{g_{\alpha\beta\gamma}\}$ satisfy the \v Cech cocycle condition
\[ g_{\alpha\beta\gamma}^{-1}g_{\alpha\beta\delta}g_{\alpha\gamma\delta}^{-1}g_{\beta\gamma\delta}=1\,, \]
and thus define an element $[\underline{g}]\in H^2(X,\underline{U(1)})$. 

The exact sequence of sheaves
\[ \xymatrix{ 0 \ar[r] & \underline{\Z} \ar[r] & \underline{\R} \ar[r] & \underline{U(1)} \ar[r] & 0 } \]
induces a long exact sequence
\[ \xymatrix{ \ldots \ar[r] & H^2(X,\underline{\R}) \ar[r] & H^2(X,\underline{U(1)}) \ar[r]^-\beta & H^3(X,\underline{\Z}) \cong H^3(X,\Z) \ar[r] & H^3(X,\underline{\R}) \ar[r] & \ldots } \]
of cohomology groups. It is easy to show using a partition of unity argument that $H^p(X,\underline{\R})$ vanish for $p>0$ \cite{Bry} and, hence, that the Bockstein homomorphism
\begin{equation}\label{Eq: Bockstein H^2(X,U(1)) to H^3(X,Z)} \beta:H^2(X,\underline{U(1)}\rightarrow H^3(X,\Z)\end{equation}
is actually an isomorphism\footnote{A more concrete description is given as follows. Writing $\underline{g} = \exp(i\underline{\rho})$, where $\underline{\rho}$ is a real-valued \v Cech $2$-cochain, and using the cocycle condition $\delta(\underline{g})=1$, yields
\[ \exp(i\delta(\underline{\rho})) = \delta(\exp(i\underline{\rho})) = \delta(\underline{g})=1 \,, \]
but this is possible if and only if
\[ \delta(\underline{\rho}) = 2\pi \underline{n} \,, \]
for an integer-valued \v Cech $3$-cochain $\underline{n}$. It is obviously a cocycle due to $\delta^2=0$. The image $\beta([\underline{g}])\in H^3(X,\Z)$ is precisely the class $[\underline{n}]\in H^3(X,\Z)$.}.

The Dixmier-Douady class of $(L,Y)$, which we denote by $d(L)$ or $d(L,Y)$, is the image of $[\underline{g}]$ under \eqref{Eq: Bockstein H^2(X,U(1)) to H^3(X,Z)}. We sometimes also call $[\underline{g}]$ the Dixmier-Douady class, hoping this will not cause any confusion. The Dixier-Douady class satisfies \cite{Mur}
\[ d(L^*)=-d(L) \quad \text{and} \quad d(L\otimes J)=d(L)+d(J)\,. \]
It also behaves naturally under pullbacks of bundle gerbes, namely
\[ d(\phi^{-1}L,Z)=\phi^*d(L,Y)\,. \]

\emph{Lifting bundle gerbes} are one of the most important examples of bundle gerbes. Consider the central extension
\begin{equation}\label{Eq: U(1) central extension} \xymatrix{ U(1) \ar[r] & \widehat{G} \ar[r]^\pi & G } \end{equation}
of Lie groups. We ask, what the obstruction for lifting a principal $G$ bundle $P\rightarrow X$ into a principal $\widehat{G}$ bundle is? In other words, given a principal $G$ bundle $P$, with transition functions $\{g_{\alpha\beta}\}$, exactly when is it possible to find a principal $\widehat{G}$ bundle, with transition functions $\{\widehat{g}_{\alpha\beta}\}$, such that $\pi(\widehat{g}_{\alpha\beta})=g_{\alpha\beta}$? First, consider the central extension \eqref{Eq: U(1) central extension} as a principal $U(1)$ bundle $\widehat{G}\rightarrow G$. Next, define a map $g:P^{[2]}\rightarrow G$ by $p_1 g(p_1,p_2)=p_2$ and pull $\widehat{G}\rightarrow G$ back into a principal $U(1)$ bundle over $P^{[2]}$. The fibre of the pullback over a point $(p_1,p_2)$ is $\{\widehat{g}\in \widehat{G} : p_1\pi(\widehat{g})=p_2\}$. Finally, form the associated complex line bundle of the $U(1)$ principal bundle. The product structure on $\widehat{G}$ induces the bundle gerbe product for the associated line bundle. The resulting bundle gerbe is called the lifting bundle gerbe of $P\rightarrow X$. It is trivial precisely when the aforementioned lift exists.

We are particularly interested in lifting bundle gerbes of central extensions
\[ \xymatrix{ U(1) \ar[r] & U(n) \ar[r] & PU(n) } \]
and
\[ \xymatrix{ U(1) \ar[r] & U(\Hilb) \ar[r] & PU(\Hilb) } \,, \]
where $\Hilb$ is the infinite-dimensional separable Hilbert space, $U(\Hilb)$ the unitary group and $PU(\Hilb):=U(\Hilb)/U(1)$ the projective unitary group. In D-brane theory, a stack of $n$ D-branes carries a $PU(n)$ Chan-Paton bundle. As we have already demonstrated, the possibility of lifting the $PU(n)$ bundle into a $U(n)$ bundle is crucial for the $K$-theoretic classification. As we already briefly mentioned in the beginning of this chapter, the infinite-dimensional case arises from the Freed-Witten anomaly cancellation condition, when the $H$-flux restricts to a nontorsion element on the D-brane worldvolume.

\section{Bundle gerbe connections}
The reason for the introduction of bundles gerbes, was to obtain a purely geometric description of $H^2(X,\mathcal{D}^2)$, classes of which are represented by cocycles
\[ (\underline{g},\underline{\Lambda},\underline{B})=\left(\{g_{\alpha\beta\gamma}\},\{\Lambda_{\alpha\beta}\},\{B_\alpha\}\right)\in \big(C^2(\mathcal{U},\underline{U(1)}),C^1(\mathcal{U},\underline{\Omega}^1),C^0(\mathcal{U},\underline{\Omega}^2)\big)\,. \]
As the notation suggests, the first entry on the left-hand side is the Dixmier-Douady class. We need to add more structure to bundle gerbes to obtain a model for the full Deligne cohomology class.

\begin{definition}[\cite{Mur}] A \emph{bundle gerbe connection} is a connection $\nabla$ on $L\rightarrow Y^{[2]}$, with the additional requirement that it commutes with the bundle gerbe product in the following sense. For any section $t_{12}\otimes t_{23}\in \Gamma(Y^{[2]}\circ Y^{[2]},L\otimes L)$, the connection satisfies
\begin{equation}\label{Eq: BG connection commutativity} \nabla(t_{12}\otimes t_{23}) = \nabla t_{12} \otimes 1 + 1\otimes \nabla t_{23} = \nabla t_{13}\,, \end{equation}
where $t_{13}=t_{12}\otimes t_{23}$ is given by the bundle gerbe product.
\end{definition}

It is not immediately obvious that bundle gerbe connections exist, but \cite{Mur} provides the following simple argument. Let $J\rightarrow Y$ be a $U(1)$ bundle and $\delta(J)$ the corresponding trivial bundle gerbe. It is a classical result that $J$ admits a connection $\nabla$ inducing the tensor product connection $\nabla\otimes 1 + 1\otimes \nabla^*$ on the bundle $\pi_1^{-1}(J)\otimes\pi_2^{-1}(J)^*\rightarrow Y^{[2]}$, where $\nabla^*$ denotes the dual connection. It is straightforward to verify that it commutes with the bundle gerbe product in the sense of \eqref{Eq: BG connection commutativity}. This can be extended to nontrivial bundle gerbes by a simple partition of unity argument. For each open set $U_\alpha$ of $X$ with a local section $s_\alpha$ of $Y$, we denote by $Y_\alpha$ the open subset $\pi^{-1}(U_\alpha)\subset Y$. Any bundle gerbe, say $(L,Y,X)$, restricted to $Y_\alpha$, is trivial. We already argued that there exist bundle gerbe connections for trivial bundle gerbes. By choosing a partition of unity subordinate to the open covering $\{U_\alpha\}$ and pulling it back onto $Y$, we can form a partition of unity for $Y^{[2]}$ subordinate to $\{U_\alpha^{[2]}\}$, which can finally be used to patch together the local bundle gerbe connections to form a bundle gerbe connection for $L$.

The curvature $2$-form of a bundle gerbe connection $\nabla$ is denoted by $F_\nabla$.
\begin{proposition}[\cite{Mur}] 
The complex
\begin{equation}\label{Eq: Murray complex} \xymatrix{ 0 \ar[r] & \Omega^q(X) \ar[r]^{\pi^*} & \Omega^q(Y) \ar[r]^\delta & \Omega^q(Y^{[2]}) \ar[r]^\delta & \Omega^q(Y^{[3]}) \ar[r]^\delta & \ldots } \end{equation}
where $\delta(\eta) := \sum_{i=1}^p (-1)^i \pi_i^*(\eta)$, is exact everywhere.
\end{proposition}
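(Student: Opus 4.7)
The plan is to prove exactness by constructing an explicit contracting homotopy, in the same spirit as the proof that the \v Cech--de Rham double complex is exact in the \v Cech direction \cite{BT}. The only structural input needed is the existence of local sections of $\pi$ together with a smooth partition of unity on $X$, which will manufacture the homotopy.

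Exactness at $\Omega^q(X)$ (injectivity of $\pi^*$) is immediate from the assumption that $\pi$ is locally split: if $\pi^*\omega = 0$ and $s_\alpha : U_\alpha \to Y$ is any local section, then $\omega|_{U_\alpha} = s_\alpha^* \pi^* \omega = 0$, and since $\{U_\alpha\}$ covers $X$ we conclude $\omega = 0$.

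For exactness at $\Omega^q(Y^{[p]})$ with $p \geq 1$, let $\{\rho_\alpha\}$ be a smooth partition of unity on $X$ subordinate to $\{U_\alpha\}$, and set $\widetilde{\rho}_\alpha := \pi^*\rho_\alpha$, a partition of unity on $Y$ subordinate to $\{\pi^{-1}(U_\alpha)\}$. Define maps
\[ \phi_\alpha : \pi^{-1}(U_\alpha)^{[p-1]} \longrightarrow Y^{[p]}, \qquad \phi_\alpha(y_1, \ldots, y_{p-1}) := \bigl(s_\alpha(\pi(y_1)),\, y_1, \ldots, y_{p-1}\bigr) \,. \]
These satisfy the simplicial-type identities $\pi_1 \circ \phi_\alpha = \id$ and $\pi_i \circ \phi_\alpha = \phi_\alpha \circ \pi_{i-1}$ for $i \geq 2$. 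Define the homotopy operator
\[ k : \Omega^q(Y^{[p]}) \longrightarrow \Omega^q(Y^{[p-1]}), \qquad k\omega := \sum_\alpha \widetilde{\rho}_\alpha \cdot \phi_\alpha^* \omega \,, \]
with each summand extended by zero outside $\pi^{-1}(U_\alpha)^{[p-1]}$; local finiteness of the partition of unity makes this well-defined.

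The heart of the proof is the identity $\delta k + k \delta = \id$ on $\Omega^q(Y^{[p]})$ for $p \geq 2$. Expanding both sides, the $i = 1$ term in $\delta(k\omega)$ collapses to $\sum_\alpha \widetilde{\rho}_\alpha \cdot \omega = \omega$ via $\pi_1 \circ \phi_\alpha = \id$ and $\sum_\alpha \widetilde{\rho}_\alpha = 1$, while the $i \geq 2$ terms cancel against $k(\delta\omega)$ through $\pi_i \circ \phi_\alpha = \phi_\alpha \circ \pi_{i-1}$ and a standard sign shift. For $p = 1$ the same computation yields $k\delta = \id - \pi^* k'$, where $k' : \Omega^q(Y) \to \Omega^q(X)$ is given by $k'\omega := \sum_\alpha \rho_\alpha \cdot s_\alpha^*\omega$; this gives exactness at $\Omega^q(Y)$. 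The main obstacle is purely combinatorial: keeping the signs straight in the homotopy identity. Once the compatibility relations between $\{\phi_\alpha\}$ and $\{\pi_i\}$ are in hand, the argument is formally indistinguishable from the standard proof that the augmented \v Cech complex associated to an open cover with partition of unity is acyclic.
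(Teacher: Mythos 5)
Your proof is correct and follows the same strategy as the cited source \cite{Mur} (the thesis itself states the result with a reference and does not reproduce the argument): build a contracting homotopy from the local sections $s_\alpha$ and a partition of unity on the base, exactly as in the standard proof that the augmented \v{C}ech complex of a good cover is acyclic.

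One small bookkeeping slip in your verification of the homotopy identity. The relation $\pi_1\circ\phi_\alpha=\id$ enters when you compute $\phi_\alpha^*(\pi_1^*\omega)=(\pi_1\circ\phi_\alpha)^*\omega=\omega$, i.e.\ in the $i=1$ summand of $k(\delta\omega)$, not of $\delta(k\omega)$; the $j=1$ summand of $\delta(k\omega)$ involves $\phi_\alpha\circ\pi_1$, which is \emph{not} the identity. Moreover, since the $i=1$ term of $\delta$ carries the sign $(-1)^1$, this collapsing term contributes $-\omega$, and a careful expansion with your conventions yields $\delta k + k\delta = -\id$ rather than $+\id$. Neither point affects the conclusion (a homotopy contracting to $-\id$ still kills cohomology, or one can absorb the sign into the definition of $k$), but the phrasing as written attributes the collapse to the wrong term.
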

Since $\delta(F_\nabla)=0$, exactness of \eqref{Eq: Murray complex}, there exists a $2$-form $f\in \Omega^2(Y)$, such that
\[ F_\nabla = \delta(f) = \pi_2^*(f)-\pi_1^*(f) \,. \]
The $2$-form $f$ is called the \emph{curving} for the bundle gerbe connection $\nabla$. The choice of a curving is obviously not unique. Any two curvings, $f_1$ and $f_2$, for the same connection $\nabla$, differ by a pullback of a $2$-form on $X$. Choosing a curving $f$ for $\nabla$, we have
\[ \delta(df)=d\delta(f)=dF_\nabla=0 \,. \]
Exactness of \eqref{Eq: Murray complex} implies $df=\pi^*(\omega)$. The $3$-form $\omega\in \Omega^3(X)$ is called the \emph{$3$-curvature} and $\omega/{2\pi i}$ the \emph{Dixmier-Douady form of the pair $(\nabla,f)$}. The Dixmier-Douady form is closed, because $\pi^*(d\omega)=d(\pi^*\omega)=d^2f=0$. In fact, the de Rham cohomology class of $\omega$ coincides with the image of the Dixmier-Douady class under the inclusion $H^3(X,\Z)\rightarrow H^3(X,\R)$ \cite{Mur}.

Using $s_\alpha$ to pull $f\in \Omega^2(Y)$ back onto $X$, yields $1$-forms $B_\alpha:=s_\alpha^*(f)$. Recall that the trivial line bundles $L_{\alpha\beta}\rightarrow U_{\alpha\beta}$ were the pullbacks of $L$ by $(s_\alpha,s_\beta)$. The bundle gerbe connection $\nabla$ is also pulled back by $(s_\alpha,s_\beta)$ to form connections $\nabla_{\alpha\beta}$ on $L_{\alpha\beta}$, which define a set of $1$-forms $\Lambda_{\alpha\beta}\in \Omega^1(Y^{[2]})$ by
\begin{equation}\label{Eq: Definition of Lambda} \nabla_{\alpha\beta}\sigma_{\alpha\beta}=\Lambda_{\alpha\beta}\otimes \sigma_{\alpha\beta}\,. \end{equation}

We then need to extend the concept of stable isomorphism to bundle gerbes with connection and curving. Recall, how a bundle gerbe $(L,Y)$ was said to be trivial if it was induced by a line bundle $J\rightarrow Y$. Now, consider the line bundle $J\rightarrow Y$ equipped with a connection $\nabla$ and curvature $F$. As before, we obtain a trivial bundle gerbe $(\delta(J),Y)$ but this time we also obtain a \emph{trivial bundle gerbe connection} $\delta(\nabla):=\pi_2^*(\nabla)-\pi_1^*(\nabla)$ and its corresponding \emph{trivial curving} $F$. Two bundle gerbes with connection and curving are called stably isomorphic, if they differ by a trivial bundle gerbe with connection and curving. 

The connection between bundle gerbes and Deligne cohomology is the following.
\begin{theorem}[\cite{Mur}] Given a bundle gerbe with connection and curving $(L,Y,X;\nabla,f)$, the triple $(\underline{g},-\underline{\Lambda},\underline{B})$ is a Deligne cocycle:
\[ \xymatrix{ 1 & & & \\ \underline{g} \ar[u]^\delta \ar[r]^{d\log} & 0 & & \\ & -\underline{\Lambda} \ar[u]^\delta \ar[r]^{-d} & 0 & \\ & & \underline{B} \ar[u]^\delta \ar[r]^d & \underline{H} } \]
Moreover, the mapping
\[ (L,Y,X;\nabla,f)\mapsto [\underline{g},-\underline{\Lambda},\underline{B}] \]
is an isomorphism between the group of stable isomorphism classes of bundle gerbes with connection and curving and the Deligne cohomology group $H^2(X,\mathcal{D}^2)$.
\end{theorem}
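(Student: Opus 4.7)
The plan is to proceed in four stages: verify the cocycle identities directly from the defining equations of $\underline{g}$, $\underline{\Lambda}$, $\underline{B}$; check that the assignment descends to stable isomorphism classes; construct an inverse map from Deligne cocycles to bundle gerbes with connection and curving; and finally verify compatibility with the group operations.

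First I would verify the three local cocycle identities required by the diagram. The identity $\delta(\underline{g})=1$ on quadruple overlaps was already established in the construction of the Dixmier-Douady class by applying associativity of the bundle gerbe product to $\sigma_{\alpha\beta}\otimes\sigma_{\beta\gamma}\otimes\sigma_{\gamma\delta}$. For the middle identity $d\log \underline{g}=\delta(\underline{\Lambda})$, I would apply the bundle gerbe connection $\nabla$ (viewed as a Leibnizian derivation commuting with the bundle gerbe product, by \eqref{Eq: BG connection commutativity}) to both sides of $\sigma_{\alpha\beta}\otimes\sigma_{\beta\gamma}=\sigma_{\alpha\gamma}g_{\alpha\beta\gamma}$; using \eqref{Eq: Definition of Lambda} on each factor and the Leibniz rule on the right-hand side, the coefficients of the common section $\sigma_{\alpha\gamma}$ yield $\Lambda_{\alpha\beta}+\Lambda_{\beta\gamma}=\Lambda_{\alpha\gamma}+d\log g_{\alpha\beta\gamma}$, which is precisely $\delta(\underline{\Lambda})=d\log\underline{g}$. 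For the bottom identity $\delta(\underline{B})=d\underline{\Lambda}$, I would compute the curvature of the pulled-back connection $\nabla_{\alpha\beta}=(s_\alpha,s_\beta)^*\nabla$ in two ways: on the one hand it equals $d\Lambda_{\alpha\beta}$ (line-bundle curvature of a connection whose connection $1$-form is $\Lambda_{\alpha\beta}$), and on the other hand $(s_\alpha,s_\beta)^*F_\nabla=(s_\alpha,s_\beta)^*\delta(f)=s_\beta^*f-s_\alpha^*f=B_\beta-B_\alpha$. Finally, $d\underline{B}$ is automatically $\delta$-closed, so it glues to the global $3$-curvature, which closes the total cocycle condition.

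Second, I would show that the class $[\underline{g},-\underline{\Lambda},\underline{B}]\in H^2(X,\mathcal{D}^2)$ is independent of all choices, and is invariant under stable isomorphism. Changing the local trivializing sections $\sigma_{\alpha\beta}\mapsto \sigma_{\alpha\beta}h_{\alpha\beta}$ by $U(1)$-valued functions modifies $\underline{g}$ by the \v Cech coboundary $\delta(\underline{h})$ and $\underline{\Lambda}$ by $d\log\underline{h}$; together these are exactly the total coboundary $D(\underline{h})$ of the $0$-cochain $\underline{h}\in C^1(\mathcal{U},\underline{U(1)})$. Similarly, changing the curving $f\mapsto f+\pi^*\eta$ by a pullback shifts $\underline{B}\mapsto \underline{B}+\eta$, which is $D$-exact since a globally defined $2$-form is a Deligne coboundary. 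Refinement of the cover is handled by the standard \v Cech refinement argument, invoking proposition \ref{Proposition: Cech hypercohomology = sheaf hypercohomology} which guarantees independence of the cover. The crucial step is to check that a trivial bundle gerbe with its trivial connection and curving maps to a Deligne coboundary: for $(\delta(J),Y;\delta(\nabla),F)$, the transition data for $J$ furnishes a $1$-cochain in $C^1(\mathcal{U},\underline{U(1)})$ whose total Deligne coboundary reproduces the triple above; this gives invariance under stable isomorphism and, by the same argument, multiplicativity of the map with respect to tensor product on the bundle gerbe side and addition on the Deligne side.

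Third, I would construct the inverse. Given a Deligne cocycle $(\underline{g},-\underline{\Lambda},\underline{B})$ on a good cover $\mathcal{U}=\{U_\alpha\}$, take $Y:=\coprod_\alpha U_\alpha\to X$ as the surjective submersion, so that $Y^{[2]}=\coprod_{\alpha\beta} U_{\alpha\beta}$. Declare $L\to Y^{[2]}$ to be the trivial line bundle on each component with the obvious fibrewise bundle gerbe product dictated by $g_{\alpha\beta\gamma}$; define the connection on the $(\alpha,\beta)$-component by the $1$-form $\Lambda_{\alpha\beta}$, and the curving by pulling back $B_\alpha$ on each sheet $U_\alpha\subset Y$. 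The Deligne cocycle identities are precisely what is needed for the bundle gerbe product to be associative on triple overlaps, for the connection to commute with the product, and for $\delta(f)=F_\nabla$. This shows surjectivity.

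Fourth, the main obstacle is injectivity: if $(\underline{g},-\underline{\Lambda},\underline{B})$ is a Deligne coboundary $D(\underline{h},-\underline{\mu})$ with $\underline{h}\in C^1(\mathcal{U},\underline{U(1)})$ and $\underline{\mu}\in C^0(\mathcal{U},\underline{\Omega}^1)$, I must produce a line bundle $J\to Y$ with connection whose trivial bundle gerbe with its trivial connection and curving is isomorphic to $(L,Y,X;\nabla,f)$. Take $J$ to be defined by the transition functions $\underline{h}$ (possible because $\delta(\underline{h})=\underline{g}^{-1}$ shows $\underline{h}$ is a cocycle up to the prescribed Dixmier-Douady data, and this identifies $L$ with $\delta(J)$), equipped with the local connection $1$-forms $\underline{\mu}$. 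The identities $d\log\underline{h}=-\underline{\Lambda}-\delta(\underline{\mu})$ and $d\underline{\mu}=-\underline{B}$ ensure that this choice produces exactly the connection $\nabla$ and curving $f$ as the trivial ones $\delta(\nabla^J)$ and $F^J$. This matches the definitions of a stably trivial bundle gerbe with connection and curving, establishing injectivity. The hardest bookkeeping is ensuring all signs in the Deligne differential match the signs in the bundle gerbe compatibilities; this is why the sign convention $-\underline{\Lambda}$ is forced in the statement.
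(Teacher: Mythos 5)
Your cocycle verification follows the paper's own approach: apply the Leibniz compatibility \eqref{Eq: BG connection commutativity} to $\sigma_{\alpha\beta}\otimes\sigma_{\beta\gamma}=\sigma_{\alpha\gamma}g_{\alpha\beta\gamma}$ to extract $\delta(\underline{\Lambda})=d\log\underline{g}$, and pull back $F_\nabla=\delta(f)$ by $(s_\alpha,s_\beta)$ for the last identity. However, there is a sign slip in the latter step: with the paper's convention $\pi_i$ omits the $i$-th component, so $\pi_2\circ(s_\alpha,s_\beta)=s_\alpha$ and $\pi_1\circ(s_\alpha,s_\beta)=s_\beta$, giving $(s_\alpha,s_\beta)^*\delta(f)=s_\alpha^*f-s_\beta^*f=B_\alpha-B_\beta$, not $B_\beta-B_\alpha$ as you wrote. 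Your sign would actually break the cocycle condition $d\underline{\Lambda}+\delta(\underline{B})=0$, so this is not a harmless notational quirk.

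More substantively, your well-definedness paragraph contains a false statement: a globally defined $2$-form on $X$ is \emph{not} in general a Deligne coboundary. If $\eta$ were closed but non-exact, the class $[1,0,\underline{\eta}]\in H^2(X,\mathcal{D}^2)$ would be nontrivial. Consequently, changing the curving $f\mapsto f+\pi^*\eta$ genuinely changes the Deligne class — this is precisely why the curving is part of the data $(L,Y,X;\nabla,f)$ the map acts on, rather than a ``choice'' one must show the map is independent of. The claim was unnecessary for the argument, but it reflects a misconception worth flagging.

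For the isomorphism statement, you and the paper diverge in an interesting way: the paper simply cites \cite{MS1} and gives no proof, whereas you sketch the two missing directions explicitly. Your inverse construction (taking $Y=\coprod_\alpha U_\alpha$, the trivial line bundle on each component of $Y^{[2]}$, with product, connection, and curving read off from the Deligne cocycle) is the correct idea and gives surjectivity. Your injectivity argument is also structurally sound — use the trivializing $1$-cochain $(\underline{h},-\underline{\mu})$ to build the line bundle $J\to Y$ with connection whose $\delta(J)$ identifies the bundle gerbe as stably trivial — though the sign in your recorded identity for $d\log\underline{h}$ should be re-derived from $D(\underline{h},-\underline{\mu})$ carefully (you write $d\log\underline{h}=-\underline{\Lambda}-\delta(\underline{\mu})$, but with the convention $D=\delta+(-1)^{\check{\mathrm{C}}\text{ech degree}}\,d$ one gets $d\log\underline{h}=\underline{\Lambda}-\delta(\underline{\mu})$). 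Overall, what you gain over the paper is a self-contained treatment of the isomorphism; what you lose is some reliability on signs, and you should excise the false remark about global $2$-forms being Deligne-exact.
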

\begin{proof}
We already showed that $\delta(\underline{g})=1$. According to \eqref{Eq: Definition of Lambda},
\begin{equation}\label{Eq: Lambda transformation pt. 1} (\nabla_{\alpha\beta}\otimes 1 + 1\otimes\nabla_{\beta\gamma})\sigma_{\alpha\beta}\otimes \sigma_{\beta\gamma} = (\Lambda_{\alpha\beta}+\Lambda_{\beta\gamma})\otimes \sigma_{\alpha\gamma}g_{\alpha\beta\gamma}\,. \end{equation}
On the other hand, bundle gerbe connections were required to commute with the product in the sense of \eqref{Eq: BG connection commutativity}, which induces similar commutativity conditions for $\{\nabla_{\alpha\beta}\}$ and thus implies
\begin{equation}\label{Eq: Lambda transformation pt. 2}\begin{split} (\nabla_{\alpha\beta}\otimes 1 + 1\otimes\nabla_{\beta\gamma})\sigma_{\alpha\beta}\otimes \sigma_{\beta\gamma} &= \nabla_{\alpha\gamma} (\sigma_{\alpha\gamma}g_{\alpha\beta\gamma})\\ &= dg_{\alpha\beta\gamma}\otimes \sigma_{\alpha\gamma}+\Lambda_{\alpha\gamma}\otimes \sigma_{\alpha\gamma}g_{\alpha\beta\gamma} \\ &= (d\log g_{\alpha\beta\gamma}+\Lambda_{\alpha\gamma})\otimes \sigma_{\alpha\gamma}g_{\alpha\beta\gamma} \,. \end{split}\end{equation}
Comparing \eqref{Eq: Lambda transformation pt. 1} and \eqref{Eq: Lambda transformation pt. 2} yields the transformation property
\[ \Lambda_{\beta\gamma}-\Lambda_{\alpha\gamma} + \Lambda_{\alpha\beta} = d\log g_{\alpha\beta\gamma}\,. \]
Finally, recall that $F_\nabla = \pi_2^* f - \pi_1^* f$. The sections $(s_\alpha,s_\beta):U_{\alpha\beta}\rightarrow Y$ can be used to pull $F_\nabla$ back to curvature $2$-forms
\[ F_{\alpha\beta}=(s_\alpha,s_\beta)^*F_\nabla = (s_\alpha,s_\beta)^*(\pi_2^*f - \pi_1^* f)=s_\alpha^* f-s_\beta^* f = B_\alpha-B_\beta \,, \]
for the line bundles $L_{\alpha\beta}$. Using $F_{\alpha\beta}=d\Lambda_{\alpha\beta}$, we finally get
\[ d\Lambda_{\alpha\beta} = B_\alpha-B_\beta \,. \]
This proves the first claim, that $(\underline{g},-\underline{\Lambda},\underline{B})$ is a Deligne cocycle. For the second part of the proof, we refer the reader to \cite{MS1}.
\end{proof}

\section{Bundle gerbe modules}\label{Section: Bundle gerbe modules}
In this section we shall describe yet more extra structure for bundle gerbes, namely \emph{bundle gerbe modules} and \emph{bundle gerbe module connections}. These were originally introduced in \cite{BCMMS}, which is our reference for what follows\footnote{The main result in \cite{BCMMS} was the formulation of \emph{twisted $K$-theory} in terms of bundle gerbe modules. We discuss twisted $K$-theory only at the end of this chapter.}.

\begin{definition}[\cite{BCMMS}]
Let $(L,Y,X)$ be a bundle gerbe and $E\rightarrow Y$ a $U(n)$ vector bundle. Suppose there exists a bundle isomorphism
\begin{equation}\label{Eq: BG module isomorphism} \phi:L\otimes \pi_1^{-1} E \xrightarrow{\cong} \pi_2^{-1} E \,, \end{equation}
which is compatible with the bundle gerbe product in the sense of the commutative diagram
\[ \xymatrix{ L_{(y_1,y_2)}\otimes L_{(y_2,y_3)}\otimes E_{y_3} \ar[rr]^-{\id \otimes \phi} \ar[d] & & L_{(y_1,y_2)}\otimes E_{y_2} \ar[d]^\phi \\
L_{(y_1,y_3)}\otimes E_{y_3} \ar[rr]^\phi & & E_{y_1} } \]
Then bundle $E$ is called a (finite rank) bundle gerbe module. The bundle gerbe $(L,Y)$ is said to act on $E$ by \eqref{Eq: BG module isomorphism}.

There is a natural concept of isomorphism for bundle gerbe modules. Two bundle gerbe modules are said to be isomorphic, if they are isomorphic as vector bundles and the isomorphism commutes with the action $\phi$. If $E$ and $F$ are bundle gerbe modules for $L$, the direct sum $E\oplus F$ is also a bundle gerbe module for $L$ and thus defines a monoid structure on the set $\Mod(L)$ of isomorphism classes of (finite rank) bundle gerbe modules for $L$. 
\end{definition}

\begin{proposition}[\cite{BCMMS}]\label{Prop: Stably isomorphic BG have isomorphic Mod}
A stable isomorphism of $(L,Y)$ and $(J,Y)$ induces a semigroup isomorphism $\Mod(L) \cong \Mod(J)$. The isomorphism is not canonical, however, and a change in the stable isomorphism by tensoring with $\pi^{-1}K$\footnote{Recall that any two stable isomorphisms differ necessarily by a tensor product with the pullback of a line bundle $K\rightarrow X$.}, results in a change of the isomorphism by composition with the endomorphism of $\Mod(J)$ induced by tensoring with $\pi^{-1}K$.
\end{proposition}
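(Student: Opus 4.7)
The plan is to unpack the definition of stable isomorphism, observe that tensoring by (the pullback of) the stable isomorphism line bundle transports modules from $L$ to $J$, and then verify the two naturality statements (with respect to direct sums and with respect to the choice of stable isomorphism).

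First I would fix the data: a stable isomorphism is a line bundle $K \rightarrow Y$ together with a bundle gerbe isomorphism $\psi: L \xrightarrow{\cong} J \otimes \delta(K)$, where $\delta(K)_{(y_1,y_2)} = K_{y_2}\otimes K_{y_1}^*$. Given a bundle gerbe module $(E,\phi)$ for $L$, with $\phi: L \otimes \pi_1^{-1} E \xrightarrow{\cong} \pi_2^{-1} E$, I can substitute $L \cong J \otimes \pi_2^{-1} K \otimes \pi_1^{-1} K^*$ and rearrange the tensor factors to rewrite the action in the form
\[ J_{(y_1,y_2)} \otimes (K^*\otimes E)_{y_1} \xrightarrow{\cong} (K^*\otimes E)_{y_2}\,, \]
so that $K^*\otimes E$ becomes a bundle gerbe module for $J$. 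The assignment $\Psi_K: E \mapsto K^*\otimes E$ is the proposed map $\Mod(L)\rightarrow \Mod(J)$. The routine but nontrivial point is to check that the new action satisfies the associativity diagram from the definition of a bundle gerbe module; this reduces to the fact that the stable isomorphism $\psi$ itself intertwines the bundle gerbe products of $L$ and $J\otimes \delta(K)$, and that the bundle gerbe product on $\delta(K)$ comes from the obvious canonical pairing $K_{y_2}\otimes K_{y_2}^* \rightarrow \C$.

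Next, the semigroup property $\Psi_K(E\oplus F)\cong \Psi_K(E)\oplus \Psi_K(F)$ follows from distributivity of $\otimes$ over $\oplus$ for vector bundles. For invertibility, note that stable isomorphism is symmetric: if $L\cong J\otimes \delta(K)$, then $J\cong L\otimes \delta(K^*)$, and the corresponding transport $\Psi_{K^*}: \Mod(J)\rightarrow \Mod(L)$ sends $M\mapsto K\otimes M$. The composites are $E\mapsto K\otimes K^*\otimes E \cong E$ (via the canonical trivialization $K\otimes K^*\cong \mathbbm{1}_Y$), and similarly in the other direction, so $\Psi_K$ is a semigroup isomorphism.

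Finally, for the noncanonicality statement, suppose $H\rightarrow Y$ is another stable isomorphism between $L$ and $J$, so $K\cong H\otimes \pi^{-1}E$ for a line bundle $E\rightarrow X$. Then
\[ \Psi_K(F) = K^*\otimes F \cong H^*\otimes \pi^{-1}E^*\otimes F \cong \pi^{-1}E^*\otimes \Psi_H(F)\,, \]
which exhibits $\Psi_K$ as the composition of $\Psi_H$ with the endomorphism of $\Mod(J)$ given by tensoring with $\pi^{-1}E^*$ (up to renaming $E$ versus $E^*$ according to the sign convention used in writing the stable isomorphism). The main obstacle, as usual in bundle-gerbe bookkeeping, will be keeping the fiber indices $y_1,y_2$ straight so that the compatibility square for the new module action really does commute; once the substitution is made carefully, the rest is essentially formal tensor algebra.
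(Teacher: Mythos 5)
The paper states this proposition with a citation to BCMMS and offers no proof of its own, so there is no in-text argument to compare against; your proposal supplies the standard argument and the overall structure is sound. Transporting modules by tensoring with the stable-isomorphism line bundle, checking compatibility with the gerbe product, verifying additivity under $\oplus$, inverting via $K^*$, and tracking the dependence on the choice of $K$ through a pullback line bundle from $X$ is exactly the right sequence of steps.

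That said, your index bookkeeping is inconsistent with the paper's conventions in two places, and the errors partially but not fully cancel. With $\pi_i$ deleting the $i$-th slot (so $\pi_1(y_1,y_2)=y_2$, $\pi_2(y_1,y_2)=y_1$), the paper has $\delta(K)=\pi_1^{-1}K\otimes\pi_2^{-1}K^*$, i.e.\ $\delta(K)_{(y_1,y_2)}=K_{y_2}\otimes K_{y_1}^*$; you wrote $\pi_2^{-1}K\otimes\pi_1^{-1}K^*$, which is $\delta(K)^*$. Also, the module action in the paper is $\phi\colon L\otimes\pi_1^{-1}E\to\pi_2^{-1}E$, i.e.\ $L_{(y_1,y_2)}\otimes E_{y_2}\to E_{y_1}$, but your displayed target arrow runs the opposite way, $J_{(y_1,y_2)}\otimes(\,\cdot\,)_{y_1}\to(\,\cdot\,)_{y_2}$. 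Carried out in the paper's conventions, substituting $L_{(y_1,y_2)}\cong J_{(y_1,y_2)}\otimes K_{y_2}\otimes K_{y_1}^*$ into $L_{(y_1,y_2)}\otimes E_{y_2}\to E_{y_1}$ and clearing $K_{y_1}^*$ gives $J_{(y_1,y_2)}\otimes(K\otimes E)_{y_2}\to(K\otimes E)_{y_1}$, so the transport is $E\mapsto K\otimes E$, not $K^*\otimes E$. This is purely a matter of matching conventions (your argument is internally consistent with your own choice of $\delta$), but since the proposition's noncanonicality clause is phrased in terms of "tensoring with $\pi^{-1}K$", the reader needs the signs to line up with the paper's $\delta(K)$ to know whether the discrepancy between two induced isomorphisms is tensoring with the pullback line bundle or with its dual. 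One further detail worth stating explicitly: the endomorphism of $\Mod(J)$ in the final step really is well-defined because $\pi^{-1}K$ has equal fibres at $y_1$ and $y_2$ over any point of $Y^{[2]}$, so it commutes through the module isomorphism; you gesture at this but do not record it.
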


\begin{proposition}\label{Prop: BG module related to torsion}
If $(L,Y)$ has a bundle gerbe module $E\rightarrow Y$ of finite rank $n$, then the Dixmier-Douady class $d(L)$ is $n$-torsion, that is, $n\,d(L)=0$.
\end{proposition}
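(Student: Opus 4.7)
The plan is to take determinants in the defining isomorphism of the bundle gerbe module and observe that this trivializes $L^n$ as a bundle gerbe.

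First I would recall that, over $Y^{[2]}$, the bundle gerbe action gives an isomorphism of vector bundles
\[
\phi : L \otimes \pi_1^{-1}E \xrightarrow{\cong} \pi_2^{-1}E,
\]
which is a line-bundle-twist between two rank $n$ bundles. Taking determinants and using the standard identity $\det(L\otimes V) \cong L^{\rk V}\otimes \det V$ for a line bundle $L$ and a rank $n$ bundle $V$, I obtain
\[
L^{\otimes n} \otimes \pi_1^{-1}(\det E) \xrightarrow{\det\phi} \pi_2^{-1}(\det E),
\]
or equivalently, after moving $\pi_1^{-1}(\det E)$ to the other side,
\[
L^{\otimes n} \xrightarrow{\cong} \pi_2^{-1}(\det E)\otimes \pi_1^{-1}(\det E)^{*} \;=\; \delta(\det E).
\]
Thus $L^{\otimes n}$ is isomorphic, as a line bundle over $Y^{[2]}$, to the trivial bundle gerbe $\delta(\det E)$ constructed from the line bundle $\det E \to Y$.

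Next I would check that this isomorphism is genuinely an isomorphism of bundle gerbes, i.e.\ that it intertwines the bundle gerbe products. This is where the associativity-type square in the definition of a bundle gerbe module does the real work: applying $\det$ to the commutative diagram
\[
\xymatrix{ L_{(y_1,y_2)}\otimes L_{(y_2,y_3)}\otimes E_{y_3} \ar[r]^-{\id\otimes \phi} \ar[d] & L_{(y_1,y_2)}\otimes E_{y_2} \ar[d]^\phi \\ L_{(y_1,y_3)}\otimes E_{y_3} \ar[r]^\phi & E_{y_1} }
\]
yields a commutative square whose top-right/bottom-left compositions precisely say that the induced isomorphism $L^{\otimes n}\cong \delta(\det E)$ commutes with the bundle gerbe product on $L^{\otimes n}$ (which is the $n$-fold tensor power of the bundle gerbe product on $L$) and the canonical one on $\delta(\det E)$. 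The key identity to unwind is that the canonical bundle gerbe product on $\delta(\det E)$ is $\det E_{y_2}\otimes \det E_{y_1}^{*} \otimes \det E_{y_3}\otimes \det E_{y_2}^{*}\to \det E_{y_3}\otimes \det E_{y_1}^{*}$ given by pairing the middle factors; a direct diagram chase shows this matches the $\det$ of the module associativity square.

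Finally, the Dixmier-Douady class is additive under tensor products of bundle gerbes and vanishes on trivial bundle gerbes (those of the form $\delta(J)$); both facts were recorded just before the proposition. Hence
\[
n\cdot d(L) \;=\; d\bigl(L^{\otimes n}\bigr) \;=\; d\bigl(\delta(\det E)\bigr) \;=\; 0,
\]
so $d(L)$ is $n$-torsion in $H^3(X,\mathbb{Z})$, as claimed. The only mildly delicate step is the bookkeeping verifying that $\det\phi$ respects the bundle gerbe products; once that is done, the rest is formal.
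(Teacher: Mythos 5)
Your proof is correct and takes essentially the same approach as the paper: the paper's one-line argument that "$L^{\otimes n}$ acts on the rank $1$ bundle $\Lambda^n E$, hence $L^{\otimes n}$ is trivial" is exactly your observation that taking determinants of $\phi$ identifies $L^{\otimes n}$ with $\delta(\det E)$; you merely spell out the reduction $\det(L\otimes V)\cong L^{\otimes\rk V}\otimes\det V$ and the compatibility with bundle gerbe products that the paper leaves implicit.
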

\begin{proof}
If $\rk E = 1$, then clearly $E$ is a trivialization of $L$. If $\rk E = n$, then $L^{\otimes n}$ acts on the rank $1$ bundle $\Lambda^n E$. Thus $L^{\otimes n}$ is necessarily trivial, and using the properties of the Dixmier-Douady class we immediately obtain $d(L^{\otimes n})=n\,d(L) = 0$.
\end{proof}

\begin{definition}[\cite{BCMMS}]
A connection $\nabla^E$ on a (finite rank) bundle gerbe module $E\rightarrow Y$ for a bundle gerbe $(L,Y)$ with connection $\nabla$,
is called a \emph{bundle gerbe module connection}, if $\nabla^E$ and $\nabla$ are compatible in such a way that the induced
connections on $L\otimes \pi_1^{-1} E$ and $\pi_2^{-1} E$ are equal under the isomorphism $\phi$, defining the module
structure. It can be shown using a partition of unity argument, somewhat similar to the one used to show that bundle
gerbe connections exist, that also bundle gerbe module connections always exist \cite{BCMMS}.
\end{definition}
It would be interesting to see what bundle gerbe modules look like locally. Let $(L,Y;\nabla,f)$ be a bundle gerbe with
connection and curving, with associated Deligne class $[\underline{g},-\underline{\Lambda},\underline{B}]$. Let
$\{s_\alpha\}$ be local sections $s_\alpha:U_\alpha \rightarrow Y$ of $Y\rightarrow X$. A bundle gerbe
module $E\rightarrow Y$ of rank $n$ can be pulled back into trivial bundles $E_\alpha = s_\alpha^{-1}(E)$. The isomorphism \eqref{Eq: BG module isomorphism} descends into a bundle isomorphism
\[ L_{\alpha\beta}\otimes E_\beta \xrightarrow{\cong} E_\alpha\,. \]
Let $\{\delta_\alpha:U_\alpha\rightarrow E_\alpha\}$ be global sections of $\{E_\alpha\}$. The above isomorphisms then define a set $\{h_{\alpha\beta}\}$ of $n\times n$ matrices by
\[ \sigma_{\alpha\beta}\otimes \delta_\beta = \delta_\alpha h_{\alpha\beta}\,, \]
which satisfy a $\underline{g}$-twisted cocycle condition. This is easily verified by considering a section
\[ \sigma_{\alpha\beta}\otimes \sigma_{\beta\gamma}\otimes \delta_\gamma \in \Gamma(X,L_{\alpha\beta}\otimes L_{\beta\gamma}) \,, \]
which can be simplified using the bundle gerbe product and the bundle gerbe module structure in two ways which, ultimately, must yield equal results. Namely,
\[ \sigma_{\alpha\beta}\otimes \sigma_{\beta\gamma} \otimes \delta_\gamma = \sigma_{\alpha\gamma}g_{\alpha\beta\gamma}\otimes \delta_\gamma = \delta_\alpha h_{\alpha\gamma}g_{\alpha\beta\gamma}\,, \]
but on the other hand
\[ \sigma_{\alpha\beta}\otimes\sigma_{\beta\gamma}\otimes \delta_\gamma = \sigma_{\alpha\beta}\otimes \delta_\beta h_{\beta\gamma}=\delta_\alpha h_{\alpha\beta} h_{\beta\gamma}\,. \]
In order for these to be equal, $\{h_{\alpha\beta}\}$ must satisfy the $\underline{g}$-twisted cocycle condition
\begin{equation}\label{Eq: Twisted cocycle condition} h_{\alpha\beta}h_{\beta\gamma} h_{\alpha\gamma}^{-1} = g_{\alpha\beta\gamma}\,. \end{equation}
We denote the pullback connections $s_\alpha^{-1}(\nabla^E)$ by $\nabla^E_\alpha$ and define
\[ \nabla^E_\alpha \delta_\alpha = A_\alpha\otimes \delta_\alpha\,. \]
A standard calculation using the definition of a bundle gerbe module connection reveals that on double intersections $U_{\alpha\beta}$, $\{A_\alpha\}$ and $\{A_\beta\}$ must be related by
\begin{equation}\label{Eq: A local transformation} A_\beta = h_{\alpha\beta}^{-1} A_\alpha h_{\alpha\beta} + h_{\alpha\beta}^{-1}\,dh_{\alpha\beta} - \Lambda_{\alpha\beta}\,. \end{equation}

While the $\underline{g}$-twisted transition functions $\{h_{\alpha\beta}\}$ do not define a $U(n)$ bundle, their images in $PU(n)$ \emph{do} define a $PU(n)$ bundle $P\rightarrow X$. Let us now form the lifting bundle gerbe for $P\rightarrow X$ and denote it by $(J,P)$. Its Dixmier-Douady class is the obstruction for lifting the $PU(n)$ bundle $P$ into a $U(n)$ bundle, which is simply the failure of $\underline{g}$ being a coboundary. However, the image of $[\underline{g}]$ under the isomorphism $H^2(X,\underline{U(1)})\xrightarrow{\cong} H^3(X,\Z)$ is precisely the Dixmier-Douady class of the original bundle gerbe we started with. Therefore, the Dixmier-Douady class of $(J,P)$ is the same as that of the original bundle gerbe $(L,Y)$. Thus, $(J,P)$ and $(L,Y)$ are stably isomorphic. The lifting bundle gerbe admits a trivial bundle gerbe module $P\times \C^n$\footnote{Recall, how the lifting bundle gerbe was constructed: the fibre of $J$ at a point $(p_1,p_2)\in P^{[2]}$ is $\{g\in U(n) : p_1 \pi(g) = p_2 \}$, where $\pi:U(n)\rightarrow PU(n)$ is the natural projection. The bundle gerbe acts on the module $P\times \C^n$ simply by the natural action of $U(n)$ on $\C^n$.}. The isomorphism of proposition \ref{Prop: Stably isomorphic BG have isomorphic Mod} relates the two bundle gerbe modules. Different stable isomorphisms between $(L,Y)$ and $(J,P)$ yield different bundle gerbe modules for $(L,Y)$, but by proposition \ref{Prop: Stably isomorphic BG have isomorphic Mod}, they differ only by a tensor product with the pullback of a line bundle over $X$. Repeating the construction of the projective bundle for any of these bundle gerbe modules results in $PU(n)$ bundles isomorphic to $P\rightarrow X$. We have proved the following.
\begin{proposition}[\cite{BCMMS}]\label{Proposition: BG modules and projective bundles}
Let $\operatorname{Lin}(X)$ denote the group of isomorphism classes of line bundles over $X$ and $\operatorname{Proj}(X,\sigma)$ the set of isomorphism classes of finite rank projective bundle with Dixmier-Douady class $\sigma$. If $(L,Y)$ is a bundle gerbe over $X$ with $d(L)=\sigma$, then the above construction defines a bijection
\[ \frac{\Mod(L)}{\operatorname{Lin}(X)} \rightarrow \operatorname{Proj}(X,\sigma) \,. \]\qed
\end{proposition}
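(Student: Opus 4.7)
The plan is to verify that the construction already outlined in the discussion preceding the proposition is well-defined, descends to the quotient by $\operatorname{Lin}(X)$, and produces a bijection onto $\operatorname{Proj}(X,\sigma)$. First I would formalize the forward map: given $[E]\in \Mod(L)$ choose local sections $s_\alpha:U_\alpha\to Y$ and global sections $\delta_\alpha$ of the trivial bundles $E_\alpha=s_\alpha^{-1}E$, extract the $\underline{g}$-twisted transition functions $\{h_{\alpha\beta}\}$ from $\sigma_{\alpha\beta}\otimes \delta_\beta = \delta_\alpha h_{\alpha\beta}$, and push them into $PU(n)$ where \eqref{Eq: Twisted cocycle condition} becomes a genuine cocycle. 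These define a projective bundle $P(E)\to X$, whose Dixmier-Douady class equals $\beta([\underline{g}])=d(L)=\sigma$ by construction. A short verification shows that a different choice of $\delta_\alpha$ changes $h_{\alpha\beta}$ by a $U(n)$-valued coboundary, hence leaves $P(E)$ invariant, and that different choices of $s_\alpha$ give isomorphic projective bundles via the bundle gerbe product.

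Next I would check that tensoring $E$ by $\pi^{-1}K$ for $K\in \operatorname{Lin}(X)$ multiplies each $h_{\alpha\beta}$ by the $U(1)$-valued transition function of $K$ (pulled up via $s_\alpha$), which is scalar and hence disappears in the projection to $PU(n)$. Thus the forward map descends to a well-defined map $\Mod(L)/\operatorname{Lin}(X)\to \operatorname{Proj}(X,\sigma)$.

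For surjectivity I would use the lifting bundle gerbe construction: given $P\in \operatorname{Proj}(X,\sigma)$ form the lifting bundle gerbe $(J,P)$ whose Dixmier-Douady class is exactly $\sigma$, so $(J,P)$ is stably isomorphic to $(L,Y)$. The tautological trivial bundle gerbe module $P\times_{PU(n)}\C^n$ of $(J,P)$ then pulls back, via Proposition~\ref{Prop: Stably isomorphic BG have isomorphic Mod}, to a bundle gerbe module $E$ of $(L,Y)$, and applying the forward map to $E$ recovers $P$ up to isomorphism.

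The main obstacle is injectivity: if two bundle gerbe modules $E,E'$ for $(L,Y)$ give rise to isomorphic projective bundles, one must show that $E'\cong E\otimes \pi^{-1}K$ for some line bundle $K$ on $X$. My plan here is to run the construction in reverse. An isomorphism of the associated projective bundles means the twisted cocycles $\{h_{\alpha\beta}\}$ and $\{h'_{\alpha\beta}\}$ differ, after a choice of $U(n)$ lifts on double overlaps, by a $U(1)$-valued \v Cech $1$-cochain $\{\lambda_{\alpha\beta}\}$. Since both $h$ and $h'$ satisfy the same $\underline{g}$-twisted cocycle condition, $\{\lambda_{\alpha\beta}\}$ is a genuine $\underline{U(1)}$-valued cocycle and defines a line bundle $K\to X$; a routine check then shows the isomorphism $E'\cong E\otimes \pi^{-1}K$ over $Y$ is compatible with the bundle gerbe actions, so the classes agree modulo $\operatorname{Lin}(X)$. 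This precisely recovers the ambiguity described in Proposition~\ref{Prop: Stably isomorphic BG have isomorphic Mod}, completing the bijection.
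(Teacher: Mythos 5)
Your proposal is correct and follows the same overall architecture as the paper's own proof: the forward map is exactly the twisted-cocycle construction preceding the proposition, the Dixmier--Douady identity $\beta([\underline{g}])=d(L)=\sigma$ pins the target in $\operatorname{Proj}(X,\sigma)$, and surjectivity is obtained, as in the paper, by forming the lifting bundle gerbe $(J,P)$ of a given projective bundle, noting it is stably isomorphic to $(L,Y)$, and transporting the tautological module $P\times_{PU(n)}\C^n$ back to $\Mod(L)$ via Proposition~\ref{Prop: Stably isomorphic BG have isomorphic Mod}. Where you differ is injectivity: the paper handles it implicitly by appealing once more to Proposition~\ref{Prop: Stably isomorphic BG have isomorphic Mod} (different stable isomorphisms between $(L,Y)$ and $(J,P)$ change the transported module only by tensoring with $\pi^{-1}K$ for $K\in\operatorname{Lin}(X)$, and each such module produces $P$ again), whereas you argue directly at the level of twisted cocycles: if $P(E)\cong P(E')$, choose $U(n)$ lifts $\tilde f_\alpha$ of the comparison maps, write $h_{\alpha\beta}=\lambda_{\alpha\beta}\,\tilde f_\alpha h'_{\alpha\beta}\tilde f_\beta^{-1}$ with $\lambda_{\alpha\beta}\in U(1)$, and observe that both $h$ and $h'$ having the same twist $\underline{g}$ forces $\delta(\underline{\lambda})=1$, so $\underline{\lambda}$ defines a genuine line bundle $K$ implementing $E\cong E'\otimes\pi^{-1}K$ compatibly with the module actions. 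Your version is the more self-contained of the two: it proves injectivity without quietly assuming that every module on $(L,Y)$ arises from $P\times_{PU(n)}\C^n$ via \emph{some} stable isomorphism, a point the paper's two-sentence discussion glosses over. The trade-off is a slightly longer verification that the cochain $\underline{\lambda}$ closes and that the resulting isomorphism of $E$ with $E'\otimes\pi^{-1}K$ intertwines the $L$-actions; both checks are routine and go through exactly as you indicate.
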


According to proposition \ref{Prop: BG module related to torsion}, a bundle gerbe with a nontorsion Dixmier-Douady class can not have a finite rank bundle gerbe module. The obvious solution is to allow for bundle gerbe modules to be bundles of infinite-dimensional Hilbert spaces. The natural topology of $U(\Hilb)$ as the structure group of a principal $U(\Hilb)$ bundle is the \emph{strong operator topology} \cite{HJJS}. In principal $PU(\Hilb)$ bundles $PU(\Hilb)$ is endowed with the quotient topology of $U(\Hilb)/U(1)$. Then, $PU(\Hilb)$ acts continuously on $\mathcal{K}$ by conjugation with unitary operators. Moreover, $\Aut(\mathcal{K})\cong PU(\Hilb)$ \cite{CRM}.
\begin{lemma}\label{Lemma: PU(H) bundles = H^3(X,Z)} Isomorphism classes of principal $PU(\Hilb)$ bundles, over $X$ are in bijective correspondence with classes of $H^3(X,\Z)$. \end{lemma}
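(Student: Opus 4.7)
The plan is to identify the classifying space $BPU(\Hilb)$ with the Eilenberg-MacLane space $K(\Z,3)$, which immediately gives the result. First I would invoke the general principle established earlier in Definition \ref{Definition: Classifying space} and Theorem \ref{Theorem: Classifying spaces exist}, noting that Milnor's join construction applies equally well to the topological group $PU(\Hilb)$ equipped with its quotient topology inherited from $U(\Hilb)$ in the strong operator topology. This gives a bijection between isomorphism classes of principal $PU(\Hilb)$ bundles over $X$ and homotopy classes of maps $[X,BPU(\Hilb)]$.

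Next I would compute the homotopy groups of $PU(\Hilb)$ by exploiting the $U(1)$-fibration
\[ \xymatrix{ U(1) \ar[r] & U(\Hilb) \ar[r] & PU(\Hilb) } \]
coming from the central extension. By Kuiper's theorem (already cited in the text), $U(\Hilb)$ is contractible in the strong operator topology, so $\pi_k(U(\Hilb))=0$ for all $k$. The associated long exact sequence in homotopy then collapses to isomorphisms $\pi_k(PU(\Hilb))\cong \pi_{k-1}(U(1))$. Since $\pi_1(U(1))\cong \Z$ and all other homotopy groups of $U(1)$ vanish (by Lemma \ref{Lemma: Homotopies of spheres}), I would conclude that $\pi_k(PU(\Hilb))\cong \Z$ for $k=2$ and vanishes otherwise. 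Hence $PU(\Hilb)\simeq K(\Z,2)$.

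Applying the loop functor or equivalently looking at the fibration $PU(\Hilb)\to EPU(\Hilb)\to BPU(\Hilb)$ with contractible total space, the long exact sequence in homotopy gives $\pi_k(BPU(\Hilb))\cong \pi_{k-1}(PU(\Hilb))$, so $BPU(\Hilb)$ has $\pi_3\cong \Z$ as its only nonvanishing homotopy group. Since a CW-complex with the correct homotopy groups of an Eilenberg-MacLane space is itself a model for that space (as remarked in the footnote of Theorem \ref{Theorem: Classifying spaces exist}), we obtain $BPU(\Hilb)\simeq K(\Z,3)$. The proof concludes by combining the two bijections
\[ \operatorname{Prin}_{PU(\Hilb)}(X) \cong [X,BPU(\Hilb)] \cong [X,K(\Z,3)] \cong H^3(X,\Z), \]
where the last step is the representability of ordinary integral cohomology by the Eilenberg-MacLane spectrum.

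The main obstacle I anticipate is purely technical: ensuring the classifying space machinery is valid for the infinite-dimensional topological group $PU(\Hilb)$ with the strong operator topology. In particular, one must check that $PU(\Hilb)$ has the homotopy type of a CW-complex so that Whitehead's theorem promotes the weak equivalence with $K(\Z,2)$ to a genuine homotopy equivalence, and that the Milnor join carries out inside this topological category. These points are standard but nontrivial; for the purpose of the plan I would cite \cite{AS1} for the relevant foundations rather than reprove them.
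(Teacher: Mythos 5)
Your proof follows essentially the same route as the paper's: Kuiper's theorem plus the central extension $U(1)\to U(\Hilb)\to PU(\Hilb)$ to compute $\pi_\bullet(PU(\Hilb))$, then the universal bundle fibration to shift degrees and identify $BPU(\Hilb)\simeq K(\Z,3)$, and finally representability of integral cohomology. Your extra caution about CW homotopy type and the validity of the Milnor join for infinite-dimensional groups is reasonable but not strictly necessary for the conclusion: since $X$ is a manifold (hence of CW homotopy type), a weak homotopy equivalence between $BPU(\Hilb)$ and $K(\Z,3)$ already induces a bijection on $[X,-]$, so one does not need to upgrade to a genuine homotopy equivalence via Whitehead's theorem.
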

\begin{proof} A famous theorem by Kuiper \cite{Kui} states that $U(\Hilb)$ is contractible in the strong operator topology. The central extension
\[ \xymatrix{ U(1) \ar[r] & U(\Hilb) \ar[r] & PU(\Hilb) } \]
defines a principal $U(1)$ bundle with contractible total space. The long exact sequence of homotopy groups
\[ \xymatrix { \ldots \ar[r] & \pi_2(U(\Hilb)) \ar[r] & \pi_2(PU(\Hilb)) \ar[r] & \pi_1(U(1)) \ar[r] & \pi_1(U(\Hilb)) \ar[r] & \ldots } \]
yields the isomorphisms $\pi_2(PU(\Hilb))\cong \pi_1(U(1)) \cong \Z$ and $\pi_k(PU(\Hilb))=0$, for all $k\neq 2$. Hence, $PU(\Hilb)$ is a model for the Eilenberg-MacLane space $K(\Z,2)$ \cite{May}. Applying the long exact sequence of homotopy groups to the universal $PU(\Hilb)$ bundle $EPU(\Hilb)\rightarrow BPU(\Hilb)$ and using the contractibility of $EPU(\Hilb)$, yields the isomorphisms
\[ \pi_{k+1}(BPU(\Hilb))\cong \pi_k(PU(\Hilb)) = \pi_k(K(\Z,2))\,. \]
This proves that $BPU(\Hilb)$ is a model for $K(\Z,3)$. Therefore,
\[ H^3(X,\Z)\cong [X,K(\Z,3)] = [X,BPU(\Hilb)]\,, \]
which proves the claim. The class of $H^3(X,\Z)$ corresponding to a $PU(\Hilb)$ bundle is called the \emph{Dixmier-Douady class} of the bundle.
\end{proof}
Let $(L,Y)$ be a bundle gerbe, with possibly nontorsion Dixmier-Douady class $d(L)$. By lemma \ref{Lemma: PU(H) bundles = H^3(X,Z)}, there is a $PU(\Hilb)$ bundle $P\rightarrow X$, with lifting bundle gerbe $(J,P)$, such that $d(J)=d(L)$\footnote{Remark, that this approach works even when $d(L)$ is torsion.}. Now, let $E\rightarrow P$ be a bundle of infinite-dimensional separable Hilbert spaces $\Hilb$. There is an associated principal $U(\Hilb)$ bundle $U(E)\rightarrow P$, whose fibre over $p\in P$ is the group of isomorphisms $E_p \cong \Hilb$. For $f\in U(E)_p$, the right action of $u\in U(\Hilb)$ is given by $fu := f\circ u$.
\begin{definition}[\cite{BCMMS}]
Suppose that the structure group of $U(E)$ reduces to $U_1\subset U(\Hilb)$, the subgroup of unitary operators of the form $1+\text{trace class}$. We denote the principal $U_1$ subbundle of $U(E)$ by $R\rightarrow P$. If $u\in U(\Hilb)$, we denote its projectivization by $[u]\in PU(\Hilb)$. If $p_1,p_2\in P$ and $p_1 [u] = p_2$, then by the construction of the lifting bundle gerbe, $u\in J_{p_1,p_2}$. We then require that the map $U(E)_{p_1}\rightarrow U(E)_{p_2}$, sending $f$ to $ufu^{-1}$, preserves $R$. That is, for any $f\in R_{p_1}$ and $u\in J_{p_1,p_2}$, $ufu^{-1}\in R_{p_2}$. The bundle $E\rightarrow P$ is then called a \emph{$U_1$ bundle gerbe module}. A stable isomorphism between $(L,Y)$ and $(J,P)$ can then be used to carry the $U_1$ bundle gerbe modules for $(J,P)$ to $(L,Y)$. The monoid of $U_1$ bundle gerbe modules for $L$ is denoted by $\Mod_{U_1}(L)$.

Finally, one can define bundle gerbe module connections for $E$ in analogy with the finite dimensional case.
\end{definition}

We shall now briefly comment on how all the mathematical constructions discussed so far relate to the theory of D-branes. Our goal is to make the following claim precise and show that it is consistent with the physical theory of D-branes, as far as we have discussed it.
\begin{claim}\label{Claim: BGs and B-fields} Suppose $[H]|_\Sigma$ is torsion. The Dirac quantized background $B$-field is a bundle gerbe $(L_B,Y,X)$ with connection and curving. The $A$-field, living on the D-brane worldvolume $\Sigma$, is a (finite rank) bundle gerbe module connection for a certain bundle gerbe on $\Sigma$. If $[H]|_\Sigma$ is nontorsion, the $A$-field is a $U_1$ bundle gerbe module connection.
\end{claim}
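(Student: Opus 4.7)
The plan is to establish the correspondence in three stages: first interpret the $B$-field as a Deligne $2$-cocycle and invoke Murray's theorem, then match the $A$-field transformation rule to the local formula for a bundle gerbe module connection, and finally use the torsion proposition to distinguish the finite rank case from the infinite rank case.

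First I would take the physical local data of the $B$-field, namely a family $\{B_\alpha\in \Omega^2(U_\alpha)\}$ with gauge $1$-forms $\{\Lambda_{\alpha\beta}\}$ satisfying $B_\beta-B_\alpha = d\Lambda_{\alpha\beta}$ on double overlaps. On triple overlaps the quantity $\delta(\underline{\Lambda})_{\alpha\beta\gamma}=\Lambda_{\beta\gamma}-\Lambda_{\alpha\gamma}+\Lambda_{\alpha\beta}$ is closed, and Dirac quantization of the $H$-flux allows one to lift it to $d\log g_{\alpha\beta\gamma}$ for $\underline{U(1)}$-valued $2$-cochains $\underline{g}$ which, on quadruple overlaps, are \v Cech-closed. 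The triple $(\underline{g},-\underline{\Lambda},\underline{B})$ is precisely a representative of a class in $H^2(X,\mathcal{D}^2)$. By Murray's theorem, this Deligne class corresponds to a stable isomorphism class of bundle gerbes with connection and curving $(L_B,Y,X;\nabla,f)$, establishing the first half of the claim.

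Next I would compare the physical $A$-field transformation \eqref{Eq: A-field transformation} with the local transformation law \eqref{Eq: A local transformation} derived for a bundle gerbe module connection. The two formulas are formally identical under the identification of the Chan-Paton transition data $\{g_{\alpha\beta}\}$ with the twisted cocycles $\{h_{\alpha\beta}\}$, and the physical quantities $\{\Lambda_{\alpha\beta}\}$ with the pullback of the bundle gerbe $1$-forms from the second stage. Moreover, the physical cocycle relation $g_{\alpha\beta}g_{\beta\gamma}g_{\alpha\gamma}^{-1}=\zeta_{\alpha\beta\gamma}$ is precisely the $\underline{g}$-twisted cocycle condition \eqref{Eq: Twisted cocycle condition} for a bundle gerbe module, with twisting cocycle $\underline{\zeta}$. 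It follows that the Chan-Paton vector bundle equipped with its $A$-field constitutes a bundle gerbe module with connection for a bundle gerbe on $\Sigma$ whose Dixmier-Douady class is $\beta([\underline{\zeta}])$; the Freed-Witten anomaly cancellation condition \eqref{Eq: FW anomaly cancellation} guarantees that such a bundle gerbe on $\Sigma$ exists and is compatible, up to the correction $W_3(\Sigma)$, with the pullback $\phi^{-1}(L_B,Y,X)$.

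For the last step, I would dichotomize based on the torsion of $[H]|_\Sigma$. When $[H]|_\Sigma$ is torsion (hence so is $\beta([\underline{\zeta}])$, since $W_3(\Sigma)$ is $2$-torsion), a finite rank Chan-Paton bundle can exist and yields a finite rank bundle gerbe module by the construction of the previous step. When $[H]|_\Sigma$ is nontorsion, Proposition \ref{Prop: BG module related to torsion} rules out finite rank bundle gerbe modules on any bundle gerbe with nontorsion Dixmier-Douady class; one must pass to the infinite dimensional Hilbert bundle setup, with the Chan-Paton data realized as a principal $PU(\Hilb)$ bundle. The associated bundle of trace class perturbations of the identity reduces the structure group precisely as required in the definition of a $U_1$ bundle gerbe module, completing the identification.

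The main obstacle will be the second step, specifically pinning down a natural choice of bundle gerbe on $\Sigma$ whose module is the Chan-Paton bundle and whose bundle gerbe connection restricts, on double overlaps, to the $\Lambda_{\alpha\beta}$ appearing in the $A$-field transformation. This requires juggling the discrepancy $W_3(\Sigma)$ between the pullback $\phi^{-1}(L_B,Y,X)$ and the gerbe actually acting on the Chan-Paton module, and relies crucially on the anomaly cancellation equation \eqref{Eq: FW anomaly cancellation} being interpreted at the level of Deligne cocycles rather than merely cohomology classes.
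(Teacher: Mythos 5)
Your proposal is correct and follows the same logic that the paper develops — though the paper itself never states a formal proof of this Claim, but rather vindicates it through the subsequent sections on holonomy, higher parallel transport, and the anomaly-cancellation computation. Your three stages correspond exactly to the paper's treatment: (i) extracting a Deligne $2$-cocycle from the physical $(\underline{B},\underline{\Lambda})$-data and invoking Murray's theorem (modulo the usual sign convention $B_\alpha - B_\beta = d\Lambda_{\alpha\beta}$ rather than the opposite; this does not affect the substance); (ii) matching the $A$-field transformation law \eqref{Eq: A-field transformation} with the local formula \eqref{Eq: A local transformation} for a bundle gerbe module connection, twisted by \eqref{Eq: Twisted cocycle condition}; and (iii) using Proposition \ref{Prop: BG module related to torsion} to split the finite-rank from the $U_1$ case. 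You also correctly flag the delicate point that the ``certain bundle gerbe on $\Sigma$'' is not $\phi^{-1}L_B$ itself but rather $L_\zeta$, stably isomorphic to $L_B|_\Sigma\otimes L_w$, with Dixmier--Douady class $[H]|_\Sigma + W_3(\Sigma) = \beta([\underline{\zeta}])$ — precisely the $W_3(\Sigma)$-correction appearing in \eqref{Eq: FW anomaly cancellation}. The paper resolves your stated obstacle in the section on cancelling the anomaly by constructing the trivial bundle gerbe $L_p := L_B|_\Sigma\otimes L_w\otimes L_\zeta^*$ and reading off the relations \eqref{Eq: mu, q, Lambda relation}, \eqref{Eq: FW anomaly cancellation, trivialization relation 1}, \eqref{Eq: FW anomaly cancellation, trivialization relation 2}, which combine the Deligne data for $L_B|_\Sigma$, $L_w$ and $L_\zeta$ and display the $A$-field as a module connection for $L_\zeta$; this is exactly the cocycle-level interpretation you anticipate will be needed.
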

In fact, one does not even need such complicated objects as bundle gerbes are to treat the case of torsion $[H]|_\Sigma$. In \cite{Kap}, Kapustin solved the anomaly cancellation problem in the torsion case, using the theory of \emph{Azumaya algebras} and \emph{Azumaya algebra modules}, in place of bundle gerbes and bundle gerbe modules. However, the approach of Kapustin can not be extended to treat the case of nontorsion $[H]|_\Sigma$. The bundle gerbal approach was first presented in \cite{CJM}, where the anomaly cancellation problem was finally solved also for the nontorsion case.

We shall now start to make sense of the holonomy terms in \eqref{Eq: Open string path integral, problematic}.

\section{Holonomy}
Let us start by recalling the usual definition of holonomy for $U(1)$ line bundles with connection. We wish to think of isomorphism classes of line bundles with connection as Deligne classes in $H^1(X,\mathcal{D}^1)$ and formulate holonomy as ''Deligne cohomologically'' as possible, in hopes of finding a generalization to higher diagonal Deligne classes, in particular to $H^2(X,\mathcal{D}^2)$. Our references are \cite{Joh2} and \cite{CJM}. 

Let $\gamma:S^1\rightarrow X$ be a closed curve on $X$, for which $\gamma(0)=\gamma(1)=x$ and $P$ a principal $U(1)$ bundle, with connection $1$-form $A$. Then, for any point $p$ of the fibre $P_x$, there exists a unique solution $\widetilde{\gamma}$ of the \emph{parallel transport equation} \cite{Nak}, with $\widetilde{\gamma}(1) = p$, that is, a unique horizontal lift of the curve $\gamma$, starting at $p$. It turns out that
\[ \widetilde{\gamma}(1) = \widetilde{\gamma}(0) \cdot \hol(\gamma) \]
defines a mapping $\hol:LX\rightarrow U(1)$, which depends only on the bundle and the connection. Let us take a closer look at it. The curve $\gamma$ may be broken into parts $\{\gamma([t_i,t_{i+1}])\}$, such that $\gamma([t_i,t_{i+1}])\subset U_i$, where $\{U_i\}$ is a good open cover, with local sections $\{s_i\}$, trivializing the connection. Now, it is well known \cite{Nak} that the parallel transport of $s_i$ along $\gamma|_{[t_i,t_{i+1}]}$ maps
\[ s_i(\gamma(t_i))\mapsto s_i(\gamma(t_{i+1}))\,\exp\left(-\int_{t_i}^{t_{i+1}} \gamma^*(A_i) \right)\,. \]
Combining two such maps yields
\begin{align*} s_i(\gamma(t_i)) &\mapsto s_i(\gamma(t_{i+1}))\,\exp\left(-\int_{t_i}^{t_{i+1}} \gamma^* A_i \right) \\
&= s_{i+1}(\gamma(t_{i+1}))\,\exp\left(-\int_{t_i}^{t_{i+1}} \gamma^* A_i \right)  g_{t_{i+1} t_i}(\gamma(t_{i+1})) \\
&\mapsto s_{i+1}(\gamma(t_{i+2}))\,\exp\left(-\int_{t_i}^{t_{i+1}} \gamma^* A_i - \int_{t_{i+1}}^{t_{i+2}} \gamma^* A_{i+1} \right)  g_{t_{i+1} t_i}(\gamma(t_{i+1})) \,. \end{align*}
It is straightforward to see that the holonomy along the entire closed curve is
\begin{equation}\label{Eq: Holonomy, first attempt} \hol(\gamma) = \prod_i \exp\left(-\int_{t_i}^{t_{i+1}} \gamma^* A_i\right) \cdot \prod_i g_{t_{i+1} t_i}(\gamma(t_{i+1}))\,. \end{equation}
%It is not at all clear from the above expression, that it is independent of the choices made.

Let us now take a different approach. Denote by $[\underline{g},\underline{A}]\in H^1(X,\mathcal{D}^1)$ the Deligne class of the line bundle with connection. Given the curve $\gamma:S^1\rightarrow X$, we can pull the line bundle back onto $S^1$. The first Chern class of the pullback bundle is a class in $H^2(S^1,\Z)$. However, $H^2(S^1,\Z)=0$, and so the pullback bundle is trivial. This is not to say that the pullback of the Deligne class, $[\gamma^*(\underline{g}),\gamma^*(\underline{A})]$, with respect to the good open cover $\gamma^{-1}(\mathcal{U})=\{\gamma^{-1}(U_\alpha)\}$ of $S^1$, would be zero, but only that it is of the form $[\delta(\underline{h}),\gamma^*(\underline{A})]$. Diagrammatically, this is expressed by
\[ \xymatrix { 1 & \\
\delta(\underline{h}) \ar[u]^\delta \ar[r]^{-d\log} & 0 \\
\underline{h} \ar[u]^\delta & \gamma^* \underline{A} \ar[u]^\delta } \]
where
\[ -\gamma^* A_\beta + d\log h_\beta = -\gamma^* A_\alpha + d\log h_\alpha\,. \]
In other words, $-\gamma^* A_\alpha+d\log h_\alpha$ is a globally defined $1$-form on $S^1$ and, as such, can be integrated. It is easy to see that the difference between two such $1$-forms, $-\gamma^*(A_\alpha)+d\log h_\alpha$ and $-\gamma^*(\widetilde{A}_\alpha)+d\log \widetilde{h}_\alpha$, constructed from the same Deligne class, is $2\pi i$ times an integral $1$-form. Therefore, the exponential
\begin{equation}\label{Eq: Holonomy integral, h dependent} \exp\int_{S^1} \left(-\gamma^* A_\alpha + d\log h_\alpha\right) \end{equation}
takes values in $S^1$. 

To evaluate \eqref{Eq: Holonomy integral, h dependent}, we need to break the integral into a sum of integrals over open sets of $S^1$ in such a way, that the integrals of $A_\alpha$ and $d\log h_\alpha$ become independently defined. Let $\tau$ be a triangulation of $S^1$, consisting of vertices $v$ and edges $e$, such that the image under $\gamma$ of each edge is contained within at least one $U_\alpha$\footnote{Such a triangulation always exists, because $S^1$ is compact and thus has a finite Lebesgue number. We simply choose the edges to be of length less than the Lebesgue number.}. For an edge $e$, we denote by $U_{\rho(e)}$ the open set in which $\gamma(e)$ is contained, that is, $e\subset \gamma^{-1}(U_{\rho(e)})$. Here $\rho:\tau \rightarrow \mathcal{I}$ is an \emph{index map}, mapping elements of the triangulation into the index set of the good open cover $\mathcal{U}$ of $X$. The integral \eqref{Eq: Holonomy integral, h dependent} becomes
\begin{multline}\label{Eq: Holonomy integral, h dependent 2} \prod_e \exp\left(-\int_e \gamma^* A_{\rho(e)} + \int_e d\log h_{\rho(e)} \right) \\ = \prod_e \exp\left(-\int_e \gamma^* A_{\rho(e)} + \log h_{\rho(e)}(v_e^1) - \log h_{\rho(e)}(v_e^0)\right) \,, \end{multline}
where $v_e^0$ and $v_e^1$ are the bounding vertices of $e$. On the other hand, using
\[ \log (\gamma^* g_{\alpha\beta}) = \log h_\beta - \log h_\alpha\,, \]
the difference
\[ \log h_{\rho(e)}(v_e^1) - \log h_{\rho(e)}(v_e^0) \]
becomes
\[ \log(\gamma^* g_{\rho(e)\rho(v_e^0)})(v_e^0)-\log(\gamma^* g_{\rho(e)\rho(v_e^1)})(v_e^1)-\log h_{\rho(v_e^0)}(v_e^0) + \log h_{\rho(v_e^1)}(v_e^1)\,. \]
Summing over all edges makes the last two terms cancel each other. Hence, \eqref{Eq: Holonomy integral, h dependent 2} becomes
\begin{align} \notag &\prod_e \exp\left(-\int_e \gamma^* A_{\rho(e)} + \log(\gamma^* g_{\rho(e)\rho(v_e^0)})(v_e^0)-\log(\gamma^* g_{\rho(e)\rho(v_e^1)})(v_e^1) \right)  \\ \notag &\qquad = \prod_e \exp\left(-\int_e \gamma^* A_{\rho(e)}\right) \cdot \prod_e g_{\rho(e)\rho(v_e^0)}(\gamma(v_e^0)) g_{\rho(e)\rho(v_e^1)}^{-1}(\gamma(v_e^1)) \\ \label{Eq: Holonomy, local expression} &\qquad = \prod_e \exp\left(-\int_e \gamma^* A_{\rho(e)}\right) \cdot \prod_{v\subset e} g^{-1}_{\rho(e)\rho(v)}(\gamma(v)) \,, \end{align}
where the product of $g^{-1}_{\rho(e)\rho(v)}(\gamma(v))$ over vertices should be understood as indicated by the previous form.

More generally, whenever we write a product over edges and vertices, we always assume it to be an \emph{oriented product}, meaning that we automatically take into account both endpoints of the (oriented) edge in such a way, that the value of the function at the initial vertex is taken inverted.

We have obtained an expression for \eqref{Eq: Holonomy integral, h dependent}, which depends only on the Deligne cocycle, $\tau$ and the index map $\rho$. It is proved in \cite{Joh2} that it is also independent of the chosen triangulation and the index map and thus yields, for a given line bundle with connection, a well-defined map from the loop space $LX$ into $S^1$. Remark also, that \eqref{Eq: Holonomy, first attempt} is just a special case of \eqref{Eq: Holonomy, local expression}, which gives us hope of generalizing the concept of holonomy to higher Deligne cohomology, in particular to $H^2(X,\mathcal{D}^2)$. If the connection happened to be globally defined, the holonomy would simplify into
\[ \hol(\gamma) = \exp\left(-\int_{S^1} \gamma^*A\right) = \exp\left(-\int_{\gamma} A\right) \,. \]
Even when the connection is not globally defined, we denote $\hol(\gamma)$ heuristically by
\begin{equation}\label{Eq: Heuristic holonomy integral} \exp\left(-\int_{\gamma} A\right) \,. \end{equation}
In \cite{Alv}, the starting point is to make sense of integrals such as \eqref{Eq: Heuristic holonomy integral}. It might sound surprising that one then necessarily obtains the data for a Deligne class. This is explained by the fact that a suitable holonomy function, together with a generalized gauge field strength define a so-called \emph{Cheeger-Simons differential character}, the set of which forms a group isomorphic to $H^p(X,\mathcal{D}^p)$ \cite{CS2,Bry}. We have nothing more to say about differential characters, but one should still keep in mind the deep meaning of the holonomy function.
% We shall discuss differential characters in the next section, because they provide a very convenient way to define the transgression homomorphism \cite{GT}.

Let us recapitulate the above discussion on holonomy along a closed curve $\gamma$. We expressed our line bundle with connection as $[\underline{g},\underline{A}]\in H^1(X,\mathcal{D}^1)$ and pulled it back into a class $[\gamma^* \underline{g},\gamma^* \underline{A}]\in H^1(S^1,\mathcal{D}^1)$. Then we remarked that the pullback bundle is trivial, which was a central aspect of the construction, since the trivialization was used to construct a globally defined $1$-form, which could be then integrated over $S^1$. It was finally argued that the exponential of the integral is a well-defined function on the loop space $LX$, depending only on the Deligne class of the line bundle with connection and that this function coincided with the classical notion of holonomy along a closed curve. Also, there was nothing in the Deligne theoretic construction that would have prevented us from using any closed one-dimensional manifold in place of $S^1$. Only the dimensionality and closedness mattered.

Let now $[\underline{g},-\underline{\Lambda},\underline{B}]\in H^2(X,\mathcal{D}^2)$ be a Deligne class, corresponding to a bundle gerbe with connection and curving. Let $M$ be a $2$-dimensional closed surface and $\xi:M\rightarrow X$ a smooth map. Stable isomorphism classes of bundle gerbes were classified by their Dixmier-Douady class in $H^3(X,\Z)$. Since $H^3(M,\Z)=0$, all bundle gerbes over $M$ are trivial. This means that the pullback $[\xi^*\underline{g},-\xi^*\underline{\Lambda},\xi^*\underline{B}]$ can be written in the form $[\delta(\underline{h}),-\xi^*\underline{\Lambda},\xi^*\underline{B}]$. Diagrammatically, it is
\[ \xymatrix{ 1 & & \\
\delta(\underline{h}) \ar[u]^\delta \ar[r]^{d\log} & 0 & \\
\underline{h} \ar[u]^\delta & -\xi^*\underline{\Lambda} \ar[u]^\delta \ar[r]^{-d} & 0 \\
& & \xi^*\underline{B} \ar[u]^\delta } \]
from which we can immediately read
\[ d\log(h_{\beta\gamma}h^{-1}_{\alpha\gamma}h_{\alpha\beta})=\xi^*\Lambda_{\beta\gamma}-\xi^*\Lambda_{\alpha\gamma}+\xi^*\Lambda_{\alpha\beta}\,, \]
which is just the \v Cech cocycle condition:
\[ \delta(d\log\underline{h} - \xi^*\underline{\Lambda}) = 0 \,. \]
By exactness, there exists $\underline{m}\in C^0(\xi^{-1}(\mathcal{U}),\underline{\Omega}^1)$, such that
\begin{equation} \label{Eq: Holonomy, gauge transformation} d\log h_{\alpha\beta} - \xi^*\Lambda_{\alpha\beta} = m_\beta - m_\alpha\,. \end{equation}
Taking the de Rham differential on both sides, yields
\[ \xi^*d\Lambda_{\alpha\beta} = dm_\alpha - dm_\beta\,. \]
On the other hand, since $d\Lambda_{\alpha\beta} = B_\alpha - B_\beta$, we get
\[ \xi^*B_\beta - dm_\beta = \xi^*B_\alpha - dm_\alpha\,. \]
Hence, the $2$-form $\xi^*B_\alpha-dm_\alpha$ is globally defined and can be integrated over $M$. Indeed,
\begin{equation}\label{Eq: Gerbe holonomy, h dependent} \hol(\xi) = \exp\int_M \left(\xi^* B_\alpha - dm_\alpha\right) \end{equation}
provides the correct generalization of holonomy for bundle gerbes with connection and curving. By choosing a triangulation $\tau$ of $M$, consisting of faces $f$, edges $e$ and vertices $v$, such that the image under $\xi$ of each object in $\tau$ is contained in an element of the open cover $\mathcal{U}$ of $X$, the integral may be computed in local parts. It is straightforward computation to show that
\begin{equation}\label{Eq: Gerbe holonomy, local form} \hol(\xi) = \prod_f \exp\int_f \xi^*B_{\rho(f)} \cdot \prod_{e\subset f} \exp\left(-\int_e \xi^*\Lambda_{\rho(f)\rho(e)}\right) \cdot \prod_{v\subset e \subset f} g_{\rho(f)\rho(e)\rho(v)}(\xi(v))\,, \end{equation}
where the product of $g_{\rho(f)\rho(e)\rho(v)}(\xi(v))$ over vertices is understood the same way as in \eqref{Eq: Holonomy, local expression}. It can also be shown that $\hol(\xi)$ depends only on the Deligne class \cite{Joh2}. If $\underline{B}$ defines a global $2$-form $B$, also $d\underline{m}$ must define a global $2$-form $dm$ and thus, by Stoke's theorem and $\partial M = \varnothing$, the integral \eqref{Eq: Gerbe holonomy, h dependent} simplifies into
\[ \exp\int_M \xi^* B = \exp\int_\xi B\,. \]
In general, the above integral is ill-defined, but we still use it heuristically to denote the holonomy. We could alternatively have started with the above integral and attempt to make sense of it, in the spirit of \cite{Alv}, only to arrive at a Deligne cocycle and \eqref{Eq: Gerbe holonomy, local form}.

An even more general perspective at holonomy is provided by the exact sequence \cite{Gaj}
\[ \xymatrix{ 0 \ar[r] & \Omega^p(X)_{(c,0)} \ar[r] & \Omega^p(X) \ar[r]^\iota & H^p(X,\mathcal{D}^p) \ar[r]^d &
H^{p+1}(X,\Z) \ar[r] & 0 } \]
where $\Omega^p(X)_{(c,0)}$ is the set of closed $p$-forms, with $2\pi i$ times integral periods, $\iota$ maps
$\rho\in \Omega^p(X)$ to the Deligne class $[1,0,\ldots,0,\{\rho|_{U_\alpha}\}]$ and $d$ denotes taking the
Dixmier-Douady class. Consider the pullback of a Deligne class
$[\underline{g},\underline{\omega}^1,\ldots,\underline{\omega}^p]$ by $\xi:M\rightarrow X$, where $M$ is
$p$-dimensional. Because $H^{p+1}(M,\Z)=0$, the exact sequence yields an element $\rho\in \Omega^p(M)$ such that
\[ [1,0,\ldots,0,\{\rho|_{U_\alpha}\}] = [\xi^*\underline{g},\xi^*\underline{\omega}^1,\ldots,\xi^*\underline{\omega}^p]\,. \]
We may now define
\[ \hol(\xi) := \exp\int_M \rho\,. \]
This expression is independent of the chosen $p$-form because, by exactness, any two such $p$-forms $\rho$ and
$\widetilde{\rho}$ representing the same Deligne class, satisfy
\begin{equation}\label{Eq: Holonomy, general} \int_M (\rho-\widetilde{\rho}) \in 2\pi i \Z \,. \end{equation}
It is straightforward to verify that for $H^1(X,\mathcal{D}^1)$ and $H^2(X,\mathcal{D}^2)$, the global forms $\gamma^*A_\alpha - d\log h_\alpha$ and $\xi^*B_\alpha - dm_\alpha$ are candidates for $\rho$. The reader may wonder why there is a sign difference between this definition for the holonomy of $H^1(X,\mathcal{D}^1)$ and the one we developed earlier, starting from the parallel transport equation. This is purely a notational choice really. It turns out that to cancel the Freed-Witten anomaly, we need this sign difference if we want to have $\hol(\partial M,A)$ in the path integral. If we would not insert the extra minus sign here, we would have to do it somewhere else, but this seems to be a convenient choice. Ignoring the extra sign, all the holonomy functions defined above coincide with \eqref{Eq: Holonomy, general}. 

\section{Higher-dimensional parallel transports}
Let $M$ be a $p$-dimensional compact manifold. We denote by $\Map(M,X)$ the infinite-dimensional manifold of smooth maps $M\rightarrow X$, equipped with the compact-open smooth topology \cite{Hir}. For a triangulation $\tau$ of $M$, consisting of simplices $\sigma$ and an index map $\rho$,
\[ V_{(\tau,\rho)}=\{f\in \Map(M,X) : f(\sigma) \subset U_{\rho(\sigma)}\,, \sigma \in \tau\} \]
is an open set in $\Map(M,X)$. The sets $V_{(\tau,\rho)}$ form an open cover $\mathcal{V}$ of $\Map(M,X)$.

In the previous section we explained, how the ill-defined integral
\[ \exp\int_M \xi^* B \,, \]
appearing in the closed string path integral, should be correctly understood in terms of holonomy of a Deligne class. The holonomy integral, as developed in the previous section, works only for closed surfaces. However, the worldsheet of an open string, denote it by $M$, is a compact manifold with boundary $\partial M$. 

Let us start by considering a lower dimensional example. Let $I=[0,1]$ and $\gamma:I\rightarrow X$ an open curve. Given a line bundle $L$ with connection, holonomy around a closed curve could be defined as parallel transport around it. It is natural to expect that parallel transport provides the correct generalization of holonomy for open curves. Unlike the holonomy around a closed curve, parallel transport along an open curve is not a $U(1)$-valued function, because there is no canonical way to identify the fibres over $\gamma(0)$ and $\gamma(1)$, but rather a linear map $L_{\gamma(0)}\rightarrow L_{\gamma(1)}$, or alternatively an element of the vector space $L_{\gamma(0)}^*\otimes L_{\gamma(1)}$. Consider the line bundle $(\ev_0^{-1} L)^*\otimes (\ev_1^{-1} L)\rightarrow \Map(I,X)$, where $\ev_t$ is the evaluation map $\gamma\mapsto \gamma(t)$. The fibre over $\gamma\in \Map(I,X)$ is precisely $L_{\gamma(0)}^* \otimes L_{\gamma(1)}$ and, therefore, we may think of parallel transport as a section of this bundle. It turns out that there is a higher-dimensional generalization of parallel transport and it is a section of a trivial line bundle over $\Map(M,X)$. Moreover, this line bundle is the pullback by the restriction map $r:\Map(M,X)\rightarrow \Map(\partial M,X)$ of a line bundle over $\Map(\partial M,X)$. We shall illustrate, following \cite{Joh2}, how this line bundle can be constructed starting from the local formula for holonomy, which can then be generalized to the higher-dimensional case.

%On the other hand, parallel transport is traditionally defined as a solution to a certain differential equation. This is possible because there exists a ''time'' coordinate parametrizing the curve, but obviously such an approach fails when the curve is replaced by a map $\xi:M\rightarrow X$ where $M$ is of higher dimension than one, since then there would be no preferred coordinate. 

Consider two open curves, $\gamma_0$ and $\gamma_1$, with the same endpoints. We denote by $\gamma_0 \star \gamma_1^{-1}$ the composition of $\gamma_0$ and the inverse of $\gamma_1$, which is then a closed curve. It is clear that the composition of parallel transport along $\gamma_0$ and the inverse of parallel transport along $\gamma_1$ must be equal to $\hol(\gamma_0 \star \gamma_1^{-1})$. In other words, if we denote parallel transport along a curve $\gamma$ by $t(\gamma)$,
\begin{equation}\label{Eq: Parallel transport holonomy relation} t(\gamma_0)t(\gamma_1)^{-1} = \hol(\gamma_0 \star \gamma_1^{-1})\,. \end{equation}
We could attempt to define $t$ using either \eqref{Eq: Holonomy integral, h dependent} or \eqref{Eq: Holonomy, local expression}, by replacing $S^1$ with $I$, for example
\[ t(\gamma) \stackrel{?}{:=} \exp\int_{I} (-\gamma^* A_\alpha + d\log h_\alpha)\,, \]
which obviously satisfies \eqref{Eq: Parallel transport holonomy relation}. However, unlike in the case of holonomy, the integral depends on the choice of the trivialization $\{h_\alpha\}$ and only the combination yielding the holonomy is independent of it. It is explained in \cite{Joh2}, why this approach becomes very inconvenient in the higher-dimensional case. Instead, we shall take \eqref{Eq: Holonomy, local expression} as our starting point and define the \emph{parallel transport map}
\[ t_{(\tau,\rho)}(\gamma) := \prod_e \exp\left(-\int_e \gamma^* A_{\rho(e)}\right) \cdot \prod_{v\subset e} g^{-1}_{\rho(e)\rho(v)}(\gamma(v))\,, \]
where the edges $e$, vertices $v$ and the index map $\rho$ constitute a triangulation $\tau$ of $I$. In this case the definition depends on the triangulation, which is why we have indicated it with a subscript. In other words, it is only well-defined locally on open sets $V_{(\tau,\rho)}\subset \Map(I,X)$. We may restate \eqref{Eq: Parallel transport holonomy relation} using $\{t_{(\tau,\rho)}\}$ as
\begin{equation}\label{Eq: Parallel transport holonomy relation, triangulation dependent} t_{(\tau_0,\rho_0)}(\gamma_0)t_{(\tau_1,\rho_1)}(\gamma_1)^{-1} = \hol(\gamma_0 \star \gamma_1^{-1})\,, \end{equation}
if and only if $\gamma_0\in V_{(\tau_0,\rho)}$, $\gamma_1\in V_{(\tau_1,\rho_1)}$ and the index maps satisfy $\rho_0(v_b)=\rho_1(v_b)$ for the boundary vertices $v_b$, because only then the triangulations $(\tau_0,\rho_0)$ and $(\tau_1,\rho_1)$ combine to form a triangulation $(\tau_0\cup \tau_1,\rho_0\cup \rho_1)$ of $S^1$. Suppose that $\gamma \in V_{(\tau_0,\rho_0)}\cap V_{(\tau_1,\rho_1)}$, where $\rho_0(v_b)=\rho_1(v_b)$ for the boundary vertices. Then we have
\[ t_{(\tau_0,\rho_0)}(\gamma) = \hol(\gamma \star \gamma^{-1})t_{(\tau_0,\rho_0)}(\gamma) = t_{(\tau_1,\rho_1)}(\gamma)t_{(\tau_0,\rho_0)}^{-1}(\gamma)t_{(\tau_0,\rho_0)}(\gamma) = t_{(\tau_1,\rho_1)}(\gamma)\,, \]
which means that we are free to alter the triangulation of $I$ from $(\tau_0,\rho_0)$ to $(\tau_1,\rho_1)$, as long as $\gamma$ is also in $V_{(\tau_1,\rho_1)}$ and the images of the boundary vertices under the index maps stay intact.

The local functions $t_{(\tau,\rho)}:V_{(\tau,\rho)}\rightarrow U(1)$ can be used to construct a trivial line bundle over $\Map(I,X)$, with transition functions $t_{(\tau_i,\rho_i)}^{-1} t_{(\tau_j,\rho_j)}$ on $V_{(\tau_i,\rho_i)}\cap V_{(\tau_j,\rho_j)}$ and natural flat connection, given by the local $1$-forms $\{d\log t_{(\tau,\rho)}\}$. For $\gamma\in V_{(\tau_i,\rho_i)}\cap V_{(\tau_j,\rho_j)}$, the transition functions are
\begin{align*} t_{(\tau_i,\rho_i)}^{-1}(\gamma) t_{(\tau_j,\rho_j)}(\gamma) &= t_{(\tau_i,\rho_i)}^{-1}(\gamma) t_{(\tau_i,\rho_j)}(\gamma) \\
&= \prod_{e\in \tau_i} \exp\int_e \gamma^*(A_{\rho_i(e)}-A_{\rho_j(e)}) \cdot \prod_{v\subset e\in \tau_i} g_{\rho_j(e)\rho_j(v)}^{-1}(\gamma(v))g_{\rho_i(e)\rho_i(v)}(\gamma(v)) \\
&= \prod_{e\in \tau_i} \exp\int_e \gamma^*d\log g_{\rho_j(e)\rho_i(e)} \cdot \prod_{v\subset e\in \tau_i} g_{\rho_j(e)\rho_j(v)}^{-1}(\gamma(v))g_{\rho_i(e)\rho_i(v)}(\gamma(v)) \\
&= \prod_{v\subset e\in \tau_i} g_{\rho_j(e)\rho_i(e)}(\gamma(v)) \cdot \prod_{v\subset e\in \tau_i} g_{\rho_j(v)\rho_j(e)}(\gamma(v))g_{\rho_i(e)\rho_i(v)}(\gamma(v)) \\
&= \prod_{v\subset e\in \tau_i} g_{\rho_j(e)\rho_i(v)}(\gamma(v))g_{\rho_j(v)\rho_j(e)}(\gamma(v)) \\
&= \prod_{v\subset e\in \tau_i} g_{\rho_j(v)\rho_i(v)}(\gamma(v)) \\
&= \prod_{v\subset \partial I} g_{\rho_j(v)\rho_i(v)}(\gamma(v)) \,, \end{align*}
but since these depend only on the boundary of $I$, they may be understood as pullbacks of transition functions $\{\mathfrak{g}_{(\tau_i,\rho_i)(\tau_j,\rho_j)}\}$ of a (generally nontrivial) line bundle over $\Map(\partial I,X)$ by the restriction map $r:\Map(I,X)\rightarrow \Map(\partial I,X)$. In other words, we can write
\[ r^*\mathfrak{g}_{(\tau_i,\rho_i)(\tau_j,\rho_j)} = t_{(\tau_i,\rho_i)}^{-1} t_{(\tau_j,\rho_j)} \,. \]
One can also compute explicit expressions for the local connections $\{d\log t_{(\tau,\rho)}\}$. It turns out that they are of the form $r^*\mathfrak{A}_{(\tau,\rho)}$, where $\{\mathfrak{A}_{(\tau,\rho)}\}$ are local $1$-forms on $\Map(\partial I,X)$, which patch up into a connection for the line bundle given by $\{\mathfrak{g}_{(\tau_i,\rho_i)(\tau_j,\rho_j)}\}$. We may then write these as a Deligne class $\mathfrak{T}[\underline{g},\underline{A}]=[\underline{\mathfrak{g}},\underline{\mathfrak{A}}]\in H^1(\Map(\partial I,X),\mathcal{D}^1)$, called the \emph{transgression} of $[\underline{g},\underline{A}]$. The map $\mathfrak{T}:H^1(X,\mathcal{D}^1)\rightarrow H^1(\partial I,\mathcal{D}^1)$ is called the \emph{transgression homomorphism}.

The above procedure works also in higher-dimensional cases. Let us start with a bundle gerbe with connection and curving, with corresponding Deligne class $[\underline{g},-\underline{\Lambda},\underline{B}]\in H^2(X,\mathcal{D}^2)$. Recall the local formula \eqref{Eq: Gerbe holonomy, local form} for holonomy over a closed surface. Consider a surface $M$, with boundary $\partial M$. The simplest example is, of course, the $2$-dimensional disc $B^2$, with boundary $S^1$. Let $\xi\in \Map(M,X)$ and define the \emph{parallel transport map}
\[ t_{(\tau,\rho)}(\xi) := \prod_f \exp\int_f \xi^*B_{\rho(f)} \cdot \prod_{e\subset f} \exp\left(-\int_e \xi^*\Lambda_{\rho(f)\rho(e)}\right) \cdot \prod_{v\subset e \subset f} g_{\rho(f)\rho(e)\rho(v)}(\xi(v)) \,, \]
where faces $f$, edges $e$ and vertices $v$ constitute a triangulation $\tau$ of $M$ and $\rho$ is the associated index map. As expected, $t_{(\tau,\rho)}$ is not globally defined on $\Map(M,X)$, but only locally on the open sets $V_{(\tau,\rho)}\subset \Map(M,X)$. If $\xi_1,\xi_2\in \Map(M,X)$ are such that the images of the boundaries coincide, they may be glued together to form the \emph{connected sum} $\xi_1\# \xi_2:M\# M \rightarrow X$, oriented by the first factor. For example, if $M=B^2$, then $M\# M  \cong S^2$. This generalizes the composition of paths, since $\gamma_1 \star \gamma_2^{-1} = \gamma_1 \# \gamma_2$. The equivalent of \eqref{Eq: Parallel transport holonomy relation} for surfaces would be
\begin{equation*}\label{Eq: Gerbe parallel transport holonomy relation, triangulation dependent} t_{(\tau_0,\rho_0)}(\xi_0)t_{(\tau_1,\rho_1)}(\xi_1)^{-1} = \hol(\xi_0 \# \xi_1)\,, \end{equation*}
if and only if $\xi_0 \in V_{(\tau_0,\rho_0)}$, $\xi_1\in V_{(\tau_1,\rho_1)}$, $\rho_0(e_b)=\rho_1(e_b)$ and $\rho_0(v_b)=\rho_1(v_b)$, where $v_b$ and $e_b$ denote vertices and edges belonging to the boundary $\partial M$. Analogously to the lower-dimensional case, for such $\rho_0,\rho_1$ and $\xi\in V_{(\tau_0,\rho_0)}\cap V_{(\tau_1,\rho_1)}$, we have
\[ t_{(\tau_0,\rho_0)}(\xi) = t_{(\tau_1,\rho_1)}(\xi) \,. \]

The local functions $\{t_{(\tau,\rho)}\}$ can be used to construct a trivial line bundle over $\Map(M,X)$, with transition functions $t_{(\tau_i,\rho_i)}^{-1} t_{(\tau_j,\rho_j)}$ on $V_{(\tau_i,\rho_i)}\cap V_{(\tau_j,\rho_j)}$ and with a natural flat connection. Computing an explicit expression for the transition functions is a straightforward task, performed in \cite{Joh2}. The result is
\begin{equation}\label{Eq: Explicit expression for gerbe transgression transition functions} \prod_{e\subset \partial M} \exp\left(-\int_e \xi^*\Lambda_{\rho_i(e)\rho_j(e)}\right)\cdot \prod_{v\subset e\subset \partial M} g_{\rho_i(e)\rho_i(v)\rho_j(v)}^{-1}g_{\rho_i(e)\rho_j(e)\rho_j(v)}(\xi(v))\,, \end{equation}
which depends only on the boundary $\partial M$. Therefore, if $r:\Map(M,X)\rightarrow \Map(\partial M,X)$ denotes the restriction map, we can write
\[ r^*\mathfrak{g}_{(\tau_i,\rho_i)(\tau_j,\rho_j)} = t_{(\tau_i,\rho_i)}^{-1} t_{(\tau_j,\rho_j)} \,. \]
Moreover, the natural flat connection for $\{r^*\mathfrak{g}_{(\tau,\rho)}\}$ depends also only on the boundary $\partial M$. We denote it by $r^*\mathfrak{A}$. The \emph{transgression} of $[\underline{\mathfrak{g}},\underline{\mathfrak{A}}]$ is
\[ \mathfrak{T}[\underline{g},-\underline{\Lambda},\underline{B}] := [\underline{\mathfrak{g}},\underline{\mathfrak{A}}]\in H^1(\Map(\partial M,X),\mathcal{D}^1)\,.\]
When $\partial M = S^1$, $\Map(\partial M,X)$ is simply the loop space $LX$.

The \emph{transgression of a cohomology class} $F$ in $H^3(X,\Z)$ is defined by pulling back the class onto $\partial M\times \Map(\partial M,X)$ with the evaluation map
\[ \ev:\partial M\times \Map(\partial M,X) \rightarrow X \]
and integrating over $\partial M$. The result is a class in $H^2(\Map(\partial M,X),\Z)$, denoted by $\mathfrak{T}(F)$.

The transgression map for Deligne cohomology classes is homomorphic in the following sense \cite{GT,Joh2,CJM}. For a class $\chi\in H^2(X,\mathcal{D}^2)$, we denote the line bundle over $\Map(\partial M,X)$, obtained through transgression, by $\mathfrak{T}(\chi)$. For $\chi_1,\chi_2\in H^2(X,\mathcal{D}^2)$, the line bundles $\mathfrak{T}(\chi_1)$, $\mathfrak{T}(\chi_2)$ and $\mathfrak{T}(\chi_1+\chi_2)$ satisfy
\begin{align*} 
\mathfrak{T}(\chi_1+\chi_2) &\cong \mathfrak{T}(\chi_1)\otimes \mathfrak{T}(\chi_2)\,, \\
\mathfrak{T}(-\chi) &\cong \mathfrak{T}(\chi)^*\,, \\
\mathfrak{T}(0) &\cong U(1)\times \Map(\partial M,X)\,.
\end{align*}
Also, for any $\chi\in H^2(X,\mathcal{D}^2)$, the first Chern class $c_1(\mathfrak{T}(\chi))$ coincides with the transgression $\mathfrak{T}(d(L_\chi))$ of the Dixmier-Douady class of the bundle gerbe $L_\chi$, corresponding to the Deligne cohomology class $\chi$. 

Transgression can be performed also for higher Deligne cocycles. Let $M$ be a compact manifold of dimension $m$, with or without boundary. In \cite{GT} the authors gave an explicit expression for a \emph{transgression map} $C^p(X,\mathcal{D}^q)\rightarrow C^{p-m}(X,\mathcal{D}^{q-m})$. When $q=p$ and $M$ is of dimension $p+1$ with a $p$-dimensional boundary $\partial M$, one can construct a line bundle with connection over $\Map(\partial M,X)$ using the transgression map, such that the pullback bundle by the restriction $r:\Map(M,X)\rightarrow \Map(\partial M,X)$ is trivial \cite{GT}. In \cite{Joh2} the construction is explained using similar reasoning as in the cases $p=1,2$. There are also other ways of approaching the higher-dimensional parallel transport construction, but there is no point in discussing them here. Instead, we refer the interested reader to \cite{Joh2}.

\section{Cancelling the Freed-Witten anomaly}
We have now developed enough mathematical machinery to understand the Freed-Witten anomaly. We shall start by recalling how the anomaly arises and then proceed to show, using the above constructions, how it cancels when the D-brane and the $B$-field, or its restriction to the worldvolume, are related in a particular way. Our references for this section are \cite{Joh2}, \cite{CJM} and \cite{Kap}.

Since endpoints of open strings are attached to a D-brane $\phi:\Sigma\rightarrow X$, we are actually interested, not in $\Map(M,X)$, but rather in the submanifold $\Map_\Sigma(M,X)$ of maps, satisfying
\[ \xi(\partial M)\subset \phi(\Sigma) \,. \]
Trivial line bundles over $\Map(M,X)$ restrict to trivial line bundles over $\Map_\Sigma(M,X)$. It follows that the above constructions for holonomy and parallel transport generalize to yield similar constructions for $\Map_\Sigma(M,X)$.

Recall that the fermionic worldsheet path integral \eqref{Eq: Open string path integral, problematic} contained three problematic terms: the Pfaffian $\pfaff({\Dirac}_\xi)$ of the worldsheet Dirac operator, the ''holonomy'' of $B$ over the worldsheet $M$, which we now understand to be a higher-dimensional parallel transport, that is, a section of a trivial line bundle over $\Map_\Sigma(M,X)$, and finally the ''trace of holonomy'' of $A$ around the boundary $\partial M$, which we have not yet discussed.

It was shown in \cite{FW}, that $\pfaff:=\pfaff({\Dirac}_\xi)$ is also not a well-defined function on $\Map_\Sigma(M,X)$, but rather a section of the line bundle
\[ r_\Sigma^{-1}(\mathfrak{T}[\underline{w},0,0])\rightarrow \Map_\Sigma(M,X) \,, \]
where $\underline{w}$ is a representative of the image of the second Stiefel-Whitney class of $\Sigma$ under the inclusion $H^2(\Sigma,\Z_2)\rightarrow H^2(\Sigma,U(1))$
and $r_\Sigma:\Map_\Sigma(M,X)\rightarrow \Map(\partial M,\Sigma)$ is the restriction map. We denote the bundle gerbe corresponding to $[\underline{w},0,0]$ by $L_w$. The Pfaffian is, therefore, a section of $r_\Sigma^{-1}(\mathfrak{T}(L_w))$. We remind the reader that $\mathfrak{T}(L_w)$ is by no means trivial. It is the pullback that is trivial and its trivialization is given by the parallel transports. 

There are basically three cases to consider, of which the first and simplest, treated in \cite{FW}, is when the $B$-field bundle gerbe restricted to $\Sigma$, $L_B|_\Sigma$, satisfies
\begin{equation}\label{Eq: FW cancellation, simplest case} d(L_B|_\Sigma)=d(L_w)\in H^3(\Sigma,\Z)\,. \end{equation}
Using $d(L_{w^{-1}})=d(L_w^*)=-d(L_w)$\footnote{This follows from $d(L_w)$ being $2$-torsion.}, equation \eqref{Eq: FW cancellation, simplest case} becomes
\[ d(L_w\otimes L_B|_\Sigma) = d(L_w^*\otimes L^*_B|_\Sigma)=0 \,, \]
which implies that the transgression $\mathfrak{T}(L_w^*\otimes L^*_B|_\Sigma)$ is a trivial line bundle over $\Map(\partial M,\Sigma)$, whose pullback onto $\Map_\Sigma(M,X)$ is then also trivial. The bundle gerbe $L_w\otimes L_B|_\Sigma$ corresponds to the Deligne cocycle $(\underline{g}\underline{w},-\underline{\Lambda},\underline{B})$\footnote{More correctly, we should write $(\underline{g}|_\Sigma\underline{w},-\underline{\Lambda}|_\Sigma,\underline{B}|_\Sigma)$. To simplify the notation, we leave out the obvious restriction signs and hope this will not cause confusion.}, which fits into 
\[ \xymatrix{ 1 & & \\
\underline{g}\underline{w}=\delta(\underline{h}) \ar[u]^\delta \ar[r]^-{d\log} & 0 & \\
\underline{h} \ar[u]^\delta & -\underline{\Lambda} \ar[u]^\delta \ar[r]^{-d} & 0 \\
& & \underline{B} \ar[u]^\delta } \]
From the diagram we can immediately read that
\[ (gw)_{\alpha\beta\gamma} = h_{\beta\gamma}h_{\alpha\gamma}^{-1}h_{\alpha\beta} \]
and (compare to \eqref{Eq: Holonomy, gauge transformation})
\begin{equation}\label{Eq: Twisted gauge transform for A} A_\beta = A_\alpha + d\log h_{\alpha\beta} - \Lambda_{\alpha\beta} \,, \end{equation}
for some $\underline{A}\in C^0(\mathcal{U}|_\Sigma,\underline{\Omega}^1)$. We already know that the transgression line bundle is trivial. This may be explicitly seen by substituting the above relations into the formula for the transition functions \eqref{Eq: Explicit expression for gerbe transgression transition functions}. Indeed, the transition functions become
\begin{multline*} \prod_{e\subset \partial M}\exp\int_e (-\xi^*d\log h_{\rho_i(e)\rho_j(e)}+\xi^*A_{\rho_j(e)}-\xi^*A_{\rho_i(e)}) \\ \cdot \prod_{v\subset e\subset \partial M} h^{-1}_{\rho_i(e)\rho_i(v)}h_{\rho_i(e)\rho_j(v)}h^{-1}_{\rho_i(v)\rho_j(v)} h_{\rho_j(e)\rho_j(v)}h^{-1}_{\rho_i(e)\rho_j(v)}h_{\rho_i(e)\rho_j(e)}(\xi(v)) \\
= \prod_{e\subset \partial M}\exp\int_e (\xi^*A_{\rho_j(e)}-\xi^*A_{\rho_i(e)}) \cdot \prod_{v\subset e\subset \partial M} h^{-1}_{\rho_i(e)\rho_i(v)}h_{\rho_j(e)\rho_j(v)}(\xi(v)) = \Gamma_i^{-1}(\xi)\Gamma_j(\xi) \,, \end{multline*}
where
\[ \Gamma_i(\xi) = \prod_{e\subset \partial M}\exp\int_e \xi^*A_{\rho_i(e)} \cdot \prod_{v\subset e\subset \partial M} h_{\rho_i(e)\rho_i(v)}(\xi(v)) \]
are local functions trivializing the transgression bundle $\mathfrak{T}(L_w\otimes L_B|_\Sigma)$. Finally, their inverses,
\begin{equation}\label{Eq: Gamma_i^{-1}} \Gamma_i^{-1}(\xi) = \prod_{e\subset \partial M}\exp \left(-\int_e \xi^*A_{\rho_i(e)}\right) \cdot \prod_{v\subset e\subset \partial M} h^{-1}_{\rho_i(e)\rho_i(v)}(\xi(v)) \,, \end{equation}
provide a local trivialization for $\mathfrak{T}(L_w^*\otimes L^*_B|_\Sigma)$.

Let us now analyze $\underline{A}$ a bit. The transformation property \eqref{Eq: Twisted gauge transform for A} is reminiscent of ordinary gauge transformations. Indeed, if $\underline{\Lambda}=0$, it would be of the same form as a gauge transformation of a connection on a bundle, with transition functions $\underline{h}$, although in this case $\underline{h}$ satisfies merely a $\underline{g}\underline{w}$-twisted cocycle condition. Hence, one can heuristically regard $A$ as ''almost'' a $U(1)$ gauge connection on a projective bundle. Ignoring the slightly different meanings of the mathematical objects involved, we notice that \eqref{Eq: Gamma_i^{-1}} and \eqref{Eq: Holonomy, local expression} are quite alike, which motivates the definition
\[ \hol(\partial M,A) :=  r_\Sigma^{-1}(\Gamma^{-1}_i) \,. \]
The holonomy of $B$, or rather the parallel transport, is a trivializing section of $r_\Sigma^{-1}(\mathfrak{T}(L_B|_\Sigma))$. Thus, the product
\[ \pfaff \cdot \hol(M,B) \cdot \hol(\partial M,A) \]
is the difference of two trivializing sections and, as such, a well-defined function on $\Map_\Sigma(M,X)$, taking values in the trivial complex line
\[ r_\Sigma^{-1}(\mathfrak{T}(L_w))\otimes r_\Sigma^{-1}(\mathfrak{T}(L_B|_\Sigma))\otimes r_\Sigma^{-1}(\mathfrak{T}(L_w^* \otimes L_B^*|_\Sigma))\,. \]

We now proceed to make sense of the second case, originally explained by Kapustin \cite{Kap}, where $d(L_B|_\Sigma)\neq d(L_w)$ are such that
\begin{equation}\label{Eq: FW anomaly, condition 2} d(L_B|_\Sigma)-d(L_w) = d(L_B|_\Sigma)+d(L_w) \in \Tor H^3(\Sigma,\Z)\,. \end{equation}
Because $d(L_w)$ is torsion, \eqref{Eq: FW anomaly, condition 2} is equivalent to
\begin{equation}\label{Eq: H is pure torsion} d(L_B|_\Sigma)=[H]|_\Sigma \in \Tor H^3(\Sigma,\Z) \,. \end{equation}
We obviously need some new way to trivialize $\pfaff\cdot \hol(M,B)$. For definiteness, suppose that $d(L_B|_\Sigma)+d(L_w)=d(L_B|_\Sigma \otimes L_w)$ is $n$-torsion. The bundle gerbe $L_B|_\Sigma \otimes L_w$ corresponds to the Deligne cocycle $(\underline{g}\underline{w},-\underline{\Lambda},\underline{B})$. By \eqref{Eq: H is pure torsion}, the global $3$-form $H$ is exact, which means that there exists a global $2$-form $B$, such that $H=dB$. Since $d(B_\alpha-B)=0$, exactness implies $B_\alpha = B+d\mu_\alpha$, where $\mu_\alpha$ is a $1$-form. The difference $B_\alpha-d\mu_\alpha$ is, therefore, a global $2$-form. Combining this with $B_\alpha-B_\beta = d\Lambda_{\alpha\beta}$, yields
\begin{equation}\label{Eq: mu, q, Lambda relation} \mu_\beta = \mu_\alpha + d\log q_{\alpha\beta} - \Lambda_{\alpha\beta} \,, \end{equation}
where $q_{\alpha\beta}$ is a $U(1)$-valued function.

Let $L_\zeta$ be a bundle gerbe, represented by the Deligne cocycle $(\underline{\zeta},-\underline{\lambda},\underline{\beta})$, such that $L_\zeta$ is stably isomorphic to $L_B|_\Sigma \otimes L_w$. The product $L_p := L_B|_\Sigma \otimes L_w \otimes L_\zeta^*$ is then trivial and represented by the trivial Deligne cocycle $(\delta(\underline{p}),\delta(\underline{\nu})-d\log\underline{p},\underline{\nu})$. Expressed diagrammatically, the trivialization looks like
\begin{equation}\label{Eq: FW anomaly cancellation, trivialization diagram} \xymatrix{ 1 \\ \underline{g}\underline{w}\underline{\zeta}^{-1} \ar[u]^-\delta \ar[r]^-d & 0 \\ \underline{p} \ar[u]^-\delta \ar[r]^-{-d\log} & -\underline{\Lambda}+\underline{\lambda} \ar[u]^-\delta \ar[r]^-{-d} & 0 \\ & \underline{\nu} \ar[u]^-\delta \ar[r]^-d & \underline{B}-\underline{\beta} \ar[u]^-\delta } \end{equation}
Since
\begin{equation}\label{Eq: DD-class of L_zeta} d(L_\zeta)=\beta([\underline{\zeta}])=[H]|_\Sigma + W_3(\Sigma)\,, \end{equation}
$L_\zeta$ is an $n$-torsion bundle gerbe. As such, it admits a finite rank bundle gerbe module with connection, locally represented by $\{A_\alpha\}$, satisfying (see \eqref{Eq: A local transformation})
\begin{equation}\label{Eq: FW anomaly cancellation, connection formula 1} A_\beta = h_{\alpha\beta}^{-1} A_\alpha h_{\alpha\beta} + h_{\alpha\beta}^{-1}\,dh_{\alpha\beta} - \lambda_{\alpha\beta} \,, \end{equation}
where
\[ h_{\alpha\beta}h_{\beta\gamma}h^{-1}_{\alpha\gamma} = \zeta_{\alpha\beta\gamma} \,. \]
From \eqref{Eq: FW anomaly cancellation, trivialization diagram} we can read off the relation
\begin{equation}\label{Eq: FW anomaly cancellation, trivialization relation 1} \lambda_{\alpha\beta} = \nu_\beta - \nu_\alpha -d\log p_{\alpha\beta} + \Lambda_{\alpha\beta} \,. \end{equation}
Combining \eqref{Eq: FW anomaly cancellation, trivialization relation 1} with \eqref{Eq: mu, q, Lambda relation} yields
\begin{equation}\label{Eq: FW anomaly cancellation, trivialization relation 2} \lambda_{\alpha\beta} = \nu_\beta - \mu_\beta - \nu_\alpha + \mu_\alpha - d\log (pq^{-1})_{\alpha\beta} \,. \end{equation}
Inserting \eqref{Eq: FW anomaly cancellation, trivialization relation 2} into \eqref{Eq: FW anomaly cancellation, connection formula 1}, defining
\[ \widetilde{A}_\alpha := A_\alpha + \nu_\alpha - \mu_\alpha \qquad \text{and} \qquad \widetilde{h}_{\alpha\beta} := (hpq^{-1})_{\alpha\beta} \,, \]
and using
\[ \widetilde{h}^{-1}_{\alpha\beta}\,d\widetilde{h}_{\alpha\beta} = h^{-1}_{\alpha\beta}\,dh_{\alpha\beta} + d\log (pq^{-1})_{\alpha\beta} \,, \]
yields
\begin{equation}\label{Eq: FW anomaly cancellation, connection formula 2} \widetilde{A}_\beta =  \widetilde{h}_{\alpha\beta}^{-1} \widetilde{A}_\alpha \widetilde{h}_{\alpha\beta} + \widetilde{h}_{\alpha\beta}^{-1}\,d\widetilde{h}_{\alpha\beta} \,, \end{equation}
a formula, which looks precisely like an ordinary gauge transformation. Of course, $\{\widetilde{h}_{\alpha\beta}\}$ is still only a twisted cocycle:
\[ \widetilde{h}_{\alpha\beta}\widetilde{h}_{\beta\gamma}\widetilde{h}_{\alpha\gamma}^{-1} = (\zeta\delta(p)\delta(q)^{-1})_{\alpha\beta\gamma} \,. \]

Let us now briefly comment on the physical interpretations of the above constructions. We interpret $\{A_\alpha\}$ as the locally defined $A$-field and $\{h_{\alpha\beta}\}$ as the $U(n)$ lifts of the transition functions of the Chan-Paton bundle. The obstruction for the Chan-Paton bundle lifting into a $U(n)$ bundle is
\[ [\underline{\zeta}\delta(\underline{p})\delta(\underline{q})^{-1}] = [\underline{\zeta}] \in H^2(\Sigma,\underline{U(1)}) \,, \]
or equivalently, its image $\beta([\underline{\zeta}])\in H^3(\Sigma,\Z)$, where $\beta:H^2(\Sigma,\underline{U(1)})\xrightarrow{\cong} H^3(\Sigma,\Z)$ is the Bockstein isomorphism \eqref{Eq: Bockstein H^2(X,U(1)) to H^3(X,Z)}.

The Freed-Witten anomaly can now be cancelled as follows. The path integral must, of course, contain the usual sections of $r_\Sigma^{-1}(\mathfrak{T}(L_B|_\Sigma))$ and $r_\Sigma^{-1}(\mathfrak{T}(L_w))$, coming from $\hol(M,B)$ and $\pfaff$. Next, we need local sections of $r_\Sigma^{-1}(\mathfrak{T}(L_p^*))=r_\Sigma^{-1}(\mathfrak{T}(L_B^*|_\Sigma \otimes L^*_w \otimes L_\zeta))$, but these are easy to obtain, since the transgressed bundle is trivial. Inserting the data of $L_p$ into \eqref{Eq: Explicit expression for gerbe transgression transition functions}, gives trivializing sections $\Gamma_i^{-1}$ of $\mathfrak{T}(L_p^*)$. We shall drop the index $i$ for notational convenience and simply write $\Gamma^{-1}$ to denote the pullback $r_\Sigma^*(\Gamma_i^{-1})$, keeping in mind that it is a trivializing section, not a function on $\Map_\Sigma(M,X)$. The only ingredient that is missing, is a trivializing section of $r_\Sigma^{-1}(\mathfrak{T}(L_\zeta^*))$. Kapustin showed in \cite{Kap} that it is given by the ''trace of holonomy'' of the $A$-field.

Recall first the Abelian holonomy formula \eqref{Eq: Holonomy, local expression}. In the non-Abelian case there is also a parallel transport equation, but the solution is not as simple to write as in the Abelian case. This is basically because the exponentials
\[ \exp\left(-\int_e \xi^*\widetilde{A}_{\rho(e)}\right) \]
are ambiguous, due to the noncommutativity of $U(n)$. The correct interpretation is to expand the exponential and perform the products of integrals in path-order. This is customarily indicated by placing a $\mathcal{P}$ in front of the exponential. However, it should suffice to think of\footnote{Note that we are using $\widetilde{A}$ here, since it has the correct $U(n)$ gauge transformation property.}
\[ \mathcal{P}\exp\left(-\int_e \xi^*\widetilde{A}_{\rho(e)}\right) \]
as the solution of the non-Abelian parallel transport equation along $e$, if we use the well-known result that these quantities transform as \cite{PS,Gil,Nak}
\[ \mathcal{P}\exp\left(-\int_e \xi^*\widetilde{A}_{\rho_j(e)}\right) = \widetilde{h}^{-1}_{\rho_i(e)\rho_j(e)}(\xi(v_e^1))\cdot \mathcal{P}\exp\left(-\int_e \xi^*\widetilde{A}_{\rho_i(e)}\right) \cdot \widetilde{h}_{\rho_i(e)\rho_j(e)}(\xi(v_e^0)) \]
under
\[ \widetilde{A}_{\rho_j(e)} = \widetilde{h}^{-1}_{\rho_i(e)\rho_j(e)}\widetilde{A}_\alpha \widetilde{h}_{\rho_i(e)\rho_j(e)} + \widetilde{h}^{-1}_{\rho_i(e)\rho_j(e)}\,d\widetilde{h}_{\rho_i(e)\rho_j(e)}\,. \]
To make the following calculations a bit shorter, we shall use the shorthand notation
\[ \mathcal{W}^{\widetilde{A}}_{\rho(e)}(\xi) = \mathcal{P}\exp\left(-\int_e \xi^*\widetilde{A}_{\rho(e)}\right) \,. \]

Let us, for the sake of simplicity, assume that $\partial M \cong S^1$. We define
\begin{multline} \Tr\hol(S^1,A)_{(\tau,\rho)} := \Tr\bigg[\prod_{e\subset S^1} \widetilde{h}^{-1}_{\rho(e)\rho(v_e^1)} \cdot \mathcal{W}^{\widetilde{A}}_{\rho(e)} \cdot \widetilde{h}_{\rho(e)\rho(v_e^0)}\bigg] \\ \cdot \prod_{e\subset S^1} \exp\int_e \xi^*(\nu-\mu)_{\rho(e)} \cdot \prod_{v\subset e\subset S^1} (pq^{-1})_{\rho(e)\rho(v)} \,, \end{multline}
where we have suppressed the arguments of $\widetilde{h}_{\alpha\beta}$, $p$ and $q$, as they are unambiguous, and also left out the argument $\xi$ of $\mathcal{W}^{\widehat{A}}_{\rho(e)}$. Comparing $\Tr\hol(S^1,A)_{(\tau_i,\rho_i)}$ and $\Tr\hol(S^1,A)_{(\tau_j,\rho_j)}$ on overlaps $V_{(\tau_i,\rho_i)}\cap V_{(\tau_j,\rho_j)}$, is a tedious task. The trace is the difficult part:
\begin{align*} &\Tr\bigg[\prod_{e\subset S^1} \widetilde{h}^{-1}_{\rho_j(e)\rho_j(v_e^1)} \cdot \mathcal{W}^{\widetilde{A}}_{\rho_j(e)} \cdot \widetilde{h}_{\rho_j(e)\rho_j(v_e^0)}\bigg] \\
&= \Tr\bigg[\ldots \mathcal{W}^{\widetilde{A}}_{\rho_j(e_{k+1})}\cdot \widetilde{h}_{\rho_j(e_{k+1})\rho_j(v_{e_{k+1}}^0)} \cdot \widetilde{h}^{-1}_{\rho_j(e_k)\rho_j(v_{e_k}^1)} \cdot \mathcal{W}^{\widetilde{A}}_{\rho_j(e_k)}\ldots \bigg] \\
&= \Tr\bigg[\ldots \mathcal{W}^{\widetilde{A}}_{\rho_i(e_{k+1})} \cdot \widetilde{h}_{\rho_i(e_{k+1})\rho_j(e_{k+1})}\widetilde{h}_{\rho_j(e_{k+1})\rho_j(v_{e_{k+1}}^0)} \cdot \widetilde{h}^{-1}_{\rho_j(e_k)\rho_j(v_{e_k}^1)}\widetilde{h}^{-1}_{\rho_i(e_k)\rho_j(e_k)} \cdot \mathcal{W}^{\widetilde{A}}_{\rho_j(e_k)}\ldots \bigg] \displaybreak \\
&= \Tr\bigg[\ldots \mathcal{W}^{\widetilde{A}}_{\rho_i(e_{k+1})} \cdot \widetilde{h}_{\rho_i(e_{k+1})\rho_j(v_{e_{k+1}}^0)} \cdot \widetilde{h}^{-1}_{\rho_i(e_k)\rho_j(v_{e_k}^1)} \cdot \mathcal{W}^{\widetilde{A}}_{\rho_j(e_k)}\ldots \bigg] \\ &\qquad\cdot \prod_k \big(\zeta\delta(p)\delta(q)^{-1}\big)_{\rho_i(e_{k+1})\rho_j(e_{k+1})\rho_j(v_{e_{k+1}}^0)} \big(\zeta\delta(p)\delta(q)^{-1}\big)^{-1}_{\rho_i(e_k)\rho_j(e_k)\rho_j(v_{e_k}^1)}  \\
&= \Tr\bigg[\ldots \mathcal{W}^{\widetilde{A}}_{\rho_i(e_{k+1})} \cdot \widetilde{h}_{\rho_i(e_{k+1})\rho_i(v_{e_{k+1}}^0)}\widetilde{h}_{\rho_i(v_{e_{k+1}}^0)\rho_j(v_{e_{k+1}}^0)} \cdot \widetilde{h}^{-1}_{\rho_i(v_{e_{k+1}}^0)\rho_j(v_{e_k}^1)}\widetilde{h}^{-1}_{\rho_i(e_k)\rho_i(v_{e_{k+1}}^0)} \cdot \mathcal{W}^{\widetilde{A}}_{\rho_j(e_k)}\ldots \bigg] \\ &\qquad \cdot \prod_k \big(\zeta\delta(p)\delta(q)^{-1}\big)_{\rho_i(e_{k+1})\rho_j(e_{k+1})\rho_j(v_{e_{k+1}}^0)}\big(\zeta\delta(p)\delta(q)^{-1}\big)^{-1}_{\rho_i(e_k)\rho_j(e_k)\rho_j(v_{e_k}^1)} \\ &\qquad\qquad \cdot \big(\zeta\delta(p)\delta(q)^{-1}\big)^{-1}_{\rho_i(e_{k+1}) \rho_i(v_{e_{k+1}}^0)\rho_j(v_{e_{k+1}}^0)}\big(\zeta\delta(p)\delta(q)^{-1}\big)_{\rho_i(e_k)\rho_i(v_{e_{k+1}}^0)\rho_j(v_{e_k}^1)} \\
&= \Tr\bigg[\ldots \mathcal{W}^{\widetilde{A}}_{\rho_i(e_{k+1})} \cdot \widetilde{h}_{\rho_i(e_{k+1})\rho_i(v_{e_{k+1}}^0)} \cdot \widetilde{h}^{-1}_{\rho_i(e_k)\rho_i(v_{e_{k+1}}^0)} \cdot \mathcal{W}^{\widetilde{A}}_{\rho_j(e_k)}\ldots \bigg] \\ &\qquad \cdot \prod_k \big(\zeta\delta(p)\delta(q)^{-1}\big)_{\rho_i(e_{k+1})\rho_j(e_{k+1})\rho_j(v_{e_{k+1}}^0)}\big(\zeta\delta(p)\delta(q)^{-1}\big)^{-1}_{\rho_i(e_k)\rho_j(e_k)\rho_j(v_{e_k}^1)} \\ &\qquad\qquad \cdot \big(\zeta\delta(p)\delta(q)^{-1}\big)^{-1}_{\rho_i(e_{k+1}) \rho_i(v_{e_{k+1}}^0)\rho_j(v_{e_{k+1}}^0)}\big(\zeta\delta(p)\delta(q)^{-1}\big)_{\rho_i(e_k)\rho_i(v_{e_k}^1)\rho_j(v_{e_k}^1)}   \\
&= \Tr\bigg[\prod_{e\subset S^1} \widetilde{h}^{-1}_{\rho_i(e)\rho_i(v_e^1)} \cdot \mathcal{W}^{\widetilde{A}}_{\rho_i(e)} \cdot \widetilde{h}_{\rho_i(e)\rho_i(v_e^0)}\bigg] \\ &\qquad \cdot \prod_{v\subset e \subset S^1} \big(\zeta\delta(p)\delta(q)^{-1}\big)_{\rho_i(e)\rho_i(v)\rho_j(v)}\big(\zeta\delta(p)\delta(q)^{-1}\big)^{-1}_{\rho_i(e)\rho_j(e)\rho_j(v)} \,.
\end{align*}
Using this result, we can easily compute
\begin{align*} &\Tr\bigg[\prod_{e\subset S^1} \widetilde{h}^{-1}_{\rho_j(e)\rho_j(v_e^1)} \cdot \mathcal{W}^{\widetilde{A}}_{\rho_j(e)} \cdot \widetilde{h}_{\rho_j(e)\rho_j(v_e^0)}\bigg] \cdot \prod_{e\subset S^1} \exp\int_e \xi^*(\nu-\mu)_{\rho_j(e)} \cdot \prod_{v\subset e\subset S^1} (pq^{-1})_{\rho_j(e)\rho_j(v)}\\
&= \Tr\bigg[\prod_{e\subset S^1} \widetilde{h}^{-1}_{\rho_i(e)\rho_i(v_e^1)} \cdot \mathcal{W}^{\widetilde{A}}_{\rho_i(e)} \cdot \widetilde{h}_{\rho_i(e)\rho_i(v_e^0)}\bigg] \cdot \prod_{e\subset S^1} \exp\int_e \xi^*(\nu-\mu)_{\rho_i(e)} \cdot \prod_{v\subset e\subset S^1} (pq^{-1})_{\rho_i(e)\rho_i(v)} \\ &\qquad \cdot \prod_{e\subset S^1} \exp\int_e \xi^*\big((\nu-\mu)_{\rho_j(e)}-(\nu-\mu)_{\rho_i(e)}\big) \cdot \prod_{v\subset e\subset S^1} (pq^{-1})_{\rho_j(e)\rho_j(v)} (p^{-1}q)_{\rho_i(e)\rho_i(v)} \\ &\qquad \cdot \prod_{v\subset e \subset S^1} \big(\zeta\delta(p)\delta(q)^{-1}\big)_{\rho_i(e)\rho_i(v)\rho_j(v)}\big(\zeta\delta(p)\delta(q)^{-1}\big)^{-1}_{\rho_i(e)\rho_j(e)\rho_j(v)}  \\
&= \Tr\bigg[\prod_{e\subset S^1} \widetilde{h}^{-1}_{\rho_i(e)\rho_i(v_e^1)} \cdot \mathcal{W}^{\widetilde{A}}_{\rho_i(e)} \cdot \widetilde{h}_{\rho_i(e)\rho_i(v_e^0)}\bigg] \cdot \prod_{e\subset S^1} \exp\int_e \xi^*(\nu-\mu)_{\rho_i(e)} \cdot \prod_{v\subset e\subset S^1} (pq^{-1})_{\rho_i(e)\rho_i(v)} \\ &\qquad \cdot \prod_{e\subset S^1} \exp\int_e \xi^*\big((\nu-\mu)_{\rho_j(e)}-(\nu-\mu)_{\rho_i(e)}\big) \cdot \prod_{v\subset e \subset S^1} \zeta_{\rho_i(e)\rho_i(v)\rho_j(v)}\zeta^{-1}_{\rho_i(e)\rho_j(e)\rho_j(v)} \\ &\qquad \cdot \prod_{v\subset e\subset S^1} (pq^{-1})_{\rho_j(e)\rho_j(v)} (p^{-1}q)_{\rho_i(e)\rho_i(v)} \big(\delta(p)\delta(q)^{-1}\big)_{\rho_i(e)\rho_i(v)\rho_j(v)}\big(\delta(p)\delta(q)^{-1}\big)^{-1}_{\rho_i(e)\rho_j(e)\rho_j(v)}  \\
&= \Tr\bigg[\prod_{e\subset S^1} \widetilde{h}^{-1}_{\rho_i(e)\rho_i(v_e^1)} \cdot \mathcal{W}^{\widetilde{A}}_{\rho_i(e)} \cdot \widetilde{h}_{\rho_i(e)\rho_i(v_e^0)}\bigg] \cdot \prod_{e\subset S^1} \exp\int_e \xi^*(\nu-\mu)_{\rho_i(e)} \cdot \prod_{v\subset e\subset S^1} (pq^{-1})_{\rho_i(e)\rho_i(v)} \\ &\qquad \cdot \prod_{e\subset S^1} \exp\int_e \xi^*\big((\nu-\mu)_{\rho_j(e)}-(\nu-\mu)_{\rho_i(e)}\big) \cdot \prod_{v\subset e \subset S^1} \zeta_{\rho_i(e)\rho_i(v)\rho_j(v)}\zeta^{-1}_{\rho_i(e)\rho_j(e)\rho_j(v)} \\ &\qquad \cdot \prod_{v\subset e\subset S^1} (pq^{-1})_{\rho_j(e)\rho_j(v)} (p^{-1}q)_{\rho_i(e)\rho_i(v)} \big(\delta(p)\delta(q)^{-1}\big)_{\rho_i(e)\rho_i(v)\rho_j(v)}\big(\delta(p)^{-1}\delta(q)\big)_{\rho_i(e)\rho_j(e)\rho_j(v)} 
%\displaybreak \\
\\
&= \Tr\bigg[\prod_{e\subset S^1} \widetilde{h}^{-1}_{\rho_i(e)\rho_i(v_e^1)} \cdot \mathcal{W}^{\widetilde{A}}_{\rho_i(e)} \cdot \widetilde{h}_{\rho_i(e)\rho_i(v_e^0)}\bigg] \cdot \prod_{e\subset S^1} \exp\int_e \xi^*(\nu-\mu)_{\rho_i(e)} \cdot \prod_{v\subset e\subset S^1} (pq^{-1})_{\rho_i(e)\rho_i(v)} \\ &\qquad \cdot \prod_{e\subset S^1} \exp\int_e \xi^*\big((\nu-\mu)_{\rho_j(e)}-(\nu-\mu)_{\rho_i(e)}\big) \cdot \prod_{v\subset e \subset S^1} \zeta_{\rho_i(e)\rho_i(v)\rho_j(v)}\zeta^{-1}_{\rho_i(e)\rho_j(e)\rho_j(v)} \\ &\qquad \cdot \prod_{v\subset e\subset S^1} (p^{-1}q)_{\rho_i(e)\rho_j(e)} \cdot \underbrace{\prod_{v\subset e\subset S^1} (pq^{-1})_{\rho_i(v)\rho_j(v)}}_{=1}\,.
\end{align*}
Hence, $\Tr\hol(S^1,A)_{(\tau,\rho)}$ transforms as
\begin{align*} \Tr\hol(S^1,A)_{(\tau_j,\rho_j)} &= \Tr\hol(S^1,A)_{(\tau_i,\rho_i)} \cdot \prod_{e\subset S^1} \exp\int_e \xi^*\big((\nu-\mu)_{\rho_j(e)}-(\nu-\mu)_{\rho_i(e)}\big) \\ &\quad \cdot \prod_{v\subset e \subset S^1} \zeta_{\rho_i(e)\rho_i(v)\rho_j(v)}\zeta^{-1}_{\rho_i(e)\rho_j(e)\rho_j(v)} \cdot \prod_{v\subset e\subset S^1} (p^{-1}q)_{\rho_i(e)\rho_j(e)} \,. \end{align*}

On the other hand, inserting the data of $L_\zeta$ into \eqref{Eq: Explicit expression for gerbe transgression transition functions}, gives
\begin{align*} &\prod_{e\subset S^1} \exp\left(-\int_e \xi^*\lambda_{\rho_i(e)\rho_j(e)}\right)\cdot \prod_{v\subset e\subset S^1} \zeta^{-1}_{\rho_i(e)\rho_i(v)\rho_j(v)}\zeta_{\rho_i(e)\rho_j(e)\rho_j(v)} \\
&\stackrel{\eqref{Eq: FW anomaly cancellation, trivialization relation 2}}{=} \exp\int_e \xi^*\big((\nu-\mu)_{\rho_i(e)}-(\nu-\mu)_{\rho_j(e)}\big) \cdot \prod_{v\subset e\subset S^1} (pq^{-1})_{\rho(e_i)\rho(e_j)} \\ &\qquad\qquad \cdot \prod_{v\subset e\subset S^1} \zeta^{-1}_{\rho_i(e)\rho_i(v)\rho_j(v)}\zeta_{\rho_i(e)\rho_j(e)\rho_j(v)}
\end{align*}
as the transition functions of $\mathfrak{T}(L_\zeta)$, but these are simply inverses of the transition functions of $\Tr\hol(S^1,A)_{(\tau,\rho)}$. Therefore, $\Tr\hol(S^1,A)$ defines a trivializing section of the dual $\mathfrak{T}(L_\zeta^*)$, which is precisely what we wanted. We denote the pullback $r_\Sigma^*(\Tr\hol(S^1,A))$ onto $\Map_\Sigma(M,X)$ also by $\Tr\hol(S^1,A)$. The above computations can easily be generalized to the case where $\partial M$ consists of finitely many circles, in which case the transgressed bundles over $\Map_\Sigma(\partial M,X)$ should be understood as tensor products of bundles transgressed over $\Map_\Sigma(S^1,X)$. Then $\Tr\hol(\partial M,A)$ becomes a tensor product of $\Tr\hol(S^1,A)$ over all the boundary circles.

We can now put all the pieces together to construct a well-defined function
\[ \pfaff\cdot \hol(M,B)\cdot \Tr\hol(\partial M,A) \cdot \Gamma^{-1} \]
over the space $\Map_\Sigma(M,X)$. Thus, the Freed-Witten anomaly is cancelled, when $[H]|_\Sigma \in \Tor H^3(\Sigma,\Z)$, whenever the cohomology equation
\begin{equation}\label{Eq: FW anomaly cancellation condition} [H]|_\Sigma = W_3(\Sigma)+\beta([\underline{\zeta}]) \end{equation}
holds.

Even if $[H]|_\Sigma \neq \Tor H^3(\Sigma,\Z)$, by allowing the $A$-field to be a $U_1$ bundle gerbe module connection, the anomaly can still be cancelled when \eqref{Eq: FW anomaly cancellation condition} holds. The interested reader is referred to \cite{CJM} for a treatment of the nontorsion case. We shall not discuss it.

\chapter{Twisted D-Branes And Noncommutative Generalizations}
By a \emph{twisted D-brane}, we mean a D-brane in the presence of a nonvanishing background $B$-field. We start this chapter by extending the $K$-theoretic classification to twisted D-branes. This is done by replacing $K$-theory with \emph{twisted $K$-theory}\cite{Ros,Bla,AS1,BM,CW,BCMMS}. The Chan-Paton bundle of a stack of $n$ coincident D-branes, is generally a $PU(n)$, or even $PU(\Hilb)$, bundle. Just as a configuration of D-branes with $U(n)$ Chan-Paton bundles yields a $K$-theory class $x\in K^0(\Sigma)$, a configuration of D-branes with projective bundles yields a twisted $K$-theory class of $\Sigma$. To express the charge in terms of spacetime quantities, we need a Gysin map in twisted $K$-theory. More precisely, we need a generalization of the concept of $K$-orientability, which is equivalent to the existence of \emph{twisted fundamental classes} in a dual \emph{twisted $K$-homology}\cite{Wan,Sza1} theory. Twisted $K$-theory and twisted $K$-homology can be defined using \emph{$K$-theory and $K$-homology of $C^*$-algebras} \cite{Kas3,Ros,HR}. This takes us automatically to the world of \emph{noncommutative geometry} \cite{Con,GVF,Mad,Lan2}, since $K$-theory and $K$-homology of $C^*$-algebras are the noncommutative generalizations of $K$-theory and $K$-homology of topological spaces.

The philosophy behind noncommutative topology is that any locally compact space $X$ can be completely described by its algebra $C_0(X)$ of continuous functions $X\rightarrow \C$, vanishing at infinity. Clearly $C_0(X)$ is a commutative $C^*$-algebra\footnote{We assume the reader is familiar with the basics of $C^*$-algebras. A friendly introduction is given in \cite{GVF}. A more extensive treatment can be found in \cite{Ped}.}. It turns out that a number of geometric concepts for $X$ can be described as purely algebraic constructions for $C_0(X)$. In particular, $\spinc$-manifolds can be described purely algebraically using objects called \emph{spectral triples} \cite{Con,GVF}. Having obtained an algebraic definition for the geometric objects of interest, we can easily change the $C^*$-algebra to any other, generally noncommutative, $C^*$-algebra, resulting in what is called noncommutative geometry. Unfortunately, we can not discuss noncommutative geometry in detail in this thesis, as it would take a huge number of pages to even get through the basics. 

Having described twisted $K$-theory and $K$-homology in terms of $C^*$-algebras, we can easily replace the $C^*$-algebras of the twisted D-branes (and of the spacetime) with more general noncommutative $C^*$-algebras. To proceed, we need a unification of $K$-theory and $K$-homology of $C^*$-algebras. Such a bivariant $K$-theory is the \emph{$KK$-theory} of Kasparov \cite{Kas1,Kas2,Bla,KT}. Our goal is to make sense of the Minasian-Moore formula \eqref{Eq: Minasian-Moore formula} in the noncommutative case \cite{BMRS1,BMRS2,Sza2}.

\section{Classifying twisted D-branes}
Let $A$ be a $C^*$-algebra. We always assume our $C^*$-algebras to be separable, although this is not required for most of the definitions. From the point of view of noncommutative topology, separability of the function algebra translates to metrizability of the topological space \cite{GVF}. The $C^*$-algebras are \emph{not} assumed to be unital in general. A topological space is compact if and only if its function algebra is unital. Allowing the $C^*$-algebras to be nonunital, translates to allowing the topological spaces to be locally compact\footnote{Of course, if the $C^*$-algebra is noncommutative, it does not represent any actual topological space.}.

The $C^*$-algebra $A$ can be made into a unital $C^*$-algebra $A^+$ by \emph{adjoining a unit} as follows. Define $A^+ := A\oplus \C$, multiplication
\[ (a,\lambda)\cdot (b,\mu) := (ab+\lambda b + \mu a,\lambda\mu) \,. \]
In the commutative picture, this corresponds to one-point compactification of the locally compact space. If $A$ is already unital, $A^+$ is \emph{not} isomorphic to $A$\footnote{Compare this to adjoining a disjoint point to a compact space.}. It is also useful to define
\[ \widetilde{A} := \begin{cases} A & \text{if $A$ is unital,} \\ A^+ & \text{if $A$ is nonunital.} \end{cases} \,. \]

Recall that an \emph{idempotent} of a $C^*$-algebra $A$ is an element $e\in A$, such that $e^2=e$. A \emph{projection} $p\in A$ is a self-adjoint idempotent: $p^*=p^2=p$. Two projections, $p$ and $q$, are said to be \emph{unitarily equivalent}, if $q = upu^*$, for some unitary element $u\in \widetilde{A}$. Unitary equivalence defines an equivalence relation $\sim_u$. The set of unitary equivalence classes of projections of $A$ is denoted by $\mathcal{P}(A)$

Let $M_n(A)$ denote the $C^*$-algebra $A\otimes M_n(\C)$ of $n\times n$ matrices, with entries in $A$. There are natural inclusions, $M_n(A)\hookrightarrow M_{n+1}(A)$, given by diagonal embeddings $a\mapsto \diag(a,0)$, for $a\in M_n(A)$. Taking the limit of $n$ to infinity, yields the $C^*$-algebra of compact operators on $\Hilb$, tensored with $A$:
\[ A\otimes \mathcal{K} := \varinjlim_n M_n(A) \,. \]
Intuitively, it is the $C^*$-algebra of infinite $A$-valued matrices, with a finite number of nonzero entries.

Let $V(A) := \mathcal{P}(A\otimes \mathcal{K})$. For elements $[p],[q]\in V(A)$, define
\[ [p]+[q]:= [p\oplus q] \,, \]
where
\[ p\oplus q := \begin{pmatrix} p & 0 \\ 0 & q \end{pmatrix} \,. \]
This makes $V(A)$ an Abelian monoid, with identity $[0]$. It is Abelian, since
\[ \begin{pmatrix} p & 0 \\ 0 & q \end{pmatrix} \sim_u \begin{pmatrix} 0 & 1 \\ 1 & 0 \end{pmatrix} \begin{pmatrix} p & 0 \\ 0 & q \end{pmatrix} \begin{pmatrix} 0 & 1 \\ 1 & 0 \end{pmatrix} = \begin{pmatrix} q & 0 \\ 0 & p \end{pmatrix} \,. \]
$V(A)$ is \emph{stable}, meaning that
\begin{equation}\label{Eq: V(A) stable} V(A\otimes \mathcal{K}) = \mathcal{P}((A\otimes \mathcal{K}\otimes \mathcal{K}) = \mathcal{P}(A\otimes \mathcal{K}) = V(A) \,, \end{equation}
where we have used $\mathcal{K}\otimes \mathcal{K} \cong \mathcal{K}$.
\begin{definition}
Let $A$ be a unital $C^*$-algebra. The \emph{(operator) $K$-theory} group $K_0(A)$ is the Grothendieck group of $V(A)$. In other words, it is the Abelian group generated by formal differences $[p]-[q]$. 
\end{definition}

The simplest nontrivial $C^*$-algebra is $\C$. If $P$ is a projection in $M_n(\C)$, then, as a self-adjoint matrix, it is diagonalizable by a unitary matrix $U$: $D = UPU^*$, where $D$ is diagonal. Now, $P^2 = U^*DUU^*DU = U^*D^2 U$, but on the other hand $P^2=P=U^*DU$. Thus, $D$ is also a projection and all its entries must necessarily be $1$. Hence, all projections of rank $n$ belong to the same unitary equivalence class. Thus, $V(\C)\cong \N_0$ and $K_0(\C) \cong \Z$.

\begin{lemma} For $C^*$-algebras $A_1$ and $A_2$, 
\begin{equation}\label{Eq: Operator K-theory sum formula} K_0(A_1\oplus A_2) \cong K_0(A_1)\oplus K_0(A_2) \,. \end{equation}
\end{lemma}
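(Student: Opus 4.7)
The plan is to reduce everything to the level of projections and use the fact that a direct sum of $C^*$-algebras is compatible with forming matrix algebras, taking compact operator tensor products, and forming unitary equivalence classes of projections. Concretely, I would first observe the natural $*$-isomorphism $M_n(A_1\oplus A_2)\cong M_n(A_1)\oplus M_n(A_2)$, which extends to the stabilization $(A_1\oplus A_2)\otimes \mathcal{K}\cong (A_1\otimes\mathcal{K})\oplus(A_2\otimes\mathcal{K})$. Under this isomorphism a projection is precisely a pair $(p_1,p_2)$ of projections, and a unitary of the (unitized) direct sum has the form $(u_1,u_2)$ of unitaries componentwise, so $(p_1,p_2)\sim_u(q_1,q_2)$ if and only if $p_i\sim_u q_i$ for $i=1,2$.

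From this one gets a well-defined map
\[ \Phi:V(A_1\oplus A_2)\longrightarrow V(A_1)\oplus V(A_2)\,,\qquad [(p_1,p_2)]\longmapsto ([p_1],[p_2])\,, \]
with inverse $([p_1],[p_2])\mapsto[(p_1,p_2)]$. I would check directly that $\Phi$ is a monoid isomorphism: compatibility with the block sum $\oplus$ is immediate because the block sum in the direct sum algebra is the componentwise block sum, and $\Phi$ clearly sends $[0]$ to $([0],[0])$. At this point one has an isomorphism of Abelian monoids, and since the Grothendieck group construction is functorial and sends direct sums of monoids to direct sums of groups (it is a left adjoint to the forgetful functor from Abelian groups to Abelian monoids), one obtains the desired isomorphism $K_0(A_1\oplus A_2)\cong K_0(A_1)\oplus K_0(A_2)$.

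Alternatively, and perhaps more cleanly, I would construct the isomorphism directly using the projection $*$-homomorphisms $\pi_i:A_1\oplus A_2\to A_i$ and the inclusion $*$-homomorphisms $\iota_i:A_i\to A_1\oplus A_2$. These are $*$-homomorphisms, hence induce group homomorphisms $(\pi_i)_*$ and $(\iota_i)_*$ on $K_0$. The identities $\pi_i\circ\iota_i=\mathrm{id}_{A_i}$, $\pi_i\circ\iota_j=0$ for $i\neq j$, and $\iota_1\circ\pi_1+\iota_2\circ\pi_2=\mathrm{id}_{A_1\oplus A_2}$ (the last being a relation of $*$-homomorphisms when interpreted on projections via block-sum, since $(p_1,p_2)=(p_1,0)\oplus(0,p_2)$ after stabilization) yield that $((\pi_1)_*,(\pi_2)_*)$ and $(\iota_1)_*+(\iota_2)_*$ are mutually inverse. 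The only subtle point, and the one I would want to verify carefully, is the additivity identity $[(p_1,p_2)]=[(p_1,0)]+[(0,p_2)]$ in $K_0(A_1\oplus A_2)$; this follows at once from the stability of $V$ recorded in \eqref{Eq: V(A) stable} and the block-sum definition of addition in $V(A_1\oplus A_2)$, so there is no real obstacle. Hence the statement reduces to an essentially formal verification.
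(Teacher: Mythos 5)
Your argument is genuinely different from the paper's: the paper invokes split exactness of $K_0$ (from Blackadar) applied to $0\to A_1\to A_1\oplus A_2\to A_2\to 0$, sets up the obvious commutative ladder, and finishes with the five lemma. You instead work directly at the level of projections, which is more elementary but also more delicate. The trade-off is that the paper outsources the hard analytic content to a cited theorem, whereas your route has to verify the corresponding facts by hand, and that is exactly where the subtlety hides.

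The issue is the unitization. The paper's definition of $K_0$ for a possibly nonunital $A$ is $K_0(A)=\widetilde{K}_0(A^+)$, and unitary equivalence of projections in $A\otimes\mathcal{K}$ is tested against unitaries in $(A\otimes\mathcal{K})^+$. But $(A_1\oplus A_2)^+$ is \emph{not} $A_1^+\oplus A_2^+$; it is the pullback over $\C$, the subalgebra of $A_1^+\oplus A_2^+$ of pairs with the same scalar part. So your assertion that ``a unitary of the (unitized) direct sum has the form $(u_1,u_2)$ of unitaries componentwise'' is true only in the forward direction; an arbitrary pair $(u_1,u_2)$ with $u_i$ unitary in $\widetilde{A_i\otimes\mathcal{K}}$ need \emph{not} lie in $\widetilde{(A_1\oplus A_2)\otimes\mathcal{K}}$, since the scalar parts may differ. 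The conclusion ``$(p_1,p_2)\sim_u(q_1,q_2)$ iff $p_i\sim_u q_i$'' therefore needs a fix: in the ``if'' direction, replace $u_2$ by $\lambda_1\bar\lambda_2 u_2$ (where $\lambda_i\in U(1)$ is the scalar part of $u_i$) so that both components share the same scalar part; this rescaling does not change the conjugation. With that repair, your monoid isomorphism $V(A_1\oplus A_2)\cong V(A_1)\oplus V(A_2)$ and the Grothendieck-functor argument do the job in the unital case, but you would still need to reconcile the result with the paper's definition of $K_0$ via $\widetilde{K}_0(A^+)$ when the $A_i$ are nonunital.

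Your second approach is the more robust one, and is closer in spirit to what actually makes the paper's argument work. The relations $\pi_i\circ\iota_j=\delta_{ij}\,\mathrm{id}$ are relations of $*$-homomorphisms and give $\Phi\circ\Psi=\mathrm{id}$ with no subtlety. The genuine content is $\Psi\circ\Phi=\mathrm{id}$, i.e.\ $(\iota_1)_*(\pi_1)_*+(\iota_2)_*(\pi_2)_*=\mathrm{id}$. You flag the right identity but the parenthetical justification is muddled: the clean statement is that $\iota_1\circ\pi_1$ and $\iota_2\circ\pi_2$ are $*$-homomorphisms with \emph{orthogonal images} in $A_1\oplus A_2$, and for orthogonal $*$-homomorphisms $f,g$ one has $(f+g)_*=f_*+g_*$ on $K_0$; since here $f+g=\mathrm{id}$, the identity follows. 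That lemma is the elementary replacement for split exactness, and if you state and prove it (again paying attention to the unitization), your proof is complete and genuinely independent of the five-lemma argument in the text.
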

\begin{proof}
Let $\pi_k:A_1\oplus A_2 \rightarrow A_k$ denote the projections and $i_k:A_k\rightarrow A_1\oplus A_2$ the inclusions. The split exact sequence
\[ \xymatrix{ 0 \ar[r] & A_1 \ar[r]^-{i_1} & A_1\oplus A_2 \ar@<0.5ex>[r]^-{\pi_2} & A_2 \ar[r]\ar@<0.5ex>[l]^-{i_2} & 0 } \]
induces a split exact sequence \cite{Bla}
\[ \xymatrix{ 0 \ar[r] & K_0(A_1) \ar[r]^-{(i_1)_*} & K_0(A_1\oplus A_2) \ar[r]^-{(\pi_2)_*} & K_0(A_2) \ar[r] & 0 } \]
in $K$-theory. The commutative diagram
\[ \xymatrix{ 0 \ar[d]_\cong \ar[r] & K_0(A_1) \ar[d]_\cong \ar[r]^-{(i_1)_*} & K_0(A_1\oplus A_2) \ar[d]^{(\pi_1)_*\oplus (\pi_2)_*} \ar[r]^-{(\pi_2)_*} & K_0(A_2) \ar[d]_\cong \ar[r] & 0 \ar[d]_\cong \\ 0 \ar[r] & K_0(A_1) \ar[r] & K_0(A_1)\oplus K_0(A_2) \ar[r] & K_0(A_2) \ar[r] & 0 } \]
has exact rows. Since all the vertical arrows, except the one in the middle, are isomorphisms, the \emph{five lemma} \cite{BT} implies that also the middle one is an isomorphism.
\end{proof}

\begin{definition}
Let $A$ be a nonunital $C^*$-algebra. The natural map $\pi:A^+\rightarrow \C$ induces a homomorphism $\pi_*:K_0(A^+)\rightarrow K_0(\C)\cong \Z$. \emph{Reduced operator $K$-theory} of $A^+$ is defined as the subgroup
\[ \widetilde{K}_0(A^+) := \ker(\pi_*) \subset K_0(A^+) \,. \]
For the nonunital $C^*$-algebra $A$ we define
\[ K_0(A) := \widetilde{K}_0(A^+) \,. \]
If $A$ is unital, \eqref{Eq: Operator K-theory sum formula} implies
\[ K_0(A^+) = K_0(A\oplus \C) \cong K_0(A)\oplus K_0(\C) \cong K_0(A) \oplus \Z \,, \]
with
\[ \ker(\pi_*) = K_0(A)\oplus 0 \cong K_0(A) \,. \]
Thus, the definition
\[ K_0(A) := \widetilde{K}_0(A^+) \]
for both unital and nonunital $C^*$-algebras.
\end{definition}
As a Grothendieck group, $K_0$ satisfies a similar universality property as $K^0$ for vector bundles (see \eqref{Eq: K-theory universality property}). Let $A$ and $B$ be $C^*$-algebras and $f:A\rightarrow B$ a $*$-homomorphism. There is an induced homomorphism $f_*:V(A)\rightarrow V(B)$, which descends to a homomorphism $f_*:K_0(A)\rightarrow K_0(B)$. Moreover, homotopic maps $f,g:A\rightarrow B$ induce the same map $f_* = g_*$ in $K$-theory \cite{Bla}. It follows from \eqref{Eq: V(A) stable}, that $K$-theory is \emph{stable}, or \emph{Morita invariant}, meaning that
\begin{equation}\label{Eq: K-theory is Morita invariant} K_0(A\otimes \mathcal{K}) \cong K_0(A) \,. \end{equation}
\begin{proposition}
Let $\mathbf{SC^*}$ be the category of separable $C^*$-algebras. Operator $K$-theory defines a covariant functor $K_0:\mathbf{SC^*}\rightarrow \mathbf{Ab}$.\qed
\end{proposition}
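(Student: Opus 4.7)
The plan is to verify the three functorial requirements: that $K_0$ sends every $*$-homomorphism $f:A\to B$ of separable $C^*$-algebras to a homomorphism $f_*:K_0(A)\to K_0(B)$ of Abelian groups, that identities are preserved, and that composition is preserved. I will first dispose of the unital case and then bootstrap to the nonunital case using unitization.

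First I would handle unital $A,B$ with a unital $*$-homomorphism $f:A\to B$. Applied entrywise, $f$ extends to $*$-homomorphisms $M_n(f):M_n(A)\to M_n(B)$ compatible with the inclusions $M_n\hookrightarrow M_{n+1}$, and hence to a $*$-homomorphism $f\otimes \id_{\mathcal{K}}:A\otimes \mathcal{K}\to B\otimes \mathcal{K}$. Since $*$-homomorphisms send projections to projections and intertwine unitary equivalence (if $q=upu^*$ then $f(q)=f(u)f(p)f(u)^*$, with $f(u)$ unitary in $\widetilde{B}$), we obtain a well-defined map $f_*:V(A)\to V(B)$. Because $f_*([p\oplus q]) = [f(p)\oplus f(q)] = f_*[p]+f_*[q]$, this is a monoid morphism; by the universal property of the Grothendieck construction it descends to a unique group homomorphism $f_*:K_0(A)\to K_0(B)$.

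Next I would pass to the nonunital case by unitization. Any $*$-homomorphism $f:A\to B$ extends to a unital $*$-homomorphism $f^+:A^+\to B^+$ defined by $f^+(a,\lambda):=(f(a),\lambda)$; verifying multiplicativity and the $*$-condition is immediate from the multiplication rule on $A^+$. The key observation is that the square
\[
\xymatrix{A^+ \ar[r]^{f^+} \ar[d]_{\pi_A} & B^+ \ar[d]^{\pi_B} \\ \C \ar@{=}[r] & \C}
\]
commutes, so applying the unital construction of the previous paragraph yields $(f^+)_*:K_0(A^+)\to K_0(B^+)$ with $(\pi_B)_*\circ (f^+)_* = (\pi_A)_*$. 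Consequently $(f^+)_*$ carries $\ker(\pi_A)_*=\widetilde{K}_0(A^+)=K_0(A)$ into $\ker(\pi_B)_*=K_0(B)$, and we define $f_*$ to be this restriction. When $A$ and $B$ are already unital, a short check shows this restriction agrees, under the canonical decompositions $K_0(A^+)\cong K_0(A)\oplus \Z$, with the map constructed in the first paragraph, so there is no ambiguity.

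It remains to verify functoriality. Preservation of identities is immediate: $(\id_A)^+=\id_{A^+}$ acts as the identity on every projection of $M_n(A^+)$, hence on $V(A^+)$, hence on $K_0(A)$. For composition, if $f:A\to B$ and $g:B\to C$ are $*$-homomorphisms then $(g\circ f)^+=g^+\circ f^+$, and since entrywise application of a composite is the composite of entrywise applications, $(g\circ f)_*[p]=[g(f(p))]=g_*(f_*[p])$ at the level of $V$, which descends to $K_0$. The main obstacle is just the bookkeeping in the nonunital case, ensuring that the unitized homomorphism has the right compatibility with the augmentation maps so that the reduced $K$-theory subgroup is preserved; once that is in place, functoriality of the construction $A\mapsto A^+$ together with the universal property of the Grothendieck group does the rest.
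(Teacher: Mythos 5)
Your proof is correct and follows the standard route that the paper itself sketches in the preceding paragraph (inducing $f_*:V(A)\to V(B)$ from a $*$-homomorphism and descending via the Grothendieck construction, with the nonunital case handled by unitization and the commutativity with the augmentation to $\C$). The paper treats the proposition as immediate from this discussion and cites \cite{Bla} rather than writing out the verification; your proposal simply fills in that verification in full.
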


For a $C^*$-algebra $A$, we define the \emph{suspension} $C^*$-algebra
\begin{equation}\label{Eq: Suspension of a C*-algebra} \Sigma A := C_0(\R)\otimes A \,. \end{equation}
\begin{definition}\label{Definition: Operator K-theory}
Higher operator $K$-functors $K_n$, $n\in \N$, are covariant functors $K_n:\mathbf{SC^*}\rightarrow \mathbf{Ab}$. For a $C^*$-algebra $A$, we set
\[ K_n(A) := K_0(\Sigma^n A) = \widetilde{K}_0((\Sigma^n A)^+) \,. \]
\end{definition}

\begin{theorem}[Bott periodicity \cite{Bla}]
There are canonical isomorphisms
\[ K_n(A) \cong K_{n+2}(A) \,. \]
\end{theorem}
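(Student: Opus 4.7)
The plan is to reduce to the case $n=0$ (since $K_{n+2}(A)=K_0(\Sigma^{n+2}A)=K_2(\Sigma^n A)$) and then to build an explicit isomorphism $\beta_A\colon K_0(A)\xrightarrow{\cong}K_0(\Sigma^2 A)$ following the $C^*$-algebraic strategy due to Cuntz, based on the Toeplitz extension.

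First I would introduce the Bott generator $b\in K_0(C_0(\mathbb{R}^2))=K_0(\Sigma^2\mathbb{C})$, represented by the explicit Bott projection in $M_2\bigl((C_0(\mathbb{R}^2))^+\bigr)$, whose class corresponds under $K_0(C_0(\mathbb{R}^2))\cong\widetilde{K}^0(S^2)\cong\mathbb{Z}$ to the generator. Using the exterior product of projections, I would then define the \emph{Bott map}
\[
\beta_A\colon K_0(A)\longrightarrow K_0(\Sigma^2 A)=K_2(A),\qquad [p]\longmapsto [p\otimes b],
\]
and verify that $A\mapsto\beta_A$ is a natural transformation between the functors $K_0$ and $K_2$ on $\mathbf{SC^*}$.

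Next I would exploit the Toeplitz extension $0\to\mathcal{K}\to\mathcal{T}\to C(S^1)\to 0$ and its ``pointed'' version $0\to\mathcal{K}\to\mathcal{T}_0\to C_0(\mathbb{R})\to 0$, obtained by pulling back along the inclusion $C_0(\mathbb{R})\hookrightarrow C(S^1)$. Tensoring with $A$ (the tensor product is unambiguous since $\mathcal{T}_0$ is nuclear) yields the short exact sequence
\[
\xymatrix@C=1.5pc{0\ar[r]&\mathcal{K}\otimes A\ar[r]&\mathcal{T}_0\otimes A\ar[r]&\Sigma A\ar[r]&0,}
\]
to which the six-term cyclic exact sequence of operator $K$-theory applies, and to which the Morita invariance \eqref{Eq: K-theory is Morita invariant} allows us to replace $\mathcal{K}\otimes A$ by $A$.

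The main obstacle --- indeed the heart of the proof --- is to establish the vanishing $K_\bullet(\mathcal{T}_0\otimes A)=0$, which is what collapses the six-term sequence into the desired isomorphisms. I would do this via an Eilenberg swindle in the multiplier algebra $M(\mathcal{T}_0\otimes A)$: the unilateral shift $v\in M(\mathcal{T}_0)$ generates an inner endomorphism $\alpha(x)=vxv^*$ and an infinite orthogonal family of isometries $\{v^n\}$ whose infinite direct sum implements, at the level of $K$-theory, the identity $\mathrm{id}_\ast=\mathrm{id}_\ast+\mathrm{id}_\ast$ on $K_\bullet(\mathcal{T}_0\otimes A)$, forcing it to vanish. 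The delicate point here is handling the non-unitality of $\mathcal{T}_0$ and verifying strict convergence of the relevant series in the multiplier algebra.

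Once the vanishing is in hand, exactness gives $K_0(\Sigma A)\cong K_1(A)$ and $K_1(\Sigma A)\cong K_0(A)$, so that
\[
K_0(\Sigma^2 A)=K_1(\Sigma A)\cong K_0(A),
\]
and this is canonical in $A$. A final check --- either by evaluating the composite on the Bott generator $b$, or by invoking the uniqueness of natural transformations $K_0\to K_2$ that restrict correctly on $\mathbb{C}$ --- identifies the boundary isomorphism with $\beta_A$, completing the proof.
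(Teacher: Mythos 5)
The paper does not give a proof of this theorem; it merely cites Blackadar [Bla], where Bott periodicity is established precisely via the Cuntz--Toeplitz argument you sketch. So your reconstruction follows essentially the route of the cited reference, and the overall architecture --- the Bott map, the pointed Toeplitz extension $0\to\mathcal{K}\to\mathcal{T}_0\to\Sigma\mathbb{C}\to 0$ tensored with $A$, the vanishing $K_\bullet(\mathcal{T}_0\otimes A)=0$ via an Eilenberg swindle in the multiplier algebra, and the deduction of $K_2(A)\cong K_0(A)$ --- is the right one.

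There is, however, a genuine circularity in your appeal to the ``six-term cyclic exact sequence.'' That cyclic sequence is itself a consequence of Bott periodicity: without periodicity the long exact sequence of the pair $0\to J\to D\to D/J\to 0$ does \emph{not} wrap around, it only extends infinitely to the left (using half-exactness and homotopy invariance of $K_0$, the index map $K_1(D/J)\to K_0(J)$, and iterated suspension), terminating on the right at $K_0(D/J)$. Fortunately that pre-Bott long exact sequence is all you need: since your swindle applies with $A$ replaced by any $\Sigma^k A$, you have $K_k(\mathcal{T}_0\otimes A)=0$ for every $k\geq 0$, and the segment
\[
K_1(\mathcal{T}_0\otimes A)\longrightarrow K_1(\Sigma A)\longrightarrow K_0(\mathcal{K}\otimes A)\longrightarrow K_0(\mathcal{T}_0\otimes A)
\]
becomes $0\to K_0(\Sigma^2 A)\to K_0(A)\to 0$, which is exactly the periodicity isomorphism. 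With that correction the argument is sound. One smaller remark: the statement you are proving asserts only that canonical isomorphisms exist, so the final step of identifying the connecting map with $\beta_A$ is not required; and the appeal to ``uniqueness of natural transformations $K_0\to K_2$ that restrict correctly on $\mathbb{C}$'' is not an off-the-shelf fact --- if you do want the identification, the honest route is an explicit index computation against the Bott projection.
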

\begin{definition}
By Bott periodicity, there are only two inequivalent operator $K$-functors: $K_0$ and $K_1$. Thus, we write
\[ K_\bullet(A) := K_0(A) \oplus K_1(A) \,. \]
\end{definition}

To understand the relation between operator $K$-theory and $K$-theory of vector bundles, we need the following basic result of noncommutative geometry. Let $X$ be compact, $E\rightarrow X$ be a $U(n)$ vector bundle and $\Gamma(E):=C(X,E)$ the group of continuous sections of $E$. It is a module over the $C^*$-algebra $C(X)$ and is compatible with direct sum and tensor product of vector bundles $E,F$ over $X$:
\[ \Gamma(E)\oplus \Gamma(F) \cong \Gamma(E\oplus F) \qquad \text{and} \qquad \Gamma(E)\otimes_{C(X)} \Gamma(F) \cong \Gamma(E\otimes F) \,. \] 
Finally, $\Gamma$ sends dual bundles to dual modules:
\[ \Gamma(E^*) \cong \Hom_{C(X)}(\Gamma(E),C(X)) \,. \]
It is proved in \cite{GVF} that $\Gamma(E)$ is a \emph{projective module of rank $n$}, meaning that there exists a projection $p\in C(X)^n$, such that $\Gamma(E)\cong p C(X)^n$.
\begin{theorem}[Serre-Swan \cite{GVF}]
The functor $\Gamma$, from $\mathbf{Vect}_\C(X)$ to the category of finite rank projective modules over $C(X)$, is an equivalence of categories. Vector bundles of rank $n$ correspond to projective modules of rank $n$.
\end{theorem}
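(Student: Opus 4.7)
The plan is to establish the three properties that characterize an equivalence of categories: $\Gamma$ lands in finitely generated projective $C(X)$-modules, $\Gamma$ is fully faithful, and $\Gamma$ is essentially surjective. I would work with a compact Hausdorff base $X$ throughout, and treat the rank statement as a refinement obtained along the way.

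First I would verify that $\Gamma(E)$ is a finitely generated projective module of rank $n$ whenever $E$ has rank $n$. The crucial input here is Proposition \ref{Proposition: Complementary bundles exist}, which gives a bundle $E^\perp$ with $E\oplus E^\perp \cong \mathbbm{1}_N$ for some $N$. Applying $\Gamma$ and using its compatibility with $\oplus$ yields $\Gamma(E)\oplus \Gamma(E^\perp)\cong \Gamma(\mathbbm{1}_N) \cong C(X)^N$, so $\Gamma(E)$ is a direct summand of a free module of finite rank; equivalently, there is a projection $p\in M_N(C(X))$ with $\Gamma(E)\cong p\,C(X)^N$. The rank statement follows from the pointwise identification $E_x \cong \Gamma(E)/\mathfrak{m}_x \Gamma(E)$, where $\mathfrak{m}_x\subset C(X)$ is the maximal ideal of functions vanishing at $x$: the right-hand side has $\C$-dimension $n$ iff $p(x)$ has rank $n$, so the local rank of the module matches the fibre dimension of the bundle.

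Next I would show that $\Gamma$ is fully faithful. A bundle morphism $\phi\colon E\to F$ obviously induces a $C(X)$-linear map $\Gamma(\phi)\colon \Gamma(E)\to \Gamma(F)$. Conversely, given a $C(X)$-linear $T\colon \Gamma(E)\to \Gamma(F)$, I would use the identification above: because $T$ respects multiplication by $C(X)$ it preserves the submodules $\mathfrak{m}_x\Gamma(E)$ and $\mathfrak{m}_x\Gamma(F)$, so it descends for each $x\in X$ to a $\C$-linear map $T_x\colon E_x\to F_x$. Continuity of $x\mapsto T_x$ follows by working in local trivializations: over a trivializing neighbourhood the map is described by a matrix whose entries are the values of $T$ on a local frame, and these are continuous because $T$ takes continuous sections to continuous sections. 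Verifying that $\phi\mapsto \Gamma(\phi)$ and $T\mapsto \{T_x\}$ are mutually inverse is then routine.

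For essential surjectivity, let $M$ be a finitely generated projective $C(X)$-module, and choose a projection $p\in M_N(C(X))$ with $M\cong p\,C(X)^N$. I would construct the bundle
\[ E_p := \{(x,v)\in X\times \C^N : v\in p(x)\C^N \}\subset X\times \C^N \]
with projection $(x,v)\mapsto x$. The main obstacle, and the technical heart of the proof, is establishing local triviality of $E_p$. Since $p(x)$ is a continuous family of self-adjoint projections in $M_N(\C)$, its rank is locally constant; more importantly, given $x_0\in X$ one shows that for $x$ in a sufficiently small neighbourhood the element $u(x):=p(x)p(x_0)+(1-p(x))(1-p(x_0))$ is invertible in $M_N(\C)$ and conjugates $p(x_0)$ to $p(x)$, providing a continuous local trivialization $E_p|_U\cong U\times p(x_0)\C^N$. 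With local triviality in hand, $E_p$ is a bona fide complex vector bundle, and one checks directly that $\Gamma(E_p)\cong p\,C(X)^N\cong M$. Combined with the first two steps, this produces the claimed equivalence, with the rank correspondence built in.
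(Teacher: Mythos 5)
The paper does not actually give a proof of the Serre--Swan theorem: it states the result and cites \cite{GVF}, merely remarking beforehand that $\Gamma(E)$ is a projective module of the form $pC(X)^n$. Your proposal supplies the argument that the paper delegates to the reference, and it is the standard Serre--Swan proof: complementary bundles give the projective module structure, fibre evaluation via $\Gamma(E)/\mathfrak{m}_x\Gamma(E)\cong E_x$ gives full faithfulness and the rank matching, and the conjugation trick $u(x)=p(x)p(x_0)+(1-p(x))(1-p(x_0))$ with $u(x_0)=1$ gives local triviality of the bundle associated to a projection. All of these steps are correct; in particular the identity $u(x)p(x_0)=p(x)u(x)$ holds because $p(x),p(x_0)$ are idempotents, so $u(x)$ really does intertwine the projections wherever it is invertible, which is a neighbourhood of $x_0$ by continuity.

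One small gap worth closing: you take an arbitrary finitely generated projective $C(X)$-module and immediately write it as $pC(X)^N$ for a \emph{self-adjoint} projection $p$, and then speak of ``self-adjoint projections $p(x)$''. A priori one only gets an idempotent $e\in M_N(C(X))$. Either note that in a $C^*$-algebra every idempotent is similar (indeed homotopic) to a self-adjoint projection, via the standard formula $p=ee^*\bigl(1+(e-e^*)(e^*-e)\bigr)^{-1}$, or observe that your invertibility argument for $u(x)$ actually needs only that $p$ be idempotent and continuous, not self-adjoint, so the self-adjointness assumption can simply be dropped from the essential-surjectivity step.
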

The Serre-Swan theorem provides the noncommutative generalization of finite rank vector bundles as finite rank projective modules of $C^*$-algebras.
\begin{corollary} Let $X^\infty$ be a compact pointed space and $E\rightarrow X^\infty$ a $U(n)$ vector bundle, corresponding to the projective module $pC(X^\infty)^n$, the projection $p$ defines a class $[p]\in K_0(C(X^\infty))$. The Serre-Swan theorem induces the isomorphism
\[ K^0(X^\infty) \cong K_0(C(X^\infty)) \,. \]
Using $C(\Sigma X^\infty)\cong \Sigma(C(X^\infty))$, we get
\[ K^\bullet(X^\infty) \cong K_\bullet(C(X^\infty)) \,. \]
\qed
\end{corollary}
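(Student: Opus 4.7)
The plan is to promote the Serre--Swan equivalence of categories to an isomorphism of Abelian monoids at the level of isomorphism classes, and then take Grothendieck groups on both sides. First, by the Serre--Swan theorem, sending a vector bundle $E$ to its module of continuous sections $\Gamma(E)$ induces a bijection between isomorphism classes of $U(n)$ vector bundles over $X^\infty$ (varying $n$) and isomorphism classes of finitely generated projective modules over $C(X^\infty)$. Every such module is of the form $pC(X^\infty)^n$ for some projection $p\in M_n(C(X^\infty))$, and two such modules $pC(X^\infty)^n$ and $qC(X^\infty)^m$ are isomorphic if and only if, after padding with zeros to a common matrix size, $p$ and $q$ are Murray--von Neumann equivalent in $C(X^\infty)\otimes\mathcal K$. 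A standard argument (pass to $\diag(p,1-p)$ vs.\ $\diag(q,1-q)$) promotes Murray--von Neumann equivalence in the stable algebra to unitary equivalence, so isomorphism classes of projective modules biject with the classes in $V(C(X^\infty))=\mathcal P(C(X^\infty)\otimes\mathcal K)$, giving $\mathrm{Vect}_\C(X^\infty)\cong V(C(X^\infty))$.

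Second, I would check that this bijection is a monoid isomorphism: the direct sum $E\oplus F$ of vector bundles corresponds under $\Gamma$ to $\Gamma(E)\oplus\Gamma(F)$, which in projection language is the block-diagonal projection $p_E\oplus p_F$, exactly the addition in $V(C(X^\infty))$. Passing to Grothendieck groups, the left-hand side becomes $K^0(X^\infty)$ by definition, while the right-hand side becomes $K_0(C(X^\infty))$ because $C(X^\infty)$ is unital (so no reduction at infinity is needed here). This proves the first claimed isomorphism $K^0(X^\infty)\cong K_0(C(X^\infty))$, and additionally identifies the class $[E]\in K^0(X^\infty)$ of a vector bundle with the class $[p]\in K_0(C(X^\infty))$ of the projection representing its section module, as asserted in the statement.

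Third, for the extension to $K^\bullet$, I would iterate the degree-zero identification after applying reduced suspension. On the topological side, $K^1(X^\infty)=\widetilde K^0(\Sigma X^\infty)$ by definition, and on the algebra side $K_1(C(X^\infty))=K_0(\Sigma C(X^\infty))$ by Definition~\ref{Definition: Operator K-theory}. The relation $C(\Sigma X^\infty)\cong \Sigma(C(X^\infty))$ cited in the statement (interpreted for the appropriate reduced function algebra so the basepoint collapse on the topological side matches the tensoring with $C_0(\R)$ on the algebraic side) lets us apply the degree-zero isomorphism to $\Sigma X^\infty$ and obtain $\widetilde K^0(\Sigma X^\infty)\cong K_0(\Sigma C(X^\infty))$; combined with the degree-zero case, this yields $K^\bullet(X^\infty)\cong K_\bullet(C(X^\infty))$. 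The routine but genuinely delicate point here is bookkeeping of the reduced/unreduced and pointed/unpointed distinctions: Serre--Swan as stated pairs the unreduced monoid $\mathrm{Vect}_\C(X^\infty)$ with modules over the unital algebra $C(X^\infty)$, whereas the suspension step needs the reduced variant, corresponding on the algebra side to the augmentation ideal $\{f\in C(X^\infty)\mid f(\infty)=0\}\cong C_0(X^\infty\setminus\{\infty\})$. Once one verifies that the two splittings $K^0(X^\infty)\cong\widetilde K^0(X^\infty)\oplus\Z$ and $K_0(C(X^\infty))\cong K_0(C_0(X^\infty\setminus\{\infty\}))\oplus\Z$ are intertwined by the Serre--Swan identification, both Bott-periodic ladders can be read off from the single degree-zero statement, and the proof is complete.
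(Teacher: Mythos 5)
Your argument is correct and fills in exactly the details the paper elides: Serre--Swan promotes to a monoid isomorphism $\mathrm{Vect}_\C(X^\infty)\cong V(C(X^\infty))$ once one passes from Murray--von Neumann equivalence of projections to unitary equivalence via $\diag(p,1-p)$ versus $\diag(q,1-q)$ in the stabilized algebra, and the Grothendieck construction on both sides gives the degree-zero isomorphism, which is the paper's intended route. Your caveat about the displayed relation $C(\Sigma X^\infty)\cong\Sigma(C(X^\infty))$ is well-founded --- the left side is unital while $\Sigma(C(X^\infty))=C_0(\R)\otimes C(X^\infty)\cong C_0(\R\times X^\infty)$ is not --- and the cleanest fix is to extend the degree-zero result to locally compact spaces in the form $K^0(Y)\cong K_0(C_0(Y))$ (which the paper states immediately after the corollary, and which follows from the compact case by naturality with respect to the one-point compactification and unitization) and then apply it to $Y=\R\times X^\infty$, yielding $K^1(X^\infty)=K^0(\R\times X^\infty)\cong K_0(C_0(\R\times X^\infty))=K_1(C(X^\infty))$ directly and sidestepping the reduced/unreduced splitting bookkeeping you were navigating in your third paragraph.
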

If $X$ is locally compact, there is a similar result, but with the function algebra replaced by the algebra $C_0(X)$ of functions vanishing at infinity:
\[ K^\bullet(X) \cong K_\bullet(C_0(X)) \,. \]
$K$-theory can also be defined for more general algebras, for example \emph{local $C^*$-algebras} \cite{Bla}. Most importantly, the smooth function algebra $C^\infty_0(X)$ is a local $C^*$-algebra. The algebra $C_0(X)$ is the $C^*$-norm completition of $C^\infty_0(X)$. $K$-theory is invariant under such norm completitions \cite{Bla}:
\[ K_\bullet(C^\infty_0(X)) \cong K_\bullet(C_0(X)) \,. \]

Rosenberg \cite{Ros} defines twisted $K$-theory by modifying the algebra $C_0(X)$ as follows. Let $\sigma\in H^3(X,\Z)$ and $P_\sigma\rightarrow X$ a principal $PU(\Hilb)$ bundle\footnote{Recall that as a structure group of a principal $PU(\Hilb)$ bundle, $PU(\Hilb)$ is endowed with the quotient topology of $U(\Hilb)/U(1)$, where $U(\Hilb)$ has the strong operator topology.}, with Dixmier-Douady class $\sigma$. By lemma \ref{Lemma: PU(H) bundles = H^3(X,Z)}, $P_\sigma$ is unique up to isomorphism. Using the action of $PU(\Hilb)$ on $\mathcal{K}$, we can form the associated bundle
\[ \mathcal{E}_\sigma := P_\sigma\times_{PU(\Hilb)} \mathcal{K} \rightarrow X \,, \]
with fibres isomorphic to $\mathcal{K}$.
\begin{definition}[\cite{Ros}]
The \emph{$\sigma$-twisted $K$-theory} groups of $X$ are defined by
\[ K^\bullet(X,\sigma) := K_\bullet(C_0(X,\mathcal{E}_\sigma)) \,. \]
They are independent of the representative of the isomorphism class of $PU(\Hilb)$ bundles with Dixmier-Douady class $\sigma$.

Untwisted $K$-theory can be recovered by setting $\sigma=0$. Then $\mathcal{E}_0 = X\times \mathcal{K}$, $C_0(X,\mathcal{E}_0) = C_0(X,\mathcal{K})\cong C_0(X)\otimes \mathcal{K}$ and
\[ K^\bullet(X,0) = K_\bullet(C_0(X)\otimes \mathcal{K}) \stackrel{\eqref{Eq: K-theory is Morita invariant}}{\cong} K_\bullet(C_0(X)) = K^\bullet(X) \,. \]
\end{definition}

Twisted $K$-theory, with a torsion twisting class $\sigma\in \Tor H^3(X,\Z)$, can be thought of as $K$-theory of $\sigma$-twisted vector bundles. Recall proposition \ref{Proposition: BG modules and projective bundles}, which explained the relationship between finite rank bundle gerbe modules for a bundle gerbe with Dixmier-Douady class $\sigma$, and finite rank projective bundles with Dixmier-Douady class $\sigma$ over the base space $X$. Thus, it is not very surprising that $\sigma$-twisted $K$-theory can be defined in terms of bundle gerbe modules.
\begin{definition}[\cite{BCMMS}]
Let $(L,Y)$ be a bundle gerbe over $X$ with Dixmier-Douady class $\sigma\in \Tor H^3(X,\Z)$. We define $K_\text{bg}(X,\sigma)$ to be the Grothendieck group of $\Mod(L)$. It is, of course, independent of the choice of the bundle gerbe $(L,Y)$.
\end{definition}
We shall not prove the following.
\begin{proposition}[\cite{BCMMS}]\label{Proposition: BG twisted K-theory = Rosenberg twisted K-theory}
For any torsion twisting class $\sigma\in \Tor H^3(X,\Z)$, there is an isomorphism
\[ K_\text{bg}(X,\sigma) \cong K^0(X,\sigma) \,. \]
\end{proposition}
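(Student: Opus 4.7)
The strategy is to construct an explicit isomorphism $\Phi: K_\text{bg}(X,\sigma)\xrightarrow{\cong} K^0(X,\sigma)$ via an intermediate finite-rank matrix algebra bundle on $X$. The crucial consequence of the torsion hypothesis $\sigma\in \Tor H^3(X,\Z)$ is that the continuous-trace $C^*$-algebra representing $\sigma$ may be realized by a locally trivial bundle of matrix algebras $M_N(\C)$, rather than only by the $\mathcal{K}$-bundle $\mathcal{E}_\sigma$. Concretely, if $n\sigma=0$, the classifying map $X\to BPU(\Hilb)\simeq K(\Z,3)$ factors up to homotopy through $BPU(N)$ for some $N\in \N$, so the principal $PU(\Hilb)$-bundle $P_\sigma$ reduces to a principal $PU(N)$-bundle $Q_\sigma\to X$ with the same Dixmier--Douady class. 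Form the associated algebra bundle
\[ \mathcal{A}_\sigma := Q_\sigma \times_{PU(N)} M_N(\C) \longrightarrow X\,, \]
with $PU(N)$ acting by conjugation (well-defined because $U(1)$ lies in the center of $U(N)$ and so acts trivially). By the Dixmier--Douady classification of continuous-trace $C^*$-algebras, $C_0(X,\mathcal{A}_\sigma)$ has Dixmier--Douady invariant $\sigma$ and is Morita equivalent to $C_0(X,\mathcal{E}_\sigma)$, whence Morita invariance \eqref{Eq: K-theory is Morita invariant} yields $K_0(C_0(X,\mathcal{A}_\sigma))\cong K^0(X,\sigma)$.

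The map $\Phi$ is then constructed as follows. Let $(L,Y,X)$ be a fixed bundle gerbe with $d(L)=\sigma$ and let $E\to Y$ be a bundle gerbe module of rank $k$. Proposition \ref{Proposition: BG modules and projective bundles} assigns to $E$ a principal $PU(k)$-bundle $P_E\to X$ with $d(P_E)=\sigma$, uniquely defined modulo tensoring by a line bundle over $X$. Locally, $E$ is presented by vector bundles $E_\alpha\to U_\alpha$ with $U(k)$-valued transition functions $\{h_{\alpha\beta}\}$ satisfying the $\underline{g}$-twisted cocycle condition \eqref{Eq: Twisted cocycle condition}; these assemble into an honest matrix algebra bundle on $X$ (isomorphic to a matrix amplification of $\mathcal{A}_\sigma$) because all $U(1)$ phases cancel in $\End(\C^k)$. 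The local endomorphism algebras act on the local sections of $E_\alpha$, and patching gives the module of "sections of $E$" the structure of a finitely generated projective $C_0(X,\mathcal{A}_\sigma)$-module, corresponding to a projection $p_E$ in a matrix algebra over $C_0(X,\mathcal{A}_\sigma)^+$. Define $\Phi([E]):=[p_E]\in K_0(C_0(X,\mathcal{A}_\sigma))\cong K^0(X,\sigma)$. Compatibility with direct sums is automatic, so $\Phi$ descends to the Grothendieck group $K_\text{bg}(X,\sigma)$.

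For the inverse, I would apply a Serre--Swan type theorem in the twisted setting: every finitely generated projective $C_0(X,\mathcal{A}_\sigma)$-module arises as sections of a locally trivial bundle of $\mathcal{A}_\sigma$-modules, whose local transition data satisfy precisely the $\underline{g}$-twisted cocycle condition for a cocycle representing $\sigma$, and hence reassemble into a bundle gerbe module for $(L,Y)$. Well-definedness modulo the line-bundle ambiguity of Proposition \ref{Prop: Stably isomorphic BG have isomorphic Mod} holds because tensoring $E$ by $\pi^{-1}K$ for a line bundle $K\to X$ leaves the induced $\mathcal{A}_\sigma$-module unchanged up to isomorphism (scalar factors commute through matrix multiplication). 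Tracing an element through both constructions shows they are mutually inverse.

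The hard part will be Step 1: exhibiting the reduction of $P_\sigma$ to a finite-dimensional $PU(N)$-bundle and identifying $C_0(X,\mathcal{A}_\sigma)$ with $C_0(X,\mathcal{E}_\sigma)$ up to Morita equivalence. This rests on the nontrivial theorem in continuous-trace $C^*$-algebra theory that a class in $H^3(X,\Z)$ is realizable as the Dixmier--Douady invariant of a locally trivial $M_N(\C)$-bundle if and only if it is torsion. This is precisely the reason why the definition of $K_\text{bg}(X,\sigma)$ via finite rank bundle gerbe modules only captures twisted $K$-theory in the torsion case; in the nontorsion case one must replace finite rank modules by the $U_1$ bundle gerbe modules introduced in section \ref{Section: Bundle gerbe modules}.
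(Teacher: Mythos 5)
The paper does not supply a proof of this proposition (it states explicitly ``We shall not prove the following'' and cites \cite{BCMMS}), so there is nothing in-text to compare against. Your overall strategy---realizing the torsion class $\sigma$ by a finite-rank Azumaya algebra bundle, invoking Morita invariance of $K_0$, and establishing a twisted Serre--Swan correspondence---is the standard route and is broadly the one followed in \cite{BCMMS}. However, the step where you construct the map $\Phi$ contains a genuine gap.

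You write that ``patching gives the module of `sections of $E$' the structure of a finitely generated projective $C_0(X,\mathcal{A}_\sigma)$-module.'' This step does not work as stated. The local vector bundles $E_\alpha\to U_\alpha$ with $\underline{g}$-twisted transition functions $\{h_{\alpha\beta}\}$ do not glue to a vector bundle over $X$, and there is no nontrivial space of ``global sections of $E$ over $X$'': a compatible family $\{s_\alpha\}$ with $s_\alpha=h_{\alpha\beta}s_\beta$ would on triple overlaps satisfy both $s_\alpha=h_{\alpha\gamma}s_\gamma$ and $s_\alpha=h_{\alpha\beta}h_{\beta\gamma}s_\gamma=\zeta_{\alpha\beta\gamma}h_{\alpha\gamma}s_\gamma$, forcing $s_\gamma=0$ wherever $\zeta_{\alpha\beta\gamma}\neq 1$. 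The local $\End(E_\alpha)$-actions do not ``patch'' either, and for the same reason: the required cocycle condition on the gluing isomorphisms fails by exactly the nontrivial class $[\underline{\zeta}]$. Consequently your remark that tensoring by $\pi^{-1}K$ leaves ``the induced $\mathcal{A}_\sigma$-module'' unchanged is also dangling, since that module was never constructed.

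The repair is to fix once and for all a reference finite-rank bundle gerbe module $E_0$ for $(L,Y)$, with transition data $\{h^0_{\alpha\beta}\}$ twisted by the same $\underline{\zeta}$; its existence is exactly what the Serre--Grothendieck realizability theorem you invoke guarantees. Set $\mathcal{A}_\sigma:=\End(E_0)$, which is an honest Azumaya algebra bundle over $X$ because conjugation kills the central twist. For any other $E\in\Mod(L)$, the local $\Hom$-bundles $\Hom(E^0_\alpha,E_\alpha)$ with transition maps $T\mapsto h_{\alpha\beta}\circ T\circ (h^0_{\alpha\beta})^{-1}$ \emph{do} satisfy the honest cocycle condition, since the scalars $\zeta_{\alpha\beta\gamma}$ cancel between $h$ and $(h^0)^{-1}$. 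Hence $\Hom(E_0,E)\to X$ is a genuine vector bundle, carrying a right $\mathcal{A}_\sigma$-action by precomposition, and $\Phi([E]):=\left[\Gamma\left(X,\Hom(E_0,E)\right)\right]\in K_0(C_0(X,\mathcal{A}_\sigma))\cong K^0(X,\sigma)$ is well defined on $\Mod(L)$ and additive. The inverse sends a finitely generated projective $C_0(X,\mathcal{A}_\sigma)$-module $\mathcal{M}$ to $\mathcal{M}\otimes_{\mathcal{A}_\sigma}E_0$, read off as a bundle gerbe module for $L$. With this modification the rest of your argument goes through.
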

If $\sigma$ is nontorsion, we have to use $U_1$ bundle gerbe modules instead\footnote{The reason why $U(\Hilb)$ bundle gerbe modules are not interesting is that, while such a module would indeed yield a $PU(\Hilb)$ bundle over $X$, it is the only principal $PU(\Hilb)$ bundle with that precise Dixmier-Douady class. Thus, all $U(\Hilb)$ bundle gerbe modules are actually stably isomorphic.}.
\begin{definition}
Let $(L,Y)$ be a bundle gerbe over $X$ with nontorsion Dixmier-Douady class $\sigma\in H^3(X,\Z)$. We define $K_{U_1}(X,\sigma)$ to be the Grothendieck group of $\Mod_{U_1}(L)$. It is, of course, independent of the choice of the bundle gerbe $(L,Y)$.
\end{definition}
\begin{proposition}[\cite{BCMMS}]
For any nontorsion twisting class $\sigma\in H^3(X,\Z)$, there is an isomorphism
\[ K_{U_1}(X,\sigma) \cong K^0(X,\sigma) \,. \]
\end{proposition}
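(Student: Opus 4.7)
The strategy is to mimic the proof of proposition \ref{Proposition: BG twisted K-theory = Rosenberg twisted K-theory} from the torsion case, replacing finite-rank vector bundles with Hilbert bundles whose structure group is reduced to $U_1$, and replacing the classical Serre--Swan theorem with its $C^*$-algebraic analogue for the stable continuous-trace algebra $C_0(X,\mathcal{E}_\sigma)$. By lemma \ref{Lemma: PU(H) bundles = H^3(X,Z)} I first fix a principal $PU(\Hilb)$ bundle $\pi:P\rightarrow X$ with Dixmier-Douady class $\sigma$ and form its lifting bundle gerbe $(J,P)$, which has the same Dixmier-Douady class. The obvious generalization of proposition \ref{Prop: Stably isomorphic BG have isomorphic Mod} to the $U_1$ setting provides, for any bundle gerbe $L$ with $d(L)=\sigma$, a (noncanonical) semigroup isomorphism $\Mod_{U_1}(L)\cong \Mod_{U_1}(J)$, so after passing to Grothendieck groups we may work throughout with the specific model $(J,P)$.

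Next I would build the forward map $\Phi:K_{U_1}(X,\sigma)\rightarrow K^0(X,\sigma)$ at the level of modules. Given $E\in \Mod_{U_1}(J)$, the $J$-action on $E$ together with the natural descent of fibrewise compact endomorphisms yields a canonical identification between the algebra of $J$-equivariant compact endomorphisms of $E$ and $C_0(X,\mathcal{E}_\sigma)$. Simultaneously, the space $\Gamma(E)^J$ of $J$-equivariant continuous sections acquires the structure of a right Hilbert $C_0(X,\mathcal{E}_\sigma)$-module, and the $U_1$-reduction of the structure group ensures that this module is finitely generated projective, hence defines a class $[\Gamma(E)^J]\in K_0(C_0(X,\mathcal{E}_\sigma))=K^0(X,\sigma)$. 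Additivity under direct sums makes $[E]\mapsto [\Gamma(E)^J]$ into a well-defined group homomorphism $\Phi$.

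For the inverse, a projection $p\in M_n(\widetilde{C_0(X,\mathcal{E}_\sigma)})$ representing a class in $K^0(X,\sigma)$ defines fibrewise a finite-rank projection in $\mathcal{K}$, and the corresponding family of subspaces of $\Hilb$ glues to a Hilbert bundle on $X$ whose pullback along $\pi$ carries a natural $J$-twisted equivariant structure, producing a $U_1$ bundle gerbe module $\Psi(p)\in \Mod_{U_1}(J)$. The main obstacle will be verifying that $\Phi$ and $\Psi$ are mutually inverse at the monoid level; this is essentially a Serre--Swan-type statement for the stable continuous-trace algebra $C_0(X,\mathcal{E}_\sigma)$. Its key analytical content is that the $U_1$-reduction, requiring transition functions of the form $1+\mathrm{trace\ class}$, is precisely what matches \emph{finite generation} on the Hilbert-module side, whereas an unrestricted $U(\Hilb)$-structure would produce a Hilbert module of infinite projective rank corresponding only to the zero class. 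Once this matching is made precise, compatibility with direct sums and stable isomorphisms is automatic, and the isomorphism of Grothendieck groups follows.
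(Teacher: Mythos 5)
The paper itself does not prove this proposition; it simply cites \cite{BCMMS} and states the result, just as it did (explicitly declaring ``We shall not prove the following'') for the torsion case. So there is no ``paper's proof'' to compare against, and I can only assess your proposal on its own terms.

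There is a genuine gap at the heart of the construction of $\Phi$. You claim that ``the $U_1$-reduction of the structure group ensures that this module is finitely generated projective.'' This is false, and the failure is exactly the reason the nontorsion case is harder than the torsion one. A $U_1$ bundle gerbe module has fibre the infinite-dimensional Hilbert space $\Hilb$, and the $U_1$-reduction constrains the transition functions to be of the form $1 + \text{trace class}$; it does not shrink the fibres. By Kuiper's theorem $U(\Hilb)$ is contractible, so once you pass to the underlying Hilbert $C_0(X,\E_\sigma)$-module $\Gamma(E)^J$, any two such $E$'s give \emph{isomorphic} modules, and $\Gamma(E)^J$ is always the ``rank-one'' module corresponding, under the Morita equivalence with $C_0(X)$, to the trivial line bundle. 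In other words, the functor $E\mapsto\Gamma(E)^J$ forgets the $U_1$-structure entirely, so $\Phi$ as you define it factors through the trivial group and cannot recover $K^0(X,\sigma)$. (To see this concretely at the level of the monoid: over $X = \{\ast\}$ every $U_1$ bundle is isomorphic to $\Hilb$ and $\Hilb\oplus\Hilb\cong\Hilb$, so $\Mod_{U_1}$ is the one-element monoid; the nontriviality of $K^0$ must therefore enter through something other than the bare isomorphism type of the section module.) There is also a smaller inaccuracy in the proposed inverse: a projection in $M_n\bigl(\widetilde{C_0(X,\E_\sigma)}\bigr)$ is fibrewise of the form ``compact plus a scalar projection,'' not finite rank, and only the kernel of $K_0(A^+)\rightarrow K_0(\C)$ picks out the genuine $K_0(A)$ classes as formal differences.

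The information that the section module forgets is precisely what the homotopy type of $U_1$ retains: $U_1$ is homotopy equivalent to $U(\infty)=\varinjlim U(n)$, so $BU_1\simeq BU$, and it is the classifying map $X\rightarrow BU_1$, twisted-equivariantly, that carries the $K$-theory class. The proof in \cite{BCMMS} accordingly proceeds not by a Serre--Swan descent but by identifying the monoid of $U_1$ bundle gerbe modules with the Fredholm (Atiyah--J\"anich) model of twisted $K$-theory, using this homotopy equivalence, and then quoting the known agreement of the Fredholm model with the Rosenberg model $K_0(C_0(X,\E_\sigma))$. If you want to repair your argument, you would need to replace $\Gamma(E)^J$ with a Fredholm- or index-theoretic invariant of $E$ that detects the $U_1$-reduction, rather than the underlying Hilbert module.
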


Twisted $K$-theory can also be defined homotopically by modifying the spectrum of $K$-theory. We have no need for this approach so we shall not discuss it. An extensive treatment is given in \cite{AS1}.

Let $\phi:\Sigma\rightarrow X$ be a twisted D-brane with $PU(n)$ Chan-Paton bundle $P$. Let $\{h_{\alpha\beta}\}$ be $U(n)$ lifts of the transition functions of $P$, satisfying
\[ h_{\alpha\beta}h_{\beta\gamma}h^{-1}_{\alpha\gamma} = \zeta_{\alpha\beta\gamma} \,. \]
Freed-Witten anomaly cancellation requires that
\[ \beta([\underline{\zeta}]) = [H]|_\Sigma + W_3(\Sigma) \,. \]
Thus, $P$ is a $[H]|_\Sigma + W_3(\Sigma)$-twisted principal $PU(n)$ bundle. The trivial rank $n$ bundle gerbe module for its lifting bundle gerbe defines a class
\[ [P]\in K_\text{bg}(X,[H]|_\Sigma + W_3(\Sigma)) \cong K^0(X,[H]|_\Sigma + W_3(\Sigma)) \,. \]
It is clear how coincident anti-D-branes can be taken into account.

If $[H]|_\Sigma$ is nontorsion, the Chan-Paton bundle $P$ is described by a $U_1$ bundle gerbe module, which looks locally like a $[H]|_\Sigma+W_3(\Sigma)$-twisted principal $U_1$ bundle $P$, or a $PU(\Hilb)$ bundle, whose transition functions lift to $U_1\subset U(\Hilb)$, satisfying a $[H]|_\Sigma + W_3(\Sigma)$-twisted cocycle condition. We fix a basepoint for the compact space $\Sigma$. Then, the $U_1$ bundle gerbe module determines a class
\[ [P]\in K_{U_1}(\Sigma,[H]|_\Sigma+W_3(\Sigma)) \cong K^0(\Sigma,[H]|_\Sigma+W_3(\Sigma)) \,. \]
Again, coincident anti-D-branes can be taken into account in the obvious way.

The goal is to obtain a generalization of the Minasian-Moore formula \eqref{Eq: Minasian-Moore formula} to the twisted case. What we need, is a twisted $K$-orientability condition to be able to define a Gysin map from $K^\bullet(\Sigma,[H]|_\Sigma+W_3(\Sigma))$ into twisted $K$-theory of spacetime. We shall later discuss $K$-orientability in a much more general context, when we have the powerful machinery of $KK$-theory at our disposal. Hence, we now only present the result of \cite{CW}.
\begin{theorem}[\cite{CW}]\label{Theorem: Gysin map in twisted K-theory}
Let $f:X\rightarrow Y$ be a continuous map\footnote{In \cite{CW} the authors restrict to differentiable maps. However, all differentiable maps between manifolds are homotopic to continuous maps \cite{BT} and the Gysin map depends only on the homotopy class of the map. Hence, the result of \cite{CW} generalizes to continuous maps.}, $\sigma\in H^3(Y,\Z)$,
\[ W_3(f) := \beta(w_2(X)-f^*w_2(Y)) \]
and $d:=\dim(Y)-\dim(X)$. Then, there exists a natural Gysin homomorphism
\[ f_!:K^\bullet(X,\sigma|_\Sigma + W_3(f)) \rightarrow K^{\bullet+d}(Y,\sigma) \,. \]
If both $X$ and $Y$ are both $K$-orientable ($\spinc$) and $\sigma=0$, then $f_!:K^\bullet(X)\rightarrow K^{\bullet+d}(Y)$ coincides with definition \ref{Definition: Gysin map in K-theory}.
\end{theorem}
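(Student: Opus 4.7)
The plan is to factor $f$ through an embedding into a ``large'' Euclidean fibration and a projection, and then define $f_!$ as a composition of Gysin maps for these two pieces. Reducing to these two fundamental cases will isolate the only genuinely new ingredient, a twisted Thom isomorphism. Since continuous maps between smooth manifolds are homotopic to smooth ones and twisted $K$-theory is homotopy invariant, I may assume $f$ is smooth. By the Whitney embedding theorem I factor $f = p\circ i$, where $i:X\hookrightarrow Y\times \R^N$ is a smooth proper embedding for some large $N$ and $p:Y\times \R^N\to Y$ is the projection; functoriality will then force $f_! := p_!\circ i_!$.

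For the projection $p$, the fiber $\R^N$ is a trivial $\spinc$ vector bundle, so $W_3(p)=0$ and the required Gysin map should be an isomorphism
\[ p_!:K^\bullet(Y\times \R^N, p^*\sigma)\xrightarrow{\cong} K^{\bullet - N}(Y,\sigma). \]
At the $C^*$-algebraic level this is realized by the Rosenberg model: $C_0(Y\times \R^N,\E_{p^*\sigma})\cong \Sigma^N C_0(Y,\E_\sigma)$, so $p_!$ is simply $N$-fold Bott periodicity applied inside operator $K$-theory, and its naturality in $\sigma$ is manifest.

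For the embedding $i$, the tubular neighborhood theorem identifies an open $U\subset Y\times \R^N$ with the total space of the normal bundle $\nu:=NX$ of real rank $N+d$. The open inclusion $j:U\hookrightarrow Y\times \R^N$ induces ``extension by zero'' in twisted $K$-theory, coming from the inclusion of $C_0(U,\E|_U)$ as a closed ideal in $C_0(Y\times \R^N,\E)$. Composing with a twisted Thom isomorphism for $\nu$ will yield $i_!$. The crucial technical ingredient is therefore the following general twisted Thom isomorphism: for any real vector bundle $E\to X$ of rank $k$ and any $\tau\in H^3(X,\Z)$, there is a natural isomorphism
\[ \mathrm{Th}:K^\bullet(X,\tau + W_3(E))\xrightarrow{\cong} K^{\bullet + k}(E,\pi^*\tau). \]
When $E$ is $\spinc$ this reduces to the classical Thom isomorphism in twisted $K$-theory; in general, the summand $W_3(E)$ on the base absorbs the $\spinc$ obstruction of $E$. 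I would prove this by twisting the $PU(\Hilb)$-bundle model for $\tau$ by a bundle with Dixmier-Douady class $W_3(E)$ so that the tensor product becomes $\spinc$ along the fibers, and invoking Morita invariance \eqref{Eq: K-theory is Morita invariant}. Applied to $\nu$, the exact sequence $0\to TX\to i^*T(Y\times\R^N)\to \nu\to 0$ together with $T(Y\times\R^N)\cong TY\oplus \mathbbm{1}^N$ gives $TX\oplus\nu \cong f^*TY\oplus \mathbbm{1}^N$; the Whitney sum formula \eqref{Eq: S-W summation formula} then yields $w_2(\nu)=w_2(X)-f^*w_2(Y) \pmod 2$, hence $W_3(\nu)=W_3(f)$, which is exactly the twist appearing in the statement.

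Stringing the pieces together gives
\[ K^\bullet(X,f^*\sigma + W_3(f))\xrightarrow{\mathrm{Th}} K^{\bullet + N+d}(\nu,\pi^*f^*\sigma)\xrightarrow{j_*} K^{\bullet+N+d}(Y\times \R^N,p^*\sigma)\xrightarrow{p_!} K^{\bullet+d}(Y,\sigma). \]
Independence of $N$ and of the embedding $i$ follows by a standard stabilization argument: any two embeddings become isotopic after enlarging $N$, and the extra trivial normal factors are absorbed by Bott periodicity. Naturality in $\sigma$ is clear from the naturality of each constituent. In the untwisted $\spinc$ case $\sigma=0$, $W_3(X)=W_3(Y)=0$ forces $W_3(f)=0$, all twistings disappear, and the construction recovers the Atiyah-Hirzebruch Pontryagin-Thom definition of $f_!$ from Definition \ref{Definition: Gysin map in K-theory}. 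The main obstacle will be executing the twisted Thom isomorphism carefully enough that the $PU(\Hilb)$-correction used to ``$\spinc$-ify'' $\nu$ interacts compatibly with an already nontrivial twist $f^*\sigma$ on the base; this is the technical heart of the argument of \cite{CW}.
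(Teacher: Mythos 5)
The paper does not actually prove this theorem: it states it as a result of \cite{CW} and explicitly defers the general discussion of $K$-orientability to the later $KK$-theoretic chapter. Your sketch is nonetheless a faithful reconstruction of the Carey-Wang argument: factor $f=p\circ i$ with $i:X\hookrightarrow Y\times\R^N$ a proper smooth embedding and $p$ the projection; realize $p_!$ as $N$-fold Bott periodicity via the suspension $C_0(Y\times\R^N,\E_{p^*\sigma})\cong\Sigma^N C_0(Y,\E_\sigma)$; and build $i_!$ from a twisted Thom isomorphism for the normal bundle $\nu$ followed by the extension-by-zero map coming from the inclusion of a tubular neighborhood as an open subset. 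Your Whitney-formula computation showing $W_3(\nu)=W_3(f)$ is correct, and you rightly isolate the twisted Thom isomorphism for a real bundle that need not be $\spinc$ --- with the base twist $W_3(E)$ compensating the $\spinc$ obstruction along the fibers --- as the genuinely new ingredient; checking that the $PU(\Hilb)$ bundle realizing the shift by $W_3(E)$ is compatible with an arbitrary preexisting twist $\tau$, and that the resulting map is multiplicative enough to give independence of the embedding after stabilization, is exactly the technical content of \cite{CW}. One small remark: the $\sigma|_\Sigma$ in the statement is a typo for $f^*\sigma$, which is the twist you correctly use.
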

Theorem \ref{Theorem: Gysin map in twisted K-theory} implies that there exists a Gysin homomorphism
\[ \phi_!:K^0(\Sigma,[H]|_\Sigma+W_3(\Sigma)) \rightarrow K^0(X,[H]) \,. \]
Consider now a configuration of twisted D-branes wrapping $\phi:\Sigma\rightarrow X$, with twisted $K$-theory class $x\in K^0(\Sigma,[H]|_\Sigma+W_3(\Sigma))$. The \emph{quantized RR-charge} of this D-brane configuration is the $[H]$-twisted $K$-theory class
\[ Q_\Z(\Sigma,x) := \phi_!(x)\in K^0(X,[H]) \,, \]
generalizing \eqref{Eq: Quantized RR-charge}.

Next we need a twisted generalization of the Chern character. The ordinary Chern character admits a purely algebraic description, which allows a generalization to noncommutative algebras, thus yielding the \emph{Chern-Connes character}, which takes values in something called \emph{periodic cyclic homology}, $\HP_\bullet(-)$ \cite{Con,GVF,Cun}. Periodic cyclic homology is the noncommutative generalization of (periodized) de Rham cohomology in the sense that for a compact manifold $X$, we have
\[ H^\bullet_\text{dR}(X) \cong \HP_\bullet(C^\infty(X))  \,, \]
where $\bullet=\text{even/odd}$. Similarly, \emph{periodic cyclic cohomology} generalizes de Rham homology of $X$. Periodic cyclic (co)homology degenerates for $C^*$-algebras. That is why we used the smooth function algebra instead. For a local $C^*$-algebra $\mathcal{A}$ the Chern-Connes character is a homomorphism
\begin{equation}\label{Eq: Chern-Connes character} \ch:K_\bullet(\mathcal{A}) \rightarrow \HP_\bullet(\mathcal{A})\,. \end{equation}

Let $P\rightarrow X$ be a principal $PU(\Hilb)$ bundle with Dixmier-Douady class $\sigma$. The algebra $C_0^\infty(X,P\times_{PU(\Hilb)} \mathcal{L}^1)$, where $\mathcal{L}^1\subset \mathcal{K}$ is the ideal of trace class operators of $\Hilb$, is a local $C^*$-algebra, with $C^*$-norm completition $C_0(X,\mathcal{E}_\sigma)$. The \emph{generalized Hochschild-Connes-Kostant-Rosenberg theorem} of Mathai and Stevenson \cite{MS3} states that $\HP_\bullet(C_0^\infty(X,P\times_{PU(\Hilb)} \mathcal{L}^1))$ is isomorphic to \emph{$\sigma$-twisted (de Rham) cohomology}. Luckily, twisted cohomology is very easy to define.
\begin{definition}[\cite{MS2,MS3,BCMMS,AS2}]
Let $\zeta$ be a closed de Rham $3$-form. The \emph{$\zeta$-twisted de Rham differential} is
\[ d_\zeta := d - \zeta \,, \]
which operates on differential forms $\Omega^\text{even/odd}(X)$ by $d_\zeta(\omega) = d\omega - \zeta\wedge \omega$\footnote{Clearly, $d_\zeta$ maps $\Omega^\text{even}(X)$ to $\Omega^\text{odd}(X)$ and \emph{vice versa}.}. It is easy to see that $d_\zeta^2= 0$:
\[ d_\zeta^2\omega = d^2\omega -d(\zeta\wedge \omega) -\zeta \wedge d\omega + \zeta\wedge \zeta \wedge \omega  = 0 -d\zeta \wedge \omega +\zeta \wedge d\omega -\zeta \wedge d\omega + 0 = 0 \,. \]
The (periodized) $\zeta$-twisted de Rham cohomology groups are
\[ H^\bullet(X,\zeta) := \ker(d_\zeta)/\im(d_\zeta) \,, \]
where $\bullet = \text{even/odd}$. Moreover, adding an exact $3$-form $d\eta$ to $\zeta$ leaves the twisted de Rham cohomology group invariant:
\[ H^\bullet(X,\zeta+d\eta) \cong H^\bullet(X,\zeta) \,. \]
On the level of cocycles, the isomorphism is given by
\[ \omega \mapsto e^\eta \wedge \omega \,. \]
It is easy to see to be well defined, for example
\begin{align*} (d-(\zeta+d\eta))(e^\eta\wedge \omega) &= d\eta\wedge e^\eta\wedge \omega + e^\eta \wedge d\omega -(\zeta+d\eta)\wedge(e^\eta\wedge \omega) \\ &= e^\eta \wedge d\omega - \zeta\wedge e^\eta\wedge \omega \\ &= e^\eta \wedge (d\omega - \zeta\wedge \omega) = 0 \,. \end{align*}
Thus, de Rham cohomology is most naturally twisted by the cohomology class $[\zeta]\in H^3_\text{dR}(X)$. More generally, if $\sigma\in H^3(X,\Z)$, \emph{$\sigma$-twisted (de Rham) cohomology} is defined by
\[ H^\bullet(X,\sigma) := H^\bullet(X,\sigma_\text{dR}) \,, \]
where $\sigma_\text{dR}$ is the image of $\sigma$ in $H^3_\text{dR}(X)$.

If $\mu\in \Omega^\bullet(X)$ is $d$-closed and $\omega$ represents a class in $H^\bullet(X,\sigma)$, then $d_{\sigma_\text{dR}}(\omega \wedge \mu)=0$. This action makes $H^\bullet(X,\sigma)$ into a $H^\bullet(X)$ module.
\end{definition}

For $\sigma\in H^3(X,\Z)$, the \emph{twisted Chern character} is a homomorphism
\begin{equation}\label{Eq: Twisted Chern character} \ch_\sigma:K^\bullet(X,\sigma) \rightarrow H_c^\bullet(X,\sigma) \,, \end{equation}
where $H_c^\bullet(X,\sigma)$ denotes \emph{compactly supported $\sigma$-twisted cohomology}, defined by restricting to compactly supported differential forms. It can be defined as a special case of \eqref{Eq: Chern-Connes character}, where $\mathcal{A} = C_0^\infty(X,P\times_{PU(\Hilb)} \mathcal{L}^1)$. Another definition was given in \cite{BCMMS} using bundle gerbe module connections. We take a slightly different approach, using local representations on $X$. Suppose that $\sigma\in \Tor H^3(X,\Z)$. Let $(L,Y)$ be a bundle gerbe with connection and curving, with Dixmier-Douady class $\sigma$. Then $L$ admits a finite rank bundle gerbe module with connection. If $L$ is represented by the Deligne cocycle $(\underline{g},-\underline{\Lambda},\underline{B})$, then the connection had the local representation \eqref{Eq: A local transformation}. By \eqref{Eq: mu, q, Lambda relation}, we can rewrite the transformation as
\[ A_\beta = h_{\alpha\beta}^{-1}A_\alpha h_{\alpha\beta} + h^{-1}_{\alpha\beta}\,dh_{\alpha\beta} + \mu_\beta - \mu_\alpha - d\log q_{\alpha\beta} \,. \]
Then, writing $\breve{A}_\alpha := A_\alpha - \mu_\alpha$ and $\breve{h}_{\alpha\beta}:= (hq^{-1})_{\alpha\beta}$, yields
\[ \breve{A}_\beta = \breve{h}_{\alpha\beta}^{-1}\breve{A}_\alpha \breve{h}_{\alpha\beta} + \breve{h}^{-1}_{\alpha\beta}\,d\breve{h}_{\alpha\beta} \,. \]
Now, $\breve{F}_\alpha := d\breve{A}_\alpha + \breve{A}_\alpha\wedge \breve{A}_\alpha$ behaves precisely like a curvature $2$-form does, even though $\{\breve{h}_{\alpha\beta}\}$ is not quite a cocycle. Hence, we can use it as the curvature in analytic formula \eqref{Eq: Analytic Chern character} for the Chern character.
\begin{definition}\label{Definition: Twisted Chern character, torsion case}
Suppose that $\sigma\in \Tor H^3(X,\Z)$. By proposition \ref{Proposition: BG twisted K-theory = Rosenberg twisted K-theory}, any class $x\in K^\bullet(X,\Sigma)$ can be equivalently represented as the formal difference of two finite rank bundle gerbe modules. Let $A_1$ and $A_2$ be any bundle gerbe module connections for these two modules. Using the above construction, we can form the ''curvatures'' $\breve{F}_1$ and $\breve{F}_2$ from $A_1$ and $A_2$. The $\sigma$-twisted Chern character of $x$ is
\[ ch_\sigma(x) := \left[\Tr\exp\left(-\frac{\breve{F}_1}{2\pi i}\right)\right] - \left[\Tr\exp\left(-\frac{\breve{F}_2}{2\pi i}\right)\right] \in H^\text{even}_\text{dR}(X) \,. \]
\end{definition}
Remark, that since $\sigma$ is torsion in the above, $\sigma_\text{dR}=0$ and so $H^\bullet(X,\sigma) = H_\text{dR}^\bullet(X)$. Thus, definition \ref{Definition: Twisted Chern character, torsion case} is consistent with \eqref{Eq: Twisted Chern character}. The nontorsion case requires more careful thought. The interested reader is referred to \cite{BCMMS,MS2,MS3}.

Now that we have the twisted Chern character at our disposal, generalizing the Minasian-Moore formula \eqref{Eq: Minasian-Moore formula} is easy. Let $x\in K^0(\Sigma,[H]|_\Sigma+W_3(\Sigma))$ describe a configuration of twisted D-branes, wrapping $\phi:\Sigma\rightarrow X$. The real, or classical, RR-charge of the configuration is given by the following \emph{twisted Minasian-Moore formula}.
\begin{definition}[Minasian-Moore]\label{Definition: Twisted Minasian-Moore formula}
\[ Q_\R(\Sigma,x) := \ch_{[H]}(\phi_!(x)) \wedge \sqrt{\Todd(X)} \in H^\text{even}(X,[H]) \,. \]
\end{definition}

The next logical step would be to define a corresponding \emph{twisted $K$-homology} theory and describe D-branes using it. Twisted $K$-homology can be defined in different ways. There is $C^*$-algebraic definition of spectral $K$-homology, often referred to as \emph{analytic $K$-homology} \cite{HR,Kas3,BHS}. More precisely, for a compact manifold $X$, analytic $K$-homology of $C(X)$ is isomorphic to $K_\bullet(X)\cong K_\bullet^\text{geom}(X)$. Then, $\sigma$-twisted $K$-homology is simply defined as analytic $K$-homology of the $\sigma$-twisted function algebra $C(X,\mathcal{E}_\sigma)$. Just like untwisted $K$-homology, twisted $K$-homology admits a geometric description \cite{Wan,Sza1}. We shall not discuss twisted $K$-homology here, but instead jump straight to $KK$-theory, which yields both $K$-theory and $K$-homology as special cases, thus providing a unified treatment of both cases.

\section{$KK$-theory}
In $KK$-theory, we need to impose some restrictions on the $C^*$-algebras we use, namely \emph{$\sigma$-unitality} and separability. A $\sigma$-unital $C^*$ algebra is one with a countable approximate identity. In particular, every separable $C^*$-algebra is $\sigma$-unital. Thus, for simplicity, we restrict to the category of separable $C^*$-algebras. Furthermore, we often need to take tensor products of $C^*$-algebras and it is well-known that, in general, the tensor product algebra admits several inequivalent $C^*$-norms. For \emph{nuclear $C^*$-algebras} the tensor product works particularly well: if either one of the algebras in the tensor product is nuclear, the projective and injective cross norms on the product algebra, and hence all cross norms, are equivalent \cite{Lan1}. In other words, the tensor product of any $C^*$-algebra with a nuclear $C^*$-algebra is a $C^*$-algebra in a unique way. To make life easier, we assume $C^*$-algebras to be nuclear whenever tensor products are discussed. This is not so strict a requirement, as nuclear $\C^*$-algebras are quite abundant. For example, all commutative, finite dimensional and \mbox{type I} $C^*$-algebras are nuclear.

Our main references for this section are the books \cite{Bla,KT}. We also take this opportunity to mention the papers \cite{Kas1,Kas2} by Kasparov and the excellent papers \cite{Hig1,Hig2} by Higson.

\begin{definition}[\cite{Bla,KT}]
Let $B=(B,\norm{\cdot})$ be a $C^*$-algebra. A \emph{pre-Hilbert $B$-module} is a complex vector space $E$, which is also a right $B$-module with a $B$-valued inner product $\ip{\cdot,\cdot}:E\times E\rightarrow B$. The inner product is $B$-antilinear in the first and $B$-linear in the second argument:
\renewcommand{\labelenumi}{\roman{enumi})}
\begin{enumerate}
\item $\ip{x,yb}=\ip{x,y}b$,
\item $\ip{x,y}*=\ip{y,x}$,
\item $\ip{x,x} \ge 0$, that is, $\ip{x,x}=a^*a$ for some $a\in B$,
\item $\ip{x,x}=0$ if and only if $x=0$,
\end{enumerate}
where $x,y\in E$ and $a,b\in B$ are arbitrary.

A \emph{Hilbert $B$-module} is a pre-Hilbert $B$-module which is complete with respect to the norm $\norm{x}=\norm{\ip{x,x}}^{1/2}$.
\end{definition}

Let $E_1$ and $E_2$ be Hilbert $B$-modules and assume $T:E_1\rightarrow E_2$ to be \emph{adjointable}, meaning that there exists a map $T^*:E_2\rightarrow E_1$, satisfying $\ip{x,Ty}=\ip{T^*x,y}$. Then $T$ is automatically a $B$-module map:
\[ \ip{y,T(x_1+x_2 b)}=\ip{T^*y,x_1+x_2 b}=\ip{T^* y,x_1}+\ip{T^*y x_2}b=\ip{y,T x_1 + (T x_2)b} \]
for all $x_1,x_2\in E_1,y\in E_2$ and $b\in B$. It is also easy to show that $T$ and $T^*$ are bounded and, in fact, $\norm{T}=\norm{T^*}$. The set of adjointable maps $E_1\rightarrow E_2$ is denoted by $\L_B(E_1,E_2)$. If $E_1=E_2=E$, we write simply $\L_B(E)=\L_B(E,E)$. It can easily be shown that $\L_B(E)$ is a $C^*$-algebra with respect to the operator norm \cite{KT}. For any $x,y\in E$, the operators
\[ \Theta_{x,y}(z)=x\ip{y,z} \]
in $E$ are elements of $\L_B(E)$. The norm completion of their linear span yields an ideal $\mathcal{K}_B(E)$ of $\L_B(E)$, which generalizes the ideal of compact operators on a separable Hilbert space.

\begin{definition}[\cite{Bla,KT}]
\label{definition: Kasparov module}
Let $A$ and $B$ be $C^*$-algebras. An \emph{odd Kasparov $(A,B)$-module}\footnote{$KK$-theory is often defined using Kasparov modules with both $C^*$-algebras and Hilbert modules $\Z_2$-graded. We have used this ''simplified'' approach for two reasons. First, it is perhaps slightly easier to comprehend and second, it fits more naturally to our purposes.} is a triple $\E=(E,\phi,F)$, where
\renewcommand{\labelenumi}{\roman{enumi})}
\begin{enumerate}
\item $E$ is a countably generated Hilbert $B$-module,
\item $\phi: A\rightarrow \L_B(E)$ is a $*$-representation and
\item $F\in \L_B(E)$ is a self-adjoint operator, satisfying $(F^2-1)\phi(a) \in \mathcal{K}_B(E)$ and $[F,\phi(a)] \in \mathcal{K}_B(E)$.
\end{enumerate}
Likewise, an \emph{even Kasparov $(A,B)$-module} is a triple $\E=(E,\phi,F)$, where
\renewcommand{\labelenumi}{\roman{enumi})}
\begin{enumerate}
\item $E=E^+\oplus E^-$ is a countably generated $\Z_2$-graded Hilbert $B$-module,
\item $\phi: A\rightarrow \L_B(E)$ is a $*$-representation, such that $\phi(a)$ is an even degree map for all $a\in A$, and
\item $F\in \L_B(E)$ is a self-adjoint operator of odd degree, satisfying $(F^2-1)\phi(a) \in \mathcal{K}_B(E)$ and $[F,\phi(a)] \in \mathcal{K}_B(E)$.
\end{enumerate}

A Kasparov module, even or odd, satisfying $(F^2-1)\phi(a)=[F,\phi(a)]=0$ for all $a\in A$ is said to be \emph{degenerate}. Two Kasparov $(A,B)$-modules are \emph{isomorphic}, when there exists an isomorphism of the Hilbert modules, compatible, in the obvious way, with the representations $\phi$ and the operators $F$. The sets of isomorphism classes of even and odd Kasparov $(A,B)$-modules are denoted by $\mathbb{E}_0(A,B)$ and $\mathbb{E}_1(A,B)$, respectively. Similarly, the sets of even and odd degenerate Kasparov $(A,B)$-modules are denoted by $\mathbb{D}_0(A,B)$ and $\mathbb{D}_1(A,B)$.
\end{definition}

Both $\mathbb{E}_0(A,B)$ and $\mathbb{E}_1(A,B)$ are Abelian monoids. The addition is given by the direct sum of Kasparov modules:
\[ (E_1,\phi_1,F_1)\oplus (E_2,\phi_2,F_2) = (E_1\oplus E_2,\phi_1\oplus \phi_2, F_1\oplus F_2)\,, \]
where, in the even case, the grading of $E_1\oplus E_2$ is given by
\[ (E_1\oplus E_2)^+\oplus (E_1\oplus E_2)^- = (E_1^+\oplus E_2^+)\oplus (E_1^-\oplus E_2^-) \,. \]

\begin{definition}[\cite{Bla,KT}]
Let $\E_1=(E_1,\phi_1,F_1)$ and $\E_2=(E_2,\phi_2,F_2)$ be even (resp. odd) Kasparov $(A,B)$-modules. If $U:E_2\rightarrow E_1$ is a graded (resp. ungraded) unitary isomorphism, such that $F_2=U^*F_1U$ and $\phi_2=U^*\phi_1 U$, then we call $\E_1$ and $\E_2$ \emph{unitarily equivalent} and write $\E_1 \approx_u \E_2$. If there exists a family $\F_t=(E,\phi,F_t), \,t\in[0,1]$ of even (resp. odd) Kasparov $(A,B)$-modules, for which $t\mapsto F_t$ is norm-continuous, $\E_1 \approx_u \F_0 = (E,\phi,F_0)$ and $\E_2\approx_u \F_1=(E,\phi,F_1)$, we call $\E_1$ and $\E_2$ \emph{operator homotopic}. Furthermore, if $\E_1$ and $\E_2$ are operator homotopic modulo direct sum by even (resp. odd) degenerate modules, that is, there exist degenerate even (resp. odd) Kasparov $(A,B)$-modules $\F_1$ and $\F_2$ such that $\E_1\oplus \F_1$ is operator homotopic to $\E_2\oplus \F_2$, then we write $\E_1 \sim_\text{oh} \E_2$.
\end{definition}

\begin{definition}[\cite{Bla,KT}]
For $C^*$-algebras $A$ and $B$, the $KK$-theory groups $KK_0(A,B)$ and $KK_1(A,B)$ are defined as
\[ KK_0(A,B) = \mathbb{E}_0(A,B)/\sim_\text{oh}\quad \text{and} \quad KK_1(A,B) = \mathbb{E}_1(A,B)/\sim_\text{oh}\,. \]
\end{definition}
We omit the proof of the following proposition. It is a straightforward application of the definitions, though somewhat tedious, if written out in full detail.
\begin{proposition}
Both $KK_0(A,B)$ and $KK_1(A,B)$ are Abelian groups, with addition induced by the direct sum of Kasparov modules. All degenerate even (resp. odd) Kasparov $(A,B)$-modules belong to same equivalence class, which is the zero element of $KK_0(A,B)$ (resp. $KK_1(A,B)$). If $(E,\phi,F)$ is an even Kasparov $(A,B)$-module, its inverse in $KK_0(A,B)$ is given by $(E^\circ,\phi,-F)$, where $E^\circ$ denotes $E$ with opposite grading. If $(E,\phi,F)$ is odd, the inverse in $KK_1(A,B)$ is given by $(E,\phi,-F)$.
\end{proposition}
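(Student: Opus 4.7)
The plan is to verify the group axioms for $KK_i(A,B)$ in four stages, treating the even and odd cases in parallel but with careful attention to grading in the even case.

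First, I will check that $\oplus$ is well-defined on $\sim_{\text{oh}}$. The direct sum of two norm-continuous paths $F_t^1, F_t^2$ of Kasparov operators is again norm-continuous, the direct sum of unitary equivalences is a unitary equivalence, and the direct sum of two degenerate modules is degenerate; this shows $\oplus$ descends to a binary operation on $KK_i(A,B)$. Associativity is immediate. Commutativity follows because the swap $(x,y)\mapsto(y,x)$ defines a unitary $\mathcal{E}_1\oplus\mathcal{E}_2 \to \mathcal{E}_2\oplus\mathcal{E}_1$ (in the even case, one checks this swap is even with respect to the product grading), so the two sums are unitarily equivalent and in particular $\sim_{\text{oh}}$-equivalent.

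Next, I will verify the zero element. Given any degenerate module $\mathcal{D}$, one checks that $\mathcal{D}\oplus 0$ and $0\oplus \mathcal{D}$ are unitarily equivalent (hence operator homotopic by the constant path) and both equal $\mathcal{D}$; because $\mathcal{D}$ and the zero module $0$ are both degenerate, the definition of $\sim_{\text{oh}}$ (operator homotopy modulo direct sum with degenerates) immediately gives $\mathcal{D}\sim_{\text{oh}}0$. Thus every degenerate module lies in a single class, which serves as the neutral element for $\oplus$.

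For inverses in the even case I will use a rotation homotopy on $E\oplus E^{\circ}$. Define
\[
F_t \;=\; \begin{pmatrix} \cos t \cdot F & \sin t \cdot \mathrm{id}_E \\ \sin t \cdot \mathrm{id}_E & -\cos t \cdot F \end{pmatrix}, \qquad t\in[0,\pi/2].
\]
With respect to the product grading on $E\oplus E^{\circ}$ (in which the grading operator is $\gamma_E\oplus(-\gamma_E)$), the diagonal entries $\pm F$ are odd on $E$ while the off-diagonal copies of $\mathrm{id}_E$ intertwine the opposite gradings and are therefore also odd on $E\oplus E^{\circ}$, so each $F_t$ has the required odd degree; self-adjointness is obvious. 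A direct matrix computation gives $F_t^2 = \cos^2 t\,(F^2)\oplus\cos^2 t\,(F^2) + \sin^2 t\cdot \mathrm{id}$, so $(F_t^2-1)\phi(a) = \cos^2 t\,(F^2-1)\phi(a)$ remains compact, and $[F_t,\phi(a)\oplus\phi(a)]$ has diagonal entries $\pm\cos t\,[F,\phi(a)]$ and vanishing off-diagonals, hence is also compact; thus $\{F_t\}$ really is an operator homotopy through Kasparov modules. At $t=\pi/2$ the operator is $\bigl(\begin{smallmatrix}0&1\\1&0\end{smallmatrix}\bigr)$, which squares to $1$ and commutes with $\phi\oplus\phi$, so the endpoint is degenerate and therefore represents $0$. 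For the odd case one uses the same matrix formula on $E\oplus E$ (no grading to worry about), with the identical calculation showing $F_t$ is a self-adjoint homotopy through odd Kasparov modules ending at a degenerate module.

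The main obstacle will be the grading bookkeeping in the even rotation homotopy: one has to be careful that the copies of $\mathrm{id}_E$ placed in the off-diagonal really are odd as maps $E^{\circ}\to E$ and $E\to E^{\circ}$, and that the resulting $F_t$ genuinely anticommutes with the total grading operator of $E\oplus E^{\circ}$. Once this is pinned down, the Kasparov conditions $(F_t^2-1)\phi(a),[F_t,\phi(a)]\in\mathcal{K}_B(E\oplus E^{\circ})$ follow from elementary matrix multiplication, and norm continuity of $t\mapsto F_t$ is manifest; the rest of the argument is formal manipulation of the relation $\sim_{\text{oh}}$.
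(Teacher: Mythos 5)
The paper itself omits the proof, remarking only that it is ``a straightforward application of the definitions, though somewhat tedious,'' so there is no in-text argument to compare against. Your proposal is correct and is the standard argument: well-definedness, commutativity and associativity of $\oplus$ modulo $\sim_{\text{oh}}$ are routine, degenerates collapse to one class by taking $\F_1 = 0$, $\F_2 = \mathcal{D}$ in the definition, and the rotation homotopy
$F_t = \bigl(\begin{smallmatrix}\cos t\,F & \sin t \\ \sin t & -\cos t\,F\end{smallmatrix}\bigr)$
interpolates norm-continuously from $F \oplus (-F)$ to the degenerate flip $\bigl(\begin{smallmatrix}0&1\\1&0\end{smallmatrix}\bigr)$, with $F_t^2 - 1 = \cos^2 t\,(F^2 - 1) \oplus \cos^2 t\,(F^2 - 1)$ and $[F_t,\phi(a)\oplus\phi(a)]$ block-diagonal, so the Kasparov conditions persist along the path. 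Your grading check is the part most often fumbled and you get it exactly right: with the total grading $\gamma \oplus (-\gamma)$ on $E \oplus E^\circ$, the diagonal $\pm F$ is odd because $F$ is, while the off-diagonal identity blocks are odd precisely because the two summands carry opposite gradings, so $\{\gamma\oplus(-\gamma),\,F_t\}=0$ for every $t$. The odd case drops the grading and repeats verbatim. No gaps.
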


\begin{definition}
We write
\[ KK_\bullet(A,B) := KK_0(A,B)\oplus KK_1(A,B) \,. \]
The lower index is always assumed to be in $\Z_2$\footnote{$KK$-theory can actually be defined more generally than what we have presented \cite{Kas1}, in which case this follows from a $KK$-theoretic version of Bott periodicity.}.
\end{definition}

A key aspect of $KK$-theory is the existence of a powerful product structure on $KK_\bullet$. It is notoriously difficult to define, however, and even the proof for existence is far beyond the scope of our treatment. In fact, the original proof by Kasparov was nonconstructive and extremely complicated \cite{Kas1}. For a simpler constructive proof, see \cite{KT} or \cite{Bla}. In any case, the proof is too long to be presented here.
\begin{theorem}[\cite{Bla,KT}]
For separable $C^*$-algebras $A,B,C$, there exists a bilinear associative \emph{composition product}
\[ \otimes_C:KK_i(A,C)\times KK_j(C,B)\rightarrow KK_{i+j}(A,B) \,. \]
\end{theorem}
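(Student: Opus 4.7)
The plan is to construct the product at the level of Kasparov modules and then show it descends through $\sim_\text{oh}$. Fix representatives $\E_1=(E_1,\phi_1,F_1)\in \mathbb{E}_i(A,C)$ and $\E_2=(E_2,\phi_2,F_2)\in \mathbb{E}_j(C,B)$. The underlying Hilbert $B$-module of the product must be the \emph{interior tensor product} $E:=E_1\otimes_{\phi_2} E_2$, formed by completing the algebraic tensor product in the $B$-valued inner product $\ip{x_1\otimes y_1,x_2\otimes y_2}_B=\ip{y_1,\phi_2(\ip{x_1,x_2}_C) y_2}_B$. The representation $\phi:A\to \L_B(E)$ is the obvious one, $\phi(a)=\phi_1(a)\otimes 1$. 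Grading (in the even case) is the tensor product grading. The real problem is to define the operator $F$; the naive guess $F_1\otimes 1+1\otimes F_2$ is meaningless because $F_2$ is not a $C$-module map.

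To get past this, I would invoke the Connes--Skandalis formalism of $F_2$-\emph{connections}. Given $x\in E_1$, let $T_x\in \L_B(E_2,E)$ be the creation operator $T_x(y)=x\otimes y$. An operator $G\in \L_B(E)$ is called an $F_2$-connection for $E_1$ if $G\,T_x-(-1)^{\partial x\cdot \partial F_2}T_x F_2$ and $T_x^* G - (-1)^{\partial x\cdot \partial F_2} F_2 T_x^*$ both lie in $\mathcal{K}_B(E_2,E)$ and $\mathcal{K}_B(E,E_2)$, respectively, for every $x$ in a dense submodule of $E_1$. Standard arguments show connections always exist and form an affine space modulo compacts. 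The tentative product operator is then a combination of the form
\[
F=M^{1/2}(F_1\otimes 1)+N^{1/2} G,
\]
with $M,N\in \L_B(E)$ positive, self-adjoint, of even degree, commuting with $\phi(A)$ modulo compacts and satisfying $M+N=1$.

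The existence of suitable $M,N$ is where Kasparov's technical theorem enters: applied to the $C^*$-subalgebras $A_1=\mathcal{K}_B(E)+\phi(A)$ and $A_2=$ the $C^*$-algebra generated by $[G,\phi(A)]$ and $(G^2-1)\phi(A)$ inside the multiplier algebra of $\mathcal{K}_B(E)$, it produces $M,N$ such that $F$ satisfies $(F^2-1)\phi(a)\in \mathcal{K}_B(E)$ and $[F,\phi(a)]\in \mathcal{K}_B(E)$, and such that $F$ is a ``positive combination'' of $F_1\otimes 1$ and $G$ in the sense required. Hence $(E,\phi,F)$ is a Kasparov $(A,B)$-module of degree $i+j$; we declare its class to be $\E_1\otimes_C \E_2$. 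The crucial verification is that any two choices of connection $G$ and of $(M,N)$ give operator-homotopic Kasparov modules, which follows from a straight-line homotopy argument combined once more with Kasparov's technical theorem to control the compactness conditions along the homotopy.

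Finally, I would check bilinearity (direct sums of Kasparov modules behave well under the above construction, since direct sums of connections are connections) and associativity (both iterated products are built from $F_1\otimes 1\otimes 1$, a $(F_2\otimes 1)$-connection, and an $F_3$-connection, and differ only by the choice of $M,N$ data, hence are operator homotopic). The main obstacle throughout is analytical rather than formal: ensuring that all ``approximately $0$'' relations truly lie in $\mathcal{K}_B(E)$, which is exactly the point at which Kasparov's technical theorem does the real work and is the reason the whole construction is nontrivial. A full exposition of this technical theorem and its application here is what makes a complete proof long; I would therefore cite \cite{Kas1,Bla,KT} for these details rather than reproduce them.
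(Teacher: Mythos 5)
The paper does not actually prove this theorem: it explicitly states that ``the proof is too long to be presented here'' and defers to \cite{Kas1,Bla,KT}. Your outline is a correct sketch of the standard Connes--Skandalis construction that those references carry out: interior tensor product $E_1\otimes_{\phi_2}E_2$, creation operators $T_x$, $F_2$-connections, a positive partition $M+N=1$ produced by Kasparov's technical theorem, and operator homotopies to establish well-definedness, bilinearity, and associativity. This matches the route the paper's cited sources take, and you have correctly identified the technical theorem as the genuine analytic crux; since you likewise defer the details of that theorem to the literature, your proposal is consistent in scope and content with what the paper itself does.
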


The composition product is not the most general product structure in $KK$-theory. Let $\alpha\in KK_i(A,B)$ be represented by the Kasparov $(A,B)$-module $(E,\phi,F)$. \emph{The dilation} of $x$ by the $C^*$-algebra $C$ is a class in $KK_i(A\otimes C,B\otimes C)$, represented by the Kasparov $(A\otimes C,B\otimes C)$-module $(E\otimes C,\phi\otimes \id_C,F\otimes \id_C)$. Dilation can be used, together with the composition product, to define \emph{the Kasparov product}
\[ \otimes_D:KK_i(A_1,B_1\otimes D)\times KK_j(D\otimes A_2,B_2)\rightarrow KK_{i+j}(A_1\otimes A_2,B_1\otimes B_2)\,. \]
For $x_1\in KK_i(A_1,B_1\otimes D)$ and $x_2\in KK_j(D\otimes A_2,B_2)$, the Kasparov product is given by
\[ x_1\otimes_D x_2 = (x_1\otimes 1_{A_2})\otimes_{B_2\otimes D \otimes A_2} (x_2\otimes 1_{B_1}) \,. \]
For $D=\C$, it reduces to \emph{the exterior product}
\[ \otimes=\otimes_{\C}:KK_i(A_1,B_1)\times KK_j(A_2,B_2)\rightarrow KK_{i+j}(A_1\otimes A_2,B_1\otimes B_2) \,. \]
\begin{theorem}[\cite{Kas2}]
\label{Theorem: Properties of Kasparov product}
The Kasparov product
\[ \otimes_D:KK_i(A_1,B_1\otimes D)\times KK_j(D\otimes A_2,B_2)\rightarrow KK_{i+j}(A_1\otimes A_2,B_1\otimes B_2) \]
satisfies the following properties:
\begin{enumerate}
\item It is bilinear, contravariantly functorial in $A_1$ and $A_2$, and covariantly functorial in $B_1$ and $B_2$.
\item If $f:D_1\rightarrow D_2$, $x_1\in KK_i(A_1,B_1\otimes D_1)$ and $x_1\in KK_j(D_2\otimes A_2,B_2)$, then
\[ (\id_{B_1}\otimes f)_*(x_1)\otimes_{D_2} x_2 = x_1\otimes_{D_1} (f\otimes \id_{A_2})^*(x_2) \,. \]
\item It is associative in the sense that if $x_1\in KK_i(A_1,B_1\otimes D_1),x_2\in KK_j(D_1\otimes A_2,B_2\otimes D_2)$ and $x_3\in KK_k(D_2\otimes A_3,B_2)$, then
\[ (x_1\otimes_{D_1} x_2)\otimes_{D_2} x_3 = x_1 \otimes_{D_1} (x_2\otimes_{D_2} x_3) \,. \]
\item If $x_1\in KK_i(A_1,B_1\otimes D_1\otimes D)$ and $x_2\in KK_j(D\otimes D_2\otimes A_2,B_2)$, then
\[ x_1\otimes_D x_2 = (x_1\otimes 1_{D_2})\otimes_{D_1\otimes D \otimes D_2} (1_{D_1}\otimes x_2) \,. \]
\item If $x_1\in KK_i(A_1,B_1\otimes D)$ and $x_2\in KK_j(D\otimes A_2,B_2)$, then
\[ (x_1\otimes_D x_2)\otimes 1_{D_1} = (x_1\otimes 1_{D_1})\otimes_{D\otimes D_1} (x_2\otimes 1_{D_1}) \,. \]
\item The exterior product is commutative, that is, for $x_1\in KK_i(A_1,B_1)$ and $x_2\in KK_j(A_2,B_2)$ one has
\[ x_1\otimes x_2 = x_2\otimes x_1 \,. \]
\item The class $1_\C \in KK_0(\C,\C)$ is a unit for the product:
\[ 1_{\C} \otimes x = x\otimes 1_{\C} = x\,, \]
for any $x\in KK_i(A,B)$.
\end{enumerate}
\end{theorem}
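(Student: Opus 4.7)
The plan is to reduce every statement to properties of the composition product $\otimes_C$, since the general Kasparov product $\otimes_D$ is defined by the formula
\[ x_1\otimes_D x_2 \;=\; (x_1\otimes 1_{A_2})\otimes_{B_2\otimes D\otimes A_2}(x_2\otimes 1_{B_1}), \]
and dilation $x\mapsto x\otimes 1_C$ is easily checked to be a group homomorphism functorial in both variables. First I would dispose of bilinearity and functoriality (property 1): these follow because the direct sum of Kasparov modules represents addition, and pullback/pushforward along $*$-homomorphisms respects the equivalence relation $\sim_{\mathrm{oh}}$; dilation is then manifestly additive and functorial, so 1 reduces to the bilinearity and functoriality of $\otimes_C$, which are immediate from its construction via connections and the Kasparov technical theorem.

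Next I would treat the ``naturality'' statements 2, 4 and 5. For 2, working with Kasparov module representatives $(E_1,\phi_1,F_1)$ for $x_1$ and $(E_2,\phi_2,F_2)$ for $x_2$, both sides are obtained by forming an $F$-connection on the interior tensor product $E_1\otimes_f E_2$ (respectively $E_1\otimes_{D_1} f^*E_2$); the canonical isomorphism of Hilbert modules identifying the two construcions gives a unitary equivalence of the resulting Kasparov modules, hence equality in $KK$. Properties 4 and 5 are purely formal manipulations that follow by unwinding the definition of $\otimes_D$ in terms of $\otimes_{\mathrm{tot}}$ and using associativity and functoriality of the ordinary $C^\ast$-tensor product, together with the fact that $1_{D}\in KK_0(D,D)$ acts as a unit for $\otimes_D$.

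The hardest step, and the main obstacle, is associativity (property 3). The issue is that the composition product of two Kasparov cycles is defined only up to operator homotopy, via a choice of $F_2$-connection; forming $(x_1\otimes_{D_1} x_2)\otimes_{D_2} x_3$ and $x_1\otimes_{D_1}(x_2\otimes_{D_2} x_3)$ amounts to two different iterated connection choices on the triple interior product $E_1\otimes_{D_1}E_2\otimes_{D_2}E_3$. I would invoke Kasparov's technical theorem \cite{Kas1} to produce a common $F$ that is simultaneously an $F_3$-connection and an $(F_2\#F_3)$-connection modulo compact perturbations, and then exhibit an explicit operator homotopy between the two candidate cycles by interpolating the connections; a clean treatment of this interpolation is given, for instance, in \cite{KT,Bla}. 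Given associativity, properties 4 and 5 become purely formal, as noted.

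Finally, for commutativity of the exterior product (property 6), I would apply the standard ``rotation trick'': given representatives $(E_i,\phi_i,F_i)$ for $x_i$, the cycles representing $x_1\otimes x_2$ and $x_2\otimes x_1$ on $E_1\otimes E_2\cong E_2\otimes E_1$ differ by a sign arising from the graded flip $\Sigma$, and the one-parameter family $\cos(\tfrac{\pi t}{2})(F_1\otimes 1)+\sin(\tfrac{\pi t}{2})(1\otimes F_2)$ conjugated by an appropriate rotation in $\L_{B_1\otimes B_2}(E_1\otimes E_2)$ supplies the required operator homotopy. Property 7 is then trivial: the class $1_\C$ is represented by $(\C,\mathrm{id},0)$, and dilating followed by composition reduces to the identity on representatives, since $E\otimes_\C \C \cong E$ compatibly with the representations and the operators. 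Throughout, the proof is essentially a careful bookkeeping exercise built on top of the single nontrivial input, namely the existence and basic properties of the composition product; accordingly I would present the argument by first citing that existence theorem and then discharging (1)--(7) in the order above, with associativity singled out as the one step requiring the technical theorem.
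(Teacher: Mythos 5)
The paper does not prove this theorem: it is stated as a citation to Kasparov \cite{Kas2}, and the surrounding text explicitly remarks that the existence of the composition product and the proofs of its properties are ``far beyond the scope'' of the treatment. There is therefore no paper proof to compare your outline against; what you have written is a proof sketch for a result that the paper only quotes.

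That said, your outline is consistent with the standard treatments in \cite{Kas2}, \cite{Bla}, and \cite{KT}, and correctly identifies associativity as the one point that requires real work (the technical theorem producing a common connection on the iterated interior tensor product). A few cautions if you were to flesh this out. You have inherited a typo from the paper's own defining formula: the subscript algebra should be $B_1\otimes D\otimes A_2$ rather than $B_2\otimes D\otimes A_2$, since that is the target of $x_1\otimes 1_{A_2}$. For property \emph{vi}), treating the exterior product via connections overcomplicates matters: when $D=\C$ the product has a direct construction on $E_1\hat\otimes E_2$ (with $F$ a normalized combination of $F_1\hat\otimes 1$ and $1\hat\otimes F_2$), and commutativity then follows from the graded flip isomorphism $E_1\hat\otimes E_2\cong E_2\hat\otimes E_1$ together with a short operator homotopy; one should also be careful that in the fully graded formulation commutativity generally holds only up to a sign $(-1)^{ij}$, and the absence of that sign in the paper's statement reflects a specific convention that a careful write-up would have to make explicit. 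Finally, properties \emph{iv}) and \emph{v}) are indeed formal once \emph{i})--\emph{iii}) are in hand, but deriving them still requires repeatedly invoking the compatibility of dilation with the composition product, so calling them ``purely formal'' understates the bookkeeping involved.
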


Remarkably, $KK$-theory can also be described purely axiomatically \cite{Hig1,Bla} as follows. There exists an additive category\footnote{By an additive category, we mean a category $\mathbf{C}$, where every $\mathbf{Hom}$-set is equipped with an Abelian group structure and the composition is bilinear in the sense that
\[ (g_1+g_2)\circ (f_1+f_2) = g_1\circ f_1 + g_1\circ f_2 + g_2 \circ f_1 + g_2\circ f_2\,, \]
for any $f_1,f_2\in \mathbf{Hom(a,b)}$ and $g_1,g_2\in \mathbf{Hom(b,c)}$.} $\mathbf{KK}$, with objects separable $C^*$-algebras. The groups of morphisms $\mathbf{Hom}(-,-)$ are denoted by $KK(-,-)$. Composition of morphisms in $\mathbf{KK}$ defines a bilinear map
\[ KK(A,D)\times KK(D,B) \rightarrow KK(A,B) \,, \]
called the \emph{composition product}. There exists a functor $KK:\mathbf{SC^*}\rightarrow \mathbf{KK}$, where $\mathbf{SC^*}$ denotes the category of separable $C^*$-algebras and $*$-homomorphisms, sending each object to itself and satisfying the following properties:
\renewcommand{\labelenumi}{\roman{enumi})}
\begin{enumerate}
\item \underline{Stability}: $KK$ sends the stabilization maps $A\rightarrow A\otimes \mathcal{K}$ and $B\rightarrow B\otimes \mathcal{K}$, given by tensoring with a rank-one projection, to isomorphisms $KK(A,A\otimes \mathcal{K})$ and \mbox{$KK(B,B\otimes \mathcal{K})$} in such a way that the composition product yields group isomorphisms
\[ KK(A,B)\cong KK(A,B\otimes \mathcal{K}) \cong KK(A\otimes \mathcal{K},B)\,. \] 
\item \underline{Homotopy invariance}: If $f_0,f_1:A\rightarrow B$ are homotopic homomorphisms, meaning that there exists a mediating homomorphism $g:A\rightarrow C([0,1],B)$ such that $\ev_0\circ g = f_0$ and $\ev_1\circ g = f_1$, then $KK(f_1) = KK(f_2)$. It follows immediately that if $A$ and $C$ are homotopy equivalent, that is, there exist homomorphisms $f:A\rightarrow C$ and $g:C\rightarrow A$, such that $f\circ g$ and $g\circ f$ are homotopic to the identity homomorphisms, then $KK(C,B)\cong KK(A,B)$, where the isomorphism is given by composition with $KK(f)$ and $KK(g)$. Similarly, if $B$ and $D$ are homotopy equivalent, then $KK(A,D)\cong KK(A,B)$.
\item \underline{Split exactness}: $KK$ preserves split exact sequences. It follows that, for the split exact sequence
\[ \xymatrix{ 0 \ar[r] & J \ar[r] & D \ar@<0.5ex>[r] & D/J \ar[r]\ar@<0.5ex>[l] & 0 } \]
of separable $C^*$-algebras and $*$-homomorphisms and arbitrary separable $C^*$-algebras $A$ and $B$, the sequences
\[ \xymatrix{ 0 \ar[r] & KK(A,J) \ar[r] & KK(A,D) \ar@<0.5ex>[r] & KK(A,D/J) \ar[r]\ar@<0.5ex>[l] & 0 \\ 0 \ar[r] & KK(D/J,B) \ar@<0.5ex>[r] & KK(D,B) \ar@<0.5ex>[l]\ar[r] & KK(J,B) \ar[r] & 0 } \]
are split exact.
\end{enumerate}
Suppose now that $\mathbf{A}$ is another additive category and $F:\mathbf{SC^*}\rightarrow \mathbf{A}$ a functor satisfying the above three properties, that is, stabilization maps induce isomorphisms, homotopic $*$-homomorphisms induce the same morphism and split exact sequences induce split exact sequences. The category $\mathbf{KK}$ is then uniquely determined by the universal property that such an $F$ always factors uniquely through $\mathbf{KK}$. More precisely, there exists a unique functor $\widehat{F}:\mathbf{KK}\rightarrow \mathbf{A}$, such that the diagram
\[ \xymatrix@=4pc{{\mathbf{SC^*}} \ar[r]^{KK} \ar[dr]_F & {\mathbf{KK}} \ar@{-->}[d]^{\widehat{F}} \\ & {\mathbf{A}} } \]
commutes.
\begin{theorem} For any separable $C^*$-algebras $A$ and $B$, the group of morphisms $KK(A,B)$ in $\mathbf{KK}$ is given by $KK_0(A,B)$. $KK_1(A,B)$ can then be defined axiomatically by
\[ KK_1(A,B) \cong KK_0(\Sigma A,B) \cong KK_0(A,\Sigma B)\,, \]
where $\Sigma$ is the suspension functor \eqref{Eq: Suspension of a C*-algebra}.
\end{theorem}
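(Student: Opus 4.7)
The plan is to establish the equivalence of two descriptions of $\mathbf{KK}$: the concrete one built from Kasparov modules via $\sim_\text{oh}$, and the abstract universal additive category determined by stability, homotopy invariance, and split exactness. The strategy is to verify that the concrete functor $KK_0(-,-):\mathbf{SC^*}^\text{op}\times \mathbf{SC^*}\rightarrow \mathbf{Ab}$ satisfies the three axioms, and then show it enjoys the universal property itself, so that the unique factorization $\widehat{F}$ forces an isomorphism $KK(A,B)\cong KK_0(A,B)$ at the level of morphism groups.

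First I would verify the three properties for the concrete $KK_0$. Stability is almost immediate from definition \ref{definition: Kasparov module}: tensoring a Kasparov $(A,B)$-module $(E,\phi,F)$ with a rank-one projection $p\in \mathcal{K}$ yields a Kasparov $(A,B\otimes \mathcal{K})$-module whose class is invariant, because different rank-one projections are Murray--von Neumann equivalent and so differ only by a unitary conjugation. Homotopy invariance is built into the equivalence relation $\sim_\text{oh}$ itself: if $f_0,f_1:A\rightarrow B$ are connected by $g:A\rightarrow C([0,1],B)$, then evaluating the standard Kasparov module $(B,g,0)$ at the endpoints produces an operator homotopy between the pushforwards of a given class by $f_0$ and $f_1$. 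Split exactness is the most delicate of the three; the cleanest way is to cite the proof (due to Cuntz and Higson) that $KK_0(A,-)$ is a half-exact, homotopy invariant, stable functor on $\mathbf{SC^*}$, from which split exactness follows formally. Given the three properties, the universal property stated before the theorem yields a unique functor $\widehat{KK}_0:\mathbf{KK}\rightarrow \mathbf{Ab}^{\mathbf{SC^*}^\text{op}}$ through which $KK_0$ factors.

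Next I would show the abstract groups $KK(A,B)$ are themselves representable by Kasparov modules. Every $*$-homomorphism $f:A\rightarrow B$ gives rise to a canonical Kasparov $(A,B)$-module $(B,f,0)$, inducing a natural map $\mathbf{Hom}_{\mathbf{SC^*}}(A,B)\rightarrow KK_0(A,B)$. By functoriality of $KK$ this factors through $KK(A,B)$, giving a natural transformation $\eta:KK(-,-)\rightarrow KK_0(-,-)$. To prove $\eta$ is an isomorphism, I would invoke the Cuntz picture of $KK$-theory, in which any Kasparov class $[(E,\phi,F)]\in KK_0(A,B)$ is represented by a pair of $*$-homomorphisms from a universal algebra $qA$ into $B\otimes \mathcal{K}$. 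The Cuntz picture together with stability expresses $KK_0(A,B)$ purely in terms of homotopy classes of $*$-homomorphisms, so that the universal property applied to the opposite direction furnishes an inverse to $\eta$. The hard part, and the genuine analytic content, is the Cuntz picture: producing, for each Kasparov module, a quasihomomorphism $A\rightrightarrows B\otimes\mathcal{K}$ whose difference class recovers the original; this is where the fact that $F$ is a Fredholm-type operator mod compacts is crucially used.

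Finally, for the suspension identifications $KK_1(A,B)\cong KK_0(\Sigma A,B)\cong KK_0(A,\Sigma B)$, I would argue as follows. The first isomorphism follows from the definition of $\Sigma A = C_0(\R)\otimes A$ together with the observation that an odd Kasparov $(A,B)$-module $(E,\phi,F)$ determines an even Kasparov $(\Sigma A,B)$-module by pairing $F$ with the standard self-adjoint generator of $C_0(\R)$ and grading via the sign of the $\R$ coordinate; conversely, any even $(\Sigma A,B)$-module restricts to an odd $(A,B)$-module on the fibre at $0$. The second isomorphism is obtained symmetrically by acting on the Hilbert $B$-module side, or equivalently by applying exterior Kasparov product with the Bott class in $KK_1(\C,\Sigma\C)\cong KK_0(\C,\C)$. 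I expect the main obstacle to be the split exactness/Cuntz picture step: verifying that $KK_0$ is half-exact is essentially the heart of Kasparov's theory, and a fully rigorous argument would occupy several pages beyond the scope of a proof sketch; I would therefore invoke it as a black box from \cite{Kas1,Bla,KT}.
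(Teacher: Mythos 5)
The paper itself does not give a proof: it simply cites Higson \cite{Hig1} and Blackadar \cite{Bla}. Your sketch correctly reconstructs the argument those references carry out — verifying stability, homotopy invariance, and split exactness for the concrete bifunctor $KK_0(-,-)$, then using the Cuntz quasihomomorphism picture to show that $KK_0$ itself enjoys the universal property, forcing the identification of morphism groups — so you are taking the same route as the cited source, just spelled out. One small caveat: for the second suspension identification $KK_1(A,B)\cong KK_0(A,\Sigma B)$, invoking ``the Bott class in $KK_1(\C,\Sigma\C)\cong KK_0(\C,\C)$'' is overkill and slightly circular (the invertibility of the Bott element is Bott periodicity, which is a stronger statement). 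The cleaner route is the one you use for the first isomorphism, applied symmetrically on the Hilbert-module side: an odd Kasparov $(A,B)$-module $(E,\phi,F)$ determines an even Kasparov $(A,\Sigma B)$-module on $E\otimes C_0(\R)$, with grading coming from the sign of the $\R$-coordinate, and no appeal to the Bott class is needed. Otherwise the proposal is sound, and you are right that the half-exactness/Cuntz-picture step is the genuine analytic content that must be imported as a black box.
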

\begin{proof} The proof is essentially contained in \cite{Hig1}. See also \cite{Bla} for axiomatic $KK$-theory. \end{proof}

The functor $KK$ sends a $*$-homomorphisms $f:A\rightarrow B$ to the equivalence class $[f]_{KK}=KK(f)\in KK_0(A,B)$, given by the Kasparov $(A,B)$-module $(B,f,0)$, where $B$ itself is considered as a trivially graded Hilbert $B$-module, with inner product $\ip{b_1,b_2}=b_1^*b_2$. Functoriality of $KK$ implies that for $*$-homomorphisms $f:A\rightarrow B$ and $g:B\rightarrow C$, the composition $g\circ f$ satisfies
\[ [g\circ f]_{KK} = [f]_{KK}\otimes_B [g]_{KK} \in KK_0(A,C) \,. \]
In this sense, classes of $KK_0(A,B)$ can be regarded as generalized $*$-homomorphisms between $A$ and $B$. Remark also, that for any separable $C^*$-algebra $A$, the group $KK_0(A,A)$ admits a unital ring structure, with multiplication given by the composition product and multiplicative identity induced by the identity morphism $\id:A\rightarrow A$.

\section{Poincar\'e duality in noncommutative geometry}
From now on, until the end of the text, we are more or less tacitly referring to \cite{BMRS1,BMRS2,Sza2,CS1}. We basically follow \cite{BMRS1,BMRS2}.

In this section our primary goal is to explain the \emph{noncommutative Poincar\'e duality theorem} \cite{Con}, which describes the precise relation between $K$-theory and $K$-homology. This is where $KK$-theory is needed.
\begin{definition} Let $A$ be a separable $C^*$-algebra. $K$-theory and $K$-homology are defined as special cases of $KK$-theory:
\begin{align*} K_\bullet(A) &:= K_\bullet(\C,A)\qquad \text{($K$-theory)} \\ K^\bullet(A) &:= K_\bullet(A,\C)\qquad \text{($K$-homology)} \end{align*}
\end{definition}
It is far from clear that the above definition of $K$-theory is equivalent to that of definition \ref{Definition: Operator K-theory}. The proof can be found in \cite{Bla}.

The composition product specializes to various homomorphisms and isomorphisms in, and between, $K$-theory and $K$-homology, provided that suitable mediating classes exist. For example, any $\alpha \in KK_d(A,B)$ provides homomorphisms
\begin{align*} &(-)\otimes_A \alpha: K_\bullet(A)=KK_\bullet(\C,A)\rightarrow KK_{\bullet + d}(\C,B) = K_{\bullet+d}(B)\,, \\ &\alpha \otimes_B (-): K^\bullet(B)=KK_\bullet(B,\C)\rightarrow KK_{\bullet + d}(A,\C) = K_{\bullet+d}(A) \,. \end{align*}
\begin{definition}
If $\alpha\in KK_d(A,B)$ is invertible with respect to the composition product, that is, if there exists a class $\beta\in KK_{-d}(B,A)$, such that
\[ \alpha\otimes_B \beta = 1_A \in KK_0(A,A) \]
and
\[ \beta\otimes_A \alpha = 1_B \in KK_0(B,B)\,, \]
then we write $\beta = \alpha^{-1}$ and call $A$ and $B$ \emph{(strongly\footnote{There is also a weaker pointwise notion of $KK$-equivalence given in \cite{BMRS2}.}) $KK$-equivalent}.
\end{definition}
Let $A$ and $B$ be $KK$-equivalent separable $C^*$-algebras and $\alpha\in KK_d(A,B)$ invertible. Then
\begin{align*} &KK_\bullet(\C,A)\xrightarrow{(-)\otimes_A \alpha} KK_{\bullet + d}(\C,B) \xrightarrow{(-)\otimes_B \alpha^{-1}} KK_\bullet(\C,A) \,, \\ &KK_\bullet(B,\C)\xrightarrow{\alpha\otimes_B(-)} KK_{\bullet + d}(A,\C) \xrightarrow{\alpha^{-1}\otimes_A(-)} K_\bullet(B,\C) \,, \end{align*}
but on the other, by associativity, we have
\begin{align*} &\big((-)\otimes_A\alpha\big)\otimes_B \alpha^{-1} = (-)\otimes_A\big(\alpha\otimes_B \alpha^{-1}\big) = (-)\otimes_A 1_A\,, \\ &\alpha^{-1}\otimes_A\big(\alpha\otimes_B (-)\big)= \big(\alpha^{-1}\otimes_A \alpha\big)\otimes_B (-) = 1_B\otimes_B (-)\,. \end{align*}
Thus, $K_\bullet(A)\cong K_{\bullet+d}(B)$ and $K^\bullet(B)\cong K^{\bullet+d}(A)$.

There is a \emph{universal coefficient theorem} for $KK$-theory, which we are going to need later. It tells us to what extent $KK_\bullet(A,B)$ can be computed from $K_\bullet(A)$ and $K_\bullet(B)$.
\begin{theorem}[Universal coefficient theorem \cite{Bla}]
The sequence
\[ \xymatrix{ 0 \ar[r] & \Ext^1_{\Z}(K_{\bullet+1}(A),K_\bullet(B)) \ar[r] & KK_\bullet(A,B) \ar[r] & \Hom_{\Z}(K_\bullet(A),K_\bullet(B)) \ar[r] & 0 } \]
is exact for all separable $C^*$-algebras $B$ if and only if $A$ is $KK$-equivalent to a commutative separable $C^*$-algebra. We then say that \emph{$A$ satisfies the universal coefficient theorem (UCT) for $KK$-theory}.
\end{theorem}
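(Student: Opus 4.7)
The plan is to follow the classical Rosenberg--Schochet approach using the \emph{bootstrap class}. Let $\mathcal{N}$ denote the smallest class of separable nuclear $C^*$-algebras containing $\mathbb{C}$ and closed under (i) $KK$-equivalence, (ii) countable $C^*$-inductive limits, (iii) stable isomorphism, and (iv) the ``two-out-of-three'' property for short exact sequences (so that if two of $J,D,D/J$ are in $\mathcal{N}$, so is the third). I will prove that the UCT holds for $A$ if and only if $A\in\mathcal{N}$, and then identify $\mathcal{N}$ with the class of $C^*$-algebras $KK$-equivalent to a commutative one.

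The backwards direction proceeds by a ``geometric resolution'' argument. Given $A\in\mathcal{N}$, I would first construct a separable commutative $C^*$-algebra $F$ whose $K$-theory is a countable free abelian group surjecting onto $K_\bullet(A)$; concretely, $F$ is built as a countable direct sum of suspensions of $\mathbb{C}$, i.e.\ $F=\bigoplus_i \Sigma^{n_i}\mathbb{C}\simeq\bigoplus_i C_0(\mathbb{R}^{n_i})$, whose $K$-theory realises any prescribed free group by Bott periodicity. Using surjectivity of the evaluation map $KK_0(F,A)\to\operatorname{Hom}(K_\bullet(F),K_\bullet(A))$ (which is automatic when the source has free $K$-theory), lift the chosen surjection to a class in $KK$, represent it by a $*$-homomorphism after stabilisation, and form its mapping cone to obtain a short exact sequence
\[
0\longrightarrow SA\longrightarrow C_f\longrightarrow F\longrightarrow 0
\]
whose $K$-theory six-term exact sequence splits into the sought short exact sequence after identifying $KK_\bullet(A,B)$ with an appropriate $\operatorname{Ext}$ term. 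Splitting of the sequence is obtained by comparing the two compositions arising from the cone construction. The heart of the argument is then to show the class $\mathcal{N}$ really is large enough for our purposes, which reduces to checking closure under the operations listed; each closure property follows from the corresponding excision and Mayer--Vietoris sequences in $KK$-theory combined with the five lemma.

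For the forward direction, suppose the UCT sequence is exact for $A$ and every $B$. The group $K_\bullet(A)$ is countable, so I realise it (together with its $\mathbb{Z}/2$-grading) as the $K$-theory of a concrete commutative $C^*$-algebra $C$ built from suspensions of $\mathbb{C}$ and Moore-space analogues $M_n=\mathrm{cone}(\mathbb{C}\xrightarrow{\times n}\mathbb{C})$, whose $K$-theory realises cyclic groups. By construction $C\in\mathcal{N}$, so the UCT holds for $(C,A)$; the surjection in the UCT for $KK_0(C,A)$ allows me to lift a fixed isomorphism $K_\bullet(C)\xrightarrow{\cong} K_\bullet(A)$ to a class $\alpha\in KK_0(C,A)$. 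Naturality of the UCT sequence together with a five-lemma argument shows that $(-)\otimes_C\alpha$ and $\alpha\otimes_A(-)$ are isomorphisms on $KK_\bullet(D,C)$ and $KK_\bullet(A,D)$ for every $D$; the Yoneda-style argument (apply to $D=A$ and $D=C$ respectively) then produces a two-sided inverse $\alpha^{-1}\in KK_0(A,C)$, establishing the $KK$-equivalence.

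The main obstacle is the construction of the ``geometric resolution'' and the identification of $\operatorname{Ext}^1_{\mathbb{Z}}(K_{\bullet+1}(A),K_\bullet(B))$ with the kernel of the evaluation map in $KK$. Concretely this requires a careful treatment of the mapping cone exact sequence in $KK$, the verification that the connecting maps of the six-term sequence realise the $\operatorname{Ext}$ term geometrically (via the pairing with projective resolutions of $K_\bullet(A)$), and the nontrivial technical fact that suspensions and extensions of algebras in $\mathcal{N}$ remain in $\mathcal{N}$---in particular that every separable commutative $C^*$-algebra $C_0(X)$ with $X$ second countable lies in $\mathcal{N}$, which is proved by exhausting $X$ by compact subsets built from cells and invoking continuity of $K$-theory under inductive limits. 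Everything else (bilinearity, splitting, naturality of the sequence) is formal once this geometric input is in place.
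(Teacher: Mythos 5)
The paper does not prove this theorem; it is stated as a citation to Blackadar, and its content is the deep Rosenberg--Schochet theorem. Your sketch correctly identifies the standard route (the bootstrap class $\mathcal{N}$, geometric resolutions by commutative algebras with free $K$-theory, mapping cone exact sequences, and a five-lemma argument in the forward direction), and the overall plan is the right one. However, there are gaps that would sink the argument as written.

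The most concrete error is the construction $M_n=\mathrm{cone}(\mathbb{C}\xrightarrow{\times n}\mathbb{C})$. Multiplication by $n$ on $\mathbb{C}$ is not a $*$-homomorphism (it is not even multiplicative), so there is no such mapping cone in the category of $C^*$-algebras. To realise $\mathbb{Z}/n\mathbb{Z}$ as the $K$-theory of a commutative $C^*$-algebra one has to use a genuine $*$-homomorphism, e.g.\ the one induced on $C_0(\mathbb{R})\cong C_0(S^1\setminus\{\mathrm{pt}\})$ by a degree-$n$ self-map of the circle, equivalently take $C_0$ of a topological Moore space $S^1\cup_n e^2$ with a point removed. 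Without this the forward direction cannot even produce the candidate commutative algebra $C$.

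There are also two places where essential technical input is missing rather than merely suppressed. First, the claim that the evaluation map $KK_0(F,A)\to\operatorname{Hom}(K_\bullet(F),K_\bullet(A))$ is ``automatic'' when $F$ has free $K$-theory needs the Milnor $\lim^1$-sequence for countable direct sums and the observation that the connecting maps split, so the $\lim^1$ term vanishes; this cannot be waved away, as the $\lim^1$-obstruction is precisely what makes the bootstrap-class closure under inductive limits delicate. Second, the forward direction uses the UCT for the pair $(C,A)$, which you justify by ``$C\in\mathcal{N}$, so UCT holds'' --- that is only legitimate once the backwards direction (UCT for all of $\mathcal{N}$) has been proved, so the logical order of your two halves has to be made explicit, and the backwards direction must also verify closure of $\mathcal{N}$ under all four listed operations, each of which needs the six-term sequence plus the five lemma applied to the UCT sequences themselves. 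These are not formalities: they constitute the actual content of the Rosenberg--Schochet proof, and your sketch currently gestures at them without supplying the arguments.
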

The class of algebras satisfying the UCT is quite abundant. In fact, it is stated in \cite{Mey} that, at the time it was written, no example of a separable nuclear $C^*$-algebra \emph{not} satisfying the UCT is known.

A \emph{Hilbert $(A,B)$-bimodule} is a Hilbert $B$-module with a left $A$-module structure and an $A$-valued inner product, which is compatible with the $B$-valued inner product in the obvious sense that $\ip{x,y}_A z = x\ip{y,z}_B$, for all $x,y,z\in E$. Alternatively, one can think of a Hilbert $(A,B)$-bimodule as a Hilbert $A^\circ \otimes B$-module, where $A^\circ$ denotes the opposite algebra for $A$\footnote{The opposite algebra $A^\circ$ of $A$ has the same elements as $A$, but with the product reversed. For $x\in A$, we denote the corresponding element in $A^\circ$ by $x^\circ$. The product of $x^\circ,y^\circ\in A^\circ$ is given by $x^\circ y^\circ = (yx)^\circ$.}.

\begin{definition} Two $C^*$-algebras, $A$ and $B$, are \emph{(strongly) Morita equivalent}, if there exists a full Hilbert $A$-module $E$, such that $\mathcal{K}_A(E) \cong B$. A Hilbert $A$-module $E$ is full, if the norm closure of $\ip{E,E}$ is the whole of $B$. It can be shown that $A$ and $B$ are Morita equivalent if and only if there exists a Hilbert $(A,B)$-bimodule $E$ and a Hilbert $(B,A)$-bimodule $E^\vee$, such that
$E\otimes_B E^\vee \cong A$ and $E^\vee \otimes_A E \cong B$. $E$ and $E^\vee$ are called \emph{equivalence bimodules}. For a proof, see \cite{GVF}. We denote Morita equivalence of $A$ and $B$ by $A\stackrel{\text{Mor}}{\sim} B$. \end{definition}

If $A$ is unital, we associate to the map $\Theta_{a,b}$ the element $\Theta_{a,b}(1) = ab^*\in A$. Conversely, to an element $a\in A$ we associate the map $\Theta_{a,1}$. Taking the norm closure yields the $*$-isomorphism $\mathcal{K}_A(A)\cong A$, which proves that $A$ is Morita equivalent to itself. In the nonunital case one has to work with approximate units \cite{GVF}. Moreover, as the name indicates, Morita equivalence determines an equivalence relation. Another important example of Morita equivalence is the following.
\begin{proposition}[\cite{GVF}] A $C^*$-algebra $A$ is Morita equivalent to its stabilization $A\otimes \mathcal{K}$. \end{proposition}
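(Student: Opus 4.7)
The plan is to construct an explicit equivalence bimodule realizing the Morita equivalence between $A$ and its stabilization. The natural candidate is the \emph{standard Hilbert $A$-module} $\Hilb_A := \Hilb \otimes A$, where $\Hilb$ is the standard separable Hilbert space. Concretely, $\Hilb_A$ is the completion of the space of finite sequences $(a_1,a_2,\ldots)\in A^{\N}$ under the $A$-valued inner product $\ip{(a_i),(b_i)}_A := \sum_i a_i^* b_i$, with right $A$-module structure given by componentwise multiplication. Since this sum converges in norm for finite sequences and the completion procedure is standard, this presents no difficulty.

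The next step is to equip $\Hilb_A$ with a compatible left $A\otimes\mathcal{K}$-module structure. Identifying $\mathcal{K}$ with compact operators on $\Hilb$, the algebra $A\otimes\mathcal{K}$ acts on $\Hilb_A = \Hilb\otimes A$ via the tensor product action (compacts on the $\Hilb$ factor, left multiplication on the $A$ factor). One verifies that this action is by adjointable operators and is compatible with the right $A$-structure, making $\Hilb_A$ a Hilbert $(A\otimes\mathcal{K},A)$-bimodule.

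The core computation is to identify $\mathcal{K}_A(\Hilb_A) \cong A\otimes\mathcal{K}$. For vectors of the form $x = e_i\otimes a$ and $y = e_j\otimes b$ (with $\{e_i\}$ the standard orthonormal basis of $\Hilb$), the rank-one operator $\Theta_{x,y}$ sends $e_k\otimes c \mapsto (e_i\otimes a)\ip{e_j\otimes b,e_k\otimes c}_A = \delta_{jk}\,e_i\otimes(b^* c)$, which under the obvious identification corresponds to the elementary tensor $(ab^*)\otimes e_{ij}\in A\otimes\mathcal{K}$, where $e_{ij}$ is the standard matrix unit. Taking linear spans and closing in norm yields a $*$-isomorphism $\mathcal{K}_A(\Hilb_A)\xrightarrow{\cong} A\otimes\mathcal{K}$.

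Finally, I would verify fullness, i.e.\ that the closed linear span of $\{\ip{x,y}_A : x,y\in \Hilb_A\}$ equals $A$. In the unital case this is immediate since $\ip{e_1\otimes 1_A,\,e_1\otimes a}_A = a$ for every $a\in A$. In the nonunital case (as the preceding discussion in the excerpt also warned), one uses an approximate unit $\{u_\lambda\}$ for $A$: the inner products $\ip{e_1\otimes u_\lambda,\,e_1\otimes a}_A = u_\lambda a$ converge in norm to $a$ for every $a\in A$, so the span is dense. This establishes $\mathcal{K}_A(\Hilb_A)\cong A\otimes\mathcal{K}$ with $\Hilb_A$ full, which is exactly the definition of Morita equivalence $A\stackrel{\text{Mor}}{\sim} A\otimes\mathcal{K}$. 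The main technical obstacle is the nonunital fullness argument with approximate units; the identification $\mathcal{K}_A(\Hilb_A)\cong A\otimes\mathcal{K}$ is routine but notationally delicate, requiring care that the norm closure of finite-rank operators really captures all of $A\otimes\mathcal{K}$ rather than some proper subalgebra.
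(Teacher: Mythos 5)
The paper states this proposition without proof, merely citing \cite{GVF}, so there is no in-paper argument to compare against; your proof via the standard Hilbert module $\Hilb_A = \Hilb\otimes A$, identifying $\mathcal{K}_A(\Hilb_A)\cong A\otimes\mathcal{K}$ on matrix units and verifying fullness with an approximate unit in the nonunital case, is exactly the standard argument of the cited reference and is correct. One small slip: in the rank-one computation you write $\Theta_{e_i\otimes a,\,e_j\otimes b}(e_k\otimes c)=\delta_{jk}\,e_i\otimes(b^*c)$, which should read $\delta_{jk}\,e_i\otimes(ab^*c)$; the subsequent identification with $(ab^*)\otimes e_{ij}$ is nevertheless stated correctly, so this is only a typo in the intermediate step.
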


\begin{proposition} If $A\stackrel{\text{Mor}}{\sim} B$, then $A$ is $KK$-equivalent to $B$. \end{proposition}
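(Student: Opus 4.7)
The plan is to construct explicit invertible $KK$-classes directly from the equivalence bimodules $E$ and $E^\vee$. Given the Hilbert $(A,B)$-bimodule $E$, Morita equivalence implies that the left action $\phi_L:A\to \L_B(E)$ takes values in $\mathcal{K}_B(E)$ (in fact, provides the isomorphism $A\cong \mathcal{K}_B(E)$). This lets us build a genuine Kasparov $(A,B)$-module $\E=(E,\phi_L,0)$: viewing $E$ with trivial $\Z_2$-grading (only the ``$+$'' part), the zero operator is vacuously odd, the commutator $[0,\phi_L(a)]=0$ is trivially compact, and $(0^2-1)\phi_L(a)=-\phi_L(a)\in \mathcal{K}_B(E)$ by the Morita hypothesis. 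Let $\alpha=[\E]\in KK_0(A,B)$ and, exchanging the roles of $A$ and $B$, define $\beta=[\E^\vee]\in KK_0(B,A)$ from $E^\vee$.

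The crux of the proof is then to show $\alpha\otimes_B\beta=1_A$ and $\beta\otimes_A\alpha=1_B$. Because both Kasparov modules have $F=0$, the formation of the Kasparov product collapses to the interior tensor product of Hilbert bimodules: the product class is represented by $(E\otimes_B E^\vee,\phi_L\otimes 1,0)$. By the defining property of equivalence bimodules, $E\otimes_B E^\vee\cong A$ as Hilbert $(A,A)$-bimodules, so this Kasparov module is unitarily equivalent to $(A,L_A,0)$ with $L_A$ denoting the left regular representation. But $(A,L_A,0)$ is precisely the Kasparov module assigned by the functor $KK$ to the identity $*$-homomorphism $\id_A:A\to A$, hence represents $1_A\in KK_0(A,A)$. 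The symmetric computation gives $\beta\otimes_A\alpha=1_B$, establishing the $KK$-equivalence.

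The main obstacle is justifying the simple form of the Kasparov product in the $F=0$ case. The existence of a Kasparov product is a deep theorem (alluded to but not proved in the excerpt), and in general it requires delicate connection/Grassmann-type constructions. Fortunately, when $F_A=0$ and $F_B=0$ the Connes--Skandalis conditions characterising the product are trivially met by $F=0$ on the interior tensor product, so one essentially only has to verify that $A$-compactness transfers through the tensor product under the Morita isomorphism, which is straightforward given $A\cong \mathcal{K}_B(E)$ and $B\cong \mathcal{K}_A(E^\vee)$.

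An attractive alternative, which sidesteps the Kasparov product entirely, is to invoke the Brown--Green--Rieffel theorem: for $\sigma$-unital (in particular separable) $C^*$-algebras, $A\stackrel{\text{Mor}}{\sim} B$ if and only if $A\otimes\mathcal{K}\cong B\otimes\mathcal{K}$ as $C^*$-algebras. The stability axiom in the universal description of $\mathbf{KK}$ given earlier says that the stabilisation maps $A\to A\otimes \mathcal{K}$ and $B\to B\otimes \mathcal{K}$ are $KK$-equivalences, while the isomorphism $A\otimes\mathcal{K}\cong B\otimes \mathcal{K}$ is sent by $KK$ to an isomorphism. Composing the three $KK$-equivalences along $A\to A\otimes\mathcal{K}\cong B\otimes\mathcal{K}\leftarrow B$ exhibits the required $KK$-equivalence without ever computing a Kasparov product.
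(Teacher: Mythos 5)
Your primary argument is the same as the paper's: take the Kasparov modules $(E,\id_A,0)$ and $(E^\vee,\id_B,0)$, form the composition products, and use $E\otimes_B E^\vee\cong A$ and $E^\vee\otimes_A E\cong B$ to conclude. You do, however, flag two points the paper leaves implicit and which are genuinely needed: that $(E,\id_A,0)$ is a legitimate Kasparov module only because Morita equivalence gives $A\cong\mathcal{K}_B(E)$, so that $(0^2-1)\phi(a)=-\phi(a)$ is compact; and that the Kasparov product of two $F=0$ modules really is represented by the interior tensor product with $F=0$, which is a degenerate instance of the Connes--Skandalis existence/uniqueness criteria rather than something one gets for free. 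Your alternative via Brown--Green--Rieffel plus the stability axiom for $\mathbf{KK}$ is a genuinely different route that bypasses the Kasparov product altogether; it is shorter and cleaner given the axiomatic description of $\mathbf{KK}$ that the paper has already set up, at the cost of importing a nontrivial external theorem (BGR for $\sigma$-unital algebras) that the explicit bimodule computation does not need.
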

\begin{proof} Denote the equivalence bimodules by $E$ (an $(A,B)$-bimodule) and $E^\vee$ (a $(B,A)$-bimodule). Let $\alpha \in KK_0(A,B)$ and $\beta\in KK_0(B,A)$ be classes represented by the Kasparov modules $(E,\id_A,0)$ and $(E^\vee,\id_B,0)$, respectively. Both $E$ and $E^\vee$ are taken trivially graded. The composition product $\alpha\otimes_B \beta$ is represented by the Kasparov $(A,A)$-module
\[ (E\otimes_B E^\vee,\id_A\otimes \id_{E^\vee},0)\stackrel{\text{Mor}}{=}(A,\id_A,0) \,. \]
Likewise, $\beta\otimes_A \alpha$ is represented by
\[ (E^\vee\otimes_A E,\id_B \otimes \id_E,0) \stackrel{\text{Mor}}{=} (B,\id_B,0) \,. \]
Thus, we have $\alpha\otimes_B \beta = 1_A\in KK_0(A,A)$ and $\beta\otimes_A \alpha = 1_B\in KK_0(B,B)$, proving the $KK$-equivalence of $A$ and $B$.
\end{proof}

Suppose that there exist classes
\[ \Delta \in KK_d(A\otimes A^\circ,\C)= K^d(A\otimes A^\circ) \]
and
\[ \Delta^\vee \in KK_{-d}(\C,A\otimes A^\circ)=K_{-d}(A\otimes A^\circ)\,, \]
satisfying
\[ \Delta^\vee \otimes_{A^\circ} \Delta = 1_A \in KK_0(A,A) \quad \text{and} \quad \Delta^\vee \otimes_A \Delta = (-1)^d\,1_{A^\circ} \in KK_0(A^\circ,A^\circ) \,. \]
The $C^*$-algebra $A$ is called a \emph{(strong\footnote{Again, one can give also a weaker pointwise definition.}) Poincar\'e duality algebra (PD algebra)}. The $K$-homology class $\Delta$ is called the \emph{fundamental class} for $A$ and $\Delta^\vee$ its \emph{inverse}. The fundamental class can be used to construct \emph{Poincar\'e duality isomorphisms}
\begin{align*} K_\bullet(A) =KK_\bullet(\C,A) &\xrightarrow{(-)\otimes_A \Delta} KK_{\bullet + d}(A^\circ,\C) = K^{\bullet + d}(A^\circ) = K^{\bullet + d}(A)\,, \\ K^\bullet(A) = K^\bullet(A^\circ)=KK_\bullet(A^\circ,\C) &\xrightarrow{\Delta^\vee \otimes_{A^\circ}(-)} KK_{\bullet - d}(\C,A) = K_{\bullet - d}(A)\,. \end{align*}

Poincar\'e duality algebras are, unfortunately, quite scarce. Instead, we introduce a slightly weaker notion of Poincar\'e duality, where the duality isomorphism is between $K$-theory and $K$-homology of generally different $C^*$-algebras.
\begin{definition}
A pair $(A,B)$ of separable $C^*$-algebras is called a \emph{(strong) Poincar\'e duality pair (PD pair)}, if there exist classes $\Delta \in KK_d(A\otimes B,\C)=K^d(A\otimes B)$ and $\Delta^\vee \in KK_{-d}(\C,A\otimes B)=K_{-d}(A\otimes B)$, satisfying
\[ \Delta^\vee \otimes_B \Delta = 1_A \in KK_0(A,A) \quad \text{and} \quad \Delta^\vee \otimes_A \Delta = (-1)^d\,1_B \in KK_0(B,B) \,. \]
The class $\Delta$ is called a \emph{fundamental class} for the PD pair $(A,B)$ and $\Delta^\vee$ its \emph{inverse}.
\end{definition}
Given a PD pair $(A,B)$ with fundamental class $\Delta\in K^d(A\otimes B)$, there are \emph{Poincar\'e duality isomorphisms}
\begin{align*} K_\bullet(A) =KK_\bullet(\C,A) &\xrightarrow{(-)\otimes_A \Delta} KK_{\bullet + d}(B,\C) = K^{\bullet + d}(B)\,, \\ K^\bullet(B)=KK_\bullet(B,\C) &\xrightarrow{\Delta^\vee \otimes_B(-)} KK_{\bullet - d}(\C,A) = K_{\bullet - d}(A)\,. \end{align*}
The complete form of the Kasparov product can be employed to define even more general Poincar\'e duality isomorphisms. Let $C$ and $D$ be any separable $C^*$-algebras. Then there are isomorphisms
\begin{align*} KK_\bullet(C,A\otimes D) &\xrightarrow{(-)\otimes_A \Delta} KK_{\bullet+d}(C\otimes B,D)\,, \\ KK_\bullet(B\otimes C,D) &\xrightarrow{\Delta^\vee \otimes_B (-)} KK_{\bullet-d}(C,A\otimes D) \,. \end{align*}
It is easy to see that, given a PD pair $(A,B)$, there exists an isomorphism $KK_\bullet(A,A)\cong KK_\bullet(B,B)$. In particular, the multiplicative groups of their invertible classes are isomorphic.

Let $X$ be a compact oriented manifold with continuous function algebra $C(X)$. The pair $(C(X),C(X,\Cliff(T^*X)))$, where $\Cliff(T^*X)$ is the \emph{Clifford algebra bundle} of the cotangent bundle \cite{GVF}, is a PD pair. The fundamental class is the $K$-homology class of the Dirac operator on $\Cliff(T^*X)$, represented by the Kasparov $(C(X)\otimes C(X,\Cliff(T^*X),\C)$-module
\[ \Big(L^2(X,\Cliff(T^*X)),\rho,\frac{\Dirac}{\sqrt{1+\Dirac^2}}\Big)\,, \]
where $L^2(X,\Cliff(T^*X))$ is the Hilbert space of square integrable sections of $\Cliff(T^*X)$ and the action $\rho$ is given by Clifford multiplication \cite{HR,Con}. The Dixmier-Douady class of $\Cliff(T^*X)$ is $W_3(X)$ \cite{GVF}. Thus, Poincar\'e duality yields isomorphisms $K^\bullet(X,W_3(X)) \cong K_\bullet(X)$ and $K_\bullet(X,W_3(X))\cong K^\bullet(X)$. If $X$ is $\spinc$, then $W_3(X)=0$ and we have the usual Poincar\'e duality isomorphisms between (untwisted) $K$-theory and $K$-homology and $C(X)$ becomes a PD algebra. Finally, if $\sigma\in H^3(X,\Z)$, then
\begin{align*} KK(C(X,\mathcal{E}_\sigma)\otimes C(X,\mathcal{E}_{-\sigma}\otimes W_3(X)),\C) &= KK(C(X)\otimes C(X,W_3(X))\otimes \mathcal{K},\C) \\ &\cong KK(C(X)\otimes C(X,W_3(X)),\C) \,. \end{align*}
Thus, also $(C(X,\mathcal{E}_\sigma),C(X,\mathcal{E}_{-\sigma}\otimes W_3(X)))$ are PD pairs.

Generally, a PD pair $(A,B)$ may have several different fundamental classes. For example, if $\Delta\in K^d(A\otimes B)$ is a fundamental class and $\lambda\in KK_0(A,A)$ is invertible, then $\lambda\otimes_A \Delta\in K^d(A\otimes B)$ is another fundamental class, with inverse $\Delta^\vee \otimes_A \lambda^{-1}\in K_{-d}(A\otimes B)$. This is easy to see using the properties of the Kasparov product, listed in theorem \ref{Theorem: Properties of Kasparov product}:
\begin{align*} (\Delta^\vee \otimes_A \lambda^{-1})\otimes_A (\lambda \otimes_A \Delta) &= \Delta^\vee \otimes_A (\lambda^{-1} \otimes_A \lambda) \otimes_A \Delta \\ &= \Delta^\vee \otimes_A 1_A \otimes_A \Delta \\ &= \Delta^\vee \otimes_A \Delta \\ &= (-1)^d\, 1_B\,. \end{align*}
The other direction requires more sophisticated use of associativity:
\begin{align*} (\Delta^\vee \otimes_A \lambda^{-1})\otimes_B (\lambda \otimes_A \Delta) &= (\Delta^\vee \otimes_{A\otimes B} (\lambda^{-1}\otimes 1_B))\otimes_B ((\lambda\otimes 1_B)\otimes_{A\otimes B} \Delta) \\ &= \left((\Delta^\vee \otimes_{A\otimes B} (\lambda^{-1}\otimes 1_B))\otimes 1_A\right) \\ &\qquad \otimes_{B\otimes A\otimes A} \left(((\lambda\otimes 1_B)\otimes_{A\otimes B} \Delta) \otimes 1_A\right) \\ &= \left((\Delta^\vee \otimes 1_A)\otimes_{A\otimes B\otimes A} (\lambda^{-1}\otimes 1_B \otimes 1_A)\right) \\ &\qquad \otimes_{B\otimes A\otimes A} \left(((\lambda\otimes 1_B \otimes 1_A)\otimes_{A\otimes B \otimes A} (\Delta\otimes 1_A))\right)
\\ &= (\Delta^\vee \otimes 1_A)\otimes_{A\otimes B\otimes A} \left((\lambda^{-1}\otimes 1_B \otimes 1_A) \otimes_{B\otimes A\otimes A} ((\lambda\otimes 1_B \otimes 1_A)\right) \\ &\qquad \otimes_{A\otimes B \otimes B} (\Delta\otimes 1_A)) \\ &= (\Delta^\vee \otimes 1_A)\otimes_{A\otimes B\otimes A} \left((\lambda^{-1} \otimes_A \lambda) \otimes 1_B \otimes 1_A\right) \\ &\qquad \otimes_{A\otimes B \otimes A} (\Delta\otimes 1_A)) \\ &= (\Delta^\vee \otimes 1_A)\otimes_{A\otimes B\otimes A} 1_{A\otimes B \otimes A} \otimes_{A\otimes B \otimes A} (\Delta\otimes 1_A)) \\ &= (\Delta^\vee \otimes 1_A)\otimes_{A\otimes B\otimes A} (\Delta\otimes 1_A) \\ &= \Delta^\vee \otimes_B \Delta \\ &= 1_A\,.
\end{align*}
Conversely, if $\Delta_1,\Delta_2\in K^d(A\otimes B)$ are fundamental classes, then $\Delta_1^\vee \otimes_B \Delta_2\in KK_0(A,A)$ is invertible, with inverse $(-1)^d\,\Delta_2^\vee \otimes_B \Delta_1 \in KK_0(A,A)$. The proof is again a straightforward computation, so we omit it.
\begin{proposition} For a PD pair $(A,B)$ the space of fundamental classes is isomorphic to the group of invertible elements in the ring $KK_0(A,A)\cong KK_0(B,B)$. \qed \end{proposition}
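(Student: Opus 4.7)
The plan is to fix a reference fundamental class $\Delta_0 \in K^d(A\otimes B)$ with inverse $\Delta_0^\vee \in K_{-d}(A\otimes B)$, and construct a bijection between the set $\mathcal{F}(A,B)$ of fundamental classes for the PD pair and the group $KK_0(A,A)^\times$ of invertible elements. Define
\[ \Phi: KK_0(A,A)^\times \longrightarrow \mathcal{F}(A,B), \qquad \lambda \longmapsto \lambda \otimes_A \Delta_0, \]
and in the other direction
\[ \Psi: \mathcal{F}(A,B) \longrightarrow KK_0(A,A)^\times, \qquad \Delta \longmapsto (-1)^d\,\Delta^\vee \otimes_B \Delta_0. \]
The two computations indicated at the end of the excerpt (slightly rearranged) already show that $\Phi$ lands in $\mathcal{F}(A,B)$ and that $\Psi$ lands in $KK_0(A,A)^\times$, by supplying the explicit inverse Kasparov classes in each case.

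The key remaining step is to check $\Phi \circ \Psi = \mathrm{id}_{\mathcal{F}(A,B)}$ and $\Psi \circ \Phi = \mathrm{id}_{KK_0(A,A)^\times}$. For the second composition, given $\lambda \in KK_0(A,A)^\times$, one computes
\[ \Psi(\Phi(\lambda)) = (-1)^d\,(\lambda \otimes_A \Delta_0)^\vee \otimes_B \Delta_0, \]
and using the formula for the inverse of the twisted fundamental class noted in the excerpt, $(\lambda \otimes_A \Delta_0)^\vee = \Delta_0^\vee \otimes_A \lambda^{-1}$, one rewrites this as $(-1)^d\,\Delta_0^\vee \otimes_A \lambda^{-1} \otimes_B \Delta_0$ and reshuffles the product using associativity and exterior commutativity (Theorem~\ref{Theorem: Properties of Kasparov product}) until the factor $\Delta_0^\vee \otimes_B \Delta_0 = (-1)^d\,1_A$ collapses, leaving $\lambda$. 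The composition $\Phi \circ \Psi$ is analogous: starting from a fundamental class $\Delta$, one shows $((-1)^d\,\Delta^\vee \otimes_B \Delta_0) \otimes_A \Delta_0 = \Delta$ again by associativity and the defining identity $\Delta_0^\vee \otimes_A \Delta_0 = (-1)^d\,1_B$.

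I would then verify that $\Phi$ is a group homomorphism (so the bijection is natural) by
\[ \Phi(\lambda_1 \otimes_A \lambda_2) = \lambda_1 \otimes_A \lambda_2 \otimes_A \Delta_0 = \lambda_1 \otimes_A \Phi(\lambda_2), \]
together with the observation that composition in $KK_0(A,A)^\times$ corresponds on the image side to the operation $\Delta \mapsto \Psi(\Delta)\otimes_A \Delta_0' $ relative to a chosen base; this makes the bijection canonical only up to choice of $\Delta_0$, which is the expected \emph{torsor} statement. Finally, the ring isomorphism $KK_0(A,A) \cong KK_0(B,B)$ alluded to in the statement follows from the generalized Poincar\'e duality isomorphism $KK_\bullet(A\otimes C, A\otimes D) \cong KK_{\bullet+d}(C, B\otimes D)$ already noted in the excerpt, specialized to $C=A$, $D=\C$, and then to $C=\C$, $D=B$, composed to give a ring-preserving bijection; one must check multiplicativity via associativity of the Kasparov product.

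The main obstacle will be the bookkeeping in $\Psi \circ \Phi = \mathrm{id}$: that identity requires repeatedly commuting and reassociating Kasparov products over different intermediate algebras (alternating $\otimes_A$ and $\otimes_B$ with dilations by $1_A$ and $1_B$), exactly as in the excerpt's computation of $(\Delta^\vee \otimes_A \lambda^{-1}) \otimes_B (\lambda \otimes_A \Delta)$. The sign $(-1)^d$ must be tracked carefully through each rearrangement, since each swap involving the fundamental class contributes a grading sign. No single step is deep, but the diagrammatic/associativity gymnastics of Theorem~\ref{Theorem: Properties of Kasparov product} is where all the actual work lies.
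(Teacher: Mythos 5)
Your approach is the same as the paper's: the paper establishes (just before the proposition) that twisting a fundamental class by an invertible $\lambda\in KK_0(A,A)$ yields a fundamental class, and that the ``ratio'' $\Delta_1^\vee\otimes_B\Delta_2$ of two fundamental classes is invertible, and then the bijection is left implicit. You make that bijection explicit, which is the right thing to do — but the map $\Psi$ you propose does not invert $\Phi$.

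Concretely, check $\Psi\circ\Phi$ on the base point: $\Phi(1_A)=\Delta_0$, while $\Psi(\Delta_0)=(-1)^d\,\Delta_0^\vee\otimes_B\Delta_0=(-1)^d\,1_A$, so already there is a stray sign unless $d$ is even. Worse, for general invertible $\lambda$ the formula $(\lambda\otimes_A\Delta_0)^\vee=\Delta_0^\vee\otimes_A\lambda^{-1}$ gives
\[
\Psi(\Phi(\lambda)) \;=\; (-1)^d\bigl(\Delta_0^\vee\otimes_A\lambda^{-1}\bigr)\otimes_B\Delta_0 \;=\; (-1)^d\,\lambda^{-1}\otimes_A\bigl(\Delta_0^\vee\otimes_B\Delta_0\bigr) \;=\; (-1)^d\,\lambda^{-1},
\]
which is $\lambda$ only when $\lambda^2=(-1)^d 1_A$. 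The reshuffling you describe will not ``collapse to $\lambda$''; if you actually carry it out you will land on $(-1)^d\lambda^{-1}$. The fix is to put the reference class on the left and drop the sign: take $\Psi(\Delta):=\Delta_0^\vee\otimes_B\Delta$. Then the defining relation $\Delta_0^\vee\otimes_B\Delta_0=1_A$ gives, by the associativity manipulations of Theorem~\ref{Theorem: Properties of Kasparov product},
\[
\Psi(\Phi(\lambda)) \;=\; \Delta_0^\vee\otimes_B\bigl(\lambda\otimes_A\Delta_0\bigr) \;=\; \lambda\otimes_A\bigl(\Delta_0^\vee\otimes_B\Delta_0\bigr) \;=\; \lambda\otimes_A 1_A \;=\; \lambda,
\]
and $\Phi\circ\Psi=\mathrm{id}$ follows by the symmetric computation. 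Separately, the paragraph about $\Phi$ being a ``group homomorphism'' is confused: $\mathcal{F}(A,B)$ carries no group structure, so there is nothing for $\Phi$ to be a homomorphism into. You already note the right framing a line later — $\mathcal{F}(A,B)$ is a torsor over $KK_0(A,A)^\times$ and the bijection depends on the choice of $\Delta_0$ — so just delete the homomorphism claim and lead with the torsor statement.
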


\section{Local bivariant cyclic homology}
It would be useful to have some sort of a bivariant Chern character from $KK$-theory to some bivariant cohomology theory. A first guess for the target would be a bivariant extension of periodic cyclic homology. However, $\HP_\bullet$ is not suited for this purpose, since it degenerates on $C^*$-algebras and Kasparov's $KK$-theory works \emph{only} for $C^*$-algebras. There are two possible solutions. Either we replace Kasparov's $KK$-theory with some other bivariant $K$-theory, which is defined for a wider class of algebras, or we use some other kind of cyclic theory as the target. Indeed, the \emph{bivariant $K$-theory of Cuntz} \cite{KT} is defined for local $C^*$-algebras and, thus, is compatible with periodic cyclic homology. Since we want to use Kasparov's theory to describe D-branes, we need to find some replacement for periodic cyclic theory. 

The bivariant cyclic homology theory we are looking for, is the \emph{local bivariant cyclic homology} of Puschnigg \cite{Pus1,Pus2,Cun}. Due to its complexity, we will not give an explicit definition, but rather take a more or less axiomatic approach.

The following theorem is a collection of results proven in \cite{Pus1,Pus2}.
\begin{theorem}
There exist bivariant bifunctors $\HL_\bullet(-,-)\,,\bullet=1,2$ from a suitable category of topological algebras, containing the category of separable $C^*$-algebras as a subcategory, to the category of complex vector spaces, satisfying:
\begin{enumerate}
\item $\HL_\bullet(-,-)$ is contravariant in the first variable and covariant in the second variable.
\item It is split exact, homotopy invariant and satisfies excision in both variables.
\item It is invariant under passing to smooth subalgebras of Banach algebras with the metric approximation property. All nuclear $C^*$-algebras have the metric approximation property \cite{CE}.
\item For (nuclear) separable $C^*$-algebras, there exists a bilinear associative composition product
\[ \otimes_D:\HL_i(A_1,B_1\otimes D)\times \HL_j(D\otimes A_2,B_2) \rightarrow \HL_{i+j}(A_1\otimes A_2,B_1\otimes B_2)\,, \]
which naturally reduces into an exterior product
\[ \otimes:\HL_i(A_1,B_1)\times \HL_j(A_2,B_2)\rightarrow \HL_{i+j}(A_1\otimes A_2,B_1\otimes B_2) \]
and into a composition product
\[ \otimes_D:\HL_i(A,D)\times \HL_j(D,B)\rightarrow \HL_{i+j}(A,B)\,. \]
\item For separable $C^*$-algebras $A,B$, there exists a natural \emph{bivariant Chern character}
\[ \ch:KK_\bullet(A,B)\rightarrow \HL_\bullet(A,B)\,, \]
which is compatible with the product structures. For example, if $x\in KK_i(A,B)$ and $y\in KK_j(B,C)$, then
\begin{equation}\label{Eq: Puschnigg Chern character, multiplicativity} \ch(x)\otimes_B \ch(y) = \ch(x\otimes_B y) \,. \end{equation}
\item Any $*$-homomorphism $f:A\rightarrow B$ of separable $C^*$-algebras induces a class $[f]_{\HL} \in \HL_0(A,B)$. If also $g:B\rightarrow C$ is a $*$-homomorphism, then $[g\circ f]_{\HL} = [f]_{\HL} \otimes_B [g]_{\HL}$ and $\ch([f]_{KK}) = [f]_{\HL}$. In particular, $\HL_0(A,A)$ is a ring with multiplicative identity
\[ 1_A = [\id_A]_{\HL} = \ch([\id_A]_{KK})\in \HL_0(A,A)\,. \]
\item For any smooth manifold $X$ there exists an isomorphism
\[ \HL_\bullet(\C,C_0(X)) \stackrel{\textit{iii)}}{\cong} \HL_\bullet(\C,C_0^\infty(X)) \cong \HP_\bullet(C_0^\infty(X))\,, \]
and therefore
\[ \HL_0(\C,C_0(X)) \cong H^\text{even}_\text{dR}(X) \quad \text{and} \quad \HL_1(\C,C_0(X)) \cong H^\text{odd}_\text{dR}(X)\,. \]
\end{enumerate}
\end{theorem}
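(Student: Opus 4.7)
The statement is really a compendium of structural results about Puschnigg's bivariant theory rather than a single proposition, so the plan is to construct $\HL_\bullet$ explicitly following \cite{Pus1,Pus2} and then verify each property in turn. The construction itself I would not attempt to reproduce here: one works in the category of complete bornological algebras (which contains separable $C^*$-algebras and local $C^*$-algebras as full subcategories), forms the analytic tensor algebra $\mathcal{T}A$ together with its $X$-complex $X(\mathcal{T}A)$, and then passes to the locally convex completion that picks out only those linear functionals which are bounded on small subsets (the ``local'' decoration). The bivariant groups $\HL_\bullet(A,B)$ are then defined as $\Z_2$-graded chain homotopy classes of continuous maps between these ind-complexes. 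Once this setup is in place, properties \textit{i)}--\textit{vii)} can be deduced one by one.

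The plan for each item: property \textit{i)} is immediate from the bifunctoriality of $X(\mathcal{T}-)$, since a morphism in either slot induces a chain map on the $X$-complex in the expected variance. Property \textit{ii)} I would obtain from Puschnigg's Excision Theorem for analytic cyclic cohomology \cite{Pus1}, which gives a six-term exact sequence in each variable and, together with homotopy invariance built into the ind-complex definition, implies split exactness. Property \textit{iii)} is the Approximation Theorem of \cite{Pus2}: if $A \hookrightarrow \bar A$ is a smooth subalgebra of a Banach algebra with the metric approximation property, the inclusion induces an isomorphism on $\HL$; this is the key technical step that lets us transfer computations between $C^\infty_0(X)$ and $C_0(X)$. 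Property \textit{iv)} follows from the symmetric monoidal structure on analytic tensor algebras: a dilation-composition recipe identical to the $KK$-theoretic one upgrades a cup product into the full Kasparov-style product, and associativity and bilinearity descend from the corresponding statements at the chain level.

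For \textit{v)}, the bivariant Chern character is constructed by choosing, for a given Kasparov $(A,B)$-module $(E,\phi,F)$, a finite summable (or more generally $\theta$-summable) representative and then writing down the associated Connes--Chern cocycle in the $X$-complex of $\mathcal{T}(A)\otimes \mathcal{T}(B)^{\mathrm{op}}$. The fact that it lands in the \emph{local} theory rather than just periodic cyclic cohomology is precisely what makes it defined on all of $KK_\bullet$ and not just on finitely summable representatives; this is the content of Puschnigg's main theorem in \cite{Pus2}. Multiplicativity \eqref{Eq: Puschnigg Chern character, multiplicativity} then follows from the compatibility of the Chern character with the composition product, which one checks at the level of chain-level cocycles using the concrete formula for the product. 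Property \textit{vi)} is essentially a consequence of \textit{v)}: a $*$-homomorphism $f:A\to B$ already defines $[f]_{KK}\in KK_0(A,B)$ via the module $(B,f,0)$, and we put $[f]_{\HL}:=\ch([f]_{KK})$; functoriality and the identity $1_A = \ch([\id_A]_{KK})$ are then formal.

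Finally, \textit{vii)} is obtained by combining \textit{iii)} with the classical Hochschild--Kostant--Rosenberg theorem: the smooth inclusion $C^\infty_0(X)\hookrightarrow C_0(X)$ satisfies the hypotheses of \textit{iii)} (nuclear $C^*$-algebras have the metric approximation property \cite{CE}, and $C^\infty_0(X)$ is a smooth subalgebra with the same $K$-theory), so $\HL_\bullet(\C,C_0(X))\cong \HL_\bullet(\C,C^\infty_0(X))$; for the smooth algebra, Puschnigg's comparison map from periodic cyclic homology to local cyclic homology is an isomorphism on commutative smooth function algebras, so this in turn equals $\HP_\bullet(C^\infty_0(X))$, which is the periodized de Rham cohomology $H^\bullet_{\mathrm{dR}}(X)$ by the smooth HKR theorem. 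The main obstacle in the whole program is really \textit{iii)} and the part of \textit{v)} asserting that $\ch$ is defined on all of $KK_\bullet$: getting a Chern character that is simultaneously compatible with Kasparov's product and defined beyond the finitely summable case is exactly what forced Puschnigg to introduce the local refinement of bivariant cyclic cohomology in the first place, and everything else in the theorem is in some sense a corollary of having that construction available.
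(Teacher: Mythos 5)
The paper does not prove this theorem at all: it introduces it with the single sentence ``The following theorem is a collection of results proven in \cite{Pus1,Pus2}'' and gives no argument. So there is no internal proof to compare your reconstruction against; the paper delegates entirely to Puschnigg, while you have supplied a genuine proof sketch.

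Your blueprint is broadly sound, but there is a conceptual problem in your treatment of item \textit{v)}. You describe the bivariant Chern character as obtained by ``choosing, for a given Kasparov $(A,B)$-module $(E,\phi,F)$, a finite summable (or more generally $\theta$-summable) representative and then writing down the associated Connes--Chern cocycle.'' The difficulty is that general Kasparov modules over $C^*$-algebras do \emph{not} admit finitely summable or $\theta$-summable representatives --- this is a well-known obstruction, and passing from periodic to local cyclic theory does not remove it; it changes the target, not the availability of summable representatives. What the local refinement actually buys is your item \textit{ii)}: $\HL_\bullet$ is stable, homotopy invariant and split exact on $\mathbf{SC^*}$. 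The Chern character then exists for purely formal reasons, via the universal characterization of $\mathbf{KK}$ that the paper itself records in its axiomatic description of $KK$-theory (any stable, homotopy invariant, split exact functor $F:\mathbf{SC^*}\rightarrow \mathbf{A}$ factors uniquely through $\mathbf{KK}$). Applying this to $\HL$ viewed as a functor into the additive category with morphism groups $\HL_0(-,-)$ yields a unique functor $\widehat{\HL}:\mathbf{KK}\rightarrow\underline{\HL}$, and $\ch:KK_0(A,B)=\Hom_{\mathbf{KK}}(A,B)\rightarrow \HL_0(A,B)$ is the resulting map on $\Hom$-sets; compatibility with the composition product is then immediate, since $\widehat{\HL}$ is a functor. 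So the mechanism is universality, not a cocycle formula; once you phrase \textit{v)} this way, your derivation of \textit{vi)} as a formal corollary goes through exactly as you wrote it, and the rest of your outline (in particular the use of the Approximation Theorem for \textit{iii)} and \textit{vii)}) is on target.
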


It follows immediately from the above properties that invertibility in $KK$-theory carries over to $\HL$-theory under $\ch$: if $x\in KK_d(A,B)$ and $y\in KK_{-d}(B,A)$ are such that
\[ x\otimes_B y = 1_A\in KK_0(A,A)\quad \text{and} \quad y\otimes_A x = 1_B\in KK_0(B,B)\,, \]
then
\[ \ch(x)\otimes_B \ch(y) = 1_A\in \HL_0(A,A) \quad \text{and} \quad \ch(y)\otimes_A \ch(x) = 1_B \in \HL_0(B,B)\,. \]
Furthermore, there is an analogue for $KK$-equivalence in $\HL$-theory, called $\HL$-equivalence. It is easy to see that $KK$-equivalence implies $\HL$-equivalence. The converse is not true, though, basically because $\HL$-theory yields vector spaces over $\C$, which are insensitive to torsion.

\begin{definition} A pair $(A,B)$ of separable $C^*$-algebras is called a \emph{cyclic Poincar\'e duality pair (C-PD pair)}, if there exist $\Xi\in \HL_d(A\otimes B,\C)$ and $\Xi^\vee\in \HL_{-d}(\C,A\otimes B)$, such that
\[ \Xi^\vee \otimes_B \Xi = 1_A \in \HL_0(A,A) \quad \text{and} \quad \Xi^\vee \otimes_A \Xi = (-1)^d\,1_B \in \HL_0(B,B) \,. \]
We call the class $\Xi$ a \emph{cyclic fundamental class} for the pair $(A,B)$ and $\Xi^\vee$ its \emph{inverse}.
\end{definition}
As in $KK$-theory, Poincar\'e duality for a C-PD pair $(A,B)$ with cyclic fundamental class $\Xi$, yields isomorphisms
\begin{align*} \HL_\bullet(C,A\otimes D)&\xrightarrow{(-)\otimes_A \Xi} \HL_{\bullet+d}(C\otimes B,D)\,, \\ \HL_\bullet(B\otimes C,D) &\xrightarrow{\Xi^\vee \otimes_B (-)} \HL_{\bullet-d}(C,A\otimes D)\,. \end{align*}
By choosing $D=C=\C$, we get
\begin{align*} \HL_\bullet(A):=\HL_\bullet(\C,A)&\xrightarrow{(-)\otimes_A \Xi} \HL_{\bullet+d}(B,\C) = \HL^{\bullet+d}(B)\,, \\  \HL^\bullet(B):=\HL_\bullet(B,\C) &\xrightarrow{\Xi^\vee \otimes_B (-)} \HL_{\bullet-d}(\C,A)= \HL_{\bullet-d}(A)\,. \end{align*}
Again, it follows easily that $\HL_\bullet(A,A)\cong \HL_\bullet(B,B)$, which restricts to an isomorphism between the multiplicative groups of their invertible elements.

If $(A,B)$ is a PD pair with fundamental class $\Delta\in K^d(A\otimes B)$, then $(A,B)$ is also a C-PD pair\footnote{The converse is, of course, not true, due to the usual problem with torsion.} with cyclic fundamental class $\ch(\Delta)\in \HL^d(A\otimes B)$ (and inverse $\ch(\Delta)^\vee = \ch(\Delta^\vee)$). It follows from \eqref{Eq: Puschnigg Chern character, multiplicativity} that the diagram
\[ \xymatrix{ K_\bullet(A) \ar[rrr]^{\Delta\otimes_A (-)}\ar[d]_{\ch} & & & K^{\bullet+d}(B) \ar[d]^{\ch} \\ \HL_\bullet(A) \ar[rrr]_{\ch(\Delta)\otimes_A (-)} & & & \HL^{\bullet+d}(B) } \]
commutes. It should be noted that $\ch(\Delta)$ is generally not the only possible choice for a cyclic fundamental class for $(A,B)$. In fact, the space of cyclic fundamental classes is isomorphic to the space of invertible elements of $\HL_0(A,A)\cong \HL_0(B,B)$ \cite{BMRS2}. If $\Xi\in \HL^d(A\otimes B)$ is any cyclic fundamental class for $(A,B)$ with inverse $\Xi^\vee$, then
\[ \Xi = \lambda\otimes_A \ch(\Delta) \quad \text{and} \quad \Xi^\vee = \ch(\Delta^\vee) \otimes_A \lambda^{-1}\,, \]
for some invertible class $\lambda\in \HL_0(A,A)$.

Let $X$ be a compact oriented manifold of dimension $d$. For the C-PD algebra $C(X)$, we obtain the classical cyclic fundamental class $\Xi\in \HL^d(C(X)\otimes C(X))$ as the image of the class $[\varphi_X]\in \HP^\bullet(C^\infty(X))$ of the cyclic $d$-cocycle
\[ \varphi_X(f^0,f^1,\ldots,f^d) = \frac{1}{d!}\int_X f^0\,df^1\wedge\ldots \wedge df^d\,, \]
for $f^i\in C^\infty(X)$, under the homomorphism
\[ m^*:\HP^\bullet(C^\infty(X)) \cong \HL^\bullet(C(X)) \rightarrow \HL^\bullet(C(X)\otimes C(X))\,, \]
where the arrow is induced by the product map $m:C(X)\otimes C(X) \rightarrow C(X)$, $f\otimes g\mapsto fg$. The class $\Xi$, therefore, corresponds to the classical fundamental class $[X]$ and Poincar\'e duality in $\HL$-theory agrees with the classical one. 

We conclude this section with the \emph{universal coefficient theorem for $\HL$-theory}. For the proof, see \cite{Mey}.
\begin{theorem}[Universal coefficient theorem for $\HL$-theory]
Let $A$ be a separable $C^*$-algebra satisfying the UCT. There exist natural isomorphisms
\[ \HL_\bullet(A,B) \cong \Hom_{\C}(\HL_\bullet(A),\HL_\bullet(B)) \cong \Hom_{\C}(K_\bullet(A)\otimes_{\Z} \C, \HL_\bullet(B)) \,. \]
If also $B$ satisfies the UCT, then
%\[ \HL_\bullet(A,B) \cong \Hom_{\C}(K_\bullet(A)\otimes_{\Z} \C,K_\bullet(B)\otimes_{\Z}\C) \]
%and
\[ KK_\bullet(A,B)\otimes_{\Z} \C \cong \Hom_{\Z}(K_\bullet(A),K_\bullet(B))\otimes_{\Z} \C \,. \]
It also follows easily from the universal coefficient theorems that if both $A$ and $B$ satisfy the UCT and the groups $K_\bullet(A)$ are finitely generated, then
\[ \HL_\bullet(A,B)\cong KK_\bullet(A,B)\otimes_{\Z} \C \,. \]
\end{theorem}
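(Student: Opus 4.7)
The strategy is to bootstrap all three claims from the universal coefficient theorem for $KK$-theory already quoted, using three ingredients: (i) flatness of $\C$ as a $\Z$-module, so that $-\otimes_\Z \C$ preserves short exact sequences; (ii) the compatibility of the bivariant Chern character $\ch$ with the Kasparov and cyclic composition products; and (iii) the bootstrap character of the UCT class, which lets one reduce natural transformations to the single check $A=\C$. I would begin by establishing the auxiliary fact that for every $A$ in the UCT class, the restriction of $\ch$ to $KK_\bullet(\C,A)=K_\bullet(A)$ extends $\C$-linearly to a natural isomorphism
\[
\ch_A\colon K_\bullet(A)\otimes_\Z \C \xrightarrow{\cong} \HL_\bullet(A).
\]
The class of $A$'s for which $\ch_A$ is an isomorphism contains $\C$ (both sides are $\C$) and is closed under suspension, stable isomorphism, mapping cones, and countable inductive limits, by the five lemma applied to the six-term Puppe-type sequences that both $K_\bullet$ and $\HL_\bullet$ enjoy by the stated axioms. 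Hence it contains the entire UCT class.

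For the first claim, I would define the evaluation map
\[
e_{A,B}\colon \HL_\bullet(A,B)\longrightarrow \Hom_\C(\HL_\bullet(A),\HL_\bullet(B)),\qquad \alpha\longmapsto \bigl(x\mapsto x\otimes_A \alpha\bigr),
\]
and show it is an isomorphism for $A$ in the UCT class by the same bootstrap: $e_{\C,B}$ is the tautological identification $\HL_\bullet(B)\cong \Hom_\C(\C,\HL_\bullet(B))$, and split exactness together with homotopy invariance and stability of $\HL$ promote the property of $e_{A,B}$ being an isomorphism through all operations generating the UCT class. Post-composing with $\ch_A^{-1}$ from the auxiliary step yields the second isomorphism of Part 1, namely $\Hom_\C(\HL_\bullet(A),\HL_\bullet(B))\cong \Hom_\C(K_\bullet(A)\otimes_\Z \C,\HL_\bullet(B))$.

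For Parts 2 and 3, I would tensor the $KK$-UCT with $\C$ over $\Z$; flatness of $\C$ produces the short exact sequence
\[
0\to \Ext^1_\Z(K_{\bullet+1}(A),K_\bullet(B))\otimes_\Z \C \to KK_\bullet(A,B)\otimes_\Z \C \to \Hom_\Z(K_\bullet(A),K_\bullet(B))\otimes_\Z \C \to 0.
\]
When $K_\bullet(A)$ is finitely generated, $K_\bullet(A)$ is a finite sum of cyclic groups, so $\Ext^1_\Z(K_{\bullet+1}(A),K_\bullet(B))$ is a direct sum of cokernels $K_\bullet(B)/nK_\bullet(B)$, which is $n$-torsion and hence killed by $\otimes_\Z \C$; under the same finite-generation hypothesis the canonical map $\Hom_\Z(K_\bullet(A),K_\bullet(B))\otimes_\Z \C\to \Hom_\C(K_\bullet(A)\otimes_\Z \C,K_\bullet(B)\otimes_\Z \C)$ is an isomorphism. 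Combined with Part 1 applied to $B$ in the UCT class (so that $\HL_\bullet(B)\cong K_\bullet(B)\otimes_\Z \C$), one obtains the chain of isomorphisms
\[
\HL_\bullet(A,B)\cong \Hom_\C(K_\bullet(A)\otimes_\Z \C, K_\bullet(B)\otimes_\Z \C)\cong \Hom_\Z(K_\bullet(A),K_\bullet(B))\otimes_\Z \C\cong KK_\bullet(A,B)\otimes_\Z \C,
\]
yielding Parts 2 and 3 together. Part 2 alone follows from the exact sequence and vanishing of its $\Ext$-term.

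\textbf{Main obstacle.} The technical heart of the argument is the bootstrap step promoting $e_{A,B}$ to an isomorphism: it requires verifying that $e$ intertwines the six-term exact sequences induced by mapping-cone triangles in $\HL$ with those obtained by applying $\Hom_\C(-,\HL_\bullet(B))$ to the corresponding sequences of $\HL_\bullet(-)$. Because the composition product in $\HL$ is not given by an explicit formula but abstractly through the Puschnigg complex, naturality has to be traced at the level of the underlying chain complexes rather than cohomology, and this is where Meyer's work does most of the real labour. A subsidiary subtlety is Part 2 without finite generation: $\Ext^1_\Z(\cdot,\cdot)\otimes \C$ need not vanish for arbitrary countable abelian groups, so one must either invoke a finiteness hypothesis implicitly or deduce the vanishing from a second bootstrap using UCT of $B$, exploiting that the Chern character fits into a morphism of UCT sequences which becomes an isomorphism on the $\Hom$-term after tensoring with $\C$.
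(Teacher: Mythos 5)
Your bootstrap scheme is the natural one, and your instincts about where the hard work lives are good, but there is a genuine gap in your treatment of Part 2 that your ``main obstacle'' paragraph flags without fixing. You propose to tensor the $KK$-theoretic UCT sequence with $\C$ and observe that the $\Ext$-term dies; it need not. Concretely, $\Ext^1_\Z(\Q,\Z)\cong\widehat{\Z}/\Z$ is a nonzero torsion-free divisible group, hence a $\Q$-vector space, so $\Ext^1_\Z(\Q,\Z)\otimes_\Z\C\neq 0$. Take $A$ to be the suspension of a UHF algebra with $K_0\cong\Q$ (so $A$ is in the bootstrap class with $K_0(A)=0$ and $K_1(A)\cong\Q$) and $B=\C$; then $\Hom_\Z(K_\bullet(A),K_\bullet(B))\otimes_\Z\C=0$ while $KK_0(A,\C)\otimes_\Z\C\cong\Ext^1_\Z(\Q,\Z)\otimes_\Z\C\neq 0$, contradicting the displayed isomorphism of Part 2. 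The hypothesis that $B$ also satisfies the UCT is of no help: the Rosenberg--Schochet theorem constrains only the first variable, and your proposed ``second bootstrap via the UCT of $B$'' supplies no mechanism for killing the $\Ext$-term. The honest fix is to import the finite-generation hypothesis from Part 3 into Part 2 as well; then $K_\bullet(A)$ is a finite sum of cyclic groups, $\Ext^1_\Z(K_{\bullet+1}(A),K_\bullet(B))$ is torsion, and your tensoring argument goes through. Your derivation of Part 3 is correct as written, and Part 1 is a reasonable sketch of what Meyer actually proves (the paper itself offers no argument, only a citation to \cite{Mey}).

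A further caution on the bootstrap for Part 1: the five-lemma argument you invoke handles suspensions, mapping cones and stability, but it does not by itself cover countable inductive limits. There $\Hom_\C(-,\HL_\bullet(B))$ sits in the contravariant slot, so $\Hom_\C(\varinjlim V_n,W)\cong\varprojlim\Hom_\C(V_n,W)$ and a potential $\lim^1$ obstruction appears, while on the other side one must know that $\HL_\bullet$ is actually continuous across the limit; the latter fails already for ordinary periodic cyclic homology and is precisely what Puschnigg's local theory and Meyer's ind-algebra machinery are built to fix. You are right that this is where the real labour is, but the write-up would be stronger if it cited Meyer's continuity theorem explicitly at the inductive-limit step rather than absorbing it into a general remark about naturality at the chain level.
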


\section{Todd classes and Gysin homomorphisms}
Let $\mathbf{\underline{SC}^*}$ denote the category of separable $C^*$-algebras, which form a PD pair with some other separable $C^*$-algebra. We remind that, within this category, we restrict to nuclear algebras whenever tensor products are involved (practically always), for which we do not have to worry about the topology of the tensor product $C^*$-algebra. Let $A$ be an object of $\mathbf{\underline{SC}^*}$ and denote by $\widetilde{A}$ the algebra, for which $(A,\widetilde{A})$ is a PD pair. For a PD algebra $A$, the canonical choice $\widetilde{A}=A^\circ$ is always assumed.

\begin{definition} Let $A\in \mathbf{\underline{SC}^*}$ and $\Delta\in K^d(A\otimes \widetilde{A})$ the fundamental class for $(A,\widetilde{A})$. Then $(A,\widetilde{A})$ is also a C-PD pair, with cyclic fundamental class $\Xi\in \HL^d(A\otimes \widetilde{A})$. The \emph{Todd class} of $A$ is defined to be the difference between the classes $\Xi$ and $\ch(\Delta)$\footnote{Recall the discussion before theorem \ref{Theorem: A-H-G-R-R}.}:
\[ \Todd(A) = \Todd_{\Delta,\Xi}(A,\widetilde{A}):=\Xi^\vee \otimes_{\widetilde{A}} \ch(\Delta) \in \HL_0(A,A) \,. \]
It is invertible with inverse
\[ \Todd(A)^{-1} = \Todd_{\Delta,\Xi}(A,\widetilde{A})^{-1} := (-1)^d\, \ch(\Delta^\vee)\otimes_{\widetilde{A}} \Xi \in \HL_0(A,A) \,. \]
Remark, that $\Todd(A)=1_A$, when $\Xi=\ch(\Delta)$.
\end{definition}
It is proved in \cite{BMRS1} that $\mathbf{\underline{SC}^*}$ is not too restrictive a category of noncommutative spaces: any algebra in $\mathbf{SC^*}$, which satisfies the universal coefficient theorem for $KK$-theory and has finitely generated $K$-theory, is also in $\mathbf{\underline{SC}^*}$.

The definition of $\Todd(A)$ involves a choice for the fundamental classes. Recall that if $\Delta_1\in K^d(A\otimes \widetilde{A})$ and $\Xi_1\in \HL^d(A\otimes \widetilde{A})$ are any other fundamental classes for the pair $(A,\widetilde{A})$, they are necessarily of the form $\Delta_1 = \lambda_{KK}\otimes_A \Delta$ and $\Xi_1 = \lambda_{\HL}\otimes_A \Xi$, with $\lambda_{KK}\in KK_0(A,A)$ and $\lambda_{\HL} \in \HL_0(A,A)$. In \cite{BMRS2} it is shown, using associativity of the Kasparov product, that
\[ \Todd_{\Delta_1,\Xi_1}(A,\widetilde{A}) = \ch(\lambda_{KK})\otimes_A \Todd_{\Delta,\Xi}(A,\widetilde{A})\otimes_A \lambda_{\HL}^{-1}\,. \]

Let $X$ be a compact complex manifold\footnote{Recall that complex manifolds are always $\spinc$.}. Then $C(X)$ is a PD algebra with fundamental class given by the Dolbeault (Dirac) operator and the cyclic fundamental class given by the orientation homology cycle. It follows from the universal coefficient theorem that $\HL_0(C(X),C(X))\cong \End(H^\bullet(X,\Q))$. Under this isomorphism, $\Todd(C(X))$ corresponds to $(-)\smile \Todd(X)$, the cup product with the complex $\Todd$ class $\Todd_\C(X)\in H^\bullet(X,\Q)$. If $X$ is a real compact oriented manifold, there is an analogous result, with $\Todd(C(X))$ corresponding to the cup product with the usual $\Todd$ class $\Todd(X)$.
%and $\sigma\in H^3(X,\Z)$, the Dirac operator on $\Cliff(T^*X)$ determines a fundamental class in
%\[ KK_0(C(X,\mathcal{E}_\sigma)\otimes C(X,\mathcal{E}_{-\sigma}\otimes \Cliff(T^*X)),\C) \,. \]
%The $\Todd$ class is then a class in $\HL_0(C(X,\mathcal{E}_\sigma),C(X,\mathcal{E}_\sigma))$. The universal coefficient theorem implies, that $

Next we want develop a noncommutative generalization for the Gysin ''wrong way'' homomorphisms. Recall that if $A$ and $B$ are separable $C^*$-algebras and $f:A\rightarrow B$ a $*$-homomorphism, then $f$ induces homomorphisms
\[ f_*:K_\bullet(A)\rightarrow K_\bullet(B)\qquad \text{and}\qquad f^*:K^\bullet(B)\rightarrow K^\bullet(A) \,. \]
The corresponding Gysin maps would be
\[ f_!:K_\bullet(B)\rightarrow K_{\bullet+d}(A)\qquad \text{and} \qquad f^!:K^\bullet(A)\rightarrow K^{\bullet+d}(B)\,. \]
If $A$ and $B$ happen to be PD algebras, the Gysin maps can be defined, in the spirit of definitions \ref{Eq: Homology Gysin map} and \ref{Definition: Gysin map in K-theory}, as compositions
\[ f_! := \Pd^{-1}_{KK}\circ (f^\circ)_* \circ \Pd_{KK} \qquad \text{and} \qquad f^! := \Pd_{KK}\circ (f^\circ)^* \circ \Pd^{-1}_{KK}\,. \]
The maps $(f^\circ)_*:K_\bullet(A^\circ)\rightarrow K_\bullet(B^\circ)$ and $(f^\circ)^*:K^\bullet(B^\circ)\rightarrow K^\bullet(A^\circ)$ are induced by $f^\circ:A^\circ\rightarrow B^\circ$, $f^\circ(a^\circ)= f(a)^\circ$, for $a\in A$. This is not very satisfactory however, since PD algebras are too restrictive to be useful. We shall instead consider a certain subcategory of $\mathbf{SC^*}$, in which well-behaving Gysin maps exist.

\begin{definition} Let $\mathbf{SC^*_K}$ be a subcategory of $\mathbf{SC^*}$, consisting of separable $C^*$-algebras and \emph{(strongly)\footnote{There is also a weaker definition of $K$-orientation but we shall not discuss it here. See \cite{BMRS2} for further details.} $K$-oriented} $*$-homomorphisms, such that there exists a contravariant functor $!:\mathbf{SC^*_K}\rightarrow \underline{KK}$, sending
\[ \mathbf{SC^*_K} \ni (A\xrightarrow{f} B) \rightarrow f!\in KK_d(B,A)\,, \]
with the following properties:
\begin{enumerate}
\item For any $C^*$-algebra $A$ in $\mathbf{SC^*_K}$, the identity homomorphism $\id_A:A\rightarrow A$ and the zero homomorphism $0_A:A\rightarrow 0$ are $K$-oriented, with $\id_A! = 1_A\in KK_0(A,A)$ and $0_A! = 0\in KK_0(0,A)$.
\item If $(A\xrightarrow{f} B)\in \mathbf{SC^*_K}$, then also $(A^\circ \xrightarrow{f^\circ} B^\circ)\in \mathbf{SC^*_K}$.
\item If $A$ and $B$ are PD algebras in $\mathbf{SC^*_K}$, with fundamental classes $\Delta_A\in K^{d_A}(A\otimes A^\circ)$ and $\Delta_B\in K^{d_B}(B\otimes B^ \circ)$ and $(A\xrightarrow{f} B)\in \mathbf{SC^*_K}$, the class $f!\in KK_d(B,A)$ is determined by
\[ f! = (-1)^{d_A}\,\Delta_A^\vee\otimes_{A^\circ} [f^\circ]_{KK} \otimes_{B^\circ} \Delta_B \,, \]
with $d=d_A-d_B$.
\end{enumerate}
\end{definition}
Let $A\xrightarrow{f}B\xrightarrow{g}C$ be $K$-oriented $*$-homomorphisms. Functoriality of $!:\mathbf{SC^*_K}\rightarrow \underline{KK}$ requires that $(g\circ f)!=f!\otimes_B g!$. It is also easy to see that the third property above is compatible with functoriality. Suppose $A,B,C$ are PD algebras, with fundamental classes $\Delta_A\in K^{d_A}(A\otimes A^\circ)$, $\Delta_B\in K^{d_B}(B\otimes B^\circ)$ and $\Delta_C\in K^{d_C}(C\otimes C^\circ)$, and $f,g$ $K$-oriented $*$-homomorphisms as above. It is then a straightforward computation to show that
\begin{align*} f!\otimes_B g! &= \left((-1)^{d_A}\,\Delta_A^\vee\otimes_{A^\circ} [f^\circ]_{KK} \otimes_{B^\circ} \Delta_B\right)\otimes_B \left((-1)^{d_B}\,\Delta_B^\vee\otimes_{B^\circ} [g^\circ]_{KK} \otimes_{C^\circ} \Delta_C\right) \\ &= \left((-1)^{d_A}\, \Delta_A^\vee \otimes_{A^\circ} [(g\circ f)^\circ]_{KK} \otimes_{C^\circ} \Delta_C\right)\,. \end{align*}
One simply has to write out the Kasparov products in terms of dilations and composition products and apply associativity.

\begin{definition} Suppose $(A\xrightarrow{f} B)\in \mathbf{SC^*_K}$ and $f!\in KK_d(B,A)$. Noncommutative versions of the Gysin ''wrong way'' homomorphisms in $K$-theory and $K$-homology are defined as follows:
\begin{align*} f_!&:=(-)\otimes_B f!:K_\bullet(B)\rightarrow K_{\bullet+d}(A) \,, \\ f^!&:=f!\otimes_A (-):K^\bullet(A)\rightarrow K^{\bullet+d}(B) \,. \end{align*}
\end{definition}

We could now almost state a noncommutative version of the Grothendieck-Riemann-Roch theorem, but we are still missing Gysin maps in $\HL$-theory. Due to the algebraic similarity of $KK$-theory and $\HL$-theory, the definition of $K$-oriented maps can be directly carried over to $\HL$-theory to define \emph{$\HL$-oriented maps}.
\begin{definition} Let $\mathbf{SC^*_{\HL}}$ be a subcategory of $\mathbf{SC^*}$, consisting of separable $C^*$-algebras and \emph{(strongly)\footnote{There is again a weaker definition, analogous to the $KK$-theoretic one.} $\HL$-oriented} $*$-homomorphisms, such that there exists a contravariant functor $*:\mathbf{SC^*_{\HL}}\rightarrow \underline{\HL}$\footnote{The category $\underline{\HL}$ is analogous to the category $\underline{KK}$.}, sending
\[ \mathbf{SC^*_{\HL}} \ni (A\xrightarrow{f} B) \rightarrow f*\in \HL_d(B,A)\,, \]
with the following properties:
\begin{enumerate}
\item For any $C^*$-algebra $A$ in $\mathbf{SC^*_{\HL}}$ the identity homomorphism $\id_A:A\rightarrow A$ and the zero homomorphism $0_A:A\rightarrow 0$ are $\HL$-oriented with $\id_A* = 1_A\in \HL_0(A,A)$ and $0_A* = 0\in \HL_0(0,A)$.
\item If $(A\xrightarrow{f} B)\in \mathbf{SC^*_{\HL}}$, then also $(A^\circ \xrightarrow{f^\circ} B^\circ)\in \mathbf{SC^*_{\HL}}$.
\item If $A$ and $B$ are PD algebras in $\mathbf{SC^*_{\HL}}$ with fundamental classes $\Xi_A\in \HL^{d_A}(A\otimes A^\circ)$ and $\Xi_B\in \HL^{d_B}(B\otimes B^ \circ)$ and $(A\xrightarrow{f} B)\in \mathbf{SC^*_{\HL}}$, the class $f*\in \HL_d(B,A)$ is determined by
\[ f* = (-1)^{d_A}\,\Xi_A^\vee\otimes_{A^\circ} [f^\circ]_{\HL} \otimes_{B^\circ} \Xi_B \,, \]
with $d=d_A-d_B$.
\end{enumerate}
\end{definition}
\begin{definition} Suppose $(A\xrightarrow{f} B)\in \mathbf{SC^*_{\HL}}$ and $f!\in \HL_d(B,A)$. The noncommutative \emph{Gysin ''wrong way'' homomorphisms} in $\HL$-theory are defined as follows:
\begin{align*} f_!&:=(-)\otimes_B f!:\HL_\bullet(B)\rightarrow \HL_{\bullet+d}(A) \,, \\ f^!&:=f!\otimes_A (-):\HL^\bullet(A)\rightarrow \HL^{\bullet+d}(B) \,. \end{align*}
\end{definition}

Suppose that $f:A\rightarrow B$ is both $K$-oriented and $\HL$-oriented. Then there exists both Gysin classes $f!\in KK_d(B,A)$ and $f*\in \HL_d(B,A)$. Another class in $\HL_d(B,A)$ can be obtained from $f!$ as $\ch(f!)\in \HL_d(B,A)$. By the properties of the Chern character, the diagrams
\[ \xymatrix{ K_\bullet(B) \ar[rr]^{f_!} \ar[d]_{\ch} & & K_{\bullet+d}(A) \ar[d]^{\ch} \\ \HL_\bullet(B) \ar[rr]_{(-)\otimes_B \ch(f!)} & & \HL_{\bullet+d}(A) } \qquad \xymatrix{ K^\bullet(A) \ar[rr]^{f^!} \ar[d]_{\ch} & & K^{\bullet+d}(B) \ar[d]^{\ch} \\ \HL^\bullet(A) \ar[rr]_{\ch(f!)\otimes_A (-)} & & \HL^{\bullet+d}(B) } \]
obviously commute. Replacing $\ch(f!)$ with $f*$, yields the diagrams
\[ \xymatrix{ K_\bullet(B) \ar[rr]^{f_!} \ar[d]_{\ch} & & K_{\bullet+d}(A) \ar[d]^{\ch} \\ \HL_\bullet(B) \ar[rr]_{f_*} & & \HL_{\bullet+d}(A) } \qquad \xymatrix{ K^\bullet(A) \ar[rr]^{f^!} \ar[d]_{\ch} & & K^{\bullet+d}(B) \ar[d]^{\ch} \\ \HL^\bullet(A) \ar[rr]_{f^*} & & \HL^{\bullet+d}(B) } \]
which, in general, do \emph{not} commute. When $A$ and $B$ are PD algebras, we can compare $\ch(f!)$ and $f*$ explicitly and find the necessary corrections to be made to restore commutativity.
\begin{theorem}[Noncommutative Grothendieck-Riemann-Roch theorem] For any PD algebras $A$ and $B$, the relation 
\begin{equation}\label{Eq: NC G-R-R theorem} \ch(f!) = (-1)^{d_B}\,\Todd(B)\otimes_B (f*) \otimes_A \Todd(A)^{-1} \,, \end{equation}
between $\ch(f!)$ and $f*$, holds. Also, the diagrams
\[ \xymatrix{ K_\bullet(B) \ar[rr]^{f_!} \ar[d]_{(-1)^{d_B}\,\ch(-)\otimes_B \Todd(B)} & & K_{\bullet+d}(A) \ar[d]^{\ch(-)\otimes_A \Todd(A)} \\ \HL_\bullet(B) \ar[rr]_{f_*} & & \HL_{\bullet+d}(A) } \]
and
\[ \xymatrix{ K^\bullet(A) \ar[rr]^{f^!} \ar[d]_{(-1)^{d_B}\,\Todd(A)^{-1}\otimes_A \ch(-)} & & K^{\bullet+d}(B) \ar[d]^{\Todd(B)^{-1}\otimes_B \ch(-)} \\ \HL^\bullet(A) \ar[rr]_{f^*} & & \HL^{\bullet+d}(B) } \]
commute.
\end{theorem}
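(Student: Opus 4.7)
My plan is to deduce the commutativity of both diagrams from the single identity
\[ \ch(f!) \;=\; (-1)^{d_B}\,\Todd(B)\otimes_B (f*) \otimes_A \Todd(A)^{-1} \qquad (\star) \]
in $\HL_d(B,A)$, and then to verify $(\star)$ by direct expansion of both sides using the explicit formulas for $f!$, $f*$, $\Todd(A)$, and $\Todd(B)$. The reduction to $(\star)$ is immediate: pairing $(\star)$ on the left with $\ch(x)\otimes_B(-)$ for $x\in K_\bullet(B)$ and using multiplicativity of $\ch$ under the Kasparov product gives the commutativity of the first diagram, while pairing on the right with $(-)\otimes_A\ch(y)$ for $y\in K^\bullet(A)$ and moving $\Todd(A)^{-1}$ to the opposite side via invertibility of the Todd class gives the second. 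The external sign $(-1)^{d_B}$ in the verticals of the first diagram is just the sign in $(\star)$, so no further signs need to be manufactured in the reduction.

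For $(\star)$ itself I would start from $f! = (-1)^{d_A}\,\Delta_A^\vee \otimes_{A^\circ}[f^\circ]_{KK}\otimes_{B^\circ}\Delta_B$ and apply $\ch$ factor-by-factor, using multiplicativity of the bivariant Chern character together with $\ch([f^\circ]_{KK}) = [f^\circ]_{\HL}$, to obtain
\[ \ch(f!) \;=\; (-1)^{d_A}\,\ch(\Delta_A^\vee)\otimes_{A^\circ}[f^\circ]_{\HL}\otimes_{B^\circ}\ch(\Delta_B). \]
On the right-hand side of $(\star)$ I would substitute $\Todd(B)=\Xi_B^\vee\otimes_{B^\circ}\ch(\Delta_B)$, $f* = (-1)^{d_A}\,\Xi_A^\vee\otimes_{A^\circ}[f^\circ]_{\HL}\otimes_{B^\circ}\Xi_B$, and $\Todd(A)^{-1} = (-1)^{d_A}\,\ch(\Delta_A^\vee)\otimes_{A^\circ}\Xi_A$, collect the overall sign $(-1)^{d_B}\cdot(-1)^{2d_A} = (-1)^{d_B}$, and then repeatedly apply associativity of the Kasparov product to collapse the two adjacent pairs $\Xi_B^\vee$--$\Xi_B$ and $\Xi_A^\vee$--$\Xi_A$ using the cyclic Poincar\'e duality relations $\Xi^\vee\otimes_{A^\circ}\Xi = 1_A$ and $\Xi^\vee\otimes_A\Xi = (-1)^d\,1_{A^\circ}$. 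What should remain is exactly $(-1)^{d_A}\,\ch(\Delta_A^\vee)\otimes_{A^\circ}[f^\circ]_{\HL}\otimes_{B^\circ}\ch(\Delta_B) = \ch(f!)$.

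The main technical obstacle, as always in Kasparov-theoretic computations, is the precise bookkeeping of the positions of the tensor factors and the signs: each rearrangement of the Kasparov products must be justified by an explicit associativity identity, and the difference between the two PD relations $\Xi^\vee\otimes_{A^\circ}\Xi$ and $\Xi^\vee\otimes_A\Xi$ (differing by $(-1)^d$) is exactly what must produce the external sign $(-1)^{d_B}$ in $(\star)$ when $\Xi_B^\vee$ is paired with $\Xi_B$ across the central $[f^\circ]_{\HL}$. Drawing out each step as a composition of dilated morphisms in $\underline{\HL}$ and verifying that the algebras $A^\circ$ versus $A$, and $B^\circ$ versus $B$, are matched correctly at each step is where an error is most likely to creep in.
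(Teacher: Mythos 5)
Your proposal is correct and follows the same route as the paper: it derives both diagrams from the single identity $(\star)$ exactly as the paper does, and your sketch of $(\star)$ itself is precisely the computation the paper attributes to \cite{BMRS2} — substitute the explicit formulas for $f!$, $f*$, $\Todd(A)$, $\Todd(B)$ and collapse using associativity plus the relations $\Xi^\vee\otimes_{A^\circ}\Xi = 1_A$, $\Xi^\vee\otimes_A\Xi = (-1)^d 1_{A^\circ}$. Your caveat about the bookkeeping of opposite algebras and signs is accurate, but the sign accounting you give (the two collapses contribute $(-1)^{d_A}$ and $(-1)^{d_B}$, the $(-1)^{d_A}$ survives as the prefactor of $\ch(f!)$ while the $(-1)^{d_B}$ cancels the external sign in $(\star)$) does work out.
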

\begin{proof} Equation \eqref{Eq: NC G-R-R theorem} is proven in \cite{BMRS2} by using
\begin{align*} f! &= (-1)^{d_A}\,\Delta_A^\vee\otimes_{A^\circ} [f^\circ]_{KK} \otimes_{B^\circ} \Delta_B \\ f* &= (-1)^{d_A}\,\Xi_A^\vee\otimes_{A^\circ} [f^\circ]_{\HL} \otimes_{B^\circ} \Xi_B \end{align*}
and theorem \ref{Theorem: Properties of Kasparov product}.
The diagrams are obtained easily from \eqref{Eq: NC G-R-R theorem}:
\begin{align*} \ch(f_!(\xi))\otimes_A \Todd(A) &= \ch(\xi\otimes_B f!)\otimes_A \Todd(A) \\ &= \ch(\xi)\otimes_B \ch(f!) \otimes_A \Todd(A) \\ &= \ch(\xi) \otimes_B (-1)^{d_B}\,\Todd(B)\otimes_B f* \\ &= (-1)^{d_B}\, f_*(\ch(\xi)\otimes_B \Todd(B))\,, \end{align*}
for any $\xi\in K_\bullet(B)$, and
\begin{align*} \Todd(B)^{-1}\otimes_B \ch(f^!(\eta)) &= \Todd(B)^{-1}\ch(f!\otimes_A \eta) \\ &= \Todd(B)^{-1}\ch(f!)\otimes_A \ch(\eta) \\ &= (-1)^{d_B} f* \otimes_A \Todd(A)^{-1} \otimes_A \ch(\eta) \\ &= (-1)^{d_B}\, f^*(\Todd(A)^{-1} \otimes_A \ch(\eta))\,, \end{align*}
for any $\eta\in K^\bullet(A)$.
\end{proof}

\section{The Minasian-Moore formula}
We are almost ready to present the noncommutative generalization of the Minasian-Moore formula
\[ Q_{\Q}(\Sigma,x):= \ch(\phi_!(x))\smile \sqrt{\Todd(X)} \in H^\text{even}(X,\Q) \,. \]
We expect to obtain the correct generalization by replacing $\Todd(X)$ with the $\Todd$ class of a noncommutative spacetime $A$, the untwisted $K$-theory class with a class $\xi\in K_0(B)$, where $B$ is the noncommutative worldvolume. Naturally, the algebra $A$ must be in $\mathbf{\underline{SC}^*}$ for the $\Todd$ class to exist and the continuous map $\phi:A\rightarrow B$ must be $K$-oriented\footnote{Note, that in the noncommutative case the direction of the map $\phi:A\rightarrow B$ is from the spacetime algebra to the worldvolume algebra, whereas in the commutative case it points from the worldvolume to spacetime.}.

There is still a slight technical problem, namely it is not clear that a square root of $\Todd(A)$ exists. However, if the algebra $A$ satisfies the UCT, then $\HL_\bullet(A,A)\cong \End_{\C}(\HL_\bullet(A),\HL_\bullet(A))$. If $\dim_{\C}(\HL_\bullet(A))$ is finite, say $n$, then $\Todd(A)$ is in $\GL_n(\C)$.
\begin{lemma} Every matrix $M$ in $\GL_n(\C)$ admits a square root $\sqrt{M}\in \GL_n(\C)$. \end{lemma}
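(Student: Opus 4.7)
The plan is to reduce to a canonical form and construct the square root explicitly on each block. First I would put $M$ in Jordan canonical form: choose $P\in\GL_n(\C)$ so that $P^{-1}MP = J = \operatorname{diag}(J_1,\ldots,J_k)$, where each $J_i$ is a Jordan block $\lambda_i I_{n_i} + N_i$ with $N_i$ the standard nilpotent shift. Because $M$ is invertible, every $\lambda_i$ is nonzero, which is exactly what will let us invert things on each block.

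For each block I would write $J_i = \lambda_i(I_{n_i} + \lambda_i^{-1} N_i)$ and pick any square root $\mu_i\in\C$ with $\mu_i^2 = \lambda_i$ (possible since $\lambda_i\neq 0$). The matrix $X_i := \lambda_i^{-1} N_i$ is nilpotent of index at most $n_i$, so the formal binomial series
\[
(I + X_i)^{1/2} \;:=\; \sum_{m=0}^{n_i-1} \binom{1/2}{m}\, X_i^m
\]
is a finite sum, hence defines an actual matrix $S_i\in \GL_{n_i}(\C)$ (its constant term is $I_{n_i}$, so it is invertible). A direct multiplication of the series with itself, truncated by nilpotency, gives $S_i^2 = I_{n_i} + X_i$ — this is just the identity of formal power series $(1+x)^{1/2}\cdot(1+x)^{1/2} = 1+x$ applied in the nilpotent setting, where convergence is automatic. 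Setting $R_i := \mu_i S_i$, one has $R_i^2 = \mu_i^2 S_i^2 = \lambda_i(I_{n_i}+X_i) = J_i$, and $R_i$ is invertible.

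Finally I would assemble $R := \operatorname{diag}(R_1,\ldots,R_k)\in\GL_n(\C)$ and set $\sqrt{M} := P R P^{-1}$. Then
\[
(\sqrt{M})^2 \;=\; P R^2 P^{-1} \;=\; P J P^{-1} \;=\; M,
\]
and $\sqrt{M}\in\GL_n(\C)$ because $P$ and $R$ are. The main obstacle, such as it is, is just bookkeeping: verifying that the truncated binomial series squares correctly in the nilpotent setting. This is painless because the identity $(1+x)^{1/2}(1+x)^{1/2}=1+x$ holds as an identity of formal power series in $\Q[[x]]$, and substituting the nilpotent $X_i$ converts the formal identity into an honest matrix equation with all but finitely many terms vanishing. (Alternatively, one could avoid Jordan form altogether by using the holomorphic functional calculus: since $\sigma(M)\subset \C\setminus\{0\}$, pick a branch of $\log$ defined on a simply connected neighborhood of $\sigma(M)$ avoiding $0$ and set $\sqrt{M}:=\exp(\tfrac{1}{2}\log M)$; the Jordan-block approach is more elementary and suffices here.)
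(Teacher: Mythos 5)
Your proof is correct and follows essentially the same route as the paper: reduce to Jordan canonical form, observe that invertibility forces all eigenvalues to be nonzero, factor each block as $\lambda_i(I+\text{nilpotent})$, and take a finite binomial/Taylor expansion for the square root of the unipotent factor before conjugating back. The extra care you take in justifying the truncated series via the formal power-series identity, and the aside about holomorphic functional calculus, are welcome but not departures from the paper's argument.
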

\begin{proof} The matrix $M$ can be brought to the Jordan canonical form $J$ by conjugation with an invertible matrix $P$, that is,
\[ J = \begin{pmatrix} J_1 & & \\ & \ddots & \\ & & J_p \end{pmatrix} = P^{-1}MP\,, \quad \text{with} \quad J_i = \begin{pmatrix} \lambda_i & 1 & & \\ & \lambda_i & \ddots & \\ & & \ddots & 1 \\ & & & \lambda_i \end{pmatrix}\,. \]
The complex numbers $\lambda_i$ are nonzero since $M$ is invertible. On the other hand,
\[ J_i = \lambda_i\begin{pmatrix} 1 & \lambda_i^{-1} & & \\ & 1 & \ddots & \\ & & \ddots & \lambda_1^{-1} \\ & & & 1 \end{pmatrix} = \lambda_i(1+K_i)\,, \]
where the matrices $K_i$ are nilpotent. The square roots $\sqrt{1+K_i}$ are thus given by \emph{finite} Taylor polynomials. Therefore, also the square roots $\sqrt{J_i}=\sqrt{\lambda_i}\,\sqrt{1+K_i}$ of the Jordan blocks exist. The square root of $J$ is given by
\[ \sqrt{J} = \begin{pmatrix} \sqrt{J_1} & & \\ & \ddots & \\ & & \sqrt{J_p} \end{pmatrix} \,. \]
We can now choose $\sqrt{M}=P\sqrt{J}P^{-1}$.
\end{proof}
The lemma establishes that, under the given assumptions, $\sqrt{\Todd(A)}$ exists. Still, the square root matrix constructed above is not unique. It is not known if, or under which circumstances, a canonical choice for $\sqrt{\Todd(A)}$ can be made \cite{BMRS2}. We shall ignore the possible ambiguity and proceed as if a canonical choice has been made.

\begin{definition}[Minasian-Moore]\label{Definition: NC Minasian-Moore formula}
Let $A\in \mathbf{\underline{SC}^*}$ be the noncommutative spacetime and $B\in \mathbf{SC^*}$ the noncommutative D-brane worldvolume. Suppose that $A$ satisfies the UCT and $\dim_\C(\HL_\bullet(A))$ is finite. Let $\phi:A\rightarrow B$ be a $K$-oriented $*$-homomorphism\footnote{In \cite{BMRS2} the theory is worked out also for weakly $K$-oriented maps, which generalizes the definition slightly.}. Let $\xi\in K_\bullet(B)$ classify the configuration of noncommutative D-branes wrapping $B$.
The quantized RR-charge of the noncommutative D-brane is
\[ Q_{\Z}(B,\xi) := \phi_!(\xi) \in K_\bullet(A) \]
and the classical RR-charge by
\[ Q_{\C}(B,\xi) := \ch(\phi_!(\xi))\otimes_A \sqrt{\Todd(A)}  \in \HL_\bullet(A) \,. \]
\end{definition}

%Recall, that the $K$-homological ''worldvolume'' description of D-branes was obtained by Poincar\'e dualizing the $K$-theoretic RR-charge. The homological Minasian-Moore formula was given in definition \ref{Definition: Homological Minasian-Moore formula}. In the noncommutative we can still case so that the $K$-homology class corresponding to the D-brane becomes $\Pd_{A}(\phi_!(\xi))=f^*(\Pd_B(\xi))\in K^\bullet(A)$. Recall how the homological dual charge was defined in the commutative case. We can accommodate for D-branes which are given only by a $K$-homology class $\mu \in K^\bullet(A)$, without any reference to the D-brane worldvolume algebra $B$, by defining the noncommutative dual RR-charge
%\[ Q_{\C}(\mu) = \sqrt{\Todd(A)}^{-1}\otimes_A \ch(\mu)\in \HL^\bullet(A) \,. \]

One last thing to consider is a noncommutative generalization of the isometric pairing formula, proposition \ref{Proposition: Isometric pairing formula}. Suppose that $A$ and $B$ are as in definition \ref{Definition: NC Minasian-Moore formula}. Poincar\'e duality yields a pairing
\[ (-,-)_K:K_\bullet(A)\otimes_{\Z} K_{-\bullet-d}(\widetilde{A})\rightarrow \Z \,, \]
given by
\[ (\alpha,\beta)_K := (\alpha\otimes \beta)\otimes_{A\otimes \widetilde{A}} \Delta \in KK_0(\C,\C) = \Z\,. \]
If $X$ is $\spinc$ and $A=\widetilde{A}=C(X)$ with fundamental ''Dirac'' class $\Delta$, the noncommutative pairing reduces to the classical index pairing appearing in proposition \ref{Proposition: Isometric pairing formula}. If both $A$ and $\widetilde{A}$ satisfy the UCT and both $K_\bullet(A)$ and $K_\bullet(\widetilde{A})$ are finitely generated, then it follows from the UCT that the pairing $(-,-)_K$ is nondegenerate (modulo torsion). The PD pair $(A,\widetilde{A})$ is also a C-PD pair with cyclic fundamental class $\Xi$, possibly different from $\ch(\Delta)$. There is a pairing
\[ (-,-)_H:\HL_\bullet(A)\otimes_{\C} \HL_{-\bullet-d}(\widetilde{A})\rightarrow \C \,, \]
given by
\[ (x,y)_H := (x\otimes y)\otimes_{A\otimes \widetilde{A}} \Xi \,. \]
It is nondegenerate if the UCT holds and coincides with the classical pairing appearing in proposition \ref{Proposition: Isometric pairing formula} when $A=\widetilde{A}=C(X)$, with $X$ an oriented manifold, and when $\Xi$ is the cyclic ''orientation'' fundamental class. 
\begin{proposition}[Isometric pairing formula \cite{BMRS1}]\label{Proposition: Isometric pairing formula, NC case}
Suppose that $A$ and $\widetilde{A}$ satisfy the UCT and $\HL_\bullet(A)$ and $\HL_\bullet(\widetilde{A})$ are finite dimensional vector spaces. The modified Chern character
\[ \Ch:=\ch(-)\otimes_A\sqrt{\Todd(A)}: K_\bullet(A)\rightarrow \HL_\bullet(A) \,, \]
appearing in the noncommutative Minasian-Moore formula, is an isometry with respect to the inner products $(-,-)_K$ and $(-,-)_H$:
\[ (\alpha,\beta)_K = \Big(\Ch(\alpha),\,\Ch(\beta)\Big)_H = \left(\ch(\alpha)\otimes_A \sqrt{\Todd(A)},\,\ch(\beta)\otimes_{\widetilde{A}} \sqrt{\Todd(\widetilde{A})}\right)_H \,. \]
\end{proposition}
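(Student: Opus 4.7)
The plan is to unfold both sides of the claimed equality using the axioms of the Kasparov and local cyclic products, the multiplicativity of the bivariant Chern character, and the very definition of $\Todd(A)$ in terms of $\Delta$ and $\Xi$. Since the bivariant Chern character $\ch\colon KK_0(\C,\C)=\Z\to \HL_0(\C,\C)=\C$ is the natural inclusion, it suffices to show that $\ch((\alpha,\beta)_K)=(\Ch(\alpha),\Ch(\beta))_H$ as complex numbers.

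First I would apply $\ch$ to $(\alpha,\beta)_K=(\alpha\otimes\beta)\otimes_{A\otimes\widetilde{A}}\Delta$ and use multiplicativity $\ch(x\otimes_D y)=\ch(x)\otimes_D\ch(y)$, together with bilinearity of the exterior product, to obtain
\[ (\alpha,\beta)_K \;=\; \bigl(\ch(\alpha)\otimes\ch(\beta)\bigr)\otimes_{A\otimes\widetilde{A}}\ch(\Delta). \]
Next, I would express $\ch(\Delta)$ in terms of $\Xi$ and the Todd classes. The defining identity $\Todd(A)=\Xi^{\vee}\otimes_{\widetilde{A}}\ch(\Delta)$, combined with $\Xi\otimes_{A}\Xi^{\vee}=(-1)^d\,1_{\widetilde{A}}$ and associativity, yields $\ch(\Delta)=\pm\,\Xi\otimes_{\widetilde{A}}\Todd(\widetilde{A})$ and symmetrically $\ch(\Delta)=\pm\,\Todd(A)\otimes_{A}\Xi$. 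Comparing these two representations in $\HL_d(A\otimes\widetilde{A},\C)$ gives the symmetry $\Todd(A)\otimes_{A}\Xi=\Xi\otimes_{\widetilde{A}}\Todd(\widetilde{A})$, which, once square roots are chosen compatibly on each side, factors as
\[ \ch(\Delta)\;=\;\sqrt{\Todd(A)}\otimes_{A}\,\Xi\,\otimes_{\widetilde{A}}\sqrt{\Todd(\widetilde{A})}\,\in\,\HL^d(A\otimes\widetilde{A}). \]

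Substituting this factorisation into the expression for $(\alpha,\beta)_K$ and applying associativity and commutativity of the exterior product (Theorem \ref{Theorem: Properties of Kasparov product} transferred to $\HL$), I can regroup the factors to collect $\ch(\alpha)$ with $\sqrt{\Todd(A)}$ on the $A$-side and $\ch(\beta)$ with $\sqrt{\Todd(\widetilde{A})}$ on the $\widetilde{A}$-side. This gives
\[ (\alpha,\beta)_K \;=\; \Bigl(\bigl(\ch(\alpha)\otimes_{A}\sqrt{\Todd(A)}\bigr)\,\otimes\,\bigl(\ch(\beta)\otimes_{\widetilde{A}}\sqrt{\Todd(\widetilde{A})}\bigr)\Bigr)\otimes_{A\otimes\widetilde{A}}\Xi\;=\;(\Ch(\alpha),\Ch(\beta))_H, \]
which is the claimed isometry. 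Nondegeneracy of $(-,-)_H$ under the UCT ensures the pairings are genuinely detected by the modified Chern character.

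The main obstacle will be the \emph{square-root factorisation} of $\ch(\Delta)$. Splitting $\Todd(A)=\sqrt{\Todd(A)}\otimes_A\sqrt{\Todd(A)}$ with both pieces remaining in $\HL_0(A,A)$ is routine, but the non-trivial point is that the splitting must be arranged so that one $\sqrt{\Todd(A)}$ passes through the fundamental class $\Xi$ and emerges as $\sqrt{\Todd(\widetilde{A})}$ on the $\widetilde{A}$-side. This uses the symmetry of the PD pair $(A,\widetilde{A})$, together with the coherence requirement on the choice of square roots alluded to in the paper; verifying that such a compatible choice exists (and is forced by the identity $\Todd(A)\otimes_A\Xi=\Xi\otimes_{\widetilde{A}}\Todd(\widetilde{A})$) is the crux of the argument, whereas the remaining manipulations are formal consequences of the Kasparov/Puschnigg calculus.
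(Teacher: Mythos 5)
The paper does not prove this proposition --- it is stated and cited directly from \cite{BMRS1} --- so there is no in-text argument of its own for you to be compared against. Your proposal is, however, the natural noncommutative transcription of the proof the paper does supply for the commutative version, Proposition \ref{Proposition: Isometric pairing formula}, where the Atiyah--Singer index theorem and the multiplicativity $\ch(E\otimes F)=\ch(E)\smile\ch(F)$ play the roles that the bivariant Chern character and the Kasparov/Puschnigg product calculus play here. Your reduction of the problem, via multiplicativity of $\ch$, to a factorisation of $\ch(\Delta)$ through the cyclic fundamental class $\Xi$, followed by regrouping using the exterior-product identities of Theorem \ref{Theorem: Properties of Kasparov product}, is the right route, and is the one taken in \cite{BMRS1}.

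Where you should sharpen the argument is in the phrase that the compatible choice of square roots ``is forced'' by the identity $\Todd(A)\otimes_A\Xi=\Xi\otimes_{\widetilde{A}}\Todd(\widetilde{A})$. That intertwining relates the two Todd classes, but it does not single out square roots: an element of $\GL_n(\C)$ has many square roots, so an independently chosen $\sqrt{\Todd(\widetilde{A})}$ has no reason to be the one that intertwines with $\sqrt{\Todd(A)}$ through $\Xi$. Rather, the identity \emph{permits} a compatible choice: fix one square root $\sqrt{\Todd(A)}\in\HL_0(A,A)$ and then \emph{define} $\sqrt{\Todd(\widetilde{A})}$ to be its $\Xi$-transpose, which squares to $\Todd(\widetilde{A})$ precisely because of the intertwining. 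With that convention the formal manipulations --- apply $\ch$, substitute $\ch(\Delta)=\sqrt{\Todd(A)}\otimes_A\Xi\otimes_{\widetilde{A}}\sqrt{\Todd(\widetilde{A})}$, regroup, and absorb the $(-1)^d$ signs that your $\pm$'s are standing in for --- do close the argument. The paper itself acknowledges, in the discussion preceding Definition \ref{Definition: NC Minasian-Moore formula}, that no canonical choice of $\sqrt{\Todd(A)}$ is known, so the fact that your isometry holds only relative to a coherently chosen pair of square roots is not a deficiency of your reasoning but an honest reflection of the state of the theory as the paper presents it.
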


\end{document}